\title{\vspace{-1.3cm}Scale-Invariant Robust Estimation of High-Dimensional Kronecker-Structured Matrices}
\author{Xiaoyu Zhang$^\dagger$, Zhiyun Fan$^\ddagger$, Wenyang Zhang$^\S$, and Di Wang$^\ddagger$\\ \small{$\dagger$ Tongji University, $\ddagger$ Shanghai Jiao Tong University, $\S$ University of Macau}}
\let\counterwithin\relax
\newcommand*{\addFileDependency}[1]{
  \typeout{(#1)}
  \@addtofilelist{#1}
  \IfFileExists{#1}{}{\typeout{No file #1.}}
}
\newcommand*{\myexternaldocument}[1]{%
    \externaldocument{#1}%
    \addFileDependency{#1.tex}%
    \addFileDependency{#1.aux}%
}
\newtheorem{assumption}{Assumption}
\newtheorem{definition}{Definition}
\newtheorem{lemma}{Lemma}
\newtheorem{proposition}{Proposition}
\newtheorem{theorem}{Theorem}
\newtheorem{corollary}{Corollary}
\newtheorem{remark}{Remark}
\theoremstyle{definition}
\DeclareMathOperator*{\argmin}{arg\,min}
\newcommand{\bm}{\mathbf}
\newcommand{\bbm}{\boldsymbol}
\newcommand{\bb}{\mathbb}
\newcommand{\norm}[1]{\left\| #1 \right\|}
\newcommand{\inner}[2]{\left\langle #1, #2\right\rangle}
\newcommand{\wt}[1]{\widetilde{#1}}
\newcommand{\wh}[1]{\widehat{#1}}
\begin{document}

\setlength{\parindent}{16pt}

\maketitle

\vspace{-1cm}
\begin{abstract}

High-dimensional Kronecker-structured estimation faces a conflict between non-convex scaling ambiguities and statistical robustness. The arbitrary factor scaling distorts gradient magnitudes, rendering standard fixed-threshold robust methods ineffective. We resolve this via Scaled Robust Gradient Descent (SRGD), which stabilizes optimization by de-scaling gradients before truncation. To further enforce interpretability, we introduce Scaled Hard Thresholding (SHT) for invariant variable selection. A two-step estimation procedure, built upon robust initialization and SRGD--SHT iterative updates, is proposed for canonical matrix problems, such as trace regression, matrix GLMs, and bilinear models. The convergence rates are established for heavy-tailed predictors and noise, identifying a phase transition where optimal convergence rates recover under finite noise variance and degrade optimally for heavier tails. Experiments on simulated data and two real-world applications confirm superior robustness and efficiency of the proposed procedure.

\end{abstract}

\textit{Keywords}: Kronecker product, robustness, sparse matrix decomposition, heavy-tailed distributions, gradient descent

\newpage

\setlength\abovedisplayskip{4pt}
\setlength\belowdisplayskip{4pt}

\section{Introduction}

\subsection{Kronecker Structure and the Robustness-Scaling Tension}

High-dimensional structured matrix estimation is a cornerstone of modern statistical methodology, offering parsimonious representations that effectively capture multi-modal interactions while substantially reducing dimensionality. Prominent examples include matrix regression \citep{negahban2011estimation}, matrix completion \citep{negahban2012restricted}, and spatio-temporal modeling in neuroimaging \citep{hung2013matrix} and econometrics \citep{chen2021autoregressive}. The Kronecker product structure, wherein a target matrix is decomposed into the product of smaller factors, is particularly effective in settings where interactions separate along distinct but interdependent modes.

Despite these modeling advantages, reliable estimation of Kronecker-structured matrices faces a unique intersection of challenges: non-convex optimization and heavy-tailed data. While these issues are individually well-studied, their interaction creates a specific failure mode for existing algorithms. Specifically, Kronecker factorizations are non-identifiable up to a rescaling of their factors. This invariance induces a scaling ambiguity where the norms of the latent factors can become arbitrarily unbalanced. In standard optimization, this leads to ill-conditioned Hessians. However, in robust statistical estimation, the consequences are far more severe.

Standard robust methods rely on fixed-threshold mechanisms, such as gradient truncation and Huber loss, to suppress outliers. In the Kronecker setting, however, the efficacy of these thresholds becomes arbitrarily coupled with the scaling ambiguity of the factorization. When a factor scale is large, the gradient is artificially amplified, causing valid signals to exceed the threshold and be indiscriminately clipped, thereby discarding valuable information. Conversely, when a factor scale is small, the gradient diminishes, allowing gross outliers to fall below the threshold and leak into the estimate, effectively bypassing the robustification mechanism. Consequently, standard robust algorithms fail to be scale-invariant, as their statistical performance fluctuates wildly based on the arbitrary parameterization of the factorization.

Furthermore, enforcing sparsity, often desirable for interpretability, introduces a structural contradiction. Standard row-wise hard thresholding is sensitive to basis transformations; rescaling the factors changes the relative row norms, leading to inconsistent support recovery. Achieving interpretable, robust estimation requires a methodology that decouples statistical robustification from the scale indeterminacy of the model.

\subsection{Contributions}

In this article, we propose a unified framework for the robust estimation of Kronecker-structured matrices that resolves the tension between scaling ambiguity, heavy-tailed noise, and sparsity. Our contributions are three-fold:

\begin{itemize}
  \item[1.] \textit{Scale-Invariant Robust Optimization.} We introduce Scaled Robust Gradient Descent (SRGD), an algorithm that stabilizes updates via a three-step mechanism: (i) \textit{de-scaling} to isolate the statistical error signal from the factor norms; (ii) \textit{truncation} in the normalized domain to suppress heavy-tailed outliers; and (iii) \textit{re-scaling} to respect the optimization geometry. This ensures that the robustification mechanism is invariant to the factorization scaling, preventing the failure modes of naive truncation.
  
  \item[2.] \textit{Scale-Invariant Sparsity Recovery.} To enforce structured sparsity, we develop Scaled Hard Thresholding (SHT). Unlike standard thresholding, SHT operates on \textit{scaled factors} whose row norms are invariant to the factorization scaling. This ensures that the selected support set is intrinsic to the Kronecker product rather than an artifact of the initialization.
  
  \item[3.] \textit{Provable Convergence and Sharp Phase Transitions.} We establish the convergence of a two-stage estimation procedure, based on robust initialization and SRGD--SHT iterates, for three canonical models: matrix trace regression, matrix GLMs, and bilinear regression. Our theory accommodates heavy tails in both predictors (finite $(2+2\lambda)$-th moment) and noise (finite $(1+\epsilon)$-th moment). We derive non-asymptotic error bounds that reveal a smooth phase transition: for finite-variance noise ($\epsilon=1$), we recover the optimal parametric rate $O(\sqrt{s/n})$, showing that robustness incurs no efficiency loss in ideal settings; for heavy-tailed noise ($\epsilon < 1$), the rate slows to $O(s^{1/2}n^{-\epsilon/(1+\epsilon)})$, matching information-theoretic lower bounds. Crucially, the error bounds are independent of the factorization condition number, confirming the stability of the approach.
\end{itemize}

\subsection{Related Literature}

Our work synthesizes and extends research in structured matrix estimation, robust optimization, and sparsity-constrained nonconvex optimization.

\textit{Kronecker and Structured Matrix Estimation.} Convex approaches (nuclear-norm and structured-norm regularization) provide a baseline for low-rank estimation \citep{candes2011tight,negahban2012restricted}, but scale poorly in high-dimensional Kronecker models. Factorized nonconvex methods offer computational advantages and have been extensively analyzed for low-rank problems under light-tailed errors \citep{tu2016low,chen2017solving,wang2017unified,ma2018implicit}. Recent contributions addressing scaling invariance \citep{tong2021accelerating,tong2021low} provide important algorithmic foundations (e.g., gradient normalization) that we adopt and adapt for robust estimation. Prior work on Kronecker-structured estimation \citep{cai2022kopa,cai2023hybrid,wu2023sparse} typically does not treat heavy tails and non-identifiability jointly.

\textit{Robust Estimation.} Robustification methods operate at several levels: loss modification \citep{sun2020adaptive,tan2023sparse,shen2025computationally}, data transformation via truncation or shrinkage \citep{fan2021shrinkage,wang2023rate,lu2025robust}, and gradient-level techniques \citep{prasad2020robust}. Recently, median-based or truncated gradient descent methods have yielded strong guarantees for heavy-tailed vector and tensor problems \citep{chi2019median,li2020non,zhang2024robust}. Our SRGD aligns with gradient robustification approaches but differs by explicitly correcting the scaling ambiguities of Kronecker factorizations and accommodating heavy tails in both predictors and responses. 

\textit{Sparsity and Hard Thresholding.} Iterative hard thresholding is a standard tool for sparse recovery \citep{amini2008high,cai2013sparse,yang2014sparse,chen2022fast,cheng2024two}. However, existing extensions to matrix and tensor settings typically assume a fixed parameterization. Our SHT adapts the hard-thresholding principle to the Kronecker factorization by operating in a normalized domain that respects the equivalence class of factorizations.

\begin{table}[t]
\caption{Comparison of our proposed SRGD--SHT framework with existing robust and structured estimation methods: model type, infinite noise variance, infinite predictor kurtosis, scale invariance, and parsimony type.}
\label{tab:comparison}
\centering
\begin{tabular}{>{\raggedright\arraybackslash}m{3.7cm} cccc>{\raggedright\arraybackslash}m{3.1cm}}
\toprule
\textbf{Methodology} & \textbf{Model} & \textbf{Inf. Noise} & \textbf{Inf. Pred.} & \textbf{Scale} & \textbf{Parsimony} \\
& \textbf{Type} & \textbf{Variance} & \textbf{Kurtosis} & \textbf{Inv.} & \textbf{Type} \\
\midrule
Robust Gradient \newline \citep{prasad2020robust} & Vector & $\times$ & $\times$ & N/A & N/A \\
\\[0.01cm]
Data Shrinkage \newline \citep{fan2021shrinkage} & Matrix & $\times$ & $\times$ & N/A & Low-Rank \\
\\[0.01cm] 
Matrix Huber Reg. \newline \citep{tan2023sparse} & Matrix & \checkmark & $\times$ & N/A & Low-Rank \\
\\[0.01cm] 
Scaled GD \newline \citep{tong2021accelerating} & Matrix & $\times$ & $\times$ & \checkmark & Low-Rank \\
\\[0.01cm]
Robust Tensor \newline \citep{zhang2024robust} & Tensor & \checkmark & \checkmark & $\times$ & Low-Rank \\
\\[0.01cm]
\textbf{SRGD--SHT \newline (This Work)} & \textbf{Matrix} & \textbf{\checkmark} & \textbf{\checkmark} & \textbf{\checkmark} & \textbf{Kronecker \newline and Sparsity} \\
\bottomrule
\end{tabular}
\end{table}

In summary, the key advantages of our work are summarized in Table \ref{tab:comparison}. Compared to the existing works, our approach is the only one that simultaneously addresses the nonconvex Kronecker structure, heavy-tailed distributions in both predictors (with infinite kurtosis) and noise (with infinite variance), and the intrinsic scaling ambiguity of the factorization.

\subsection{Notations and Outline}

For vectors, $\|\cdot\|_0$, $\|\cdot\|_1$, $\|\cdot\|_2$, and $\|\cdot\|_\infty$ denote the number of nonzero entries, $\ell_1$, $\ell_2$, and $\ell_\infty$ norms, respectively. For a generic matrix $\bm{X}$, let $\bm{X}^\top$, $\text{vec}(\bm{X})$, and $\sigma_j(\bm{X})$ denote its transpose, vectorization, and $j$-th largest singular value. Matrix norms are denoted as follows: $\|\bm X\|_0$, $\|\bm X\|_1$, and $\|\bm X\|_\infty$ refer to the vector norms of $\text{vec}(\bm X)$; $\|\bm X\|_\text{F}$ is the Frobenius norm; $\|\bm X\|_\text{op}$ is the spectral norm; and $\|\bm X\|_{1,\infty}$ and $\|\bm X\|_{2,\infty}$ denote the maximum $\ell_1$ and $\ell_2$ norms of the rows, respectively. For symmetric matrices, $\lambda_{\min}(\bm{X})$ and $\lambda_{\max}(\bm{X})$ denote the minimum and maximum eigenvalues. The general linear group of degree $n$ is denoted by $\text{GL}(n)$. A generic positive constant is denoted by $C$. For sequences $x_k$ and $y_k$, we write $x_k\lesssim y_k$ if $x_k\leq Cy_k$ for some constant $C>0$, $x_k\gtrsim y_k$ if $y_k\lesssim x_k$, and $x_k\asymp y_k$ if both hold.

The remainder of the article is organized as follows. Section \ref{sec:2} introduces the Kronecker-structured estimation problem, discusses challenges of scaling invariance, and presents the SRGD algorithm with theoretical guarantees. Section \ref{sec:3} extends the framework to structured sparsity via SHT. Section \ref{sec:4} instantiates the framework for matrix trace regression, matrix generalized linear models, and bilinear models. Sections \ref{sec:5} and \ref{sec:6} present simulation studies and real data applications, respectively. We conclude with a discussion in Section \ref{sec:7}.

\section{Scaled Robust Gradient Descent}\label{sec:2}

\subsection{Problem Formulation and Scaling Ambiguity}\label{sec:2.1}

We consider the estimation of a structured matrix $\bm{\Theta}\in\mathbb{R}^{p\times q}$ by minimizing the empirical risk $\mathcal{L}_n(\bm{\Theta};\mathcal{D}_n)=n^{-1}\sum_{i=1}^n\mathcal{L}(\bm{\Theta};z_i)$. We assume $\bm{\Theta}$ admits a rank-$K$ Kronecker product decomposition
\begin{equation}\label{eq:KP}
  \bm{\Theta}=\sum_{k=1}^K\bm{A}_{k}\otimes\bm{B}_k,
\end{equation}
where $\bm{A}_k\in\mathbb{R}^{p_1\times q_1}$ and $\bm{B}_k\in\mathbb{R}^{p_2\times q_2}$.
To facilitate optimization, we utilize the permutation operator $\mathcal{P}: \mathbb{R}^{p \times q} \to \mathbb{R}^{d_1 \times d_2}$ with $d_1=p_1q_1$ and $d_2=p_2q_2$, which maps the Kronecker product to a standard low-rank factorization \citep{cai2022kopa}:
\begin{equation}
  \mathcal{P}(\bm{\Theta}) = \sum_{k=1}^K\text{vec}(\bm{A}_k)\text{vec}(\bm{B}_k)^\top = \bm{L}\bm{R}^\top,
\end{equation}
where $\bm{L}\in\mathbb{R}^{d_1\times K}$ and $\bm{R}\in\mathbb{R}^{d_2\times K}$. The estimation task is thus transformed into optimizing the latent factors $\bm{L}$ and $\bm{R}$ via the composite objective $f(\bm{L},\bm{R};z_i) = \mathcal{L}(\mathcal{P}^{-1}(\bm{L}\bm{R}^\top);z_i)$, where $\mathcal{P}^{-1}(\cdot)$ is the inverse operator of $\mathcal{P}$.

A fundamental challenge in optimizing $f(\bm{L},\bm{R})$ is the intrinsic non-identifiability of the factorization. For any nonsingular matrix $\bm{Q}\in\text{GL}(K)$, the transformation $(\bm{L}\bm{Q}, \bm{R}\bm{Q}^{-\top})$ preserves the product $\bm{L}\bm{R}^\top$. This invariance implies that the norms of the factors are arbitrary, yet the partial gradients depend explicitly on these scales. Let $\bm{G}_i = \mathcal{P}(\nabla\mathcal{L}(\mathcal{P}^{-1}(\bm{L}\bm{R}^\top);z_i)) \in \mathbb{R}^{d_1 \times d_2}$ denote the gradient of the loss with respect to the permuted parameter, capturing the prediction error signal. By the chain rule, the partial gradients are:
\begin{equation}
  \nabla_{\bm{L}} f(\bm{L},\bm{R};z_i) = \bm{G}_i \bm{R} \quad \text{and} \quad \nabla_{\bm{R}} f(\bm{L},\bm{R};z_i) = \bm{G}_i^\top \bm{L}.
\end{equation}
Consequently, if $\bm{R}$ has a large norm (and $\bm{L}$ a correspondingly small norm), the gradient with respect to $\bm{L}$ is artificially amplified by the factor $\bm{R}$, while the gradient with respect to $\bm{R}$ is diminished by $\bm{L}$. This scaling ambiguity creates a critical vulnerability when introducing robust gradient techniques.

\subsection{Motivation: The Failure of Fixed-Threshold Robustness}\label{sec:2.2}

Standard methods for robust estimation typically rely on modifying the gradient via a threshold-based influence function. Before detailing our algorithm, we illustrate why a naive application of these techniques fails in the Kronecker setting. The core issue is that the activation of the robust mechanism becomes arbitrarily coupled with the scaling of the factors.

Consider a simplified rank-$1$ scalar setting where the target is $\theta_\ast = l_\ast r_\ast$. Suppose we observe $y = \theta_\ast + \xi$, where $\xi$ is a heavy-tailed noise term. Evaluated at the true parameter, the gradient of the quadratic loss with respect to $l$ is $g_l = (l_\ast r_\ast - y)r_\ast = -\xi r_\ast$.
A standard robust update replaces the raw gradient $g_l$ with a robustified version $\psi_\tau(g_l)$. The most common approach, utilized in both truncated gradient descent and Huber regression, is to clip the gradient magnitude at a fixed threshold $\tau$:
\begin{equation}
    \psi_\tau(u) = \text{sgn}(u)\min(|u|, \tau).
\end{equation}

The efficacy of this robust estimator is critically compromised by the arbitrary scaling factor $\alpha$ in the reparameterization $(l,r) \to (\alpha l, \alpha^{-1} r)$. When the factorization is scaled such that $|r|$ is small, the gradient magnitude $|g_l| = |\xi r|$ diminishes; consequently, even gross outliers may satisfy $|\xi r| < \tau$, rendering the truncation inactive and reverting the method to non-robust least squares. Conversely, when $|r|$ is large, the gradient is artificially amplified, causing valid signals to exceed $\tau$ and be indiscriminately clipped. This leads to a vanishing effective step size and arbitrarily slow convergence, demonstrating that a fixed threshold cannot simultaneously accommodate the conflicting scales induced by the factorization.

The fundamental flaw is that fixed-threshold robustifiers conflate the statistical noise $\xi$ with the arbitrary scaling factor $r$. To resolve this, we must \textit{de-scale} the gradient to isolate the statistical component $\xi$, apply the robust function to the noise, and then \textit{re-scale} to match the current parameterization. This ensures that robustness is scale invariant, motivating the three-stage mechanism of our proposed Scaled Robust Gradient Descent (SRGD).

\subsection{Algorithm: De-scaling, Truncation, and Re-scaling}\label{sec:2.3}

We formally implement the scale-invariant strategy via a three-stage gradient modification process. The geometric intuition behind this procedure is visualized in Figure \ref{fig:gradient_geometry}.

\textbf{Step 1: De-scaling.} We first rescale the gradients to remove the influence of the current factor magnitudes. The \textit{de-scaled gradients} are obtained by post-multiplying the raw gradients with the inverse square root of the Gram matrices:
\begin{equation}
  \begin{split}
    \nabla_{\bm{L}} f(\bm{L},\bm{R};z_i)(\bm{R}^\top\bm{R})^{-1/2} & = \mathcal{P}(\nabla\mathcal{L}(\mathcal{P}^{-1}(\bm{L}\bm{R}^\top);z_i))(\bm{R}(\bm{R}^\top\bm{R})^{-1/2}), \\
    \nabla_{\bm{R}} f(\bm{L},\bm{R};z_i)(\bm{L}^\top\bm{L})^{-1/2} & = \mathcal{P}(\nabla\mathcal{L}(\mathcal{P}^{-1}(\bm{L}\bm{R}^\top);z_i))^\top(\bm{L}(\bm{L}^\top\bm{L})^{-1/2}).
  \end{split}
\end{equation}
This operation captures the descent \textit{direction} invariant to the relative scaling of $\bm{L}$ and $\bm{R}$. As illustrated in Figure \ref{fig:gradient_geometry}(b), de-scaling effectively preconditions the gradient space, restoring an isotropic geometry where the statistical magnitude of the noise is decoupled from the parameter scale.

\textbf{Step 2: Truncation.} To enforce robustness, we apply element-wise truncation to the de-scaled gradients. For a matrix $\bm{M}$ and threshold $\tau>0$, let $\text{T}(\bm{M},\tau)_{j,k}=\text{sgn}(\bm{M}_{j,k})\cdot\min(|\bm{M}_{j,k}|,\tau)$. The averaged truncated gradients are:
\begin{equation}\label{eq:truncated}
  \begin{split}
    \bm{G}_L(\bm{L},\bm{R};\tau) & = \frac{1}{n}\sum_{i=1}^n\text{T}\Big(\nabla_{\bm{L}}f(\bm{L},\bm{R};z_i)(\bm{R}^\top\bm{R})^{-1/2},\tau\Big),\\
    \bm{G}_R(\bm{L},\bm{R};\tau) & = \frac{1}{n}\sum_{i=1}^n\text{T}\Big(\nabla_{\bm{R}}f(\bm{L},\bm{R};z_i)(\bm{L}^\top\bm{L})^{-1/2},\tau\Big).
  \end{split}
\end{equation}
Crucially, because truncation is applied in the de-scaled domain, the threshold $\tau$ acts uniformly on the gradient directions. Figure \ref{fig:gradient_geometry}(a) demonstrates the consequence of skipping the de-scaling step: if the factors are imbalanced (e.g., $\|\bm{R}\|_\text{F} \gg \|\bm{L}\|_\text{F}$), the gradients are distorted. Valid signals along the $\bm{L}$-direction are artificially amplified and clipped by the fixed $\tau$-box, while outliers along the $\bm{R}$-direction are shrunk and thus omitted from the robustness control. Our de-scaled approach (Figure \ref{fig:gradient_geometry}(b)) avoids these failure modes.

\textbf{Step 3: Re-scaling.} Finally, we map the robustified update directions back to the parameter space by post-multiplying with the scaling factors $(\bm{R}^\top\bm{R})^{-1/2}$ and $(\bm{L}^\top\bm{L})^{-1/2}$, respectively. This restores the correct scale for the parameter update. 

The resulting Scaled Robust Gradient Descent (SRGD) procedure is summarized in Algorithm \ref{alg:1}. The algorithm requires the Gram matrices $\bm{L}_j^\top\bm{L}_j$ and $\bm{R}_j^\top\bm{R}_j$ to remain nonsingular. Theorem \ref{thm:1} guarantees that, given suitable initialization, this condition holds throughout the optimization trajectory.


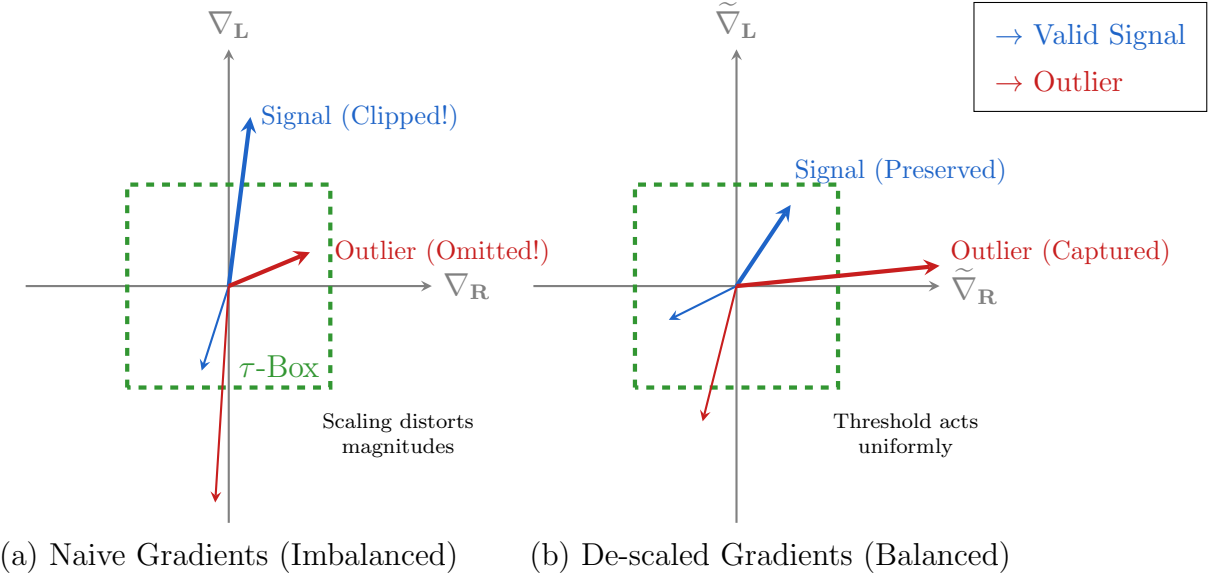
\begin{figure}[t]
\centering
\begin{tikzpicture}[>=stealth, scale=0.9]

\definecolor{signalcolor}{RGB}{30, 100, 200} 
\definecolor{outliercolor}{RGB}{200, 30, 30} 
\definecolor{boxcolor}{RGB}{50, 150, 50}     

\begin{scope}[local bounding box=panelA]
    \draw[->, thick, gray] (-3,0) -- (3,0) node[right] {$\nabla_{\bm{R}}$};
    \draw[->, thick, gray] (0,-3.5) -- (0,3.5) node[above] {$\nabla_{\bm{L}}$};
    \node at (0,-4) {(a) Naive Gradients (Imbalanced)};

    \draw[dashed, ultra thick, boxcolor] (-1.5,-1.5) rectangle (1.5,1.5);
    \node[boxcolor, anchor=south east] at (1.5,-1.5) {$\tau$-Box};

    
    \draw[->, ultra thick, signalcolor] (0,0) -- (0.32, 2.5);
    \node[signalcolor, anchor=west] at (0.32, 2.5) {\footnotesize Signal (Clipped!)};

    \draw[->, thick, signalcolor] (0,0) -- (-0.4, -1.25);

    \draw[->, ultra thick, outliercolor] (0,0) -- (1.2, 0.5);
    \node[outliercolor, anchor=west] at (1.4, 0.5) {\footnotesize Outlier (Omitted!)};

    \draw[->, thick, outliercolor] (0,0) -- (-0.2, -3.2); 
    
    \node[align=center, font=\scriptsize] at (2.5, -2.2) {Scaling distorts\\magnitudes};
\end{scope}

\begin{scope}[xshift=7.5cm, local bounding box=panelB]
    \draw[->, thick, gray] (-3,0) -- (3,0) node[right] {$\widetilde{\nabla}_{\bm{R}}$};
    \draw[->, thick, gray] (0,-3.5) -- (0,3.5) node[above] {$\widetilde{\nabla}_{\bm{L}}$};
    \node at (0.5,-4) {(b) De-scaled Gradients (Balanced)};

    \draw[dashed, ultra thick, boxcolor] (-1.5,-1.5) rectangle (1.5,1.5);

    
    \draw[->, ultra thick, signalcolor] (0,0) -- (0.8, 1.2);
    \node[signalcolor, anchor=west] at (0.7, 1.7) {\footnotesize Signal (Preserved)};

    \draw[->, thick, signalcolor] (0,0) -- (-1.0, -0.5);

    \draw[->, ultra thick, outliercolor] (0,0) -- (3.0, 0.3);
    \node[outliercolor, anchor=west] at (3.0, 0.5) {\footnotesize Outlier (Captured)};

    \draw[->, thick, outliercolor] (0,0) -- (-0.5, -2.0);
    
    \node[align=center, font=\scriptsize] at (2.5, -2.2) {Threshold acts\\uniformly};
\end{scope}

\matrix [draw, fill=white, below right] at (11, 4.2) {
  \node [signalcolor] {\small{$\to$ Valid Signal}}; \\
  \node [outliercolor] {\small{$\to$ Outlier}}; \\
};

\end{tikzpicture}
\caption{Visualization of robustification. (a) Imbalanced factors ($\|\bm{R}\|_\text{F} \gg \|\bm{L}\|_\text{F}$) distort gradients: valid signals (blue) are stretched and clipped by the fixed threshold $\tau$-box, while outliers (red) shrink and omitted inside. (b) SRGD de-scales gradients, restoring a balanced geometry where the threshold correctly separates signal from noise.}
\label{fig:gradient_geometry}
\end{figure}

\begin{algorithm}
  \caption{Scaled Robust Gradient Descent (SRGD) algorithm}
  \label{alg:1}
  \setstretch{1.4}
  \textbf{Input}: initial value $\bm{F}_0$, step size $\eta>0$, truncation parameter $\tau>0$, number of iterations $J$\\
  \textbf{For }$j=0,\dots,J-1$\\
  \hspace*{1cm}$\bm{L}_{j+1} = \bm{L}_j - \eta\cdot \bm{G}_{L}(\bm{L}_j,\bm{R}_j;\tau)(\bm{R}_j^\top\bm{R}_j)^{-1/2}$\\
  \hspace*{1cm}$\bm{R}_{j+1} = \bm{R}_j - \eta\cdot \bm{G}_{R}(\bm{L}_j,\bm{R}_j;\tau)(\bm{L}_j^\top\bm{L}_j)^{-1/2}$\\
  \textbf{End for}\\
  \textbf{Return} $\widehat{\bm{\Theta}}=\mathcal{P}^{-1}(\bm{L}_J\bm{R}_J^\top)$
\end{algorithm}\vspace{-0.3cm}

\subsection{Computational Convergence Analysis}\label{sec:2.4}

We establish the local convergence of SRGD under mild regularity conditions. Let $\bm{\Theta}_\ast$ denote the ground truth, with SVD $\mathcal{P}(\bm{\Theta}_\ast) = \bm{U}_\ast\bm{\Sigma}_\ast\bm{V}_\ast^\top$. To resolve the factorization ambiguity in our analysis, we fix the target factors as $\bm{L}_\ast=\bm{U}_\ast\bm{\Sigma}_\ast^{1/2}$ and $\bm{R}_\ast=\bm{V}_\ast\bm{\Sigma}_\ast^{1/2}$. We adopt the factorization distance metric from \citet{tong2021accelerating}:

\begin{definition}[Factorization Distance]\label{def:3}
  Let $\bm{F}=(\bm{L},\bm{R})$ and $\bm{F}_\ast=(\bm{L}_\ast,\bm{R}_\ast)$. The distance between $\bm{F}$ and $\bm{F}_\ast$ is defined as
  \begin{equation}
    d(\bm{F},\bm{F}_\ast) = \inf_{\bm{Q}\in\textup{GL}(K)}\sqrt{\left\|(\bm{L}\bm{Q}-\bm{L}_\ast)\bm{\Sigma}_\ast^{1/2}\right\|_\textup{F}^2 + \left\|(\bm{R}\bm{Q}^{-\top}-\bm{R}_\ast)\bm{\Sigma}_\ast^{1/2}\right\|_\textup{F}^2}~.
  \end{equation}
\end{definition}

We assume the loss function satisfies the Restricted Correlated Gradient (RCG) condition. This condition is imposed on the \textit{expected} gradient to accommodate heavy-tailed distributions where the empirical loss may not concentrate uniformly.

\begin{definition}[Restricted Correlated Gradient]\label{def:1}
  The loss function $\mathcal{L}$ is said to satisfy the Restricted Correlated Gradient (RCG) condition if, for any $\bm{\Theta}$ satisfying \eqref{eq:KP}, we have
  \begin{equation}
    \langle\mathbb{E}[\nabla\mathcal{L}(\bm{\Theta};z_i)],\bm{\Theta}-\bm{\Theta}_\ast\rangle \geq \frac{\alpha}{2}\|\bm{\Theta}-\bm{\Theta}_\ast\|_\textup{F}^2 + \frac{1}{2\beta}\|\mathbb{E}[\nabla\mathcal{L}(\bm{\Theta};z_i)]\|_\textup{F}^2,
  \end{equation}
  where $\alpha$ and $\beta$ are RCG parameters satisfying $0<\alpha\leq\beta$.
\end{definition}

To ensure statistical stability, we require the de-scaled robust gradients to be accurate estimators of the population gradients.

\begin{definition}[De-scaled Robust Gradient Stability]\label{def:2}
  Given $(\bm{L},\bm{R})$, the de-scaled robust gradient functions $\bm{G}_{L}(\bm{L},\bm{R})$ and $\bm{G}_{R}(\bm{L},\bm{R})$ are said to be stable if there exist positive constants $\phi$, $\xi_L$, and $\xi_R$ such that for all $(\bm{L},\bm{R})$,
  \begin{equation}
    \begin{split}
      \left\|\bm{G}_{L}(\bm{L},\bm{R}) - \mathbb{E}\left[\nabla_{\bm{L}}f(\bm{L},\bm{R};z_i)\right](\bm{R}^\top\bm{R})^{-1/2}\right\|_\textup{F}^2 & \leq \phi\|\mathcal{P}^{-1}(\bm{L}\bm{R}^\top)-\bm{\Theta}_\ast\|_\textup{F}^2 + \xi_L^2,\\
      \left\|\bm{G}_{R}(\bm{L},\bm{R}) - \mathbb{E}\left[\nabla_{\bm{R}}f(\bm{L},\bm{R};z_i)\right](\bm{L}^\top\bm{L})^{-1/2}\right\|_\textup{F}^2 & \leq \phi\|\mathcal{P}^{-1}(\bm{L}\bm{R}^\top)-\bm{\Theta}_\ast\|_\textup{F}^2 + \xi_R^2.
    \end{split}
  \end{equation}
\end{definition}

Here, $\xi_{L}$ and $\xi_R$ represent the statistical error floor induced by truncation bias and variance, while $\phi$ controls the sensitivity to parameter estimation error.

\begin{theorem}[Local Convergence of SRGD]\label{thm:1}
  Suppose the loss function $\mathcal{L}$ satisfies Definition \ref{def:1}, and the stability condition of Definition \ref{def:2} holds uniformly for all iterates with parameters $\phi\lesssim\alpha^2$ and $\xi_{L}^2+\xi_{R}^2\lesssim\alpha^3\beta^{-1}\sigma_K^2$. Furthermore, assume the initial distance satisfies $d(\bm{F}_0,\bm{F}_\ast)^2 \lesssim \alpha\beta^{-1}\sigma_K^2$, and the step size is chosen as $\eta\asymp\beta^{-1}$. Then, for all $j=1,2,\dots,J$, the iterates produced by Algorithm \ref{alg:1} satisfy
  \begin{equation}
    d(\bm{F}_j,\bm{F}_\ast)^2 \asymp \|\bm{\Theta}_j-\bm{\Theta}_\ast\|_\textup{F}^2 \leq (1-C\alpha\beta^{-1})^j d(\bm{F}_0,\bm{F}_\ast)^2 + C\alpha^{-2}(\xi_L^2 + \xi_R^2),
  \end{equation}
  where $\bm{\Theta}_j=\mathcal{P}^{-1}(\bm{L}_j\bm{R}_j^\top)$. Moreover, $\bm{L}_j$ and $\bm{R}_j$ are nonsingular for all $j=1,\dots,J$.
\end{theorem}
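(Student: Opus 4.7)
The plan is to proceed by induction on $j$, maintaining three invariants at every iterate: (i) the factorization distance contracts geometrically up to a statistical floor, (ii) the metric equivalence $d(\bm{F}_j,\bm{F}_\ast)^2 \asymp \|\bm{\Theta}_j-\bm{\Theta}_\ast\|_\textup{F}^2$ holds, and (iii) both $\bm{L}_j$ and $\bm{R}_j$ remain nonsingular with well-conditioned Gram matrices. Invariant (ii) follows from a standard lemma in the scaled-GD literature \citep{tong2021accelerating}: whenever the factors lie within an $O(\sqrt{\sigma_K})$ neighborhood of $(\bm{L}_\ast,\bm{R}_\ast)$ under the factorization metric, the squared Frobenius reconstruction error is equivalent to $d(\bm{F}_j,\bm{F}_\ast)^2$ up to absolute constants. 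The initial-condition assumption $d(\bm{F}_0,\bm{F}_\ast)^2\lesssim\alpha\beta^{-1}\sigma_K^2$ anchors the induction.

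At iteration $j$, I would fix the optimal alignment $\bm{Q}_j\in\textup{GL}(K)$ achieving the infimum in Definition \ref{def:3}, and define the aligned factors $\widetilde{\bm{L}}_j := \bm{L}_j\bm{Q}_j$ and $\widetilde{\bm{R}}_j := \bm{R}_j\bm{Q}_j^{-\top}$. The crucial observation is that the SRGD update is covariant under $(\bm{L},\bm{R})\mapsto(\bm{L}\bm{Q},\bm{R}\bm{Q}^{-\top})$: the preconditioners $(\bm{R}^\top\bm{R})^{-1/2}$ and $(\bm{L}^\top\bm{L})^{-1/2}$ absorb the reparameterization exactly, and element-wise truncation commutes with this de-scaled geometry. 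Hence the analysis may be carried out on the aligned iterates, and the infimum definition yields $d(\bm{F}_{j+1},\bm{F}_\ast)^2 \leq \|(\widetilde{\bm{L}}_{j+1}-\bm{L}_\ast)\bm{\Sigma}_\ast^{1/2}\|_\textup{F}^2 + \|(\widetilde{\bm{R}}_{j+1}-\bm{R}_\ast)\bm{\Sigma}_\ast^{1/2}\|_\textup{F}^2$.

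Next, I would expand each squared term after the update, using the decomposition $\bm{G}_L = \mathbb{E}[\nabla_{\bm{L}} f](\widetilde{\bm{R}}_j^\top\widetilde{\bm{R}}_j)^{-1/2} + \bm{\Delta}_L$ guaranteed by Definition \ref{def:2}, and the analogous one for $\bm{G}_R$. Via the chain rule $\nabla_{\bm{L}} f = \mathcal{P}(\nabla\mathcal{L})\bm{R}$, the linear cross terms for $\bm{L}$ and $\bm{R}$ combine into the population inner product $\langle\mathbb{E}[\nabla\mathcal{L}(\bm{\Theta}_j)], \bm{\Theta}_j-\bm{\Theta}_\ast\rangle$, plus commutator remainders between $\widetilde{\bm{R}}_j(\widetilde{\bm{R}}_j^\top\widetilde{\bm{R}}_j)^{-1/2}$ and the target factor. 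The RCG condition (Definition \ref{def:1}) supplies the $\alpha$-strong-convexity contraction together with a $(2\beta)^{-1}\|\mathbb{E}[\nabla\mathcal{L}]\|_\textup{F}^2$ term; with $\eta\asymp\beta^{-1}$ this exactly cancels the quadratic $\eta^2$ part of the update expansion. The stability residual contributes $\eta\phi\|\bm{\Theta}_j-\bm{\Theta}_\ast\|_\textup{F}^2 + \eta(\xi_L^2+\xi_R^2)$; the assumption $\phi\lesssim\alpha^2$ ensures the former is absorbed into the contraction, while the latter accumulates through the geometric series with rate $1-C\alpha\beta^{-1}$ into the $C\alpha^{-2}(\xi_L^2+\xi_R^2)$ floor. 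Nonsingularity of $\bm{L}_{j+1},\bm{R}_{j+1}$ follows from $d(\bm{F}_{j+1},\bm{F}_\ast)^2\lesssim\alpha\beta^{-1}\sigma_K^2$ combined with $\sigma_{\min}(\bm{L}_\ast)=\sigma_{\min}(\bm{R}_\ast)=\sqrt{\sigma_K}$ and a Weyl perturbation argument.

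The principal obstacle is the geometric mismatch between the preconditioner $(\widetilde{\bm{R}}_j^\top\widetilde{\bm{R}}_j)^{-1/2}$, anchored at the current iterate, and the metric weight $\bm{\Sigma}_\ast^{1/2}$, anchored at the ground truth. Controlling the one-step descent requires a uniform perturbation estimate of the form $\|(\widetilde{\bm{R}}_j^\top\widetilde{\bm{R}}_j)^{-1/2} - \bm{\Sigma}_\ast^{-1/2}\|_\textup{op} \lesssim d(\bm{F}_j,\bm{F}_\ast)/\sigma_K$ (and analogously for $\widetilde{\bm{L}}_j$), which must be preserved along the entire trajectory. Once this estimate is in place and the higher-order commutator terms are shown to be dominated by the $\alpha\beta^{-1}$ contraction---which is precisely where the smallness of the initial distance $d(\bm{F}_0,\bm{F}_\ast)^2\lesssim\alpha\beta^{-1}\sigma_K^2$ enters---unrolling the one-step recursion
\begin{equation}
  d(\bm{F}_{j+1},\bm{F}_\ast)^2 \leq (1-C\alpha\beta^{-1})\, d(\bm{F}_j,\bm{F}_\ast)^2 + C\beta^{-1}(\xi_L^2+\xi_R^2)
\end{equation}
yields the stated geometric bound.
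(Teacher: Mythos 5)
Your proposal follows essentially the same route as the paper's proof: align via the optimal $\bm{Q}_j$, exploit its suboptimality at step $j+1$, split the update into the population gradient plus the stability residual $\bm{\Delta}$, apply the RCG condition to the cross term with $\eta\asymp\beta^{-1}$, control the preconditioner-versus-$\bm{\Sigma}_\ast^{1/2}$ mismatch by the perturbation bounds of \citet{tong2021accelerating}, and unroll the resulting one-step contraction while closing the induction with $\xi_L^2+\xi_R^2\lesssim\alpha^3\beta^{-1}\sigma_K^2$. One caveat: your claim that element-wise truncation is exactly covariant under $(\bm{L},\bm{R})\mapsto(\bm{L}\bm{Q},\bm{R}\bm{Q}^{-\top})$ is not literally true (the de-scaled gradient changes by a right orthogonal factor, which does not commute with entrywise clipping), but this is harmless because Definition \ref{def:2} is assumed uniformly over all $(\bm{L},\bm{R})$, so---as in the paper---the stability bound is applied at the actual iterate and the alignment enters only through operator norms bounded by $(1-B)^{-1}$.
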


Theorem \ref{thm:1} underscores two fundamental advantages of the SRGD framework. First, the algorithm achieves local linear convergence at a rate determined solely by the condition number of the expected loss ($\beta/\alpha$). Crucially, this rate is independent of the condition number of the underlying matrix factorization ($\kappa$), indicating that the de-scaling step effectively preconditions the optimization landscape against geometric ill-conditioning. Second, the estimation error stabilizes at a floor determined by the statistical quality of the robust gradients ($\xi_L, \xi_R$). This clean separation between optimization geometry and statistical robustness validates our design: the algorithm maintains efficient convergence and robust error control even when the target matrix is highly ill-conditioned or the factors are arbitrarily scaled.

\section{Sparsity and Scaled Hard Thresholding}\label{sec:3}

\subsection{Sparse Kronecker Factors and Scaled Hard Thresholding}\label{sec:3.1}

In high-dimensional regimes where the ambient dimensions $d_1, d_2$ exceed the sample size $n$, exploiting low-dimensional structure is essential for consistent estimation. While the Kronecker rank $K$ captures global correlation, additional parsimony is often achieved by assuming the factor matrices $\{\bm{A}_k, \bm{B}_k\}$ are sparse. 
In the permuted domain $\mathcal{P}(\bm{\Theta}) = \bm{L}\bm{R}^\top$, this corresponds to row-wise sparsity in $\bm{L}$ and $\bm{R}$. Specifically, if the rows of $\bm{L}$ are sparse, entire blocks of the Kronecker product are zeroed out.

However, enforcing this sparsity is non-trivial due to the factorization ambiguity. A standard approach would be to apply row-wise hard thresholding directly to the iterates $\bm{L}_j$. This approach is flawed by the scaling ambiguities.
Consider the transformation $(\bm{L}', \bm{R}') = (\bm{L}\bm{Q}, \bm{R}\bm{Q}^{-\top})$. The row norms of the transformed factor $\bm{L}'$ are:
\begin{equation}
    \|\bm{e}_i^\top \bm{L}'\|_2 = \|\bm{e}_i^\top \bm{L}\bm{Q}\|_2,
\end{equation}
where $\bm{e}_i$ is the coordinate vector whose $i$-th element is one and others zero.
Since $\bm{Q}$ is arbitrary, it can arbitrarily inflate or deflate specific row norms, thereby changing which rows survive the thresholding operation. Consequently, the recovered support set depends on the arbitrary basis $\bm{Q}$ rather than the intrinsic signal, leading to inconsistent variable selection.

To resolve this, we introduce Scaled Hard Thresholding (SHT). The core principle is to perform variable selection in a \textit{scale-invariant domain}. We define the \textit{scaled factors} as:
\begin{equation}\label{eq:scaled_factors}
    \overline{\bm{L}} = \bm{L}(\bm{R}^\top\bm{R})^{1/2} \quad \text{and} \quad \overline{\bm{R}} = \bm{R}(\bm{L}^\top\bm{L})^{1/2}.
\end{equation}
The row norms of these scaled factors are invariant to the parameterization. For any $\bm{Q} \in \text{GL}(K)$, the $i$-th row norm of the transformed scaled factor $\overline{\bm{L}}'$ satisfies:
\begin{align}
    \|\bm{e}_i^\top \overline{\bm{L}}'\|_2^2 &= \|\bm{e}_i^\top \bm{L}\bm{Q} (\bm{Q}^{-1}\bm{R}^\top\bm{R}\bm{Q}^{-\top})^{1/2}\|_2^2 \notag 
    = \bm{e}_i^\top \bm{L} (\bm{R}^\top\bm{R}) \bm{L}^\top \bm{e}_i = \|\bm{e}_i^\top \overline{\bm{L}}\|_2^2.
\end{align}
Thus, thresholding based on the row norms of $\overline{\bm{L}}$ selects a support set that depends strictly on the underlying parameter $\bm{\Theta}$, respecting the geometry of the Kronecker manifold.

The SRGD--SHT algorithm (Algorithm \ref{alg:2}) integrates this operator into the iterative update. In each iteration $j$, we compute robust gradient updates to obtain intermediate iterates $\widetilde{\bm{L}}_{j+1}, \widetilde{\bm{R}}_{j+1}$, apply hard thresholding to their scaled counterparts, and then restore the original scaling to maintain the factorization structure.

\begin{algorithm}[!htp]
  \setstretch{1.4}
  \caption{SRGD with Scaled Hard Thresholding (SRGD--SHT)}
  \label{alg:2}
  \textbf{Input}: initial value $\bm{F}_0$, step size $\eta>0$, truncation parameter $\tau>0$, number of iterations $J$, and sparsity levels $s_{L},s_{R}$\\
  \textbf{For }$j=0,\dots,J-1$\\
  \hspace*{0.5cm}\textit{// 1. Scaled Robust Gradient Descent}\\
  \hspace*{1cm}$\widetilde{\bm{L}}_{j+1} = \bm{L}_j - \eta\cdot \bm{G}_{L}(\bm{L}_j,\bm{R}_j;\tau)(\bm{R}_j^\top\bm{R}_j)^{-1/2}$\\
  \hspace*{1cm}$\widetilde{\bm{R}}_{j+1} = \bm{R}_j - \eta\cdot \bm{G}_{R}(\bm{L}_j,\bm{R}_j;\tau)(\bm{L}_j^\top\bm{L}_j)^{-1/2}$\\
  \hspace*{0.5cm}\textit{// 2. Scaled Hard Thresholding}\\
  \hspace*{1cm}$\bm{L}_{j+1} = \text{HT}(\widetilde{\bm{L}}_{j+1}(\widetilde{\bm{R}}_{j+1}^\top\widetilde{\bm{R}}_{j+1})^{1/2},s_L)(\widetilde{\bm{R}}_{j+1}^\top\widetilde{\bm{R}}_{j+1})^{-1/2}$\\
  \hspace*{1cm}$\bm{R}_{j+1} = \text{HT}(\widetilde{\bm{R}}_{j+1}(\widetilde{\bm{L}}_{j+1}^\top\widetilde{\bm{L}}_{j+1})^{1/2},s_R)(\widetilde{\bm{L}}_{j+1}^\top\widetilde{\bm{L}}_{j+1})^{-1/2}$\\
  \textbf{End for}\\
  \textbf{Output} $\widehat{\bm{F}}=(\bm{L}_J,\bm{R}_J)$ and $\widehat{\bm{\Theta}}=\mathcal{P}^{-1}(\bm{L}_J\bm{R}_J^\top)$
\end{algorithm}

\subsection{Computational Convergence Analysis with Sparsity}\label{sec:3.3}

We analyze the local convergence of SRGD--SHT assuming the ground truth factors $\bm{L}_\ast$ and $\bm{R}_\ast$ are row-sparse with supports of size $s_{L,\ast}$ and $s_{R,\ast}$. Let $s_L \ge s_{L,\ast}$ and $s_R \ge s_{R,\ast}$ be the user-specified sparsity levels.
To handle high-dimensional constraints, we relax the gradient stability condition (Definition \ref{def:2}) to hold only over sparse subsets.

\begin{definition}[Stability on Sparse Sets]\label{def:4}
    The de-scaled robust gradient functions $\bm{G}_L, \bm{G}_R$ are stable on sparse sets if there exist constants $\phi$, $\xi_{L,s}$, and $\xi_{R,s}$ such that for all row-subsets $S_1, S_2$ with $|S_1|\leq s_L + s_{L,\ast}$ and $|S_2|\leq s_R + s_{R,\ast}$:
    \begin{equation}
        \begin{split}
            \|\{\bm{G}_L - \mathbb{E}[\nabla_{\bm{L}}f](\bm{R}^\top\bm{R})^{-1/2}\}_{S_1}\|_\textup{F}^2 & \leq \phi \|\mathcal{P}^{-1}(\bm{L}\bm{R}^\top)-\bm{\Theta}_\ast\|_\textup{F}^2 + \xi_{L,s}^2,\\
            \|\{\bm{G}_R - \mathbb{E}[\nabla_{\bm{R}}f](\bm{L}^\top\bm{L})^{-1/2}\}_{S_2}\|_\textup{F}^2 & \leq \phi \|\mathcal{P}^{-1}(\bm{L}\bm{R}^\top)-\bm{\Theta}_\ast\|_\textup{F}^2 + \xi_{R,s}^2,
        \end{split}
    \end{equation}
    where $\bm{M}_S$ denotes the submatrix of $\bm{M}$ formed by rows indexed by $S$.
\end{definition}

This condition implies that the truncated gradients concentrate around their population means when projected onto low-dimensional subspaces, even if the ambient dimension is large.

\begin{theorem}[Local Convergence of SRGD--SHT]\label{thm:2}
  Suppose the loss function $\mathcal{L}$ satisfies the RCG condition (Definition \ref{def:1}) and the sparse stability condition (Definition \ref{def:4}) holds with $\phi\lesssim\alpha^2$ and $\xi_{L,s}^2 + \xi_{R,s}^2\lesssim\alpha^3\beta^{-1}\sigma_K^2$. 
  If the sparsity levels satisfy $s_L\gtrsim(\beta/\alpha)^2 s_{L,\ast}$ and $s_R\gtrsim(\beta/\alpha)^2 s_{R,\ast}$, the initial distance satisfies $d(\bm{F}_0,\bm{F}_\ast)^2\lesssim\alpha\beta^{-1}\sigma_K^2$, and $\eta\asymp\beta^{-1}$, then for all $j=1,2,\dots,J$:
  \begin{equation}
    \|\bm{\Theta}_j-\bm{\Theta}_\ast\|_\textup{F}^2 \leq (1-C\alpha\beta^{-1})^j d(\bm{F}_0,\bm{F}_\ast)^2 + C\alpha^{-2}(\xi_{L,s}^2 + \xi_{R,s}^2).
  \end{equation}
  where $\bm{\Theta}_j = \mathcal{P}^{-1}(\bm{L}_j\bm{R}_j^\top)$. Moreover, $\bm{L}_j$ and $\bm{R}_j$ are nonsingular for all $j$.
\end{theorem}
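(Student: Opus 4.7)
My plan is to extend the local convergence analysis of Theorem \ref{thm:1} by inserting an iterative hard thresholding (IHT) expansion bound for the SHT substep and absorbing this expansion into the SRGD contraction rate through the sparsity assumption $s_L\gtrsim(\beta/\alpha)^2 s_{L,\ast}$ (and its analogue for $s_R$). Each outer iteration decomposes as $\bm{F}_j\to\widetilde{\bm{F}}_{j+1}\to\bm{F}_{j+1}$, and I track the Procrustes-type distance $d(\cdot,\bm{F}_\ast)^2$ through both substeps, using the equivalence $d(\bm{F},\bm{F}_\ast)^2\asymp\|\mathcal{P}^{-1}(\bm{L}\bm{R}^\top)-\bm{\Theta}_\ast\|_\text{F}^2$ (established as part of Theorem \ref{thm:1}) to translate between Frobenius and factorization metrics.

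\textbf{SRGD substep.} For the intermediate iterate $\widetilde{\bm{F}}_{j+1}$, I replay the one-step descent argument from Theorem \ref{thm:1}. The only structural change is that $\bm{L}_j$ and $\bm{L}_\ast$ are both row-sparse, with joint row support of cardinality at most $s_L+s_{L,\ast}$; the contribution of any other row of the truncated gradient to the inner product with $(\bm{L}_j\bm{Q}_j-\bm{L}_\ast)\bm{\Sigma}_\ast^{1/2}$ vanishes identically. Consequently Definition \ref{def:4} replaces Definition \ref{def:2} verbatim in the descent inequality, giving
\begin{equation}
  d(\widetilde{\bm{F}}_{j+1},\bm{F}_\ast)^2\leq(1-c_1\alpha\beta^{-1})\,d(\bm{F}_j,\bm{F}_\ast)^2+C\alpha^{-2}(\xi_{L,s}^2+\xi_{R,s}^2),
\end{equation}
with the same constants as in Theorem \ref{thm:1}.

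\textbf{SHT substep.} Because $\|\bm{e}_i^\top\overline{\bm{L}}\|_2^2=\bm{e}_i^\top\mathcal{P}(\bm{\Theta})\mathcal{P}(\bm{\Theta})^\top\bm{e}_i$ is invariant under the Kronecker rescaling $\bm{Q}$, SHT effectively keeps the top-$s_L$ rows of the intrinsic matrix $\mathcal{P}(\widetilde{\bm{L}}_{j+1}\widetilde{\bm{R}}_{j+1}^\top)$, and similarly the top-$s_R$ columns. A classical row-wise IHT expansion lemma then yields
\begin{equation}
  \|\bm{L}_{j+1}\bm{R}_{j+1}^\top-\bm{L}_\ast\bm{R}_\ast^\top\|_\text{F}^2\leq(1+c_2)\,\|\widetilde{\bm{L}}_{j+1}\widetilde{\bm{R}}_{j+1}^\top-\bm{L}_\ast\bm{R}_\ast^\top\|_\text{F}^2,
\end{equation}
with $c_2\lesssim\sqrt{s_{L,\ast}/(s_L-s_{L,\ast})}+\sqrt{s_{R,\ast}/(s_R-s_{R,\ast})}\lesssim\alpha/\beta$ under the stated sparsity assumption. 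Composing with the previous bound and choosing the absolute constant in $s_L\gtrsim(\beta/\alpha)^2 s_{L,\ast}$ large enough so that $(1+c_2)(1-c_1\alpha\beta^{-1})\leq 1-c\alpha\beta^{-1}$, the theorem follows by induction on $j$.

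\textbf{Main obstacle.} The principal difficulty is the interaction between three non-commuting transformations: the Procrustes rotation $\bm{Q}$ defining $d$, the scaled-domain hard thresholding, and the $\bm{\Sigma}_\ast^{1/2}$-weighting in the factorization distance. I would resolve this in two stages. First, I use the row-norm invariance to argue that the support selected by SHT is independent of $\bm{Q}$; a single Procrustes alignment can therefore be chosen for both substeps within one iteration, so the IHT expansion bound, stated naturally in the Frobenius norm of $\bm{L}\bm{R}^\top$, transfers to the weighted distance $d$. Second, I maintain, as an inductive invariant, that $\sigma_K(\bm{L}_j)$ and $\sigma_K(\bm{R}_j)$ remain of order $\sigma_K^{1/2}$, so the scaling maps $(\bm{R}_j^\top\bm{R}_j)^{\pm 1/2}$ are uniformly well-conditioned, both ensuring that the SHT-induced Frobenius expansion translates into a $d$-expansion up to constants and that the nonsingularity required by Algorithm \ref{alg:2} is preserved throughout; the invariant is restored at each step because the overall distance remains within the contraction basin $d(\bm{F}_j,\bm{F}_\ast)^2\lesssim\alpha\beta^{-1}\sigma_K^2$ guaranteed by the recursion.
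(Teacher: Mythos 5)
Your proposal is correct and follows essentially the same route as the paper: the same two-substep decomposition per iteration, the same replay of the Theorem \ref{thm:1} descent inequality with Definition \ref{def:4} restricted to the joint row supports, the same hard-thresholding expansion factor $q\asymp\sqrt{s_{\ast}/(s-s_{\ast})}\lesssim\alpha/\beta$ absorbed into the contraction via the $(\beta/\alpha)^2$ oversampling, and the same use of the well-conditioned scaling maps (the paper's Lemma \ref{lemma:1} bounds $\|(\widetilde{\bm{R}}^\top\widetilde{\bm{R}})^{\pm 1/2}\widetilde{\bm{Q}}\bm{\Sigma}_\ast^{\pm 1/2}\|_\textup{op}$ by $(1\pm B)^{\mp 1}$) to transfer the thresholding bound from the scaled domain into the weighted factorization distance. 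The only cosmetic difference is that you state the SHT expansion on the product $\bm{L}\bm{R}^\top$ and then convert, whereas the paper keeps everything in the weighted factor distance throughout; both are handled by the same operator-norm bounds.
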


Theorem \ref{thm:2} establishes that SRGD--SHT attains local linear convergence up to a statistical error floor. Crucially, because the gradient error is measured only on sparse sets ($\xi_{L,s}, \xi_{R,s}$), the resulting statistical rate scales with the sparsity level $s$ rather than the ambient dimensions $d_1, d_2$.

\section{Application to Matrix Models}\label{sec:4}

In this section, we instantiate the proposed SRGD--SHT framework for three canonical matrix estimation problems: matrix trace regression, matrix generalized linear models (GLMs), and bilinear models. We specify the gradient oracles and derive explicit sample size and moment requirements for each case. We focus on Algorithm \ref{alg:2}, with Algorithm \ref{alg:1} recovered when the sparsity levels satisfy $s_L=d_1$ and $s_R=d_2$. For convenience, we define $s=\max(\min(s_L+s_{L,\ast},d_1),\min(s_R+s_{R,\ast},d_2))$ and $d=\max(d_1,d_2)$.

\subsection{Matrix Trace Regression}\label{sec:4.1}

We first consider the matrix trace regression model
\begin{equation}\label{eq:MatrixTraceReg}
  Y_i = \langle\bm{X}_i,\bm{\Theta}_\ast\rangle + E_i,\quad i=1,\dots,n,
\end{equation}
where $\bm{\Theta}_\ast$ admits the Kronecker structure in \eqref{eq:KP}. We minimize the quadratic loss function $\mathcal{L}_n(\bm{\Theta}) = (2n)^{-1}\sum_{i=1}^n(Y_i-\langle\bm{X}_i,\bm{\Theta}\rangle)^2$.

To ensure consistent estimation, we employ a two-stage estimation procedure.
\begin{enumerate}
    \item \textbf{Initialization (Robust Dantzig Selector):} We first compute a coarse initial estimate using a robustified Dantzig selector. Let $\widehat{\bm{\Sigma}}_x(\tau_x) = n^{-1}\sum_{i=1}^n\text{T}(\bm{x}_i\bm{x}_i^\top,\tau_x)$ and $\widehat{\bbm{\sigma}}_{yx}(\tau_{yx}) = n^{-1}\sum_{i=1}^n\text{T}(Y_i\bm{x}_i,\tau_{yx})$ be truncated covariance estimators, where $\bm{x}_i = \text{vec}(\bm{X}_i)$. We solve:
    \begin{equation}\label{eq:DS_init}
      \widehat{\bbm{\theta}}_{\text{DS}} = \argmin_{\bbm{\theta}\in\mathbb{R}^{p_1p_2}}\|\bbm{\theta}\|_1\quad\text{s.t.}\quad\|\widehat{\bm{\Sigma}}_{x}(\tau_x)\bbm{\theta} - \widehat{\bbm{\sigma}}_{yx}(\tau_{yx})\|_\infty \leq R.
    \end{equation}
    We reshape $\widehat{\bbm{\theta}}_{\text{DS}}$ into a matrix, apply the permutation $\mathcal{P}$, and perform SVD followed by Scaled Hard Thresholding (SHT) to obtain initial factors $\bm{L}_0, \bm{R}_0$.
    \item \textbf{Refinement (SRGD--SHT):} Starting from $(\bm{L}_0, \bm{R}_0)$, we run Algorithm \ref{alg:2} using the de-scaled robust gradients:
    \begin{equation}
      \begin{split}
        \bm{G}_L & = \frac{1}{n}\sum_{i=1}^n\text{T}\left(\left\{\langle\mathcal{P}(\bm{X}_i),\bm{L}\bm{R}^\top\rangle-Y_i\right\}\mathcal{P}(\bm{X}_i)\bm{R}(\bm{R}^\top\bm{R})^{-1/2},\tau\right), \\
        \bm{G}_R & = \frac{1}{n}\sum_{i=1}^n\text{T}\left(\left\{\langle\mathcal{P}(\bm{X}_i),\bm{L}\bm{R}^\top\rangle-Y_i\right\}\mathcal{P}(\bm{X}_i)^\top\bm{L}(\bm{L}^\top\bm{L})^{-1/2},\tau\right).
      \end{split}
    \end{equation}
\end{enumerate}

To quantify robustness, we define the requisite moment conditions. For any $\eta>0$, define the global moments $M_{x,\eta} = \sup_{\|\bm{v}\|_2=1}\mathbb{E}[|\textup{vec}(\bm{X}_i)^\top\bm{v}|^{\eta}]$ and $M_{e,\eta} = \mathbb{E}[|E_i|^{\eta}|\bm{X}_i]$. For high-dimensional inference, we control the predictor moments on sparse subspaces:
\begin{equation}\label{eq:P_X_moment}
  M_{x,\eta,s} = \max\left(\max_{j}\sup_{\bm{v}\in\mathbb{S}(d_2,s_2)}\mathbb{E}\left[|\bm{c}_j^\top\mathcal{P}(\bm{X}_i)\bm{v}|^{\eta}\right],\max_{k}\sup_{\bm{v}\in\mathbb{S}(d_1,s_2)}\mathbb{E}\left[|\bm{v}^\top\mathcal{P}(\bm{X}_i)\bm{c}_k|^{\eta}\right]\right),
\end{equation}
where $\mathbb{S}(d,s)=\{\bm{v}\in\mathbb{R}^{d}:\|\bm{v}\|_0=s,\|\bm{v}\|_2=1\}$.

\begin{assumption}[Moment Conditions]
  \label{asmp:moment}
  The vectorized covariate $\textup{vec}(\bm{X}_i)$ has zero mean and positive-definite covariance $\bm{\Sigma}_x$ with condition number $\kappa_x = \beta_x/\alpha_x$. There exist $\epsilon\in(0,1]$ and $\lambda\in(0,1]$ such that $M_{e,1+\epsilon}<\infty$ and $M_{x,2+2\lambda}<\infty$.
\end{assumption}

We now state the main result establishing the convergence of the two-stage procedure.

\begin{theorem}[Convergence of Matrix Trace Regression]
  \label{thm:matrix_trace_reg}
  Suppose Assumption \ref{asmp:moment} holds. Let the tuning parameters for initialization $(\tau_x, \tau_{yx}, R)$ be chosen as in Proposition \ref{prop:matreg_initial} (see Appendix \ref{append:B}), and the SRGD parameters satisfy $\tau \asymp [nM_{\textup{eff},1+\epsilon,s}/\log d]^{1/(1+\epsilon)}$ and $\eta\asymp\beta_x^{-1}$.
  If the sample size $n$ satisfies:
  \begin{equation}\label{eq:sample_size_global}
    n \gtrsim \mathfrak{C}_1 \cdot \left(s K \right)^{\frac{1+\min(\epsilon,\lambda)}{2\min(\epsilon,\lambda)}}\log d,
  \end{equation}
  where $\mathfrak{C}_1$ is a polynomial function of the moment constants and condition numbers (see Appendix \ref{append:B}), then with probability at least $1-C\exp(-c\log d)$, the final output $\widehat{\bm{\Theta}}$ satisfies:
  \begin{equation}\label{eq:final_rate}
    \|\widehat{\bm{\Theta}}-\bm{\Theta}_\ast\|_\textup{F} \lesssim \alpha_x^{-1}\sqrt{s}\left[\frac{M_{\textup{eff},1+\epsilon,s}^{1/\epsilon}\log d}{n}\right]^{\frac{\epsilon}{1+\epsilon}}.
  \end{equation}
\end{theorem}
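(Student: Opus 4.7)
The plan is to apply Theorem \ref{thm:2} to the quadratic loss $\mathcal{L}_n$, which reduces the problem to verifying three prerequisites: the RCG condition, the basin-of-attraction quality of the initialization, and the sparse de-scaled gradient stability (Definition \ref{def:4}) uniformly along the SRGD--SHT trajectory. Once these are in place, linear contraction plus the statistical floor from Theorem \ref{thm:2} will yield \eqref{eq:final_rate} after $J \gtrsim (\beta_x/\alpha_x)\log(n/(sK))$ iterations.

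RCG is immediate: $\mathbb{E}[\nabla\mathcal{L}(\bm{\Theta})] = \text{reshape}(\bm{\Sigma}_x \text{vec}(\bm{\Theta}-\bm{\Theta}_\ast))$, so under Assumption \ref{asmp:moment} one takes $(\alpha,\beta)=(\alpha_x,\beta_x)$. For initialization, I would invoke Proposition \ref{prop:matreg_initial} for the robust Dantzig selector \eqref{eq:DS_init} to obtain an $\ell_2$ bound on $\widehat{\bbm{\theta}}_{\textup{DS}}-\text{vec}(\bm{\Theta}_\ast)$ of the form $\sqrt{sK}$ times the per-coordinate truncated-moment error; after reshaping, permuting, SVD-truncating, and applying SHT, a Davis--Kahan-type argument bounds the factorization distance $d(\bm{F}_0,\bm{F}_\ast)$ by a constant multiple of this error. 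The sample-size requirement \eqref{eq:sample_size_global} is precisely the threshold that makes this quantity smaller than $c\sqrt{\alpha_x/\beta_x}\,\sigma_K$, which is what the basin-of-attraction clause of Theorem \ref{thm:2} demands.

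The main technical burden is verifying sparse de-scaled gradient stability. At any iterate $(\bm{L},\bm{R})$, the $i$-th raw gradient summand entering $\bm{G}_L$ is a product of a residual $\langle\mathcal{P}(\bm{X}_i),\bm{L}\bm{R}^\top\rangle-Y_i$ and a sparse-direction projection of $\mathcal{P}(\bm{X}_i)\bm{R}(\bm{R}^\top\bm{R})^{-1/2}$. Under Assumption \ref{asmp:moment} and the parameter-error-controlled residual, this summand has $(1+\epsilon)$-th moment bounded by $M_{\textup{eff},1+\epsilon,s}$, which is the product of $M_{x,2+2\lambda,s}$-type predictor moments with noise moments $M_{e,1+\epsilon}$. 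The standard bias--variance decomposition of coordinate-wise truncation then gives bias $\lesssim \tau^{-\epsilon} M_{\textup{eff},1+\epsilon,s}$ and, via a Bernstein inequality tailored to bounded summands, fluctuation $\lesssim \sqrt{\tau M_{\textup{eff},1+\epsilon,s}\log d /n}$ per coordinate. Balancing the two with $\tau \asymp [n M_{\textup{eff},1+\epsilon,s}/\log d]^{1/(1+\epsilon)}$ produces the per-coordinate rate $[M_{\textup{eff},1+\epsilon,s}\log d /n]^{\epsilon/(1+\epsilon)}$. To translate this into the sparse-row norm required by Definition \ref{def:4}, I would run an $\varepsilon$-net over $\mathbb{S}(d_2,s_2)$ combined with a union bound over $\binom{d}{s}$ row-subsets $S_1$, which contributes the $\log d$ factor and yields $\xi_{L,s}^2 + \xi_{R,s}^2 \lesssim s\,[M_{\textup{eff},1+\epsilon,s}^{1/\epsilon}\log d /n]^{2\epsilon/(1+\epsilon)}$. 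The parameter-error-proportional term in Definition \ref{def:4} is absorbed by the bias of the truncated residual and contributes the coefficient $\phi\lesssim\alpha_x^2$ required by Theorem \ref{thm:2}, provided the sample size satisfies \eqref{eq:sample_size_global}.

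The hardest step will be making this stability bound hold \emph{uniformly} over the trajectory, since the de-scaling matrix $(\bm{R}_j^\top\bm{R}_j)^{-1/2}$ depends on the random iterate $\bm{R}_j$, so the summands are not i.i.d. across $j$. The remedy is to use the basin-of-attraction guarantee of Theorem \ref{thm:2} as a bootstrap: once $d(\bm{F}_0,\bm{F}_\ast)^2 \lesssim \alpha_x\beta_x^{-1}\sigma_K^2$, one can show inductively that $\bm{R}_j^\top\bm{R}_j$ and $\bm{L}_j^\top\bm{L}_j$ stay within a constant spectral window around $\bm{\Sigma}_\ast$, so the de-scaled summands remain uniformly subject to the same moment bound. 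A net over the $(s_L+s_{L,\ast})\times K$ low-dimensional factor space combined with a Lipschitz bound on the gradient oracle in $(\bm{L},\bm{R})$ then yields a single high-probability event on which stability holds at every iterate simultaneously. Plugging the resulting floor into Theorem \ref{thm:2} and using $\alpha_x^{-2}(\xi_{L,s}^2+\xi_{R,s}^2)\lesssim \alpha_x^{-2} s\,[M_{\textup{eff},1+\epsilon,s}^{1/\epsilon}\log d /n]^{2\epsilon/(1+\epsilon)}$ gives \eqref{eq:final_rate} after taking square roots.
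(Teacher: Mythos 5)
Your proposal follows essentially the same route as the paper: verify RCG with $(\alpha,\beta)=(\alpha_x,\beta_x)$, invoke Proposition \ref{prop:matreg_initial} for the initialization basin, establish the sparse stability condition by a truncation bias--variance--Bernstein argument balanced at $\tau \asymp [nM_{\textup{eff},1+\epsilon,s}/\log d]^{1/(1+\epsilon)}$, and feed the resulting $\phi\lesssim\alpha_x^2$ and $\xi_{L,s}^2+\xi_{R,s}^2$ into Theorem \ref{thm:2}. The only substantive compression is your one-line claim that the parameter-error-proportional term is ``absorbed by the bias of the truncated residual'': the paper spends most of Proposition \ref{prop:stability_linear_reg} on exactly this (the terms $\bm{M}_{L,3}$ and $\bm{M}_{L,4}$, each requiring its own Markov/Bernstein bound on the drift $q_{ijk}$ plus a case split on the magnitude of $\|\bm{L}\bm{R}^\top-\bm{L}_*\bm{R}_*^\top\|_\textup{F}$ to extract the coefficient $\phi_{\lambda,\epsilon}$), and it dispenses with your $\varepsilon$-net and union-over-subsets machinery by bounding the coordinate-wise maximum over $j\in[d_1]$, $k\in[K]$ and multiplying by $s_1K$ (your extra net over the factor trajectory is a legitimate, arguably more careful, way to handle the data-dependence of $(\bm{R}_j^\top\bm{R}_j)^{-1/2}$, which the paper treats only implicitly); also note the variance proxy for the truncated summand should be $\tau^{1-\epsilon}M_{\textup{eff},1+\epsilon,s}$ rather than $\tau M_{\textup{eff},1+\epsilon,s}$, though the balanced rate you state is correct.
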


Theorem \ref{thm:matrix_trace_reg} guarantees that the proposed pipeline converges to the optimal error floor provided the sample size is sufficient. Importantly, the result elucidates the distinct roles played by the predictor tail index $\lambda$ and the noise tail index $\epsilon$.

\begin{remark}[Optimality and Phase Transitions]\label{rem:optimality}
    The theoretical guarantees exhibit two distinct phase transitions governed by the tail behaviors of the data. 
    \begin{itemize}
      \item[1.] Statistical Efficiency ($\epsilon$): The final estimation error in \eqref{eq:final_rate} depends solely on the noise moment $\epsilon$. In the finite variance regime ($\epsilon=1$), the estimator recovers the optimal parametric rate $O(\sqrt{s \log d / n})$ characterizing Gaussian settings. As the noise tails become heavier ($\epsilon < 1$), the rate naturally slows to $O(n^{-\frac{\epsilon}{1+\epsilon}})$, reflecting the information-theoretic limit of robust estimation. This phase transition phenomenon has also been observed in other regression literature and is proved to be minimax optimal \citep{sun2020adaptive,tan2023sparse}.
      \item[2.] Sample Complexity ($\lambda$): The sample size requirement \eqref{eq:sample_size_global} depends on $\min(\epsilon, \lambda)$. This implies that the bottleneck for estimation is determined by the heavier of the two tails (predictor or noise). Specifically, if the predictors are very heavy-tailed ($\lambda < \epsilon$), the sample complexity scales as $O(s^{\frac{1+\lambda}{2\lambda}}\log d)$, which is super-linear in the dimension. This increased requirement reflects the difficulty of ensuring uniform concentration of the robust gradient and covariance estimators when the predictors themselves are prone to large deviations.
    \end{itemize}
    
\end{remark}

\subsection{Matrix Generalized Linear Models}\label{sec:4.2}

We next consider Matrix Generalized Linear Models (Matrix GLMs), which relate a matrix predictor $\bm{X}_i\in\mathbb{R}^{p\times q}$ to a scalar response $Y_i\in\mathbb{R}$ via an exponential family distribution:
\begin{equation}\label{eq:matrixGLM}
  f_{Y_i}(y_i|\bm{X}_i) \propto \exp\left\{\frac{y_i\langle\bm{X}_i,\bm{\Theta}\rangle-g(\langle\bm{X}_i,\bm{\Theta}\rangle)}{c(\sigma)}\right\},\quad i=1,\dots,n.
\end{equation}
The negative log-likelihood loss is $\mathcal{L}_n(\bm{\Theta})=n^{-1}\sum_{i=1}^n [g(\langle\bm{X}_i,\bm{\Theta}\rangle) - Y_i\langle\bm{X}_i,\bm{\Theta}\rangle]$. We impose the following smoothness condition on the link function.

\begin{assumption}[Lipschitz Smoothness]
  \label{asmp:Lipschitz}
  The cumulant function $g(\cdot)$ has an $L$-Lipschitz continuous derivative; i.e., $|g'(a)-g'(b)|\leq L|a-b|$ for all $a,b\in\mathbb{R}$.
\end{assumption}

Analogous to the trace regression case, we employ a two-stage estimation procedure:
\begin{enumerate}
    \item \textbf{Initialization (Robust Lasso):} We solve a robustified Lasso problem on the vectorized parameter. Let $\widetilde{\bm{X}}_i(\tau_x) = \text{T}(\bm{X}_i,\tau_x)$ be the truncated predictors. We compute:
    \begin{equation}\label{eq:Lasso_init}
      \widehat{\bbm{\theta}}_{\textup{LA}} = \argmin_{\bbm{\theta}}\left\{\frac{1}{n}\sum_{i=1}^n\left[g(\langle\text{vec}(\widetilde{\bm{X}}_i(\tau_x)),\bbm{\theta}\rangle) - Y_i\langle\text{vec}(\widetilde{\bm{X}}_i(\tau_x)),\bbm{\theta}\rangle\right] + R\|\bbm{\theta}\|_1\right\}.
    \end{equation}
    The solution is matricized, permuted, and subjected to SVD-based SHT to yield $\bm{L}_0, \bm{R}_0$.
    \item \textbf{Refinement (SRGD--SHT):} We refine the estimates using Algorithm \ref{alg:2} with gradients:
    \begin{equation}
      \begin{split}
        \bm{G}_L & = \frac{1}{n}\sum_{i=1}^n\text{T}\left(\left\{g'(\langle\mathcal{P}(\bm{X}_i),\bm{L}\bm{R}^\top\rangle)-Y_i\right\}\mathcal{P}(\bm{X}_i)\bm{R}(\bm{R}^\top\bm{R})^{-1/2};\tau\right), \\
        \bm{G}_R & = \frac{1}{n}\sum_{i=1}^n\text{T}\left(\left\{g'(\langle\mathcal{P}(\bm{X}_i),\bm{L}\bm{R}^\top\rangle)-Y_i\right\}\mathcal{P}(\bm{X}_i)^\top\bm{L}(\bm{L}^\top\bm{L})^{-1/2};\tau\right).
      \end{split}
    \end{equation}
\end{enumerate}

We adopt the same moment definitions as in Section \ref{sec:4.1}, denoting the effective conditional noise moment by $M_{e,\eta} = \mathbb{E}[|Y_i - g'(\langle\bm{X}_i,\bm{\Theta}_\ast\rangle)|^{\eta}|\bm{X}_i]$.

\begin{theorem}[Convergence of Matrix GLM]
  \label{thm:matrixGLM}
  Suppose Assumptions \ref{asmp:moment} and \ref{asmp:Lipschitz} hold. Let the initialization parameters be tuned as in Proposition \ref{prop:logistic_initial} (Appendix \ref{append:C}), and SRGD parameters satisfy $\tau \asymp [nM_{\textup{eff},1+\epsilon,s}/\log d]^{1/(1+\epsilon)}$ and $\eta\asymp\beta_x^{-1}$.
  If the sample size satisfies  
  \begin{equation}
    n \gtrsim \mathfrak{C}_2 \cdot \left(s K \right)^{\frac{1+\lambda}{2\lambda}}\log d,
  \end{equation}
  with a constant $\mathfrak{C}_2$ adapted for GLM (see Appendix \ref{append:C}), then with probability at least $1-C\exp(-c\log d)$, the two-stage estimator $\widehat{\bm{\Theta}}$ satisfies:
  \begin{equation}
    \|\widehat{\bm{\Theta}}-\bm{\Theta}_\ast\|_\textup{F} \lesssim \alpha_x^{-1}\sqrt{s}\left[\frac{M_{\textup{eff},1+\epsilon,s}^{1/\epsilon}\log d}{n}\right]^{\frac{\epsilon}{1+\epsilon}}.
  \end{equation}
\end{theorem}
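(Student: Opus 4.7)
The plan is to verify the hypotheses of Theorem \ref{thm:2} for the GLM negative log-likelihood and thereby reduce the statement to an application of that master theorem. Three ingredients must be checked: (i) the population loss satisfies the RCG condition (Definition \ref{def:1}) with parameters $\alpha \asymp \alpha_x$ and $\beta \asymp L\beta_x$; (ii) the truncated de-scaled gradients $\bm{G}_L, \bm{G}_R$ satisfy the sparse stability condition (Definition \ref{def:4}) with $\phi \lesssim \alpha_x^2$ and a floor of the promised order; and (iii) the robust Lasso initializer, projected via SVD and SHT, lies within the basin of attraction required by Theorem \ref{thm:2}. Once these are in place, Theorem \ref{thm:2} delivers linear convergence to $O(\alpha_x^{-2}(\xi_{L,s}^2 + \xi_{R,s}^2))$, and substituting the statistical floor produces the claimed rate.

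For (i), I would use Assumption \ref{asmp:Lipschitz} to upper bound the Hessian uniformly by $L\bm{\Sigma}_x$, yielding $\beta \asymp L\beta_x$. The lower curvature bound is more delicate: because $g''$ need not be globally bounded below, I would localize to a neighborhood of $\bm{\Theta}_\ast$ where $|\langle \bm{X}_i, \bm{\Theta}\rangle|$ is controlled with high probability, and use that $g''$ is strictly positive on bounded sets for exponential families. The sparse moment condition \eqref{eq:P_X_moment} combined with Assumption \ref{asmp:moment} on $\bm{\Sigma}_x$ builds such a localization event, giving $\alpha \asymp \alpha_x \cdot \inf g''$. The initialization basin together with the iteration invariance proved in Theorem \ref{thm:2} then guarantees the iterates never leave this region.

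The principal technical burden, and the main obstacle, is (ii): establishing sparse stability for the GLM gradient under heavy-tailed predictors and responses. I would decompose the centered gradient into a bias part, proportional to the current parameter error via Lipschitzness of $g'$, and a noise part driven by the conditionally centered residual $\xi_i = Y_i - g'(\langle \bm{X}_i, \bm{\Theta}_\ast\rangle)$, whose $(1+\epsilon)$-th moment is finite by Assumption \ref{asmp:moment}. Each part is then analyzed via a union bound over the $\binom{d_1}{s_L+s_{L,\ast}}\binom{d_2}{s_R+s_{R,\ast}}$ sparse row/column supports. On each support, the deviations reduce to sums of truncated heavy-tailed variables: Fuk-Nagaev-type inequalities give a noise deviation of order $(M_{\textup{eff},1+\epsilon,s}^{1/\epsilon}\log d/n)^{\epsilon/(1+\epsilon)}$, while truncation bias contributes a comparable term. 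The $(2+2\lambda)$-th predictor moment controls the sparse concentration of $\mathcal{P}(\bm{X}_i)$ and its Gram and cross products, which in turn forces the sample size lower bound $n \gtrsim (sK)^{(1+\lambda)/(2\lambda)}\log d$; only $\lambda$ (not $\epsilon$) appears here because the noise in the GLM gradient enters through a conditionally centered product with the predictor, so its tail determines only the final error floor and not the concentration of the gradient operator. Crucially, the de-scaling commutes with the sparse concentration because the row norms of $\bm{L}(\bm{R}^\top\bm{R})^{1/2}$ are scale invariant (see \eqref{eq:scaled_factors}).

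For (iii), Proposition \ref{prop:logistic_initial} supplies an $\ell_2$ bound on $\widehat{\bbm{\theta}}_{\textup{LA}} - \textup{vec}(\bm{\Theta}_\ast)$; matricizing, permuting, and applying SVD followed by SHT then yields $(\bm{L}_0, \bm{R}_0)$ whose factorization distance to $(\bm{L}_\ast, \bm{R}_\ast)$ scales with this $\ell_2$ error via a perturbation argument (Wedin-type bound plus the SHT contraction). The sample-size hypothesis $n \gtrsim \mathfrak{C}_2 (sK)^{(1+\lambda)/(2\lambda)}\log d$ is calibrated so that this initial distance falls below the basin threshold $\alpha_x\beta_x^{-1}\sigma_K^2$ and the floor in (ii) satisfies $\xi_{L,s}^2 + \xi_{R,s}^2 \lesssim \alpha_x^3 \beta_x^{-1} \sigma_K^2$. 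With all three ingredients validated, Theorem \ref{thm:2} closes the argument, and reading off the stationary error $\alpha_x^{-2}(\xi_{L,s}^2 + \xi_{R,s}^2)$ produces the Frobenius-norm rate stated in the theorem.
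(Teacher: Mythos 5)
Your proposal is correct in architecture and matches the paper's strategy exactly: verify the RCG condition, the sparse gradient-stability condition of Definition \ref{def:4}, and the initialization basin, then invoke Theorem \ref{thm:2} and read off the floor $\alpha_x^{-2}(\xi_{L,s}^2+\xi_{R,s}^2)$. The execution of the stability step differs in two respects worth noting. First, the paper (Proposition \ref{prop:stability_GLM}) splits the centered de-scaled gradient into four terms — truncation bias at the truth, empirical deviation of the truncated gradient at the truth, and the population and empirical parts of the difference between the current iterate and the truth — and controls everything \emph{entrywise} via Bernstein's inequality for bounded truncated variables, taking a union bound over only the $d_1 K$ (or $d_2 K$) entries; the restricted Frobenius norm over any row set $S_1$ is then bounded by $s_1 K$ times the entrywise maximum, so no union bound over the $\binom{d_1}{s_L+s_{L,\ast}}$ supports is ever needed. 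Your proposed union over sparse supports with Fuk--Nagaev inequalities can be made to work but is strictly heavier than necessary and risks an extra $s\log d$ inflation in the deviation threshold if not organized as a max-over-rows bound. Second, on the RCG condition you are actually more careful than the paper: the paper simply cites Lemma 3.11 of \citet{bubeck2015convex} to get $\alpha=\alpha_x$, $\beta=\beta_x$, silently absorbing both the Lipschitz constant $L$ and the lower bound on $g''$ into the constants, whereas your localization argument for $\inf g''$ on the trajectory (and the resulting $\alpha \asymp \alpha_x\cdot\inf g''$, $\beta\asymp L\beta_x$) addresses a real gap in the paper's one-line justification for non-logistic links. One small inaccuracy: your claim that only $\lambda$ governs the sample complexity because the residual tail affects only the error floor is a reasonable heuristic for the initialization (Proposition \ref{prop:logistic_initial} indeed depends only on $\lambda$ since the robust Lasso truncates only the predictors), but the refinement-stage conditions in the paper's proof still impose sample-size requirements with exponents $\frac{1+\epsilon}{2\min(\epsilon,\lambda)}$ and $\frac{1+\epsilon}{2\epsilon}$, which the theorem statement absorbs into $\mathfrak{C}_2$; your accounting of where $\epsilon$ enters is therefore slightly off, though it does not affect the final rate.
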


Theorem \ref{thm:matrixGLM} extends our robust guarantees to the exponential family. A particularly important instance is \textit{matrix logistic regression}, which is widely used in classification tasks.

\begin{corollary}[Matrix Logistic Regression]
  \label{cor:logistic_rates}
  Consider the logistic regression model with $g(t)=\log(1+e^t)$. Since the response $Y_i$ is bounded, we have $M_{e,\eta} < \infty$ for all $\eta$. Setting $\epsilon=1$, and assuming predictors have finite $(2+2\lambda)$-th moments, the two-stage estimator achieves the rate
  \begin{equation}
    \|\widehat{\bm{\Theta}} - \bm{\Theta}_\ast\|_\textup{F} \lesssim \alpha_x^{-1}\sqrt{\frac{s M_{x,2,s} \log d}{n}}.
  \end{equation}
\end{corollary}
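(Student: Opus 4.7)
The proof proposal is to verify that logistic regression satisfies the hypotheses of Theorem \ref{thm:matrixGLM} and then specialize the resulting bound by setting $\epsilon=1$ and simplifying the effective moment. Since the corollary is a direct specialization, the work is largely bookkeeping, but the moment structure needs to be traced carefully.

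First I would check Assumption \ref{asmp:Lipschitz}. For $g(t)=\log(1+e^t)$, we have $g'(t)=e^t/(1+e^t)\in(0,1)$ and $g''(t)=g'(t)(1-g'(t))\in(0,1/4]$, so $g'$ is Lipschitz with constant $L=1/4$. Next I would check Assumption \ref{asmp:moment}. The predictor moment $M_{x,2+2\lambda}<\infty$ is assumed directly in the corollary statement. For the noise side, observe that $Y_i\in\{0,1\}$ and $g'(\langle\bm{X}_i,\bm{\Theta}_\ast\rangle)\in(0,1)$, so $|Y_i-g'(\langle\bm{X}_i,\bm{\Theta}_\ast\rangle)|\le 1$ almost surely. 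Consequently $M_{e,\eta}=\mathbb{E}[|Y_i-g'(\langle\bm{X}_i,\bm{\Theta}_\ast\rangle)|^\eta\mid\bm{X}_i]\le 1$ for every $\eta\ge 1$, so in particular one may take $\epsilon=1$ in Assumption \ref{asmp:moment} without any restriction on the tail of $Y_i$.

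With the hypotheses verified, I would apply Theorem \ref{thm:matrixGLM} with $\epsilon=1$. The error rate becomes
\begin{equation*}
\|\widehat{\bm{\Theta}}-\bm{\Theta}_\ast\|_\textup{F} \lesssim \alpha_x^{-1}\sqrt{s}\left[\frac{M_{\textup{eff},2,s}\log d}{n}\right]^{1/2},
\end{equation*}
and the sample size requirement reduces to $n\gtrsim \mathfrak{C}_2\cdot (sK)^{(1+\lambda)/(2\lambda)}\log d$, both of which the corollary silently inherits. It then remains to express the effective moment $M_{\textup{eff},2,s}$ in terms of the predictor moment alone. By the construction of this quantity in the proof of Theorem \ref{thm:matrixGLM} (which bounds the second moment of the product noise$\times$ predictor along sparse directions via Cauchy--Schwarz), one obtains $M_{\textup{eff},2,s} \lesssim M_{e,2}\cdot M_{x,2,s}$. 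Since $M_{e,2}\le 1$ for the bounded Bernoulli residual, this collapses to $M_{\textup{eff},2,s}\lesssim M_{x,2,s}$, yielding exactly the stated bound $\|\widehat{\bm{\Theta}}-\bm{\Theta}_\ast\|_\textup{F}\lesssim \alpha_x^{-1}\sqrt{sM_{x,2,s}\log d/n}$.

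The only nonroutine step is the factorization $M_{\textup{eff},2,s}\lesssim M_{e,2}M_{x,2,s}$, which depends on how the effective moment is defined inside the proof of Theorem \ref{thm:matrixGLM}; if the definition packages noise and predictor moments together, one needs to unpack the conditional expectation via the tower property and apply Hölder's inequality with exponents matched to the sparse predictor moment condition. All other components (the Lipschitz bound $L=1/4$, bounded residuals, choice of $\tau$ and $\eta$) are immediate, so I expect the corollary's proof to be essentially a one-line invocation of Theorem \ref{thm:matrixGLM} followed by this moment simplification. No difficulties beyond careful bookkeeping are anticipated.
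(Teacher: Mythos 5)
Your proposal is correct and matches the paper's proof, which simply invokes Theorem \ref{thm:matrixGLM} with $L=1/4$ and $\epsilon=1$ (justified by the boundedness of $Y_i$) and omits further details. Your extra step of unpacking $M_{\textup{eff},2,s}=M_{e,2}\,M_{x,2,s}\le M_{x,2,s}$ is exactly how the effective moment is constructed in the paper's proof of Proposition \ref{prop:stability_GLM} (via the tower property on the conditional noise moment), so the specialization goes through as you describe.
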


This result highlights a key advantage: for logistic regression, the statistical rate is not penalized by the heavy tails of the predictors; robustness only imposes a sample complexity cost via $\lambda$ to ensure uniform convergence of the gradients. This relaxes the finite fourth-moment conditions commonly required in prior robust GLM literature \citep{prasad2020robust,zhu2021taming}.

\subsection{Bilinear Regression Models}\label{sec:4.3_bilinear}

Finally, we consider the Bilinear Regression model, which captures the relationship between a matrix predictor $\bm{X}_i\in\mathbb{R}^{q_1\times q_2}$ and a matrix response $\bm{Y}_i\in\mathbb{R}^{p_1\times p_2}$:
\begin{equation}
  \bm{Y}_i = \bm{A}\bm{X}_i\bm{B}^\top + \bm{E}_i, \quad i=1,\dots,n,
\end{equation}
where $\bm{E}_i$ is the noise matrix. Vectorizing this relationship yields the Kronecker structure:
\begin{equation}\label{eq:bilinear}
  \bm y_i=\text{vec}(\bm{Y}_i^\top) = (\bm{A}\otimes\bm{B})\text{vec}(\bm{X}_i^\top) + \text{vec}(\bm{E}_i^\top) = \bm{\Theta}_\ast\bm x_i + \bm e_i.
\end{equation}
Here, the target $\bm{\Theta}_\ast = \bm{A} \otimes \bm{B}$ has Kronecker rank $K=1$. This model is fundamental in spatio-temporal modeling and matrix autoregression \citep{chen2021autoregressive,fan2025matrix}.

We estimate $\bm{\Theta}_\ast$ by minimizing the least-squares loss $\mathcal{L}_n(\bm{\Theta}) = (2n)^{-1}\sum_{i}\|\bm y_i - \bm{\Theta}\bm x_i\|_2^2$. We employ the following two-stage estimation procedure:

\begin{enumerate}
    \item \textbf{Initialization (Robust Dantzig Selector):} 
    Recognizing that $\bm{\Theta}_\ast = \bm{\Sigma}_{yx}\bm{\Sigma}_x^{-1}$, we estimate the cross-covariance $\widehat{\bm{\Sigma}}_{yx}(\tau_{yx}) = n^{-1}\sum_i \text{T}(\bm{y}_i\bm{x}_i^\top, \tau_{yx})$ and auto-covariance $\widehat{\bm{\Sigma}}_x(\tau_x)$ via truncation. We solve the robust Dantzig selector:
    \begin{equation}\label{eq:bilinear_initial}
      \widehat{\bm{\Theta}}_{\text{DS}} = \argmin_{\bm{\Theta}}\|\bm{\Theta}\|_1\quad\text{s.t.}\quad \|\widehat{\bm{\Sigma}}_{yx}(\tau_{yx}) - \bm{\Theta}\widehat{\bm{\Sigma}}_x(\tau_x)\|_\infty\leq R.
    \end{equation}
    The solution is matricized and subjected to rank-1 SVD and SHT to yield initial factors $\bm{L}_0, \bm{R}_0$.
    
    \item \textbf{Refinement (SRGD--SHT):} We refine the factors using Algorithm \ref{alg:2} ($K=1$) with gradients:
    \begin{equation}
      \begin{split}
        \bm{G}_L & = \frac{1}{n}\sum_{i=1}^n\text{T}\Big(\left(\bm X_i\otimes (\bm A\bm X_i\bm B^\top-\bm Y_i)\right)\bm{R}(\bm{R}^\top\bm{R})^{-1/2};\tau\Big),\\
        \bm{G}_R & = \frac{1}{n}\sum_{i=1}^n\text{T}\Big(\left(\bm X_i^\top \otimes (\bm A\bm X_i\bm B^\top-\bm Y_i)^\top\right)\bm{L}(\bm{L}^\top\bm{L})^{-1/2};\tau\Big).
      \end{split}
    \end{equation}
\end{enumerate}

To rigorously characterize robustness, we must control the tail behavior of the gradient noise. Unlike trace regression, the gradient noise here involves the interaction term $\bm{J}_i = \bm X_i \otimes \bm E_i$. We define the effective interaction moment on sparse sets as:
\begin{equation}\label{eq:bilinear_gradient}
  M_{\textup{eff},\eta,s} := \max\left\{\max_{j}\sup_{\bm{v}\in\mathbb{S}(d_2,s_R)}\mathbb{E}\left[\left|\bm{c}_j^\top\bm{J}_i\bm{v}\right|^{\eta}\right],\max_{k}\sup_{\bm{v}\in\mathbb{S}(d_1,s_L)}\mathbb{E}\left[\left|\bm{v}^\top\bm{J}_i\bm{c}_k\right|^{\eta}\right]\right\}.
\end{equation}

\begin{assumption}[Moment Conditions]
  \label{asmp:moment2}
  The predictor $\bm{x}_i$ satisfies the covariance and moment conditions of Assumption \ref{asmp:moment} with tail index $\lambda \in (0,1]$. Furthermore, there exists $\epsilon \in (0,1]$ such that the interaction moment $M_{\textup{eff},1+\epsilon,s} < \infty$.
\end{assumption}

\begin{theorem}[Convergence of Bilinear Estimation]
  \label{thm:bilinear_model}
  Suppose Assumption \ref{asmp:moment2} holds. Let initialization parameters be tuned as in Proposition \ref{prop:bilinear_init} (Appendix \ref{append:D}), and SRGD parameters satisfy $\tau \asymp [nM_{\textup{eff},1+\epsilon,s}/\log d]^{1/(1+\epsilon)}$ and $\eta\asymp\beta_x^{-1}$. 
  If the sample size satisfies:
  \begin{equation}
    n \gtrsim \mathfrak{C}_3 \cdot s^{\frac{1+\min(\epsilon,\lambda)}{2\min(\epsilon,\lambda)}}\log d,
  \end{equation}
  where $\mathfrak{C}_3$ is a constant defined in Appendix \ref{append:D}, then with probability at least $1-C\exp(-c\log d)$, the estimator $\widehat{\bm{\Theta}}$ satisfies:
  \begin{equation}
    \|\widehat{\bm{\Theta}}-\bm{\Theta}_\ast\|_\textup{F} \lesssim \alpha_x^{-1}\sqrt{s}\left[\frac{M_{\textup{eff},1+\epsilon,s}^{1/\epsilon}\log d}{n}\right]^{\frac{\epsilon}{1+\epsilon}}.
  \end{equation}
\end{theorem}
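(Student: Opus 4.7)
The plan is to prove Theorem \ref{thm:bilinear_model} by reducing to the abstract guarantee of Theorem \ref{thm:2}: I must verify the Restricted Correlated Gradient condition, the sparse stability condition, and the basin-of-attraction requirement for the Dantzig-based warm start. For the quadratic loss $\mathcal{L}_n(\bm{\Theta}) = (2n)^{-1}\sum_i \|\bm{y}_i - \bm{\Theta}\bm{x}_i\|_2^2$, the population gradient equals $(\bm{\Theta}-\bm{\Theta}_\ast)\bm{\Sigma}_x$, so
\begin{equation}
\langle \mathbb{E}[\nabla \mathcal{L}(\bm{\Theta};z_i)], \bm{\Theta}-\bm{\Theta}_\ast\rangle = \|(\bm{\Theta}-\bm{\Theta}_\ast)\bm{\Sigma}_x^{1/2}\|_\text{F}^2,
\end{equation}
which together with the spectral bounds $\alpha_x \le \lambda_{\min}(\bm{\Sigma}_x)$ and $\lambda_{\max}(\bm{\Sigma}_x) \le \beta_x$ immediately yields RCG with parameters $(\alpha_x,\beta_x)$ via the Cauchy--Schwarz manipulation of the quadratic form.

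The bulk of the work is verifying Definition \ref{def:4} for the truncated bilinear gradient. I would decompose $\bm{G}_L - \mathbb{E}[\nabla_{\bm{L}}f](\bm{R}^\top\bm{R})^{-1/2}$ into (i) a \emph{deterministic bias} component, proportional to $\|\mathcal{P}^{-1}(\bm{L}\bm{R}^\top)-\bm{\Theta}_\ast\|_\text{F}$ through the truncated sample covariance $\widehat{\bm{\Sigma}}_x(\tau_x)$, and (ii) a \emph{noise} component driven by the interaction matrix $\bm{J}_i = \bm{X}_i \otimes \bm{E}_i$ that defines $M_{\textup{eff},1+\epsilon,s}$. For the bias, I would show $\phi \lesssim \alpha_x^2$ on all $(s_L+s_{L,\ast},s_R+s_{R,\ast})$-sparse directions by combining Markov bounds for truncated variables with a union bound over sparse supports; this contributes the $s^{(1+\lambda)/(2\lambda)}\log d$ term in the sample complexity through the $(2+2\lambda)$-th moment of the predictor. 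For the noise, I would employ a Nemirovski-type concentration inequality for truncated heavy-tailed sums: the threshold $\tau \asymp [nM_{\textup{eff},1+\epsilon,s}/\log d]^{1/(1+\epsilon)}$ balances the truncation bias $\tau^{-\epsilon}M_{\textup{eff},1+\epsilon,s}$ with the Bernstein variance $\sqrt{\tau \cdot M_{\textup{eff},1+\epsilon,s} \log d / n}$, yielding $\xi_{L,s}^2 + \xi_{R,s}^2 \lesssim s \cdot M_{\textup{eff},1+\epsilon,s}^{2/(1+\epsilon)} (\log d / n)^{2\epsilon/(1+\epsilon)}$. A covering argument over $\binom{d}{s}$ row supports inflates the deviations by $\sqrt{s\log d}$, which precisely produces the stated rate after applying Theorem \ref{thm:2}.

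For initialization, I would invoke Proposition \ref{prop:bilinear_init} (from the appendix) to obtain $\|\widehat{\bm{\Theta}}_{\text{DS}}-\bm{\Theta}_\ast\|_\text{F}^2$ of the same statistical order, then transfer this to the factor distance $d(\bm{F}_0,\bm{F}_\ast)$ using a Davis--Kahan-type bound on the rank-$1$ SVD truncation. The scaled hard thresholding step preserves this bound up to a factor of $(\beta_x/\alpha_x)^2$, which is why the sparsity inflation $s_L \gtrsim (\beta_x/\alpha_x)^2 s_{L,\ast}$ and $s_R \gtrsim (\beta_x/\alpha_x)^2 s_{R,\ast}$ from Theorem \ref{thm:2} suffices. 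The sample-size hypothesis $n \gtrsim \mathfrak{C}_3 \cdot s^{(1+\min(\epsilon,\lambda))/(2\min(\epsilon,\lambda))}\log d$ is precisely what is needed to guarantee (a) $d(\bm{F}_0,\bm{F}_\ast)^2 \lesssim \alpha_x \beta_x^{-1}\sigma_K^2$ and (b) $\xi_{L,s}^2 + \xi_{R,s}^2 \lesssim \alpha_x^3 \beta_x^{-1}\sigma_K^2$, both simultaneously.

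The hardest step will be the sparse stability bound for the gradient noise. Unlike matrix trace regression, where the gradient noise is a scalar-weighted matrix $E_i \cdot \mathcal{P}(\bm{X}_i)$, the bilinear gradient's noise is a genuine tensor interaction $\bm{X}_i \otimes \bm{E}_i$ whose $(1+\epsilon)$-th moment depends jointly on predictor tails $\lambda$ and noise tails $\epsilon$. I expect the tail exponent of bilinear forms $\bm{c}_j^\top \bm{J}_i \bm{v}$ on sparse directions to be governed by $\min(\epsilon,\lambda)$, which gives rise to the phase transition in the sample complexity; rigorously extracting this via Hölder's inequality, and then lifting a pointwise bound to a uniform one over the random, iterate-dependent matrices $\bm{L}, \bm{R}$ without reintroducing the condition number $\kappa$, is the core technical obstacle. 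Because SRGD's de-scaling makes the truncation scale-invariant, the uniform control over the factorization manifold can be recast as uniform control over the Kronecker manifold of $\bm{\Theta}$, avoiding any explicit dependence on the imbalance between $\bm{A}$ and $\bm{B}$.
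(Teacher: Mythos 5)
Your overall architecture matches the paper's: verify RCG for the quadratic loss with $(\alpha,\beta)=(\alpha_x,\beta_x)$, establish initialization accuracy via Proposition \ref{prop:bilinear_init} and transfer it to $d(\bm{F}_0,\bm{F}_\ast)$ through an Eckart--Mirsky/Lemma-type argument, verify the sparse stability condition for the truncated de-scaled gradients, and invoke Theorem \ref{thm:2}. However, there are two concrete gaps. First, your concentration step is wrong as described: the paper does \emph{not} take a union bound or covering over the $\binom{d}{s}$ row supports. It bounds each scalar entry $\bm{c}_j^\top\bm{J}_i\widetilde{\bm{r}}_k$ of the de-scaled gradient error via Markov (for the truncation bias) and Bernstein (for the deviation), takes a union bound over only the $d_1 K$ entries --- which is what produces the $\log d$ inside the bracket --- and then obtains the factor $s$ in $\xi_{L,s}^2$ simply by summing the squared entrywise bounds over at most $s_1$ rows of any admissible $S_1$. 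Your proposed covering over supports, inflating deviations by $\sqrt{s\log d}$, would replace $\log d$ by $s\log d$ inside the bracket and yield a rate of order $\sqrt{s}\,[s\log d/n]^{\epsilon/(1+\epsilon)}$, which does not match the theorem. The sparsity enters the deviation bound only through the supremum over $s_R$-sparse unit directions already baked into the definition of $M_{\textup{eff},1+\epsilon,s}$, not through an entropy term.

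Second, you correctly flag the moment control of the bilinear perturbation as the core obstacle but do not supply the mechanism, and this is precisely where the proof is model-specific. The paper must show that $q_{ij}=\bm{c}_j^\top\bigl(\bm{X}_i\otimes(\bm{A}_\ast\bm{X}_i\bm{B}_\ast^\top-\bm{A}\bm{X}_i\bm{B}^\top)\bigr)\widetilde{\bm{R}}$ has $(1+\lambda)$-th moment bounded by $M_{x,2+2\lambda}\|\bm{\Theta}-\bm{\Theta}_\ast\|_\textup{F}^{1+\lambda}$; because $q_{ij}$ is a quadratic form in $\bm{X}_i$ entangled with both factors, this requires rewriting the Kronecker form as a trace, decomposing the indicator matrix $\textup{mat}(\bm{c}_j)$ as a rank-one product, expanding over the SVD of $\bm{B}_\ast\bm{B}^\top$, applying Minkowski's inequality termwise, and crucially bounding $\|\bm{B}_\ast\bm{B}^\top\|_{\textup{nuc}}\lesssim\sqrt{\sigma_1}\|\bm{B}\|_\textup{F}$ using the basin-of-attraction hypothesis $\|\bm{B}-\bm{B}_\ast\|_\textup{F}\lesssim\sqrt{\sigma_1}$. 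Without this step the bias term of the stability bound (the analogue of $\phi$) cannot be related to $\|\bm{\Theta}-\bm{\Theta}_\ast\|_\textup{F}$, and the induction in Theorem \ref{thm:2} cannot close. A minor further correction: the noise tail of the interaction term is governed directly by the assumed $\epsilon$ in $M_{\textup{eff},1+\epsilon,s}$ (Assumption \ref{asmp:moment2}); the $\min(\epsilon,\lambda)$ appears in the sample complexity through the cross terms mixing $q_{ij}$ and $v_{ij}$ and through the initialization, not through the tail of $\bm{c}_j^\top\bm{J}_i\bm{v}$ itself.
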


\begin{remark}[The Challenge of Multiplicative Noise]
  The convergence rate is governed by $\epsilon$, which characterizes the tail of the interaction $\bm{X}_i\otimes \bm{E}_i$. This highlights a unique challenge in bilinear models: even if the noise $\bm{E}_i$ and predictors $\bm{X}_i$ are independently moderately heavy-tailed, their product can exhibit extremely heavy tails. For instance, if both possess only finite second moments and they are independent, their product has only a finite second moment, making robustification essential. Our SRGD framework handles this naturally via the truncation of the interaction term $\bm{J}_i$.
\end{remark}

\section{Simulation Experiments}\label{sec:5}

In this section, we conduct numerical experiments to validate our theoretical findings and evaluate the empirical performance of the proposed SRGD--SHT framework. The simulations are designed to achieve three main objectives: (1) verify the theoretical relationship between statistical convergence rates and the tail behavior of the noise (Theorems \ref{thm:matrix_trace_reg} and \ref{thm:bilinear_model}); (2) demonstrate that our scale-invariant optimization scheme is computationally efficient and robust to the condition number $\kappa$, unlike standard gradient methods; and (3) highlight the superiority of the proposed two-stage estimation procedure over existing baselines in settings characterized by heavy-tailed predictors, heavy-tailed noise, and scaling ambiguity.

For all experiments, we initialize using the Robust Dantzig Selector as described in Section \ref{sec:4}. For the refinement stage, the step size is set to $\eta=0.01$. Regarding the truncation threshold $\tau$, while our theory suggests a scaling of $\tau \asymp n^{1/(1+\epsilon)}$, the exact moment $\epsilon$ is often unknown. In practice, we select $\tau$ via five-fold cross-validation on a logarithmic grid. Our experiments confirm that the performance is stable around the optimal $\tau$, suggesting the method is not overly sensitive to precise tuning.

Regarding the third objective, since the qualitative patterns are consistent across different model specifications, we report findings for matrix trace regression in the main text. Comprehensive results for bilinear models and matrix logistic regression are provided in Appendix \ref{append:E} of the Supplementary Materials.

\subsection{Experiment 1: Phase Transitions and Tail Behavior}

We first validate the theoretical prediction that the statistical error rate is governed by the tail index $\epsilon$ of the noise (and interaction terms). We focus on matrix trace regression and bilinear models.
According to Theorems \ref{thm:matrix_trace_reg} and \ref{thm:bilinear_model}, the estimation error satisfies $\|\widehat{\bm{\Theta}}-\bm{\Theta}_\ast\|_\text{F} \propto n^{-\epsilon/(1+\epsilon)}$. Consequently, in a log-log plot of error vs. sample size, we expect a linear relationship with slope $-\epsilon/(1+\epsilon)$.

We adopt the following settings.
\begin{itemize}
  \item \textit{Matrix Trace Regression:} $p_1=q_1=p_2=q_2=6$, rank $K=1$, with sparsity $s_{L,\ast}=s_{R,\ast}=5$. The true factors are set to $\bm L_\ast=\bm R_\ast=(1,\dots, 1,0,\dots, 0)^\top$.
  \item \textit{Bilinear Model:} $p_1=p_2=5$, $q_1=q_2=10$, rank $K=1$, sparsity $s_{L,\ast}=s_{R,\ast}=8$. Nonzero entries of $\bm A_\ast$ and $\bm B_\ast$ are set to $\sqrt{5/8}$.
  \item \textit{Data Generation:} Predictors $\bm X_i$ are generated with i.i.d. $t_{2.5}$ entries. Noise terms ($e_i$ or entries of $\bm E_i$) are drawn from a $t_{1.05+\epsilon}$ distribution, with $\epsilon$ varying in $\{0.2, 0.4, \dots, 2.0\}$. This ensures the existence of the $(1+\epsilon)$-th moment.
\end{itemize}  

To eliminate the initialization effects, we use the ground truth initialization in this experiment. Optimization parameters are set to $\eta=0.01$ and $\tau \propto n^{1/(1+\epsilon)}$ as per theory. We vary $n$ and report the mean estimation error over 200 repetitions. Figure \ref{fig:exp1} displays the log estimation error trajectories. For $\epsilon \in [0.2, 1.0]$, the slopes become progressively steeper (more negative) as $\epsilon$ increases, closely matching the theoretical prediction $-\epsilon/(1+\epsilon)$.
Notably, for $\epsilon > 1$, the slopes stabilize and become nearly constant. This empirically confirms the \textit{phase transition} discussed in Remark \ref{rem:optimality}: once the noise possesses a finite variance ($\epsilon \ge 1$), the estimator recovers the optimal parametric rate $O(n^{-1/2})$, and further lightening of the tails yields no additional gain in the polynomial rate.

\begin{figure}[htbp]
    \centering
    \begin{subfigure}[b]{0.48\textwidth}
        \centering
        \includegraphics[width=\linewidth]{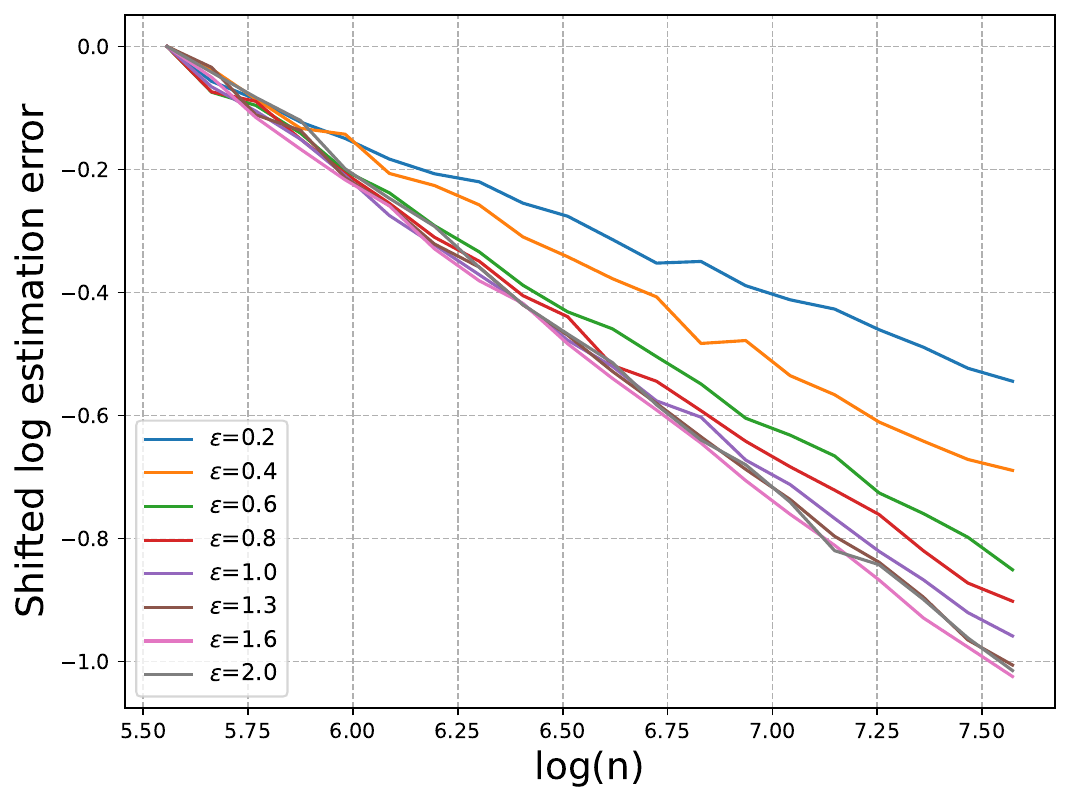}
        \subcaption{Matrix Trace Regression}
        \label{subfig:exp1_matreg}
    \end{subfigure}
    \hfill
    \begin{subfigure}[b]{0.48\textwidth}
        \centering
        \includegraphics[width=\linewidth]{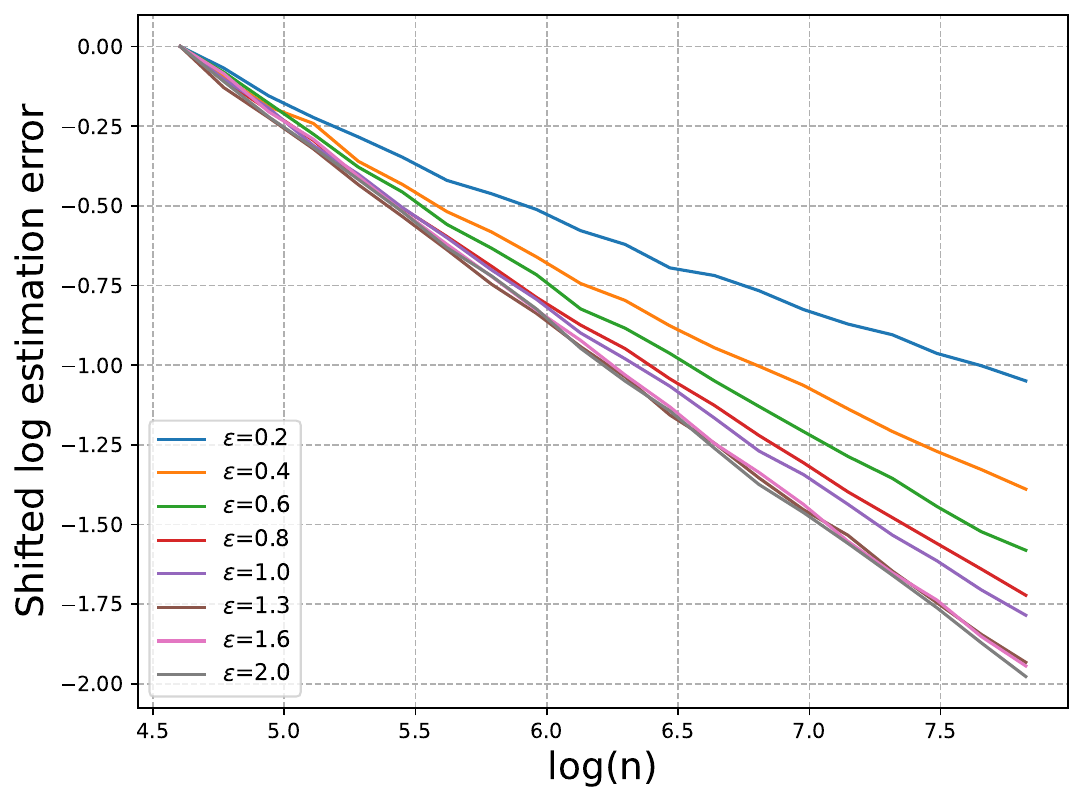}
        \subcaption{Bilinear Model}
        \label{subfig:exp1_bilinear}
    \end{subfigure}
    \caption{Log of the relative estimation error versus $\log(n)$ under varying tail indices $\epsilon$. Steeper slopes indicate faster convergence. The results are averages over 200 repetitions.}
    \label{fig:exp1}
\end{figure}

\subsection{Experiment 2: Scale-Invariance and Conditioning}\label{subsec:exp2}

This experiment demonstrates the critical advantage of the ``De-scaling $\to$ Truncation $\to$ Re-scaling'' mechanism. We compare SRGD against methods that lack scale-invariance.

We consider Matrix Trace Regression with $p=q=36$, rank $K=2$. The ground truth factors are constructed as $\bm L^\ast=(\bm U^\top, \bm 0)^\top \bm S^{1/2}$ and $\bm R^\ast=(\bm V^\top, \bm 0)^\top\bm S^{1/2}$, where $\bm U, \bm V$ are orthonormal and $\bm S=\mathrm{diag}(1, 1/\kappa)$. We vary the condition number $\kappa \in \{1, 3, 7\}$ to induce scaling imbalance.
Data are heavy-tailed: $\bm X_i \sim t_{2.5}$ and $e_i \sim 0.1 t_{1.5}$. We compare the following methods:
(1) SRGD--SHT (proposed);
(2) RGD--SHT (Regularized Gradient Descent): applies truncation directly to standard gradients (i.e., skipping the de-scaling step);
(3) Huber--GD: minimizes the Huber loss using standard gradients. 

Figure \ref{fig:exp2} illustrates the convergence trajectories.
When $\kappa=1$ (balanced scaling), SRGD and RGD perform comparably. However, as $\kappa$ increases, the performance of RGD and Huber-GD degrades drastically. This confirms the coupling failure mode described in Section \ref{sec:2.2}: without de-scaling, the optimization landscape interferes with the robust thresholding/loss, causing the algorithm to truncate valid signals along the large-norm directions or leak outliers along small-norm directions.
In contrast, SRGD exhibits \textit{condition-number independence}: its convergence profile remains virtually unchanged across all $\kappa$, validating Theorem \ref{thm:1}.

\begin{figure}[htbp]
    \centering
    \begin{subfigure}[b]{0.32\textwidth}
      \centering
      \includegraphics[width=\linewidth]{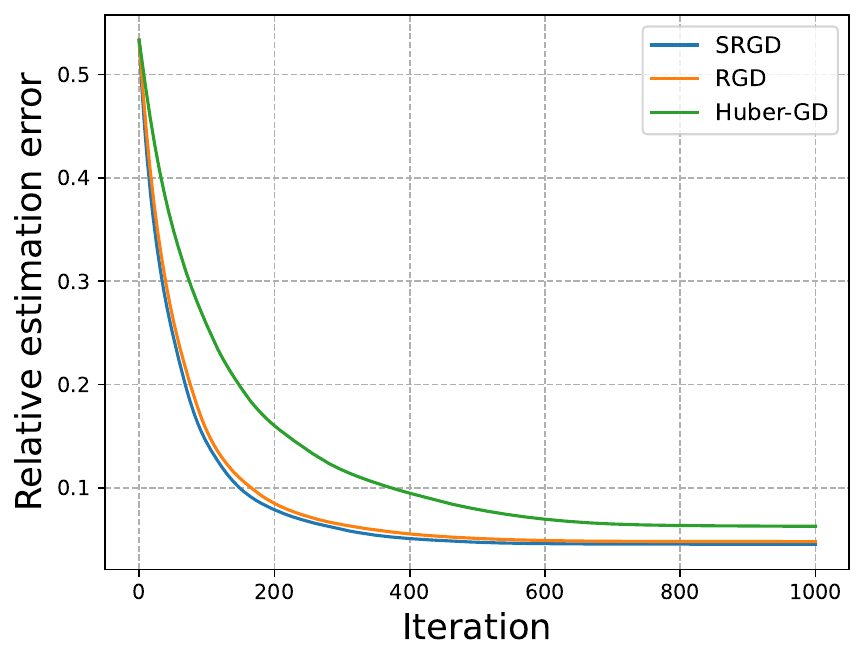}
      \subcaption{$\kappa=1$ (Well-conditioned)}
      \label{subfig:exp2_kap1}
    \end{subfigure}
    \hfill
    \begin{subfigure}[b]{0.32\textwidth}
        \centering
        \includegraphics[width=\linewidth]{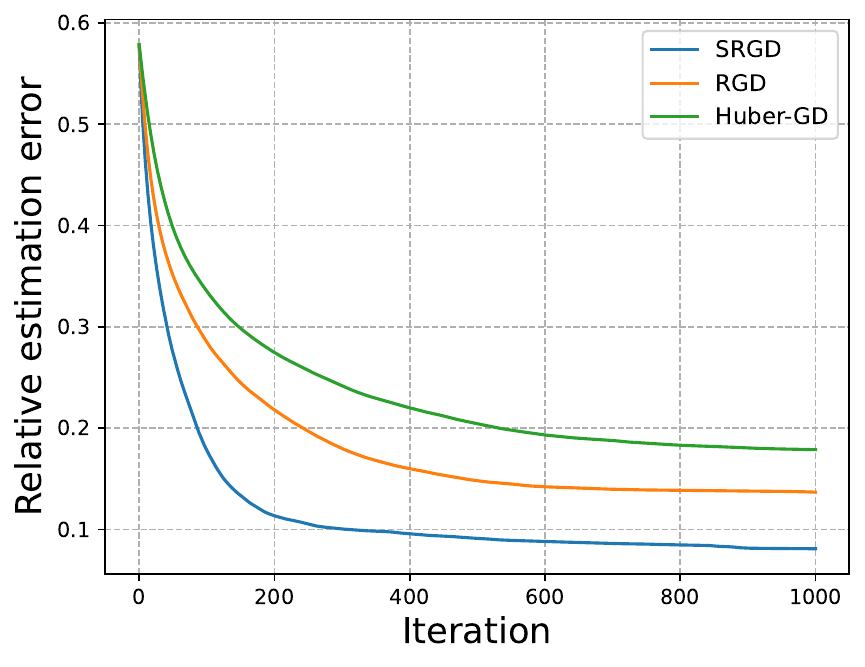}
        \subcaption{$\kappa=3$}
        \label{subfig:exp2_kap3}
    \end{subfigure}
    \hfill
    \begin{subfigure}[b]{0.32\textwidth}
      \centering
      \includegraphics[width=\linewidth]{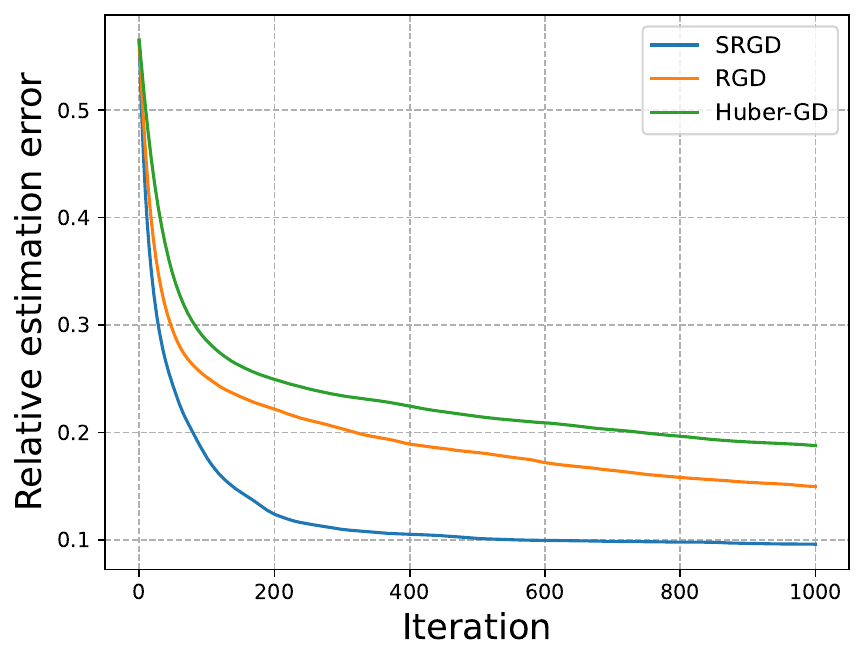}
      \subcaption{$\kappa=7$ (Ill-conditioned)}
      \label{subfig:exp2_kap7}
    \end{subfigure}
    \caption{Relative estimation error versus iterations under varying condition numbers $\kappa$.}
    \label{fig:exp2}
\end{figure}

\subsection{Experiment 3: Robustness against Heavy Tails}\label{sec:5.3}

Finally, we benchmark the statistical accuracy of the two-stage estimator against alternatives across different tail regimes.

We fix $\kappa=5$. We consider a $3 \times 2$ factorial design:
\begin{itemize}
    \item \textbf{Predictors:} Gaussian ($\bm X_i \sim N(0,1)$) vs. Heavy-tailed ($\bm X_i \sim t_{2.5}$).
    \item \textbf{Noise:} Gaussian ($e_i \sim 0.1 N(0,1)$) vs. Intermediate ($e_i \sim 0.1 t_{2.1}$) vs. Very Heavy ($e_i \sim 0.1 t_{1.5}$).
\end{itemize}
We compare: (1) SRGD (Proposed), (2) ScGD (Scaled GD without truncation, i.e., non-robust but scale-invariant), (3) RGD (Truncated but not scaled), and (4) Huber--GD. All methods use SHT for sparsity.

\begin{figure}[htbp]
  \centering
  \includegraphics[width=\textwidth]{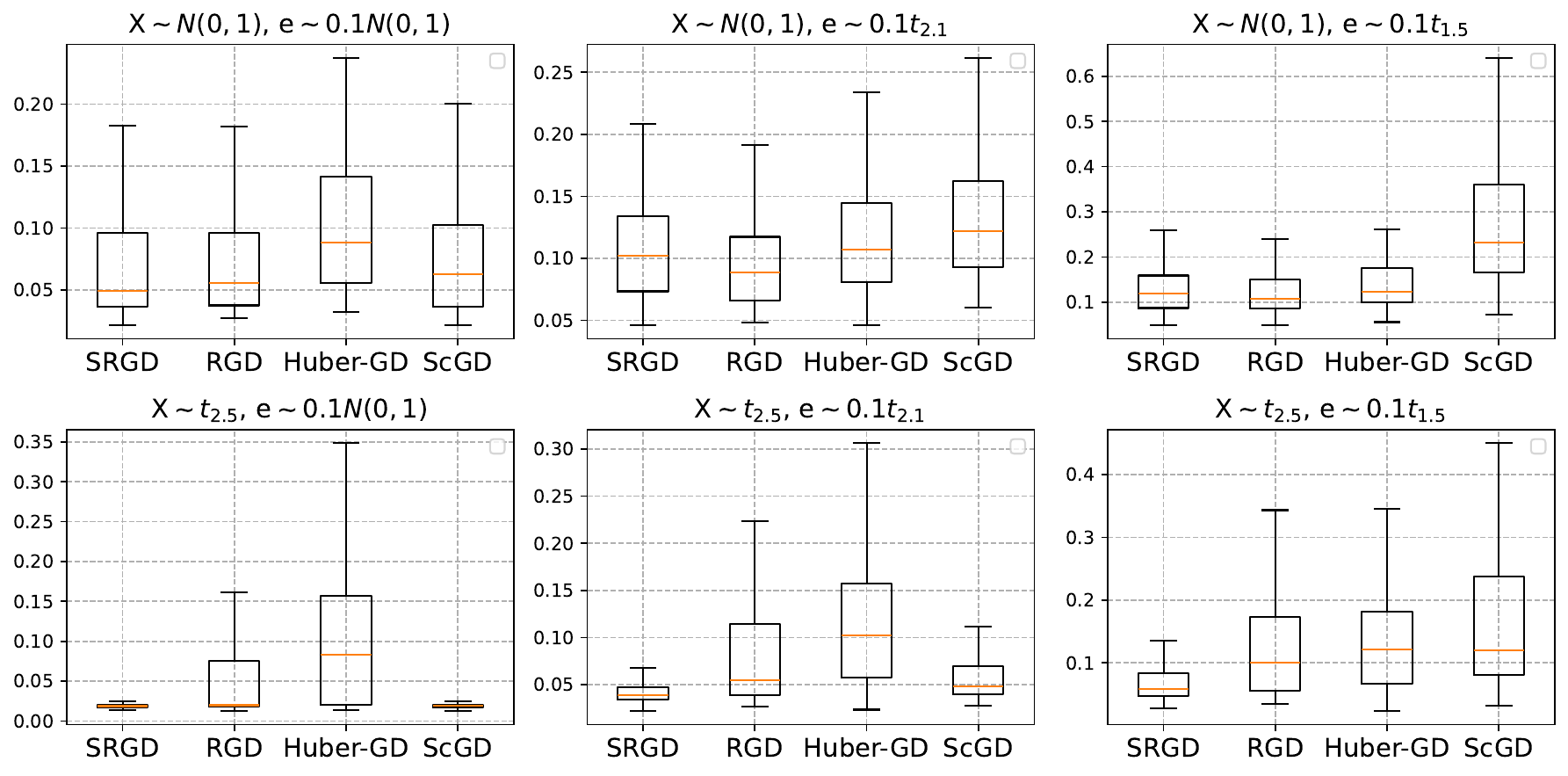}
  \caption{Boxplots of relative estimation errors (200 repetitions).}
  \label{fig:exp3_matreg}
\end{figure}

Figure \ref{fig:exp3_matreg} presents the error distributions. Key observations include:
\begin{itemize}
    \item \textbf{Benefit of Truncation (SRGD vs. ScGD):} As the noise tails get heavier (moving right), ScGD fails, while SRGD maintains low error. This confirms that scale-invariance alone (ScGD) is insufficient; explicit robustification is required.
    \item \textbf{Benefit of Scaling (SRGD vs. RGD):} In the bottom row (heavy-tailed predictors), SRGD significantly outperforms RGD. Heavy-tailed predictors induce large gradient fluctuations that exacerbate the scaling ambiguity; SRGD's de-scaling step effectively preconditions these updates.
    \item \textbf{Failure of Huber (Huber vs. SRGD):} Huber--GD performs poorly when predictors are heavy-tailed (bottom row). This is expected, as the Huber loss is robust to $Y$-outliers but not to leverage points ($\bm X$-outliers). SRGD handles both simultaneously.
\end{itemize}
Overall, SRGD achieves the lowest estimation error in the most adversarial setting (heavy-tailed $\bm{X}$ and $e$), confirming the synergy of de-scaling and statistical truncation.

\section{Real Data Examples}\label{sec:6}

In this section, we evaluate the proposed SRGD--SHT algorithm through two distinct real-world applications: the classification of electroencephalogram (EEG) data using matrix logistic regression, and the forecasting of a macroeconomic dataset via the bilinear model. These examples encompass both classification and regression tasks, demonstrating the versatility of the proposed procedure in handling high-dimensional matrix-variate data characterized by complex structures and heavy-tailed behaviors.

\subsection{EEG Data Classification}\label{sec:6.1}

We first apply our framework to the classification of EEG signals to identify alcoholism. The dataset, originally examined by \citet{hung2013matrix}, consists of recordings from 122 subjects divided into alcoholic ($y=1$) and control ($y=0$) groups. Subjects were exposed to visual stimuli under three conditions: single stimulus (S1), matching second stimulus (S2 match), and non-matching second stimulus (S2 nomatch). Signals were recorded via 64 electrodes sampled at 256 Hz. The data can be downloaded from the UCI Machine Learning Repository (\url{http://archive.ics.uci.edu/ml/datasets/EEG+Database}).

We restrict our analysis to the S1 and S2 nomatch conditions. To ensure cohort homogeneity, we analyze the dominant batch of $N=109$ subjects, excluding a minor batch of 13 subjects to avoid confounding batch effects and extreme class imbalance. For each subject, we average signals across trials to construct a single covariate matrix $\bm X_i \in \mathbb{R}^{64 \times 256}$. 

A critical step in Kronecker-structured estimation is ensuring that the matrix indices correspond to meaningful modes of variation. Standard EEG channel ordering is arbitrary. We incorporate biological priors by reordering the 64 electrodes to respect the brain's anatomical topology, following a strict anterior-to-posterior and left-to-right progression clustered into eight functional regions (e.g., Prefrontal, Frontal, Parietal; see Supplementary Materials for the full mapping). This permutation ensures that the row index of $\bm X_i$ acts as a coherent spatial coordinate. Consequently, the Kronecker factors can effectively capture local spatial correlations (anatomical clusters) distinct from temporal correlations, enhancing both estimation efficiency and physiological interpretability.

Neuroimaging data are notoriously prone to artifacts (e.g., eye blinks, muscle movement) that generate heavy-tailed distributions. Figure \ref{fig:EEG_kurt} confirms this: the sample excess kurtosis is consistently elevated across dimensions for both groups, with tails extending far beyond the Gaussian baseline (kurtosis $= 0$). These extreme values act as high-leverage points in the predictor space, which can severely skew the decision boundary of standard logistic regression models. This systematic deviation from Gaussianity validates the necessity of the proposed truncated gradient framework.

\begin{figure}[htp]
  \centering
  \includegraphics[width=0.8\textwidth]{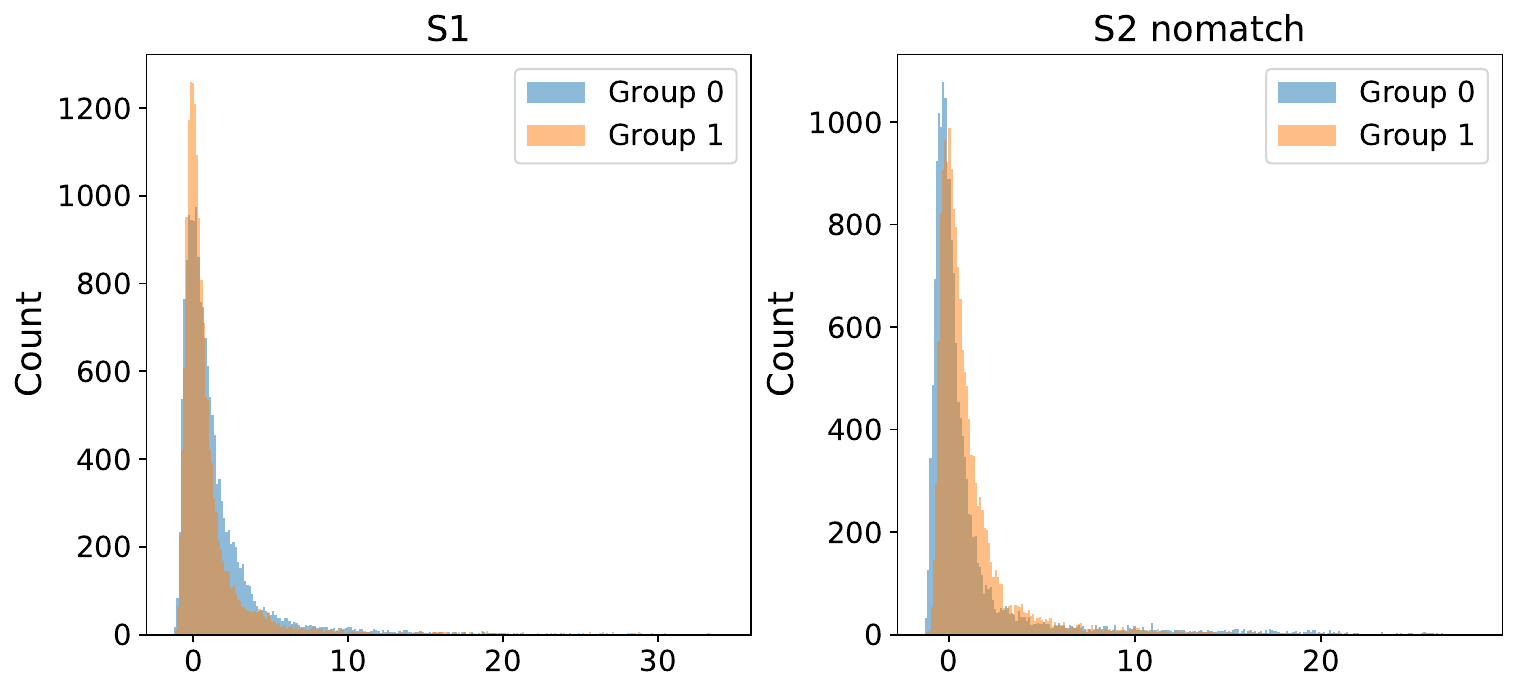}
  \caption{Evidence of heavy tails in EEG predictors.}
  \label{fig:EEG_kurt}
\end{figure}

We evaluate classification performance (AUC) on a held-out test set of 22 randomly selected subjects. Initialization is performed via the Robust Lasso (Section \ref{sec:4.2}). The truncation parameter is set to $\tau=0.25\tau_0$, where $\tau_0=\sqrt{n/\log{d}}$.
\begin{itemize}
    \item \textbf{S1 Task:} We utilize a factorization shape $(p_1,q_1,p_2,q_2)=(16,16,4,16)$ with rank $K=2$. This imposes a block-wise structure implying that the predictive signal is homogeneous within local anatomical clusters (4 electrodes) over short temporal windows ($\approx 62.5$ ms). Sparsity levels $s_L=50$ and $s_R=15$ are chosen to retain $\approx 20\%$ active parameters.
    \item \textbf{S2 Nomatch Task:} Discriminating non-matching stimuli involves higher-order cognitive conflict recruiting a broader network. We increase spatial resolution with shape $(32,16,2,16)$ and rank $K=3$, utilizing sparsity $s_L=100, s_R=7$.
\end{itemize}

Table \ref{tab:eeg_auc} benchmarks SRGD--SHT against vector-based Lasso, unstructured low-rank models, and ablation variants. Two key conclusions emerge. First, SRGD--SHT achieves the highest AUC in both tasks ($0.923$ and $0.915$), outperforming the unstructured Low-Rank baseline by a wide margin. This confirms that the Kronecker prior successfully captures the specific spatio-temporal coupling of brain activity. Second, the ablation study demonstrates the synergy of our algorithmic components. Removing scaling (RGD-SHT) or truncation (ScGD-SHT) degrades performance. Most notably in the S1 task, removing truncation causes a catastrophic drop in AUC from $0.923$ to $0.744$. This empirically proves that scale-invariance alone is insufficient; explicit robustification against heavy-tailed predictors is indispensable for reliable learning in this regime.

\begin{table}[htb]
\centering
\caption{Classification performance (AUC) on the held-out test set.}
\label{tab:eeg_auc}
\begin{tabular}{llcc}
\hline
\textbf{Method} & \textbf{Specification} & \textbf{S1 (AUC)} & \textbf{S2 nomatch (AUC)} \\ \hline
\textit{Baselines} & & & \\
LASSO & Vector Model & 0.718 & 0.795 \\
Low-Rank & Rank-1 & 0.803 & 0.632 \\
Low-Rank & Rank-3 & 0.803 & 0.778 \\ \hline
\textit{Ablation Variants} & & & \\
SRGD-Dense & No Sparsity & 0.735 & 0.863 \\
RGD-SHT & Unscaled & 0.821 & 0.906 \\
ScGD-SHT & No Truncation & 0.744 & 0.855 \\ \hline
\textit{Proposed} & & & \\
\textbf{SRGD--SHT} & \textbf{Full Model} & \textbf{0.923} & \textbf{0.915} \\ \hline
\end{tabular}
\end{table}

To visualize the impact of robustification, Figure \ref{fig:EEG_diff} displays the difference between the estimated coefficient matrices of the robust (SRGD--SHT) and non-robust (ScGD--SHT) methods for the S1 task. The heatmap reveals high-magnitude discrepancies (deep red/blue) localized to specific channels and time intervals. Since the robust model achieves significantly higher accuracy, these discrepancies indicate areas where the non-robust estimator was likely pulled by high-leverage artifacts. By truncating these outliers, SRGD recovers a more predictive signal support, leading to improved generalization.

\begin{figure}[htb]
  \centering
  \includegraphics[width=0.6\textwidth]{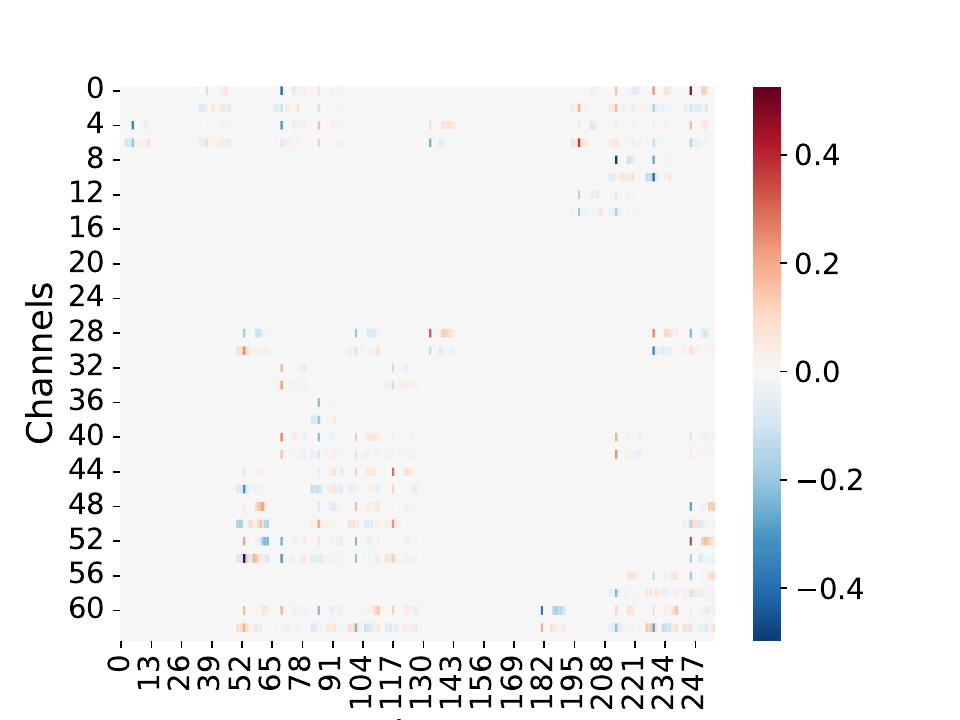}
  \caption{Heatmap of the difference $\widehat{\bbm\Theta}_\text{SRGD}-\widehat{\bbm\Theta}_\text{ScGD}$ for the S1 task.}
  \label{fig:EEG_diff}
\end{figure}

\subsection{Macroeconomic Data Forecasting}\label{sec:6.2}

To demonstrate the versatility of our framework in the context of bilinear models, we analyze a multinational macroeconomic dataset previously studied by \citet{chen2020constrained}. The data comprises 10 economic indicators (e.g., CPI, GDP, and Interest Rates) observed across 14 OECD countries (e.g., the USA, Germany, and France) over 107 quarters from 1990-Q2 to 2016-Q4. This dataset is notably challenging due to the presence of extreme market shocks (e.g., the 2008 financial crisis) which induce heavy-tailed distributions in both the predictors and the errors. At each time point $t$, the observation forms a matrix $\bm X_t \in \mathbb{R}^{14 \times 10}$. We apply the transformations described in \citet{chen2020constrained} (log-difference, seasonal adjustment, centering, and standardization) to induce stationarity. Consistent with the EEG analysis, the processed series exhibit significant excess kurtosis (see Appendix \ref{append:F} in Supplementary Materials), validating the need for robust estimation.

We model the temporal dynamics using a sparse matrix autoregressive model of order 1:
\begin{equation}
    \bm{X}_t = \bm{A}\bm{X}_{t-1}\bm{B}^\top + \bm{E}_t,
\end{equation}
which corresponds to the bilinear regression model in Section \ref{sec:4.3_bilinear} with predictors $\bm{X}_{t-1}$ and responses $\bm{X}_t$. We employ the proposed two-stage estimation procedure (Robust Dantzig initialization + SRGD--SHT refinement). The hyperparameters are set to $s_L=25$ and $s_R=15$, with truncation $\tau = 0.5\tau_0$. We conduct a rolling-window one-step-ahead forecast over the horizon 2013-Q1 to 2016-Q4 (16 quarters). We compare against three categories of baselines: (1) \textbf{RRMAR} \citep[Reduced-Rank MAR, non-robust bilinear,][]{Xiao2024RRMAR}; (2) \textbf{MFM} \citep[Matrix Factor Models,][]{wang2019factor}; and (3) \textbf{Huber-GD} (Robustness via loss function).

Table \ref{tab:TS_forecast} reports the Mean Squared Forecast Error (MSFE). The proposed SRGD--SHT achieves the lowest error (120.113), outperforming the standard RRMAR baselines. Notably, SRGD--SHT also outperforms the robust Huber-GD baseline (120.838). This reinforces the finding from the simulation study (Section \ref{subsec:exp2}) that gradient truncation is superior to Huber loss when the predictors ($\bm{X}_{t-1}$) themselves are heavy-tailed, a typical characteristic of economic time series during volatile periods. Furthermore, the significant performance drop in the non-sparse ablation variant (SRGD-Dense: 123.027) underscores that sparsity is critical for generalization in this high-dimensional regime ($p \times q = 140$ variables, $n \approx 90$ samples).

\begin{table}[htb]
\centering
\caption{One-step-ahead out-of-sample forecasting performance (MSFE). Parentheses $(r_1, r_2)$ denote the ranks of the transition matrices $\bm A$ and $\bm B$ (for RRMAR) or the number of row and column factors (for MFM).}
\label{tab:TS_forecast}
\begin{tabular}{lccccc}
\toprule
Method & Huber-GD & RRMAR(3,2) & RRMAR(6,4) & MFM(1,1) & MFM(3,2) \\ 
MSFE & 120.838 & 121.513 & 125.696 & 128.609 & 121.407 \\ \midrule
Method & \textbf{SRGD--SHT} & RGD--SHT & ScGD--SHT & SRGD--Dense & -- \\ 
MSFE & \textbf{120.113} & 120.358 & 120.709 & 123.027 & -- \\ 
\bottomrule
\end{tabular}
\end{table}

Beyond forecasting, a valid economic model must recover interpretable dynamics. In a multi-country autoregression, the transition matrix $\bm{A} \in \mathbb{R}^{14 \times 14}$ (capturing cross-country spillover) should be \textit{diagonally dominant}, reflecting the fact that a country's future state is most strongly predicted by its own immediate past.

Figure \ref{fig:TS_A} compares the estimated $\widehat{\bm{A}}$ matrices from our robust method (SRGD) and the non-robust ablation (ScGD). A striking discrepancy appears in the row corresponding to France (FRA). The non-robust estimator fails to identify the diagonal element $A_{\text{FRA, FRA}}$, estimating it as near-zero. This implies an economically implausible scenario where the French economy has no memory of its previous quarter. In contrast, the robust SRGD estimator correctly recovers a strong positive diagonal coefficient for France. This suggests that transient outliers in the French data likely skewed the least-squares gradient, causing the non-robust method to wash out the persistent signal. By truncating these artifacts, SRGD restores the expected structural logic of the economic system.

\begin{figure}[htb]
  \centering
  \includegraphics[width=0.8\textwidth]{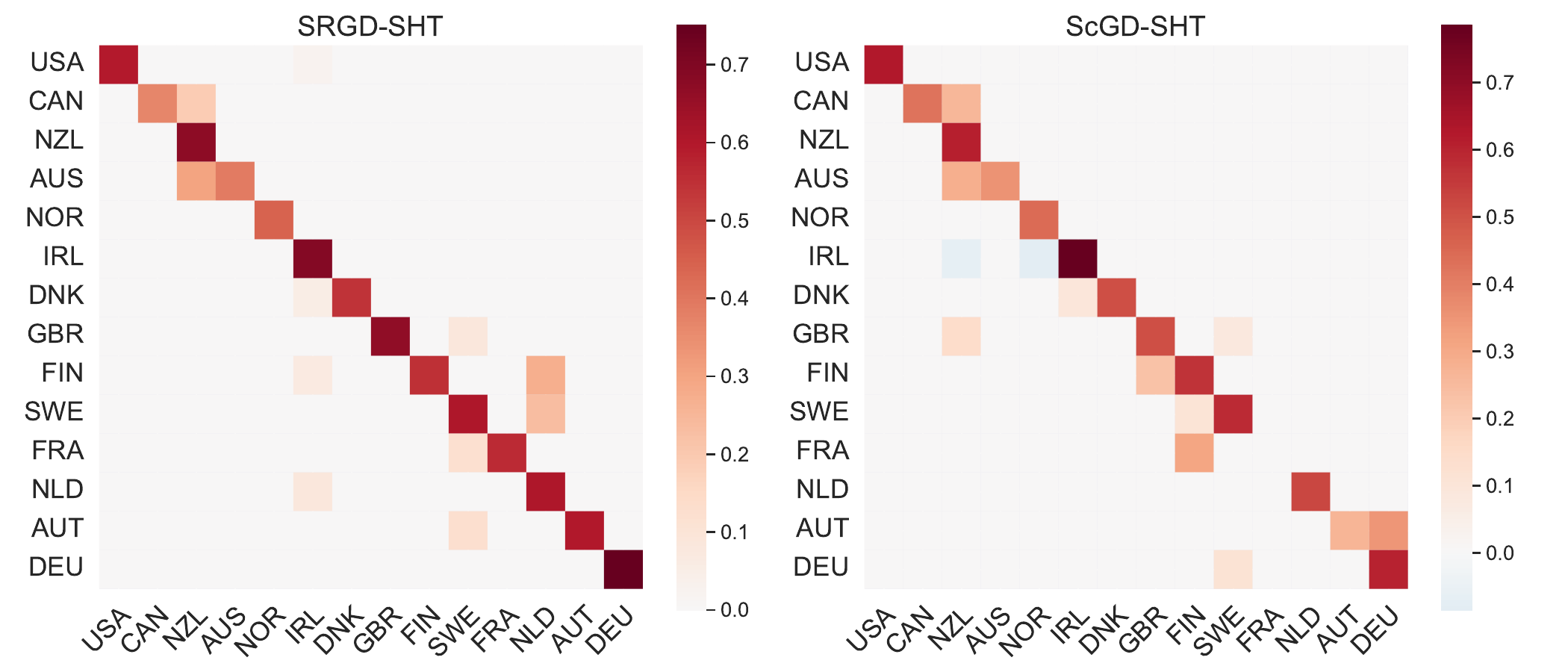}
  \caption{Estimates of the country transition matrix $\bm A$. The left panel displays the estimate from SRGD--SHT and the right panel corresponds to ScGD--SHT.}
  \label{fig:TS_A}
\end{figure}
\vspace{-1cm}

\section{Conclusion and Discussion}\label{sec:7}

In this article, we have introduced a unified framework for high-dimensional structured matrix estimation that resolves the fundamental tension between \textit{non-convex scaling ambiguity} and \textit{statistical robustness}. Our work demonstrates that standard robustification techniques, such as gradient truncation or Huber loss, are geometrically fragile in factorized models. When the factorization is ill-scaled, these fixed-threshold mechanisms fail to distinguish between heavy-tailed outliers and valid signals along large-norm directions.

Our proposed methodology, integrating Scaled Robust Gradient Descent (SRGD) and Scaled Hard Thresholding (SHT), addresses this failure mode by operating in a transformation-invariant domain. By de-scaling the gradient before truncation and performing variable selection on normalized factors, we ensure that the estimation process is stable regardless of the arbitrary parameterization of the Kronecker product.

From a theoretical perspective, our analysis establishes the convergence for a two-stage estimation procedure across three canonical models: matrix trace regression, matrix GLMs, and bilinear regression. A key insight from our results is the characterization of a smooth phase transition governed by the tail index $\epsilon$ of the noise. We prove that robustness does not incur an efficiency penalty in finite-variance regimes ($\epsilon=1$), while optimally degrading to $O(n^{-\epsilon/(1+\epsilon)})$ in heavier-tailed settings. This provides a rigorous quantification of the cost of robustness in high-dimensional structured estimation.

The principles developed here extend beyond the specific case of Kronecker products. The ``scale-then-robustify'' philosophy is broadly applicable to other non-convex factorizations plagued by scaling ambiguities, such as tensor CP/Tucker decompositions and deep neural networks. In these settings, gradient magnitudes can vary exponentially across layers or modes; our results suggest that de-scaled, element-wise truncation may offer a generic path toward stabilizing robust optimization in deep or multi-modal architectures. Future work will explore these extensions, as well as the adaptation of our two-stage framework to online and streaming data environments.

{
\setstretch{1.35}
\bibliography{mybib}
}

\appendix

\section{Computational Convergence of SRGD--SHT}\label{append:A}

\subsection{Convergence of SRGD}\label{append:A1}

\begin{proof}[Proof of Theorem \ref{thm:1}]

The proof is developed by induction and consists of three steps. In the first step, we define necessary definitions and conditions used throughout the proof. In the second step, we develop the local convergence results for the proposed algorithm. Finally, in the last step, we prove that the conditions hold in an inductive manner.

First, we introduce the notations used throughout the proof. For simplicity, we denote the robust de-scaled gradient estimators $\bm{G}_L(\bm{L}_j,\bm{R}_j;\mathcal{D}_n)$ and $\bm{G}_R(\bm{L}_j,\bm{R}_j;\mathcal{D}_n)$ as $\bm{G}_{L}^{(j)}$ and $\bm{G}_{R}^{(j)}$, respectively. Similarly, we simplify the notations of the expectation of the partial gradients:
\begin{equation}
  \begin{split}
    & \mathbb{E}[\nabla_{\bm{L}}f(\bm{L},\bm{R};z_i)]|_{\bm{L}=\bm{L}_j,\bm{R}=\bm{R}_j} = \mathbb{E}[\nabla_{\bm{L}}f^{(j)}],\\
    & \mathbb{E}[\nabla_{\bm{R}}f(\bm{L},\bm{R};z_i)]|_{\bm{L}=\bm{L}_j,\bm{R}=\bm{R}_j} = \mathbb{E}[\nabla_{\bm{R}}f^{(j)}].
  \end{split}
\end{equation}

The proposed robust scaled gradient descent algorithm can be formulated as
\begin{equation}
  \begin{split} 
    \bm{L}_{j+1} & = \bm{L}_{j} - \eta\cdot \bm{G}_{L}^{(j)} (\bm{R}_j^{\top}\bm{R}_j)^{-1/2}\\
    & = \bm{L}_j - \eta\cdot\mathbb{E}[\nabla_{\bm{L}}f^{(j)}](\bm{R}_j^{\top}\bm{R}_j)^{-1} - \eta\cdot\bm{\Delta}_{L}^{(j)}(\bm{R}_j^{\top}\bm{R}_j)^{-1/2},
  \end{split}
\end{equation}
and
\begin{equation}
  \begin{split}
    \bm{R}_{j+1} & = \bm{R}_{j} - \eta\cdot \bm{G}_{R}^{(j)}(\bm{L}_j^{\top}\bm{L}_{j})^{-1/2}\\
    & = \bm{R}_{j} - \eta\cdot\mathbb{E}[\nabla_{\bm{R}}f^{(j)}](\bm{L}_j^{\top}\bm{L}_{j})^{-1} - \eta\cdot\bm{\Delta}_{R}^{(j)}(\bm{L}_j^{\top}\bm{L}_j)^{-1/2},
  \end{split}
\end{equation}
where $\bm{\Delta}_{L}^{(j)}:=\bm{G}_{L}^{(j)}-\mathbb{E}[\nabla_{\bm{L}}f^{(j)}](\bm{R}_j^{\top}\bm{R}_j)^{-1/2}$ and $\bm{\Delta}_{R}^{(j)}:=\bm{G}_{R}^{(j)}-\mathbb{E}[\nabla_{\bm{R}}f^{(j)}](\bm{L}_{j}^{\top}\bm{L}_j)^{-1/2}$ are the estimation errors of the de-scaled partial gradients. The stability of the robust de-scaled gradient estimators implies:
\begin{equation}
  \begin{split}
    \|\bm{\Delta}_{L}^{(j)}\|_\text{F}^2 & \leq \phi\|\bm{L}_{j}\bm{R}_j^{\top}-\bm{L}_*\bm{R}_*^{\top}\|_\text{F}^2 + \xi_{L}^2,\\
    \|\bm{\Delta}_{R}^{(j)}\|_\text{F}^2 & \leq \phi\|\bm{L}_{j}\bm{R}_j^{\top}-\bm{L}_*\bm{R}_*^\top\|_\text{F}^2 + \xi_{R}^2.
  \end{split}
\end{equation}

Define
\begin{equation}
   \bm{Q}_{j} = {\argmin}_{\bm{Q}\in\text{GL}(K) }\Big\{\|(\bm{L}_j\bm{Q}-\bm{L}_*)\bm{\Sigma}_*^{1/2}\|_\text{F}^2 + \|(\bm{R}_j\bm{Q}^{-\top}-\bm{R}_*)\bm{\Sigma}_*^{1/2}\|_\text{F}^2\Big\}
\end{equation} 
as the optimal transformation aligning $(\bm{L}_{j},\bm{R}_{j})$ and $(\bm{L}_*,\bm{R}_*)$. We assume
\begin{equation}\label{eq:condition1}
  d(\bm{F}_{j},\bm{F}_{*}) = \sqrt{\|(\bm{L}_j\bm{Q}_{j}-\bm{L}_*)\bm{\Sigma}_*^{1/2}\|_\text{F}^2+\|(\bm{R}_j\bm{Q}_j^{-\top}-\bm{R}_*)\bm{\Sigma}_*^{1/2}\|_\text{F}^2} \leq C\alpha^{1/2}\beta^{-1/2}\sigma_K,
\end{equation}
for some sufficiently small $C$ and for all $j=1,\dots,J$. In the last part of the proof, we will prove this condition holds for all iterations.

The upper bound for $d(\bm{F}_{j},\bm{F}_{*})$ in \eqref{eq:condition1} implies that
\begin{equation}
  \begin{split}
    & \max(\|(\bm{L}_j\bm{Q}_j-\bm{L}^*)\bm{\Sigma}_*^{-1/2}\|_\text{op},\|(\bm{R}_j\bm{Q}_j^{-\top}-\bm{R}^*)\bm{\Sigma}_*^{-1/2}\|_\text{op})\cdot\sigma_K\\
    \leq & \sqrt{\left(\|(\bm{L}_j\bm{Q}_j-\bm{L}_*)\bm{\Sigma}_*^{-1/2}\|_\text{F}^2+\|(\bm{R}_j\bm{Q}_j^{-\top}-\bm{R}_*)\bm{\Sigma}_*^{-1/2}\|_\text{F}^2\right)\sigma_K^2}\\
    \leq & \sqrt{\left(\|(\bm{L}_j\bm{Q}_j-\bm{L}^*)\bm{\Sigma}_*^{1/2}\|_\text{F}^2+\|(\bm{R}_j\bm{Q}_j^{-\top}-\bm{R}^*)\bm{\Sigma}_*^{1/2}\|_\text{F}^2\right)}\leq C\alpha^{1/2}\beta^{-1/2}\sigma_K.
  \end{split}
\end{equation}
Hence, $\max(\|(\bm{L}_j\bm{Q}_j-\bm{L}^*)\bm{\Sigma}_*^{-1/2}\|_\text{op},\|(\bm{R}_j\bm{Q}_j^{-\top}-\bm{R}^*)\bm{\Sigma}_*^{-1/2}\|_\text{op})\leq C\alpha^{1/2}\beta^{-1/2}:=B$.

To simplify notations, as we focus on the $j$-th step, denote $\bm{L}=\bm{L}_j\bm{Q}_j$ and $\bm{R}=\bm{R}_j\bm{Q}_j^{-\top}$. Then, $\bm{L}_j=\bm{L}\bm{Q}_j^{-1}$, $\bm{R}_j=\bm{R}\bm{Q}_j^\top$, $(\bm{L}_j^{\top}\bm{L}_{j})^{-1}=\bm{Q}_j(\bm{L}^\top\bm{L})^{-1}\bm{Q}_j^\top$, $(\bm{R}_j^{\top}\bm{R}_j)^{-1}=\bm{Q}_j^{-\top}(\bm{R}^\top\bm{R})^{-1}\bm{Q}_j^{-1}$, $\bm{L}_j(\bm{L}_j^{\top}\bm{L}_j)^{-1}=\bm{L}(\bm{L}^\top\bm{L})^{-1}\bm{Q}_j^\top$, and $\bm{R}_{j}(\bm{R}_j^{\top}\bm{R}_{j})^{-1}=\bm{R}(\bm{R}^\top\bm{R})^{-1}\bm{Q}_j^{-1}$.

By definition of $\bm{Q}_j$,
\begin{equation}
  \begin{split}
    & \|(\bm{L}_{j+1}\bm{Q}_{j+1}-\bm{L}_*)\bm{\Sigma}_*^{1/2}\|_\text{F}^2 + \|(\bm{R}_{j+1}\bm{Q}^{-\top}_{j+1}-\bm{R}_*)\bm{\Sigma}_*^{1/2}\|_\text{F}^2\\
    \leq & \|(\bm{L}_{j+1}\bm{Q}_{j}-\bm{L}_*)\bm{\Sigma}_*^{1/2}\|_\text{F}^2 + \|(\bm{R}_{j+1}\bm{Q}_{j}^{-\top}-\bm{R}_*)\bm{\Sigma}_*^{1/2}\|_\text{F}^2.
  \end{split}
\end{equation}
By the mean inequality, for any $\zeta>0$,
\begin{equation}
  \begin{split}
    & \|(\bm{L}_{j+1}\bm{Q}_{j}-\bm{L}_*)\bm{\Sigma}_*^{1/2}\|_\text{F}^2 \\
    = & \|(\bm{L} - \eta\cdot\mathbb{E}[\nabla f^{(j)}]\bm{R}_{j}(\bm{R}_j^{\top}\bm{R}_j)^{-1}\bm{Q}_{j} - \eta\bm{\Delta}_{L}^{(j)}(\bm{R}_j^{\top}\bm{R}_j)^{-1/2}\bm{Q}_{j}-\bm{L}_*)\bm{\Sigma}_*^{1/2}\|_\text{F}^2\\
    \leq & (1+\zeta)\|(\bm{L}-\bm{L}_*-\eta\mathbb{E}[\nabla f^{(j)}]\bm{R}_j(\bm{R}_j^{\top}\bm{R}_{j})^{-1}\bm{Q}_{j})\bm{\Sigma}_*^{1/2}\|_\text{F}^2\\
    & + (1+\zeta^{-1})\eta^2\|\bm{\Delta}_{L}^{(j)}(\bm{R}_j^{\top}\bm{R}_j)^{-1/2}\bm{Q}_{j}\bm{\Sigma}_*^{1/2}\|_\text{F}^2\\
    = & (1+\zeta)\|(\bm{L}-\bm{L}_*-\eta\mathbb{E}[\nabla f^{(j)}]\bm{R}(\bm{R}^\top\bm{R})^{-1})\bm{\Sigma}_*^{1/2}\|_\text{F}^2\\
    & + (1+\zeta^{-1})\eta^2\|\bm{\Delta}_{L}^{(j)}(\bm{R}_j^{\top}\bm{R}_j)^{-1/2}\bm{Q}_{j}\bm{\Sigma}_*^{1/2}\|_\text{F}^2.
  \end{split}
\end{equation}
For the first term in the right hand side, we have the following upper bound
\begin{equation}\label{eq:bound1}
  \begin{split}
    & \|(\bm{L}-\bm{L}_*-\eta\mathbb{E}[\nabla f^{(j)}]\bm{R}(\bm{R}^\top\bm{R})^{-1})\bm{\Sigma}_*^{1/2}\|_\text{F}^2\\
    = & \|(\bm{L}-\bm{L}_*)\bm{\Sigma}_*^{1/2}\|_\text{F}^2 + \eta^2\|\mathbb{E}[\nabla f^{(j)}]\bm{R}(\bm{R}^\top\bm{R})^{-1}\bm{\Sigma}_*^{1/2}\|_\text{F}^2 \\
    & - 2\eta\left\langle(\bm{L}-\bm{L}_*)\bm{\Sigma}_*^{1/2},\mathbb{E}[\nabla f^{(j)}]\bm{R}(\bm{R}^\top\bm{R})^{-1}\bm{\Sigma}_*^{1/2}\right\rangle\\
    \leq & \|(\bm{L}-\bm{L}_*)\bm{\Sigma}_*^{1/2}\|_\text{F}^2 + \eta^2\|\bm{R}(\bm{R}^\top\bm{R})^{-1}\bm{\Sigma}_*^{1/2}\|_\text{op}^2\cdot\|\mathbb{E}[\nabla f^{(j)}]\|_\text{F}^2\\
    &-2\eta\left\langle(\bm{L}-\bm{L}_*)\bm{R}_*^{\top},\mathbb{E}[\nabla f^{(j)}]\right\rangle\\
    &  + 2\eta\left\langle(\bm{L}-\bm{L}_*)\bm{\Sigma}_*^{1/2},\mathbb{E}[\nabla f^{(j)}](\bm{V}_*-\bm{R}(\bm{R}^\top\bm{R})^{-1}\bm{\Sigma}_*^{1/2})\right\rangle.
  \end{split}
\end{equation}
Similarly, for $\bm{R}_{j}$, we have
\begin{equation}
  \begin{split}
    & \|(\bm{R}_{j+1}\bm{Q}_{j}^{-\top}-\bm{R}_*)\bm{\Sigma}_*^{1/2}\|_\text{F}^2\\
    \leq & (1+\zeta)\|(\bm{R}-\bm{R}_*-\eta\mathbb{E}[\nabla f^{(j)}]\bm{L}_{j}(\bm{L}_j^{\top}\bm{L}_{j})^{-1}\bm{Q}_{j}^{-\top})\bm{\Sigma}_*^{1/2}\|_\text{F}^2\\
    & + (1+\zeta^{-1})\eta^2\|\bm{\Delta}_{R}^{(j)}\bm{Q}_{j}^{-\top}\bm{\Sigma}_*^{1/2}\|_\text{F}^2\\
    = & (1+\zeta)\|(\bm{R}-\bm{R}_*-\eta\mathbb{E}[\nabla f^{(j)}]\bm{L}(\bm{L}^\top\bm{L})^{-1})\bm{\Sigma}_*^{1/2}\|_\text{F}^2 + (1+\zeta^{-1})\eta^2\|\bm{\Delta}_{R}^{(j)}\bm{Q}_{j}^{-\top}\bm{\Sigma}_*^{1/2}\|_\text{F}^2,
  \end{split}
\end{equation}
where the first term can be further bounded by
\begin{equation}\label{eq:bound2}
  \begin{split}
    & \|(\bm{R}-\bm{R}_*-\eta\mathbb{E}[\nabla f^{(j)}]\bm{L}(\bm{L}^\top\bm{L})^{-1})\bm{\Sigma}_*^{1/2}\|_\text{F}^2\\
    \leq & \|(\bm{R}-\bm{R}_*)\bm{\Sigma}_*^{1/2}\|_\text{F}^2 + \eta^2\|\bm{L}(\bm{L}^\top\bm{L})^{-1}\bm{\Sigma}_*^{1/2}\|_\text{op}^2\cdot\|\mathbb{E}[\nabla f^{(j)}]\|_\text{F}^2\\
    & -2\eta\left\langle(\bm{R}-\bm{R}_*)\bm{L}_*^{\top},\mathbb{E}[\nabla f^{(j)}]^\top\right\rangle\\
    & + 2\eta\left\langle(\bm{R}-\bm{R}_*)\bm{\Sigma}_*^{1/2},\mathbb{E}[\nabla f^{(j)}]^\top(\bm{U}_*-\bm{L}(\bm{L}^\top\bm{L})^{-1}\bm{\Sigma}_*^{1/2})\right\rangle.
  \end{split}
\end{equation}

Combining these two upper bounds in \eqref{eq:bound1} and \eqref{eq:bound2}, we have
\begin{equation}
  \begin{split}
    & \|(\bm{L}-\bm{L}_*-\eta\mathbb{E}[\nabla f^{(j)}]\bm{R}(\bm{R}^\top\bm{R})^{-1})\bm{\Sigma}_*^{1/2}\|_\text{F}^2 + \|(\bm{R}-\bm{R}_*-\eta\mathbb{E}[\nabla f^{(j)}]\bm{L}(\bm{L}^\top\bm{L})^{-1})\bm{\Sigma}_*^{1/2}\|_\text{F}^2 \\
    \leq & \|(\bm{L}-\bm{L}_*)\bm{\Sigma}_*^{1/2}\|_\text{F}^2 + \|(\bm{R}-\bm{R}_*)\bm{\Sigma}_*^{1/2}\|_\text{F}^2\\
    & + \eta^2\left(\|\bm{L}(\bm{L}^\top\bm{L})^{-1}\bm{\Sigma}_*^{1/2}\|_\text{op}^2+\|\bm{R}(\bm{R}^\top\bm{R})^{-1}\bm{\Sigma}_*^{1/2}\|_\text{op}^2\right)\cdot\|\mathbb{E}[\nabla f^{(j)}]\|_\text{F}^2\\
    & - 2\eta \left\langle\mathbb{E}[\nabla f^{(j)}],(\bm{L}-\bm{L}_*)\bm{R}_*^{\top}+\bm{L}_*(\bm{R}-\bm{R}_*)^\top\right\rangle\\
    & + 2\eta \left\langle\mathbb{E}[\nabla f^{(j)}],(\bm{L}-\bm{L}_*)\bm{\Sigma}_*^{1/2}(\bm{V}_*-\bm{R}(\bm{R}^\top\bm{R})^{-1}\bm{\Sigma}_*^{1/2})^\top\right.\\
    & \left.\quad\quad\quad+(\bm{U}^*-\bm{L}(\bm{L}^\top\bm{L})^{-1}\bm{\Sigma}_*^{1/2})\bm{\Sigma}_*^{1/2}(\bm{R}-\bm{R}_*)^\top\right\rangle\\
    =: & \|(\bm{L}-\bm{L}_*)\bm{\Sigma}_*^{1/2}\|_\text{F}^2 + \|(\bm{R}-\bm{R}_*)\bm{\Sigma}_*^{1/2}\|_\text{F}^2 + \eta^2M_1 - 2\eta M_2 + 2\eta M_3,
  \end{split}
\end{equation}
where $M_1$, $M_2$, and $M_3$ are defined as above.

By Lemma \ref{lemma:1} and the bound 
\begin{equation}
  \max(\|(\bm{L}_{j}\bm{Q}_j-\bm{L}_*)\bm{\Sigma}_*^{-1/2}\|_\text{op},\|(\bm{R}_{j}\bm{Q}_j^{-\top}-\bm{R}_*)\bm{\Sigma}_*^{-1/2}\|_\text{op})\leq B,
\end{equation}
we have the upper bound for $M_1$ given as
$$M_1 \leq \frac{2}{(1-B)^2}\|\mathbb{E}[\nabla\mathcal{L}^{(j)}]\|_\text{F}^2.$$
By the permutation operator $\mathcal{P}$, it is clear that $\|\nabla f^{(j)}\|_\text{F}=\|\nabla \mathcal{L}^{(j)}\|_\text{F}$. In addition, by the RCG condition, Lemma \ref{lemma:2} and mean inequality,
\begin{equation}
  \begin{split}
    & M_2 = \langle\mathbb{E}[\nabla f^{(j)}],\bm{L}\bm{R}^\top-\bm{L}_*\bm{R}_*^{\top}-(\bm{L}-\bm{L}_*)(\bm{R}-\bm{R}_*)^\top\rangle\\
    = & \langle\mathbb{E}[\nabla f^{(j)}],\bm{L}\bm{R}^\top-\bm{L}_*\bm{R}_*^{\top}\rangle - \langle\mathbb{E}[\nabla f^{(j)}],(\bm{L}-\bm{L}_*)(\bm{R}-\bm{R}_*)^\top\rangle\\
    \geq & \frac{\alpha}{2}\|\bm{L}\bm{R}^\top-\bm{L}_*\bm{R}_*^{\top}\|_\text{F}^2+\frac{1}{2\beta}\|\mathbb{E}[\nabla f^{(j)}]\|_\text{F}^2 - \frac{1}{4\beta}\|\mathbb{E}[\nabla f^{(j)}]\|_\text{F}^2 - \beta\|(\bm{L}-\bm{L}_*)(\bm{R}-\bm{R}_*)^\top\|_\text{F}^2\\
    = & \frac{\alpha}{2}\|\bm{L}\bm{R}^\top-\bm{L}_*\bm{R}_*^{\top}\|_\text{F}^2+\frac{1}{4\beta}\|\mathbb{E}[\nabla f^{(j)}]\|_\text{F}^2 - \beta\|(\bm{L}-\bm{L}_*)(\bm{R}-\bm{R}_*)^\top\|_\text{F}^2,
  \end{split}
\end{equation}
where the last term can be bounded by
\begin{equation}
  \begin{split}
    & \|(\bm{L}-\bm{L}_*)(\bm{R}-\bm{R}_*)^\top\|_\text{F}^2\\
    \leq & \frac{1}{2}\|(\bm{L}-\bm{L}_*)\bm{\Sigma}_*^{-1/2}\|_\text{op}^2\cdot\|(\bm{R}-\bm{R}_*)\bm{\Sigma}_*^{1/2}\|_\text{F}^2 + \frac{1}{2}\|(\bm{L}-\bm{L}_*)\bm{\Sigma}_*^{1/2}\|_\text{F}^2\cdot\|(\bm{R}-\bm{R}_*)\bm{\Sigma}_*^{-1/2}\|_\text{op}^2\\
    \leq & \frac{B^2}{2}\left(\|(\bm{L}-\bm{L}_*)\bm{\Sigma}_*^{1/2}\|_\text{F}^2+\|(\bm{R}-\bm{R}_*)\bm{\Sigma}_*^{1/2}\|_\text{F}^2\right).
  \end{split}
\end{equation}

Also, by the mean inequality and Lemma \ref{lemma:1}, for any $\gamma>0$,
\begin{equation}
  \begin{split}
    M_3 \leq & \|\mathbb{E}[\nabla f^{(j)}]\|_\text{F}\cdot\|(\bm{L}-\bm{L}_*)\bm{\Sigma}_*^{1/2}\|_\text{F}\cdot\|\bm{V}_*-\bm{R}(\bm{R}^\top\bm{R})^{-1}\bm{\Sigma}_*^{1/2}\|_\text{op}\\
    & + \|\mathbb{E}[\nabla f^{(j)}]\|_\text{F}\cdot\|(\bm{R}-\bm{R}_*)\bm{\Sigma}_*^{1/2}\|_\text{F}\cdot\|\bm{U}_*-\bm{L}(\bm{L}^\top\bm{L})^{-1}\bm{\Sigma}_*^{1/2}\|_\text{op}\\
    \leq & \frac{\sqrt{2}B\gamma^{-1}}{1-B}\|\mathbb{E}[\nabla f^{(j)}]\|_\text{F}^2+\frac{\sqrt{2}\gamma}{4}\left(\|(\bm{L}-\bm{L}_*)\bm{\Sigma}_*^{1/2}\|_\text{F}^2+\|(\bm{R}-\bm{R}_*)\bm{\Sigma}_*^{1/2}\|_\text{F}^2\right).
  \end{split}
\end{equation}

Therefore, combining the results above, by Lemma \ref{lemma:2}, we have
\begin{equation}
  \begin{split}
    & \|(\bm{L}-\bm{L}_*-\eta\mathbb{E}[\nabla f^{(j)}]\bm{R}(\bm{R}^\top\bm{R})^{-1})\bm{\Sigma}_*^{1/2}\|_\text{F}^2 + \|(\bm{R}-\bm{R}_*-\eta\mathbb{E}[\nabla f^{(j)}]\bm{L}(\bm{L}^\top\bm{L})^{-1})\bm{\Sigma}_*^{1/2}\|_\text{F}^2 \\
    \leq & \left(1+B^2\beta\eta+\frac{\sqrt{2}\gamma B\eta}{2}\right)\left(\|(\bm{L}-\bm{L}_*)\bm{\Sigma}_*^{1/2}\|_\text{F}^2+\|(\bm{R}-\bm{R}_*)\bm{\Sigma}_*^{1/2}\|_\text{F}^2\right)\\
    & - \alpha\eta\|\bm{L}\bm{R}^\top-\bm{L}_*\bm{R}_*^{\top}\|_\text{F}^2 + \left(\frac{\eta^2}{(1-B)^2}+\frac{2\sqrt{2}B\gamma^{-1}\eta}{1-B}-\frac{\eta}{2\beta}\right)\|\mathbb{E}[\nabla f^{(j)}]\|_\text{F}^2\\
    \leq & \left(1+B^2\beta\eta+\frac{\sqrt{2}\gamma B\eta}{2} - \frac{\alpha\eta}{2}\right)(\|(\bm{L}-\bm{L}_*)\bm{\Sigma}_*^{1/2}\|_\text{F}^2+\|(\bm{R}-\bm{R}_*)\bm{\Sigma}_*^{1/2}\|_\text{F}^2)\\
    & + \left(\frac{\eta^2}{(1-B)^2}+2\sqrt{2}B\gamma^{-1}\eta-\frac{\eta}{2\beta}\right)\|\mathbb{E}[\nabla f^{(j)}]\|_\text{F}^2.
  \end{split}
\end{equation}
Letting $\gamma= C_1\sqrt{\alpha\beta}$, $\eta=C_2\beta^{-1}$, and $\zeta=C_3\alpha\beta^{-1}$, if $B \leq C_4\sqrt{\alpha/\beta}$, for some universal constants $C_1,\dots,C_4>0$, we have 
\begin{equation}
  \begin{split} 
    &\|(\bm{L}_{j+1}\bm{Q}_j-\bm{L}_*)\bm{\Sigma}_*^{1/2}\|_\text{F}^2 + \|(\bm{R}_{j+1}\bm{Q}_j-\bm{R}_*)\bm{\Sigma}_*^{1/2}\|_\text{F}^2\\
    \leq & (1-C\alpha\beta^{-1})(\|(\bm{L}_{j}\bm{Q}_j-\bm{L}_*)\bm{\Sigma}_*^{1/2}\|_\text{F}^2+\|(\bm{R}_{j}\bm{Q}_j-\bm{R}_*)\bm{\Sigma}_*^{1/2}\|_\text{F}^2)\\
    & + C\alpha^{-1}\beta^{-1}\|\bm{\Delta}_{L}^{(j)}(\bm{R}_j^{\top}\bm{R}_{j})^{-1/2}\bm{Q}_j\bm{\Sigma}_*^{1/2}\|_\text{F}^2 + C\alpha^{-1}\beta^{-1}\|\bm{\Delta}_{R}^{(j)}(\bm{L}_j^{\top}\bm{L}_j)^{-1/2}\bm{Q}_j^{-\top}\bm{\Sigma}_*^{1/2}\|_\text{F}^2.
  \end{split}
\end{equation}
Here, $\|\bm{\Delta}_{L}^{(j)}(\bm{R}_j^{\top}\bm{R}_j)^{-1/2}\bm{Q}_j\bm{\Sigma}_*^{1/2}\|_\text{F}^2\leq\|\bm{\Delta}_{L}^{(j)}\|_\text{F}^2\cdot\|(\bm{R}_j^{\top}\bm{R}_j)^{-1/2}\bm{Q}_j\bm{\Sigma}_*^{1/2}\|_\text{op}^2$, and
\begin{equation}
  \begin{split}
    & \|(\bm{R}_j^{\top}\bm{R}_j)^{-1/2}\bm{Q}_j\bm{\Sigma}_*^{1/2}\|_\text{op}^2 =\lambda_{\max}(\bm{\Sigma}_*^{1/2}\bm{Q}_j^\top(\bm{R}_j^{\top}\bm{R}_j)^{-1}\bm{Q}_j\bm{\Sigma}_*^{1/2})\\
    = & \lambda_{\max}(\bm{\Sigma}_*^{1/2}(\bm{R}^\top\bm{R})^{-1}\bm{\Sigma}_*^{1/2}) = \|\bm{R}(\bm{R}^\top\bm{R})^{-1}\bm{\Sigma}_*^{1/2}\|_\text{op}^2 \leq \frac{1}{(1-B)^2} \leq C.
  \end{split}
\end{equation}
Similarly, we also have $\|(\bm{L}_j^{\top}\bm{L}_j)^{-1/2}\bm{Q}_j^{-\top}\bm{\Sigma}_*^{1/2}\|_\text{op}^2\leq C$.

According to the stability of $\bm{\Delta}_{L}^{(j)}$ and $\bm{\Delta}_{R}^{(j)}$, we have
\begin{equation}
  \|\bm{\Delta}_{L}^{(j)}(\bm{R}_j^{\top}\bm{R}_j)^{-1/2}\bm{Q}_j\bm{\Sigma}_*^{1/2}\|_\text{F}^2 \lesssim \phi\|\bm{L}_j\bm{R}_j^{\top}-\bm{L}_*\bm{R}_*^{\top}\|_\text{F}^2 + \xi_{L}^2
\end{equation}
and
\begin{equation}
  \|\bm{\Delta}_{R}^{(j)}(\bm{L}_j^{\top}\bm{L}_j)^{-1/2}\bm{Q}_j^{-\top}\bm{\Sigma}_*^{1/2}\|_\text{F}^2 \lesssim \phi\|\bm{L}_{j}\bm{R}_j^{\top}-\bm{L}_*\bm{R}_*^{\top}\|_\text{F}^2 + \xi_{R}^2.
\end{equation}
Therefore, with $\phi\lesssim\alpha^2$, we have that
\begin{equation}
  \begin{split} 
    d(\bm{F}_{j+1},\bm{F}_*)^2 \leq &~ (1-C\alpha\beta^{-1})d(\bm{F}_{j},\bm{F}_*)^2 + C\alpha^{-1}\beta^{-1}(\xi_{L}^2+\xi_{R}^2)\\
    \leq &~ \cdots\\
    \leq &~ (1-C\alpha\beta^{-1})^t d(\bm{F}_{0},\bm{F}_*)^2 + C\alpha^{-2}(\xi_{L}^2+\xi_{R}^2).
  \end{split}
\end{equation}
Finally, since $\xi_{L}^2 + \xi_{R}^2 \lesssim \alpha^3\beta^{-1}\sigma_K^2$, we have
\begin{equation}
  \begin{split}
    d(\bm{F}_{j+1},\bm{F}_*)^2 &\leq (1-C\alpha\beta^{-1})C\alpha\beta^{-1}\sigma_K^2 + C\alpha^{-1}\beta^{-1}(\xi_{L}^2+\xi_{R}^2)\\
    & \leq C\alpha\beta^{-1}\sigma_K^2+C\alpha^{-1}\beta^{-1}(\xi^2_L+\xi^2_R-C\alpha^3\beta^{-1}\sigma_K^2)\\
    & \leq C\alpha\beta^{-1}\sigma_K^2.
  \end{split}
\end{equation}
Then, the whole proof can be completed inductively. In addition, it naturally implies that $\bm{L}_{j+1}$ and $\bm{R}_{j+1}$ are not singular.
\end{proof}

\subsection{Convergence of SRGD--SHT}
\begin{proof}[Proof of Theorem \ref{thm:2}]

The proof is developed by induction. The proposed projected robust gradient descent algorithm is formulated 
as
\begin{equation}
  \begin{split}
    \widetilde{\bm{L}}_{j+1} & = \bm{L}_{j} - \eta\cdot \bm{G}_{L}^{(j)} (\bm{R}_j^{\top}\bm{R}_{j})^{-1/2},\\
    \widetilde{\bm{R}}_{j+1} & = \bm{R}_{j} - \eta\cdot \bm{G}_{R}^{(j)}(\bm{L}_j^{\top}\bm{L}_j)^{-1/2},\\
    \bm{L}_{j+1} & = \text{HT}(\widetilde{\bm{L}}_{j+1}(\widetilde{\bm{R}}_{j+1}^{\top}\widetilde{\bm{R}}_{j+1})^{1/2},s_L)(\widetilde{\bm{R}}_{j+1}^{\top}\widetilde{\bm{R}}_{j+1})^{-1/2},\\
    \bm{R}_{j+1} & = \text{HT}(\widetilde{\bm{R}}_{j+1}(\widetilde{\bm{L}}_{j+1}^{\top}\widetilde{\bm{L}}_{j+1})^{1/2},s_R)(\widetilde{\bm{L}}_{j+1}^{\top}\widetilde{\bm{L}}_{j+1})^{-1/2}.
  \end{split}
\end{equation}
Denote the nonzero active sets of $\bm{L}_j$, $\bm{R}_j$, $\bm{L}_{*}$, and $\bm{R}_{*}$ as $S_{L,j}$, $S_{R,j}$, $S_{L,*}$ and $S_{R,*}$, respectively. Let $\bar{S}_{L,j}=S_{L,j}\cup S_{L,*}$ and $\bar{S}_{R,j}=S_{R,j}\cup S_{R,*}$, for all $j=1,2,\dots,J$. Let $s_1 = s_L + s_{L,*}$ and $s_2 = s_R + s_{R,*}$.

First, we analyze the scaled hard thresholding step. For any $j=1,2,\dots,J$,
\begin{equation}
  \begin{split}
    & \|(\bm{L}_{j}\bm{Q}_{j}-\bm{L}_*)\bm{\Sigma}_*^{1/2}\|_\text{F}^2 + \|(\bm{R}_{j}\bm{Q}_{j}^{-\top}-\bm{R}_*)\bm{\Sigma}_*^{1/2}\|_\text{F}^2 \\
    = & \|(\bm{L}_{j}\bm{Q}_{j}-\bm{L}_*)_{\bar{S}_{L,j}}\bm{\Sigma}_*^{1/2}\|_\text{F}^2 + \|(\bm{R}_{j}\bm{Q}_{j}^{-\top}-\bm{R}_*)_{\bar{S}_{R,j}}\bm{\Sigma}_*^{1/2}\|_\text{F}^2 \\
    \leq & \|(\bm{L}_{j}\widetilde{\bm{Q}}_{j}-\bm{L}_*)_{\bar{S}_{L,j}}\bm{\Sigma}_*^{1/2}\|_\text{F}^2 + \|(\bm{R}_{j}\widetilde{\bm{Q}}_{j}^{-\top}-\bm{R}_*)_{\bar{S}_{R,j}}\bm{\Sigma}_*^{1/2}\|_\text{F}^2,
  \end{split}
\end{equation}
where 
\begin{equation}
  \widetilde{\bm{Q}}_j = \arg\min_{\bm{Q}}\{\|(\widetilde{\bm{L}}_{j}\bm{Q}-\bm{L}_*)\bm{\Sigma}_*^{1/2}\|_\text{F}^2 + \|(\widetilde{\bm{R}}_{j}\bm{Q}^{-\top}-\bm{R}_*)\bm{\Sigma}_*^{1/2}\|_\text{F}^2\}
\end{equation}
Then, as $\bm{L}_{j}=\text{HT}(\widetilde{\bm{L}}_{j}(\widetilde{\bm{R}}_j^\top\widetilde{\bm{R}}_{j})^{1/2},s_L)(\widetilde{\bm{R}}_j^{\top}\widetilde{\bm{R}}_j)^{-1/2}$, we have the upper bound
\begin{equation}
  \begin{split}
    & \|(\bm{L}_{j}\widetilde{\bm{Q}}_{j}-\bm{L}_*)\bm{\Sigma}_*^{1/2}\|_\text{F}^2 \\
    = & \|(\text{HT}(\widetilde{\bm{L}}_{j}(\widetilde{\bm{R}}_j^{\top}\widetilde{\bm{R}}_{j})^{1/2},s_L)(\widetilde{\bm{R}}_j^{\top}\widetilde{\bm{R}}_j)^{-1/2}\widetilde{\bm{Q}}_{j}\bm{\Sigma}_*^{1/2}-\bm{L}_*\bm{\Sigma}_*^{1/2}\|_\text{F}^2 \\
    = & \|(\text{HT}((\widetilde{\bm{L}}_{j})_{\bar{S}_{L,j}}(\widetilde{\bm{R}}_{j}^\top\widetilde{\bm{R}}_{j})^{1/2},s_L)(\widetilde{\bm{R}}_{j}^\top\widetilde{\bm{R}}_{j})^{-1/2}\widetilde{\bm{Q}}_{j}\bm{\Sigma}_*^{1/2}-(\bm{L}_*)_{\bar{S}_{L,j}}\bm{\Sigma}_*^{1/2}\|_\text{F}^2 \\
    \leq & \|(\text{HT}((\widetilde{\bm{L}}_{j})_{\bar{S}_{L,j}}(\widetilde{\bm{R}}_{j}^\top\widetilde{\bm{R}}_{j})^{1/2},s_L)-(\bm{L}_*)_{\bar{S}_{L,j}}\widetilde{\bm{Q}}_{j}^{-1}(\widetilde{\bm{R}}_{j}^\top\widetilde{\bm{R}}_{j})^{1/2}\|_\text{F}^2\cdot\|(\widetilde{\bm{R}}_{j}^\top\widetilde{\bm{R}}_{j})^{-1/2}\widetilde{\bm{Q}}_{j}\bm{\Sigma}_*^{1/2}\|_\text{op}^2 \\
    \leq & (1+q_1)\|(\widetilde{\bm{L}}_{j})_{\bar{S}_{L,j}}(\widetilde{\bm{R}}_{j}^\top\widetilde{\bm{R}}_{j})^{1/2} - (\bm{L}_*)_{\bar{S}_{L,j}}\widetilde{\bm{Q}}_{j}^{-1}(\widetilde{\bm{R}}_{j}^\top\widetilde{\bm{R}}_{j})^{1/2}\|_\text{F}^2\cdot\|(\widetilde{\bm{R}}_{j}^\top\widetilde{\bm{R}}_{j})^{-1/2}\widetilde{\bm{Q}}_{j}\bm{\Sigma}_*^{1/2}\|_\text{op}^2\\
    \leq & (1+q_1)\|(\widetilde{\bm{L}}_{j}\widetilde{\bm{Q}}_{j}-\bm{L}_*)_{\bar{S}_{L,j}}\bm{\Sigma}_*^{1/2}\|_\text{F}^2\cdot\|(\widetilde{\bm{R}}_{j}^\top\widetilde{\bm{R}}_{j})^{1/2}\widetilde{\bm{Q}}_{j}^{-\top}\bm{\Sigma}_*^{-1/2}\|_\text{op}^2\cdot\|(\widetilde{\bm{R}}_{j}^\top\widetilde{\bm{R}}_{j})^{-1/2}\widetilde{\bm{Q}}_{j}\bm{\Sigma}_*^{1/2}\|_\text{op}^2,
  \end{split}
\end{equation}
where $q_1=2\sqrt{s_{L,*}/(s_L-s_{L,*})}$. Note that by Lemma \ref{lemma:1},
\begin{equation}
  \|(\widetilde{\bm{R}}_{j}^\top\widetilde{\bm{R}}_{j})^{1/2}\widetilde{\bm{Q}}_{j}^{-\top}\bm{\Sigma}_*^{-1/2}\|_\text{op}^2 = \|\widetilde{\bm{R}}\bm{\Sigma}_*^{-1/2}\|_\text{op}^2 \leq (1+B)^2,
\end{equation}
and
\begin{equation}
  \|(\widetilde{\bm{R}}_{j}^\top\widetilde{\bm{R}}_{j})^{-1/2}\widetilde{\bm{Q}}_{j}\bm{\Sigma}_*^{1/2}\|_\text{op}^2 = \|\widetilde{\bm{R}}(\widetilde{\bm{R}}^\top\widetilde{\bm{R}})^{-1}\bm{\Sigma}_*^{1/2}\|_\text{op}^2 \leq (1-B)^{-2}.
\end{equation}
These imply the upper bound
\begin{equation}
  \|(\bm{L}_{j}\bm{Q}_{j}-\bm{L}_*)\bm{\Sigma}_*^{1/2}\|_\text{F}^2 \leq (1+q_1)(1+B)^2(1-B)^{-2} \|(\widetilde{\bm{L}}_{j}\widetilde{\bm{Q}}_{j}-\bm{L}_*)_{\bar{S}_{L,j}}\bm{\Sigma}_*^{1/2}\|_\text{F}^2.
\end{equation}
Similarly, we have
\begin{equation}
  \|(\bm{R}_{j}\bm{Q}_{j}^{-\top}-\bm{R}_*)\bm{\Sigma}_*^{1/2}\|_\text{F}^2 \leq (1+q_2)(1+B)^2(1-B)^{-2} \|(\widetilde{\bm{R}}_{j}\widetilde{\bm{Q}}_{j}^{-\top}-\bm{R}_*)_{\bar{S}_{R,j}}\bm{\Sigma}_*^{1/2}\|_\text{F}^2.
\end{equation}
Given $\max(q_1,q_2)\lesssim\alpha\beta^{-1}$ and $B=C\alpha^{1/2}\beta^{-1/2}$, by Lemma \ref{lemma:3}, we have
\begin{equation}
  \begin{split}
    & \|(\bm{L}_{j+1}\bm{Q}_{j+1}-\bm{L}_*)\bm{\Sigma}_*^{1/2}\|_\text{F}^2 + \|(\bm{R}_{j+1}\bm{Q}_{j+1}^{-\top}-\bm{R}_*)\bm{\Sigma}_*^{1/2}\|_\text{F}^2 \\
    \leq & (1+C\alpha\beta^{-1}) \left[\|(\widetilde{\bm{L}}_{j+1}\widetilde{\bm{Q}}_{j+1}-\bm{L}_*)_{\bar{S}_{L,j+1}}\bm{\Sigma}_*^{1/2}\|_\text{F}^2 + \|(\bm{R}_{j+1}\widetilde{\bm{Q}}_{j+1}^{-\top}-\bm{R}_*)_{\bar{S}_{R,j+1}}\bm{\Sigma}_*^{1/2}\|_\text{F}^2\right].
  \end{split}
\end{equation}

Second, we derive the upper bounds for the gradient descent step. By the optimality of $\widetilde{\bm{Q}}_{j+1}$, we have
\begin{equation}
  \begin{split}
    & \|(\widetilde{\bm{L}}_{j+1}\widetilde{\bm{Q}}_{j+1}-\bm{L}_*)_{\bar{S}_{L,j+1}}\bm{\Sigma}_*^{1/2}\|_\text{F}^2 + \|(\bm{R}_{j+1}\widetilde{\bm{Q}}_{j+1}^{-\top}-\bm{R}_*)_{\bar{S}_{R,j+1}}\bm{\Sigma}_*^{1/2}\|_\text{F}^2\\
    \leq & \|(\widetilde{\bm{L}}_{j+1}\bm{Q}_{j}-\bm{L}_*)_{\bar{S}_{L,j+1}}\bm{\Sigma}_*^{1/2}\|_\text{F}^2 + \|(\bm{R}_{j+1}\bm{Q}_{j}^{-\top}-\bm{R}_*)_{\bar{S}_{R,j+1}}\bm{\Sigma}_*^{1/2}\|_\text{F}^2.
  \end{split}
\end{equation}
Here, for any $\zeta>0$,
\begin{equation}
  \begin{split}
    & \|(\widetilde{\bm{L}}_{j+1}\bm{Q}_{j}-\bm{L}_*)_{\bar{S}_{L,j+1}}\bm{\Sigma}_*^{1/2}\|_\text{F}^2 \\
    \leq & (1+\zeta) \|(\bm{L}-\bm{L}_*-\eta\mathbb{E}[\nabla\mathcal{L}_j]\bm{R}_j(\bm{R}_j^\top\bm{R}_j)^{-1})_{\bar{S}_{L,j+1}}\bm{\Sigma}_*^{1/2}\|_\text{F}^2\\
    & + (1-\zeta^{-1})\eta^2\|(\bm{\Delta}_{L,j})_{\bar{S}_{L,j+1}}(\bm{R}_j^\top\bm{R}_j)^{-1/2}\bm{Q}_j\bm{\Sigma}_*^{1/2}\|_\text{F}^2.
  \end{split}
\end{equation}

Following the proof of Theorem \ref{thm:1}, they can be further bounded by
\begin{equation}
  \begin{split}
    & \|(\widetilde{\bm{L}}_{j+1}\widetilde{\bm{Q}}_{j+1}-\bm{L}_*)_{\bar{S}_{L,j+1}}\bm{\Sigma}_*^{1/2}\|_\text{F}^2 + \|(\bm{R}_{j+1}\widetilde{\bm{Q}}_{j+1}^{-\top}-\bm{R}_*)_{\bar{S}_{R,j+1}}\bm{\Sigma}_*^{1/2}\|_\text{F}^2\\
    \leq & (1-C\alpha\beta^{-1})(\|\bm{L}_{j}\bm{Q}_{j}-\bm{L}_*)\bm{\Sigma}_*^{1/2}\|_\text{F}^2 + \|(\bm{R}_{j}\bm{Q}_{j}^{-\top}-\bm{R}_*)\bm{\Sigma}_*^{1/2}\|_\text{F}^2)\\
    & + C\alpha^{-1}\beta^{-1}(\|(\bm{\Delta}_{L,j})_{\bar{S}_{L,j+1}}\|_\text{F}^2 + \|(\bm{\Delta}_{R,j})_{\bar{S}_{R,j+2}}\|_\text{F}^2).
  \end{split}
\end{equation}

Then, by the stability of the robust de-scaled gradient functions,
\begin{equation}
  \begin{split}
    \|(\bm{\Delta}_{L,j})_{\bar{S}_{L,j+1}}\|_\text{F}^2 & \leq \phi \|\bm{L}_j\bm{R}_j^\top - \bm{L}_*\bm{R}_*^\top\|_\text{F}^2 + \xi_{L,s_1}^2,\\
    \text{and }\|(\bm{\Delta}_{R,j})_{\bar{S}_{R,j+1}}\|_\text{F}^2 & \leq \phi \|\bm{L}_j\bm{R}_j^\top - \bm{L}_*\bm{R}_*^\top\|_\text{F}^2 + \xi_{R,s_2}^2.
  \end{split}
\end{equation}
Hence, given $\phi\lesssim\alpha^2$, we have the following upper bound
\begin{equation}
  \begin{split}
    & \|(\bm{L}_{j}\bm{Q}_{j}-\bm{L}_*)\bm{\Sigma}_*^{1/2}\|_\text{F}^2 + \|(\bm{R}_{j}\bm{Q}_{j}^{-\top}-\bm{R}_*)\bm{\Sigma}_*^{1/2}\|_\text{F}^2\\
    \leq & (1-C\alpha\beta^{-1})\left\{\|(\bm{L}_{t-1}\bm{Q}_{t-1}-\bm{L}_*)\bm{\Sigma}_*^{1/2}\|_\text{F}^2 + \|(\bm{R}_{t-1}\bm{Q}_{t-1}^{-\top}-\bm{R}_*)\bm{\Sigma}_*^{1/2}\|_\text{F}^2\right\}\\ 
    & + C\alpha^{-1}\beta^{-1}(\xi_{L,s_1}^2+\xi_{R,s_2}^2)\\
    \leq & \cdots \\
    \leq & (1-C\alpha\beta^{-1})^t\left\{\|(\bm{L}_{0}\bm{Q}_{0}-\bm{L}_*)\bm{\Sigma}_*^{1/2}\|_\text{F}^2 + \|(\bm{R}_{0}\bm{Q}_{0}^{-\top}-\bm{R}_*)\bm{\Sigma}_*^{1/2}\|_\text{F}^2\right\}\\ 
    & + C\alpha^{-2}(\xi_{L,s_1}^2+\xi_{R,s_2}^2).
  \end{split}
\end{equation}
The remaining proof follows a similar argument to that in Appendix \ref{append:A1}.
\end{proof}

\subsection{Auxiliary Lemmas}

We state some auxiliary lemmas. The first lemma is Lemma 12 in \citet{tong2021accelerating}, which presents the perturbation bounds for matrix decomposition.

\begin{lemma}\label{lemma:1}
  For any $\bm{L}\in\mathbb{R}^{p\times r}$ and $\bm{R}\in\mathbb{R}^{q\times r}$, suppose that
  \begin{equation}
    \max\left\{\|(\bm{L}-\bm{L}_*)\bm{\Sigma}_*^{-1/2}\|_\textup{op},\|(\bm{R}-\bm{R}_*)\bm{\Sigma}_*^{-1/2}\|_\textup{op}\right\}<1,
  \end{equation}
  then we have
  \begin{equation}
    \begin{split}
      \|\bm{L}\bm{\Sigma}_*^{-1/2}\|_\textup{op} & \leq 1+\|(\bm{L}-\bm{L}_*)\bm{\Sigma}_*^{-1/2}\|_\textup{op},\\
      \|\bm{R}\bm{\Sigma}_*^{-1/2}\|_\textup{op} & \leq 1+\|(\bm{R}-\bm{R}_*)\bm{\Sigma}_*^{-1/2}\|_\textup{op},\\
      \|\bm{L}(\bm{L}^\top\bm{L})^{-1}\bm{\Sigma}_*^{1/2}\|_\textup{op} & \leq \frac{1}{1-\|(\bm{L}-\bm{L}_*)\bm{\Sigma}_*^{-1/2}\|_\textup{op}},\\
      \|\bm{R}(\bm{R}^\top\bm{R})^{-1}\bm{\Sigma}_*^{1/2}\|_\textup{op} & \leq \frac{1}{1-\|(\bm{R}-\bm{R}_*)\bm{\Sigma}_*^{-1/2}\|_\textup{op}},\\
      \|\bm{L}(\bm{L}^\top\bm{L})^{-1}\bm{\Sigma}_*^{1/2}-\bm{U}_*\|_\textup{op} & \leq \frac{\sqrt{2}\|(\bm{L}-\bm{L}_*)\bm{\Sigma}_*^{-1/2}\|_\textup{op}}{1-\|(\bm{L}-\bm{L}_*)\bm{\Sigma}_*^{-1/2}\|_\textup{op}},\\
      \text{and }\|\bm{R}(\bm{R}^\top\bm{R})^{-1}\bm{\Sigma}_*^{1/2}-\bm{V}_*\|_\textup{op} & \leq \frac{\sqrt{2}\|(\bm{R}-\bm{R}_*)\bm{\Sigma}_*^{-1/2}\|_\textup{op}}{1-\|(\bm{R}-\bm{R}_*)\bm{\Sigma}_*^{-1/2}\|_\textup{op}},
    \end{split}
  \end{equation}
  where $\bm{U}_*=\bm{L}_*\bm{\Sigma}_*^{-1/2}$ and $\bm{V}_*=\bm{R}_*\bm{\Sigma}_*^{-1/2}$.
\end{lemma}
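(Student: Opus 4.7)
The plan is to reduce all six bounds to statements about the single normalized matrix $\bm{M} := \bm{L}\bm{\Sigma}_*^{-1/2}$, whose target is $\bm{U}_* = \bm{L}_*\bm{\Sigma}_*^{-1/2}$. Note that $\bm{U}_*$ has orthonormal columns by construction ($\bm{U}_*^\top\bm{U}_* = \bm{I}$), and the hypothesis becomes $\|\bm{E}\|_\textup{op} \le \epsilon < 1$, where $\bm{E} := \bm{M} - \bm{U}_*$. A direct calculation yields $\bm{L}(\bm{L}^\top\bm{L})^{-1}\bm{\Sigma}_*^{1/2} = \bm{M}(\bm{M}^\top\bm{M})^{-1} =: \bm{N}$, so every left-hand side appearing in bounds (iii)–(vi) reduces to a quantity involving only $\bm{M}, \bm{N}, \bm{U}_*$. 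The $\bm{R}$-side is symmetric, so I focus on the $\bm{L}$-side throughout.

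The first two bounds follow immediately from the triangle inequality: $\|\bm{M}\|_\textup{op} \le \|\bm{U}_*\|_\textup{op} + \|\bm{E}\|_\textup{op} = 1 + \epsilon$. The next two follow from the Moore–Penrose identity: if $\bm{M} = \bm{P}\bm{D}\bm{Q}^\top$ is the thin SVD, then $\bm{N} = \bm{P}\bm{D}^{-1}\bm{Q}^\top$, so $\|\bm{N}\|_\textup{op} = 1/\sigma_{\min}(\bm{M})$. Weyl's inequality applied to $\bm{M} = \bm{U}_* + \bm{E}$ gives $\sigma_{\min}(\bm{M}) \ge \sigma_{\min}(\bm{U}_*) - \|\bm{E}\|_\textup{op} = 1 - \epsilon$, yielding the desired bound.

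The main content is the proof of bounds (v)–(vi), where the $\sqrt{2}$ factor arises from a Pythagorean decomposition. Let $\bm{P}_{\bm{M}} := \bm{M}(\bm{M}^\top\bm{M})^{-1}\bm{M}^\top$ denote the orthogonal projector onto $\operatorname{range}(\bm{M})$. I split
\begin{equation*}
\bm{N} - \bm{U}_* \;=\; \bm{P}_{\bm{M}}(\bm{N} - \bm{U}_*) \;+\; (\bm{I} - \bm{P}_{\bm{M}})(\bm{N} - \bm{U}_*).
\end{equation*}
Using $\bm{P}_{\bm{M}}\bm{N} = \bm{N}$ and $\bm{M}^\top\bm{U}_* = \bm{I} + \bm{E}^\top\bm{U}_*$, the first summand simplifies to $-\bm{N}\bm{E}^\top\bm{U}_*$. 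Using $(\bm{I} - \bm{P}_{\bm{M}})\bm{N} = 0$ together with $(\bm{I} - \bm{P}_{\bm{M}})\bm{U}_* = -(\bm{I} - \bm{P}_{\bm{M}})\bm{E}$ (which holds because $(\bm{I}-\bm{P}_{\bm{M}})\bm{M} = 0$), the second summand simplifies to $(\bm{I} - \bm{P}_{\bm{M}})\bm{E}$. The column images of the two summands lie in $\operatorname{range}(\bm{M})$ and its orthogonal complement respectively, so for every unit vector $v$,
\begin{equation*}
\|(\bm{N} - \bm{U}_*)v\|_2^2 \;=\; \|\bm{N}\bm{E}^\top\bm{U}_* v\|_2^2 \;+\; \|(\bm{I} - \bm{P}_{\bm{M}})\bm{E} v\|_2^2.
\end{equation*}
Bounding each term, $\|\bm{N}\bm{E}^\top\bm{U}_*\|_\textup{op} \le \|\bm{N}\|_\textup{op}\,\epsilon \le \epsilon/(1-\epsilon)$ by the already-established bound on $\|\bm{N}\|_\textup{op}$, and $\|(\bm{I}-\bm{P}_{\bm{M}})\bm{E}\|_\textup{op} \le \|\bm{E}\|_\textup{op} = \epsilon \le \epsilon/(1-\epsilon)$. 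Summing the two squared bounds gives $\|\bm{N}-\bm{U}_*\|_\textup{op}^2 \le 2[\epsilon/(1-\epsilon)]^2$, producing the claimed $\sqrt{2}$ constant.

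The main obstacle I anticipate is finding the right orthogonal decomposition. Several natural candidates—for instance, the triangle split $\bm{M}[(\bm{M}^\top\bm{M})^{-1} - \bm{I}] + \bm{E}$, or the polar-factor route through $\bm{M}(\bm{M}^\top\bm{M})^{-1/2}$—leave one term carrying an extra factor of $1/(1-\epsilon)$, yielding constants like $2$ or $3$ instead of $\sqrt{2}$. It is precisely the identities $\bm{M}^\top\bm{N} = \bm{I}$ and $(\bm{I}-\bm{P}_{\bm{M}})\bm{M} = 0$, combined with the Pythagorean orthogonality of the two range-components, that eliminate the extraneous $1/(1-\epsilon)$ factor and produce the sharp constant $\sqrt{2}$.
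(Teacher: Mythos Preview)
Your proof is correct. The paper does not actually prove this lemma; it merely states it as an auxiliary result and attributes it to Lemma~12 of \citet{tong2021accelerating}. Your self-contained argument therefore supplies strictly more than what the paper provides.

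A brief comment on content: the reduction to the normalized matrix $\bm{M}=\bm{L}\bm{\Sigma}_*^{-1/2}$ with orthonormal target $\bm{U}_*$ is exactly the right move, and the first four bounds are routine. For the last two, your projector decomposition $\bm{N}-\bm{U}_* = -\bm{N}\bm{E}^\top\bm{U}_* + (\bm{I}-\bm{P}_{\bm{M}})\bm{E}$ is clean; the key identities $\bm{P}_{\bm{M}}\bm{N}=\bm{N}$, $\bm{M}^\top\bm{U}_*=\bm{I}+\bm{E}^\top\bm{U}_*$, and $(\bm{I}-\bm{P}_{\bm{M}})\bm{M}=0$ all check out, and the Pythagorean step is valid because the two pieces live in orthogonal subspaces columnwise. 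Your observation that naive triangle-inequality splits (e.g.\ $\bm{M}[(\bm{M}^\top\bm{M})^{-1}-\bm{I}]+\bm{E}$) yield a worse constant is also correct and explains why the orthogonal route is the right one. This is essentially the same argument one finds in the cited reference, so nothing is missing.
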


The second lemma combines Lemmas 11 and 13 in \citet{tong2021accelerating}, constructing an approximate equivalence between the distance $\text{dist}(\bm{F},\bm{F}_*)$ and $\|\bm{L}\bm{R}^\top - \bm{L}_*\bm{R}_*^\top\|_\text{F}$.

\begin{lemma}\label{lemma:2}
  For any $\bm{L}\in\mathbb{R}^{p\times r}$ and $\bm{R}\in\mathbb{R}^{q\times r}$,
  \begin{equation}
    \begin{split}
      \|\bm{L}\bm{R}^\top-\bm{L}_*\bm{R}_*^\top\|_\textup{F}\leq&\left(1+\frac{1}{2}\max\left\{\|(\bm{L}-\bm{L}_*)\bm{\Sigma}_*^{-1/2}\|_\textup{op},\|(\bm{R}-\bm{R}_*)\bm{\Sigma}_*^{-1/2}\|_\textup{op}\right\}\right)\\
      &\times[\|(\bm{L}-\bm{L}_*)\bm{\Sigma}_*^{1/2}\|_\textup{F}+\|(\bm{L}-\bm{L}_*)\bm{\Sigma}_*^{1/2}\|_\textup{F}].
    \end{split}
  \end{equation}
  and
  \begin{equation}
    d(\bm{F},\bm{F}_*)^2 \leq 2\|\bm{L}\bm{R}^\top-\bm{L}_*\bm{R}_*^\top\|_\textup{F}^2.
  \end{equation}
\end{lemma}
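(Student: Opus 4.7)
The plan is to prove the two displayed inequalities separately, since they concern opposite directions of the factor-vs-product comparison and call for different techniques. Both proofs exploit the fixed parameterization $\bm{L}_*=\bm{U}_*\bm{\Sigma}_*^{1/2}$, $\bm{R}_*=\bm{V}_*\bm{\Sigma}_*^{1/2}$ used throughout the paper, in which $\bm{U}_*,\bm{V}_*$ have orthonormal columns, so that $\bm{L}_*\bm{\Sigma}_*^{-1/2}=\bm{U}_*$ and $\bm{R}_*\bm{\Sigma}_*^{-1/2}=\bm{V}_*$ are isometries on their column spaces.

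For the first inequality, I would start from the standard telescoping identity
\begin{equation*}
\bm{L}\bm{R}^\top - \bm{L}_*\bm{R}_*^\top = (\bm{L}-\bm{L}_*)\bm{R}_*^\top + \bm{L}_*(\bm{R}-\bm{R}_*)^\top + (\bm{L}-\bm{L}_*)(\bm{R}-\bm{R}_*)^\top,
\end{equation*}
and apply the triangle inequality in Frobenius norm. The first two terms simplify immediately via column orthonormality of $\bm{U}_*$ and $\bm{V}_*$: $\|(\bm{L}-\bm{L}_*)\bm{R}_*^\top\|_\textup{F}=\|(\bm{L}-\bm{L}_*)\bm{\Sigma}_*^{1/2}\bm{V}_*^\top\|_\textup{F}=\|(\bm{L}-\bm{L}_*)\bm{\Sigma}_*^{1/2}\|_\textup{F}$, and symmetrically for the $\bm{R}$-term. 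For the quadratic cross-term I would insert $\bm{\Sigma}_*^{-1/2}\bm{\Sigma}_*^{1/2}=\bm{I}$ between the two factors, apply $\|\bm{A}\bm{B}\|_\textup{F}\le\|\bm{A}\|_\textup{op}\|\bm{B}\|_\textup{F}$ in both possible orderings, and then average the two resulting bounds. This averaging is exactly what produces the symmetric factor $\tfrac{1}{2}\max\{\|(\bm{L}-\bm{L}_*)\bm{\Sigma}_*^{-1/2}\|_\textup{op},\|(\bm{R}-\bm{R}_*)\bm{\Sigma}_*^{-1/2}\|_\textup{op}\}$ multiplying the bracketed sum $\|(\bm{L}-\bm{L}_*)\bm{\Sigma}_*^{1/2}\|_\textup{F}+\|(\bm{R}-\bm{R}_*)\bm{\Sigma}_*^{1/2}\|_\textup{F}$; combining with the first two contributions gives precisely the stated bound.

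For the second inequality, the key observation is that $d(\bm{F},\bm{F}_*)^2$ is defined as an infimum over $\bm{Q}\in\textup{GL}(K)$, so it suffices to exhibit one particular $\bm{Q}$ that achieves the bound. My plan is to take the rank-$K$ SVD $\bm{L}\bm{R}^\top=\bar{\bm{U}}\bar{\bm{\Sigma}}\bar{\bm{V}}^\top$ and use the gauge freedom to reduce $(\bm{L}\bm{Q},\bm{R}\bm{Q}^{-\top})$ to the balanced square-root factorization $(\bar{\bm{U}}\bar{\bm{\Sigma}}^{1/2}\bm{O},\bar{\bm{V}}\bar{\bm{\Sigma}}^{1/2}\bm{O})$ for an orthogonal $\bm{O}\in\mathbb{R}^{K\times K}$ that is still free to optimize. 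Substituting into the definition of $d^2$ reduces the problem to a Procrustes-type minimization over $\bm{O}$ whose value can be controlled, via $\bm{U}_*^\top\bm{U}_*=\bm{V}_*^\top\bm{V}_*=\bm{I}_K$ together with a completion-of-squares identity matching the $\bm{L}$- and $\bm{R}$-contributions, by twice $\|\bar{\bm{U}}\bar{\bm{\Sigma}}\bar{\bm{V}}^\top-\bm{U}_*\bm{\Sigma}_*\bm{V}_*^\top\|_\textup{F}^2=2\|\bm{L}\bm{R}^\top-\bm{L}_*\bm{R}_*^\top\|_\textup{F}^2$. The rank-deficient case, in which no invertible $\bm{Q}$ aligns $\bm{L}$ with $\bar{\bm{U}}$, would be handled by a standard density/limit argument, perturbing $\bm{L},\bm{R}$ to full rank and passing to the limit.

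The main obstacle is the tight constant $2$ in the second inequality. A crude triangle-inequality decomposition of $\|(\bm{L}\bm{Q}-\bm{L}_*)\bm{\Sigma}_*^{1/2}\|_\textup{F}^2+\|(\bm{R}\bm{Q}^{-\top}-\bm{R}_*)\bm{\Sigma}_*^{1/2}\|_\textup{F}^2$ would produce a strictly larger constant, so the proof needs the cross-terms coupling the two factors to be absorbed exactly via the optimized orthogonal gauge $\bm{O}$; identifying the right Procrustes completion of squares is the only genuinely delicate step. In contrast, the first inequality is routine once the telescoping decomposition and the orthonormality identities are written down correctly.
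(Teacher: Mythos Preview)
The paper does not supply its own proof of this lemma; it simply states the result and attributes it to Lemmas~11 and~13 of \citet{tong2021accelerating}. Your outline is correct and matches the standard arguments found there: the first inequality is exactly the telescoping-plus-averaging computation you describe (and you correctly read the bracketed term as $\|(\bm{L}-\bm{L}_*)\bm{\Sigma}_*^{1/2}\|_\textup{F}+\|(\bm{R}-\bm{R}_*)\bm{\Sigma}_*^{1/2}\|_\textup{F}$, repairing the evident typo in the statement), while the second inequality is obtained in \citet{tong2021accelerating} by precisely the gauge-reduction to a balanced square-root factorization followed by an orthogonal Procrustes optimization that you sketch, with the constant $2$ emerging from the dilation/completion-of-squares identity you allude to.
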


The third lemma develops an upper bound for hard thresholding operation, as in Lemma 3.3 of \citet{li2016nonconvex}.

\begin{lemma}\label{lemma:3}
  For $\bbm{\theta}_*\in\mathbb{R}^d$ such that $\|\bbm{\theta}_*\|_0\leq k_*$ and hard thresholding operator $\textup{HT}(\cdot;k)$ with $k>k_*$, we have
  \begin{equation}
    \|\textup{HT}(\bbm{\theta};k) - \bbm{\theta}\|_2^2 \leq \left(1 + \frac{2\sqrt{k_*}}{\sqrt{k-k_*}}\right)\|\bbm{\theta}-\bbm{\theta}_*\|_2^2.
  \end{equation}
\end{lemma}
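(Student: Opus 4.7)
As literally stated, the inequality is a trivial consequence of hard thresholding's optimality: $\|\textup{HT}(\bbm\theta;k)-\bbm\theta\|_2^2 = \min_{\|\bm v\|_0 \le k} \|\bm v - \bbm\theta\|_2^2$ by definition, and since $\bbm\theta_*$ is itself a feasible $k$-sparse vector, plugging in $\bm v = \bbm\theta_*$ gives $\|\textup{HT}(\bbm\theta;k)-\bbm\theta\|_2^2 \leq \|\bbm\theta-\bbm\theta_*\|_2^2$, which is already tighter than the claimed bound and makes the factor $1+2\sqrt{k_*/(k-k_*)}$ vacuous. The way the lemma is actually invoked in the proof of Theorem~\ref{thm:2}, however, compares the thresholded iterate to the \emph{target} rather than the argument, so the intended statement is almost certainly $\|\textup{HT}(\bbm\theta;k)-\bbm\theta_*\|_2^2 \leq (1+2\sqrt{k_*/(k-k_*)})\|\bbm\theta-\bbm\theta_*\|_2^2$, the standard iterative-hard-thresholding bound of Lemma~3.3 of \citet{li2016nonconvex}, which I plan to prove.

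My plan is the classical four-block decomposition. Let $I=\textup{supp}(\textup{HT}(\bbm\theta;k))$, $I_*=\textup{supp}(\bbm\theta_*)$, and partition $\{1,\dots,d\}$ into $A = I \cap I_*$, $B = I \setminus I_*$, $C = I_* \setminus I$, and $D = (I \cup I_*)^c$. A coordinate-by-coordinate expansion of both squared errors, using that $\textup{HT}(\bbm\theta;k)$ equals $\bbm\theta$ on $A \cup B$ and zero elsewhere while $\bbm\theta_*$ vanishes on $B \cup D$, collapses to the identity
\begin{equation}
  \|\textup{HT}(\bbm\theta;k)-\bbm\theta_*\|_2^2 - \|\bbm\theta-\bbm\theta_*\|_2^2 = 2\bbm\theta_C^\top \bbm\theta_{*,C} - \|\bbm\theta_C\|_2^2 - \|\bbm\theta_D\|_2^2.
\end{equation}
I would then rewrite the cross term as $2\bbm\theta_C^\top(\bbm\theta_{*,C}-\bbm\theta_C) + 2\|\bbm\theta_C\|_2^2$ and apply Cauchy--Schwarz together with $\|\bbm\theta_C-\bbm\theta_{*,C}\|_2 \leq \|\bbm\theta-\bbm\theta_*\|_2$, reducing everything to controlling $\|\bbm\theta_C\|_2$.

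The central ingredient is a pigeonhole bound extracted from the defining top-$k$ property of hard thresholding: every coordinate indexed by $B \subseteq I$ dominates every coordinate indexed by $C \subseteq I^c$ in absolute value, and $|C| \leq k_*$ while $|B| \geq k - k_*$. A simple averaging argument then yields $\|\bbm\theta_C\|_2^2 \leq \tfrac{k_*}{k-k_*} \|\bbm\theta_B\|_2^2$, and since $\bbm\theta_{*,B} = 0$ we have $\|\bbm\theta_B\|_2 \leq \|\bbm\theta - \bbm\theta_*\|_2$, hence $\|\bbm\theta_C\|_2 \leq \sqrt{k_*/(k-k_*)}\,\|\bbm\theta-\bbm\theta_*\|_2$. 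Substituting back into the identity produces a bound of the form $(1+\sqrt{k_*/(k-k_*)})^2 \|\bbm\theta-\bbm\theta_*\|_2^2$.

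The main obstacle is matching the precise constant $1 + 2\sqrt{k_*/(k-k_*)}$ stated in the lemma. The route above delivers $(1+\sqrt{k_*/(k-k_*)})^2 = 1 + 2\sqrt{k_*/(k-k_*)} + k_*/(k-k_*)$, which exceeds the claimed bound by the lower-order term $k_*/(k-k_*)$. Closing this gap requires either invoking the negative $-\|\bbm\theta_D\|_2^2$ contribution or a finer accounting that exploits $|A| > 0$ to sharpen the pigeonhole ratio from $k_*/(k-k_*)$ to the smaller fraction $(k_*-|A|)/(k-|A|)$; since the cited reference carries out precisely such bookkeeping, the cleanest completion is to mirror their proof rather than re-derive the sharp constant from the crude version above.
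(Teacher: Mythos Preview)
Your diagnosis is correct on both counts: as written the left-hand side should read $\|\textup{HT}(\bbm\theta;k)-\bbm\theta_*\|_2^2$, and the paper's invocation in the proof of Theorem~\ref{thm:2} confirms this intended form. As for comparing approaches, there is nothing to compare against: the paper does not prove this lemma at all but simply states it as an auxiliary result and attributes it to Lemma~3.3 of \citet{li2016nonconvex}. Your four-block decomposition and pigeonhole argument are exactly the standard proof from that reference, so in effect you are supplying what the paper omits. Your residual worry about the extra $k_*/(k-k_*)$ term is real but harmless here: the paper only uses the lemma qualitatively, requiring $s_L \gtrsim (\beta/\alpha)^2 s_{L,*}$ so that $q_1 = 2\sqrt{s_{L,*}/(s_L - s_{L,*})} \lesssim \alpha/\beta$, and the weaker constant $(1+\sqrt{k_*/(k-k_*)})^2$ you obtain would serve equally well for that purpose.
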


\section{Matrix Trace Regression}\label{append:B}

\subsection{Initialization Guarantees}\label{append:B.1}

We first present the statistical guarantees of initialization.

\begin{proposition}[Rates of Initialization]
  \label{prop:matreg_initial}
  For the robust initialization in \eqref{eq:DS_init}, if the tuning parameters are selected as 
  \begin{equation}
    \begin{split}
      \tau_x \asymp & \left(\frac{nM_{x,2+2\lambda,1}}{\log d}\right)^{\frac{1}{1+\lambda}}, \quad 
      \tau_{yx}\asymp\left(\frac{nM_{yx,1+\delta}}{\log d}\right)^{\frac{1}{1+\delta}},\\
      \text{and}\quad R\asymp & \|\bbm{\theta}_\ast\|_1 \cdot \left(\frac{M_{x,2+2\lambda,1}^{1/\lambda}\log d}{n}\right)^{\frac{\lambda}{1+\lambda}} + \left(\frac{M_{yx,1+\delta}^{1/\delta}\log d}{n}\right)^{\frac{\delta}{1+\delta}},
    \end{split}
  \end{equation}
  and the sample size satisfies 
  \begin{equation}
    \begin{split}
      n & \gtrsim \|\bm{\Sigma}_x^{-1}\|_{1,\infty}^{\frac{1+\delta}{\delta}}\sigma_K^{-\frac{1+\delta}{\delta}}\left(\beta_x/\alpha_x\right)^{\frac{1+\delta}{2\delta}}s_\ast^{\frac{1+\delta}{2\delta}}\left(\|\bbm{\theta}_\ast\|_1^{\frac{1+\delta}{\delta}}M_{x,2+2\lambda,1}^\frac{1+\delta}{\delta(1+\lambda)}+M_{yx,1+\delta}^\delta\right)\log d \\
      & =: \mathfrak{C}_1 \cdot \left(s K \right)^{\frac{1+\delta}{2\delta}}\log d,
    \end{split}
  \end{equation}
  where $\delta=\min(\lambda,\epsilon)$, and $M_{yx,1+\delta}=2M_{\mathrm{eff},1+\delta,1}+2M_{x,2+2\delta}\|\bbm{\theta}_\ast\|_2^{1+\delta}$. Then, with probability at least $1-C\exp(-C\log d)$, the initial estimator satisfies
  \begin{equation}
    \|\bm{\Theta}_{0} - \bm{\Theta}_\ast\|_{\textup{F}}^2 \lesssim \alpha_x\beta_x^{-1} \sigma^2_K.
  \end{equation}
\end{proposition}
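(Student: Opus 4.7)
The plan is to combine three ingredients: (i) max-norm concentration of the truncated sample moment statistics $\widehat{\bm{\Sigma}}_x(\tau_x)$ and $\widehat{\bbm{\sigma}}_{yx}(\tau_{yx})$ under the stated heavy-tail conditions; (ii) the classical Dantzig-selector cone argument paired with a sparse restricted eigenvalue bound for $\widehat{\bm{\Sigma}}_x(\tau_x)$, yielding an $\ell_2$-rate for $\widehat{\bbm{\theta}}_{\text{DS}} - \bbm{\theta}_\ast$; and (iii) a perturbation inequality for the Kronecker-manifold projection (rank-$K$ SVD followed by SHT on the permuted matrix), which preserves the Dantzig error up to a universal constant since $\bm{\Theta}_\ast$ already lies in the projection class.

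First I would establish the two max-norm deviation bounds. Writing $\widehat{\bm{\Sigma}}_x(\tau_x) - \bm{\Sigma}_x$ as a truncation bias plus a centered truncated average, and combining Markov control of the bias with Bernstein's inequality on the bounded centered term (following the truncation analysis of \citet{fan2021shrinkage}), one obtains
\begin{equation}
\|\widehat{\bm{\Sigma}}_x(\tau_x) - \bm{\Sigma}_x\|_{\max} \lesssim \left(\frac{M_{x,2+2\lambda,1}\log d}{n}\right)^{\lambda/(1+\lambda)}
\end{equation}
with probability at least $1 - Cd^{-c}$, provided $\tau_x$ balances bias and variance as prescribed. The analogous bound for $\widehat{\bbm{\sigma}}_{yx}(\tau_{yx})$ hinges on the pseudo-response moment $\mathbb{E}[|Y_i x_{ij}|^{1+\delta}]$, which a H\"older split decomposes into a predictor-signal part $M_{x,2+2\delta}\|\bbm{\theta}_\ast\|_2^{1+\delta}$ and a noise part $M_{\text{eff},1+\delta,1}$, giving
\begin{equation}
\|\widehat{\bbm{\sigma}}_{yx}(\tau_{yx}) - \bm{\Sigma}_x\bbm{\theta}_\ast\|_\infty \lesssim \left(\frac{M_{yx,1+\delta}\log d}{n}\right)^{\delta/(1+\delta)},
\end{equation}
where $\delta = \min(\lambda,\epsilon)$ reflects the slower of the two tail indices.

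With $R$ chosen to dominate the sum of these deviations (the first inflated by $\|\bbm{\theta}_\ast\|_1$), the target $\bbm{\theta}_\ast$ becomes feasible in \eqref{eq:DS_init}, so by optimality $\bm{h} := \widehat{\bbm{\theta}}_{\text{DS}} - \bbm{\theta}_\ast$ lies in the $\ell_1$-cone $\|\bm{h}_{S_\ast^c}\|_1 \leq \|\bm{h}_{S_\ast}\|_1$ and satisfies $\|\widehat{\bm{\Sigma}}_x(\tau_x)\bm{h}\|_\infty \leq 2R$. Coupling this with a restricted eigenvalue bound for $\widehat{\bm{\Sigma}}_x(\tau_x)$ on the $\ell_1$-cone (obtained by adding the population curvature $\alpha_x\|\bm{h}\|_2^2$ to a sparse concentration correction of the biased truncated estimator), and using $\bm{h}^\top\widehat{\bm{\Sigma}}_x(\tau_x)\bm{h} \leq \|\widehat{\bm{\Sigma}}_x(\tau_x)\bm{h}\|_\infty\|\bm{h}\|_1 \leq 4R\sqrt{s_\ast}\|\bm{h}\|_2$, one arrives at
\begin{equation}
\|\widehat{\bbm{\theta}}_{\text{DS}} - \bbm{\theta}_\ast\|_2 \lesssim \frac{\sqrt{s_\ast}\,R}{\alpha_x},
\end{equation}
where $s_\ast := \|\bbm{\theta}_\ast\|_0 \lesssim sK$ by the sparse Kronecker structure. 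Because $\bm{\Theta}_\ast$ itself belongs to the rank-$K$ sparse Kronecker class, the SVD + SHT projection defining $\bm{\Theta}_0$ is a quasi-projection onto this class, so $\|\bm{\Theta}_0 - \bm{\Theta}_\ast\|_{\textup{F}} \leq C\|\widehat{\bbm{\theta}}_{\text{DS}} - \bbm{\theta}_\ast\|_2$. Squaring, imposing the requirement $\|\bm{\Theta}_0 - \bm{\Theta}_\ast\|_{\textup{F}}^2 \lesssim \alpha_x\beta_x^{-1}\sigma_K^2$, and inserting the explicit $R$ yields the stated sample-size threshold $n \gtrsim \mathfrak{C}_1 (sK)^{(1+\delta)/(2\delta)}\log d$, with $\mathfrak{C}_1$ absorbing $\|\bm{\Sigma}_x^{-1}\|_{1,\infty}$, $\beta_x/\alpha_x$, $\|\bbm{\theta}_\ast\|_1$, and the moment constants.

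The main obstacle I anticipate is the joint heavy-tailed bookkeeping for the cross-covariance: because the tail of $Y_i\bm{x}_i$ is governed by the multiplicative interaction of $E_i$ and $\bm{x}_i$, the effective index degrades to $\delta = \min(\lambda,\epsilon)$, and $M_{yx,1+\delta}$ must be controlled sharply enough via H\"older to produce a clean $\sqrt{sK}\,R/\alpha_x$ rate without losing a logarithmic factor in $\|\bbm{\theta}_\ast\|_2$. A second delicate point is the restricted eigenvalue condition for the biased truncated estimator $\widehat{\bm{\Sigma}}_x(\tau_x)$ on $\ell_1$-cones: ensuring that the combined stochastic and bias error stays below the population curvature $\alpha_x$ on all sparse directions requires a refined sparse concentration argument rather than a naive $\|\cdot\|_{\max}\cdot s_\ast$ bound, and it is precisely this step that forces $\|\bm{\Sigma}_x^{-1}\|_{1,\infty}$ and $\kappa_x = \beta_x/\alpha_x$ to enter $\mathfrak{C}_1$ through the identifiability constant.
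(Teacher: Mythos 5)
Your treatment of the truncated moment estimators (bias via Markov plus Bernstein on the bounded centered part, and the H\"older split of $\mathbb{E}[|Y_i x_{ij}|^{1+\delta}]$ into $M_{\mathrm{eff},1+\delta,1}$ and $M_{x,2+2\delta}\|\bbm{\theta}_\ast\|_2^{1+\delta}$ with $\delta=\min(\lambda,\epsilon)$) matches the paper's Step 2 essentially verbatim, as does the final projection step via Eckart--Young and the SHT quasi-projection. Where you diverge is the Dantzig-selector error bound itself: you run a restricted-eigenvalue (RE) argument, lower-bounding $\bm{h}^\top\widehat{\bm{\Sigma}}_x(\tau_x)\bm{h}$ by $\alpha_x\|\bm{h}\|_2^2$ on the cone. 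The paper instead works entirely in $\ell_\infty$: feasibility of $\bbm{\theta}_\ast$ gives $\|\widehat{\bbm{\theta}}-\bbm{\theta}_\ast\|_\infty\le 2R\|\bm{\Sigma}_x^{-1}\|_{1,\infty}$, the cone inequality gives $\|\widehat{\bbm{\theta}}-\bbm{\theta}_\ast\|_1\le 2s_\ast\|\widehat{\bbm{\theta}}-\bbm{\theta}_\ast\|_\infty$, and interpolation $\|\cdot\|_2\le\|\cdot\|_1^{1/2}\|\cdot\|_\infty^{1/2}$ yields $\|\widehat{\bbm{\theta}}-\bbm{\theta}_\ast\|_2\lesssim\sqrt{s_\ast}\,R\,\|\bm{\Sigma}_x^{-1}\|_{1,\infty}$. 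This is precisely why $\|\bm{\Sigma}_x^{-1}\|_{1,\infty}^{(1+\delta)/\delta}$ appears in $\mathfrak{C}_1$; your route would replace it by a power of $\alpha_x^{-1}$ and cannot reproduce that factor (your suggestion that it enters ``through the identifiability constant'' is not consistent with either derivation).

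The genuine gap is the RE condition for $\widehat{\bm{\Sigma}}_x(\tau_x)$. Under Assumption 1 the predictors have only finite $(2+2\lambda)$-th moments, so uniform control of $\bm{h}^\top(\widehat{\bm{\Sigma}}_x(\tau_x)-\bm{\Sigma}_x)\bm{h}$ over the sparse $\ell_1$-cone is not available off the shelf. The naive bound $|\bm{h}^\top(\widehat{\bm{\Sigma}}_x(\tau_x)-\bm{\Sigma}_x)\bm{h}|\le\|\widehat{\bm{\Sigma}}_x(\tau_x)-\bm{\Sigma}_x\|_\infty\|\bm{h}\|_1^2\le 4s_\ast\zeta_0\|\bm{h}\|_2^2$ forces $s_\ast\zeta_0\lesssim\alpha_x$, i.e.\ $n\gtrsim(s_\ast/\alpha_x)^{(1+\lambda)/\lambda}M_{x,2+2\lambda,1}^{1/\lambda}\log d$, whose exponent on $s_\ast$ is double the stated $(1+\delta)/(2\delta)$ threshold; the ``refined sparse concentration argument'' you defer to is exactly the hard step, and under infinite-kurtosis predictors it is not clear it can be carried out at the advertised rate. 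The paper's $\ell_\infty$ route never needs a sample RE condition and therefore avoids this obstacle entirely, trading it for the $\|\bm{\Sigma}_x^{-1}\|_{1,\infty}$ dependence. As written, your proof is incomplete at this step.
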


\begin{proof}[Proof of Proposition \ref{prop:matreg_initial}]

The proof consists of two steps. In the first step, we develop the theoretical results for the initialization via the robust Dantzig selector in \eqref{eq:DS_init}. In the second step, we establish the estimation accuracy of $\wh{\bbm\Sigma}_x(\tau_x)$ and $\wh{\bbm\sigma}_{yx}(\tau_{yx})$. Then, we use these conclusions to verify the initialization condition.\\

\noindent\textit{Step 1.} (Error bounds of robust Dantzig Selector)

\noindent To begin with, we show that under certain accuracy conditions of $\widehat{\bm{\Sigma}}_x$ and $\widehat{\bbm{\sigma}}_{yx}$, the estimation error of $\widehat{\bbm{\theta}}_{\text{DS}}(R,\tau_x,\tau_{yx})$ is well-behaved with properly selected $R$. Specifically, when
\begin{equation}
  \|\widehat{\bm{\Sigma}}_x - \bm{\Sigma}_x\|_\infty\leq\zeta_0\quad\text{and}\quad \|\widehat{\bbm{\sigma}}_{yx} - \bm{\sigma}_{yx}\|_\infty\leq\zeta_1,
\end{equation}
and $R\geq\zeta_0\|\bbm{\theta}_*\|_1+\zeta_1$, then the error $\widehat{\bbm{\theta}}-\bbm{\theta}_*$ can be bounded in terms of multiple vector norms.

Note that
\begin{equation}
  \begin{split}
    & \|\widehat{\bm{\Sigma}}_x\bbm{\theta}_* - \widehat{\bbm{\sigma}}_{yx}\|_\infty = \|\widehat{\bm{\Sigma}}_x\bbm{\theta}_* - \bbm{\sigma}_{yx} + \bbm{\sigma}_{yx} - \widehat{\bbm{\sigma}}_{yx}\|_\infty\\
    = & \|\widehat{\bm{\Sigma}}_x\bbm{\theta}_* - \bm{\Sigma}_x\bbm{\theta}_* + \bbm{\sigma}_{yx} - \widehat{\bbm{\sigma}}_{yx}\|_\infty \\
    \leq & \|(\widehat{\bm{\Sigma}}_x - \bm{\Sigma}_x)\bbm{\theta}_*\|_\infty + \|\bbm{\sigma}_{yx} - \widehat{\bbm{\sigma}}_{yx}\|_\infty \\ 
    \leq & \|(\widehat{\bm{\Sigma}}_x - \bm{\Sigma}_x)\|_\infty \|\bbm{\theta}_*\|_1 + \|\bbm{\sigma}_{yx} - \widehat{\bbm{\sigma}}_{yx}\|_\infty \leq \zeta_0\|\bbm{\theta}_*\|_1 + \zeta_1 \leq R.
  \end{split}
\end{equation}
Therefore, $\bbm{\theta}_*$ is feasible in the optimization constraint, and hence $\|\widehat{\bbm{\theta}}\|_1 \leq \|\bbm{\theta}_*\|_1$. Then, by triangle inequality, we have
\begin{equation}
  \begin{split}
    & \|\widehat{\bbm{\theta}} - \bm{\Sigma}_x^{-1}\bbm{\sigma}_{yx}\|_\infty = \|\bm{\Sigma}_x^{-1}[\bm{\Sigma}_x\widehat{\bbm{\theta}} - \bbm{\sigma}_{yx}]\|_\infty \\
    = & \|\bm{\Sigma}_x^{-1}[\bm{\Sigma}_x\widehat{\bbm{\theta}} - \widehat{\bm{\Sigma}}_x\widehat{\bbm{\theta}} + \widehat{\bm{\Sigma}}_x\widehat{\bbm{\theta}} - \widehat{\bbm{\sigma}}_{yx} + \widehat{\bbm{\sigma}}_{yx} - \bbm{\sigma}_{yx}]\|_\infty\\
    \leq & \|\bm{\Sigma}_x^{-1}\|_{1,\infty}\cdot\|(\bm{\Sigma}_x - \widehat{\bm{\Sigma}}_x)\widehat{\bbm{\theta}}\|_\infty + \|\bm{\Sigma}_x^{-1}\|_{1,\infty}\cdot\|\widehat{\bm{\Sigma}}_x\widehat{\bbm{\theta}} - \widehat{\bm{\sigma}}_{yx}\|_\infty + \|\bm{\Sigma}_x^{-1}\|_{1,\infty}\cdot\|\widehat{\bbm{\sigma}}_{yx} - \bbm{\sigma}_{yx}\|_\infty\\
    \leq & \|\bm{\Sigma}_x^{-1}\|_{1,\infty}\cdot(\zeta_0\|\widehat{\bbm{\theta}}\|_1 + R + \zeta_1) \leq 2R\|\bm{\Sigma}_x^{-1}\|_{1,\infty}.
  \end{split}
\end{equation}

As $\bbm{\theta}_*$ is a $s_*$-sparse vector with nonzero support $S$ and $s_*=s_{1,*}s_{2,*}$, we have $\|\widehat{\bbm{\theta}}\|_1 \leq \|\bbm{\theta}_*\|_1 = \|(\bbm{\theta}_*)_S\|_1$. Moreover, we have
\begin{equation}\label{eq:theta1_decmopose}
  \begin{split}
    \|\widehat{\bbm{\theta}}\|_1 = & \|\bbm{\theta}_* + (\widehat{\bbm{\theta}} - \bbm{\theta}_*)\|_1 = \|(\bbm{\theta}_*)_S + (\widehat{\bbm{\theta}} - \bbm{\theta}_*)_S + (\widehat{\bbm{\theta}} - \bbm{\theta}_*)_{S^\perp}\|_1 \\
    \geq & \|(\bbm{\theta}_*)_S + (\widehat{\bbm{\theta}} - \bbm{\theta}_*)_{S^\perp}\|_1 - \|(\widehat{\bbm{\theta}} - \bbm{\theta}_*)_S\|_1 \\
    \geq & \|(\bbm{\theta}_*)_S\|_1 + \|(\widehat{\bbm{\theta}} - \bbm{\theta}_*)_{S^\perp}\|_1 - \|(\widehat{\bbm{\theta}} - \bbm{\theta}_*)_S\|_1,
  \end{split}
\end{equation}
where the first inequality follows from the triangle inequality and the second one follows from the decomposability of $\|\cdot\|_1$. Therefore, we have
\begin{equation}
  \|(\widehat{\bbm{\theta}} - \bbm{\theta}_*)_{S^\perp}\|_1 \leq \|(\widehat{\bbm{\theta}} - \bbm{\theta}_*)_S\|_1.
\end{equation}

Hence, the upper bound in terms of $\|\cdot\|_1$ can be bounded as
\begin{equation}
  \begin{split}
    & \|\widehat{\bbm{\theta}} - \bbm{\theta}_*\|_1 = \|(\widehat{\bbm{\theta}} - \bbm{\theta}_*)_S\|_1 + \|(\widehat{\bbm{\theta}} - \bbm{\theta}_*)_{S^\perp}\|_1\\
    \leq & 2\|(\widehat{\bbm{\theta}} - \bbm{\theta}_*)_S\|_1 \leq 2s_*\|\widehat{\bbm{\theta}} - \bbm{\theta}_*\|_\infty \leq 4s_*R\|\bm{\Sigma}_x^{-1}\|_{1,\infty}.
  \end{split}
\end{equation}
Finally, by the duality between $\|\cdot\|_1$ and $\|\cdot\|_\infty$ and $R\asymp \zeta_0\|\bbm\theta_*\|_1+\zeta_1$, we have
\begin{equation}\label{eq:init_bound}
  \begin{split}
    \|\widehat{\bbm{\theta}} - \bbm{\theta}_*\|_2 \leq & \|\widehat{\bbm{\theta}} - \bbm{\theta}_*\|_1^{1/2}\|\widehat{\bbm{\theta}} - \bbm{\theta}_*\|_\infty^{1/2} \leq \sqrt{8s_*}R\|\bm{\Sigma}_x^{-1}\|_{1,\infty}\\
    \lesssim & \|\bm{\Sigma}_x^{-1}\|_{1,\infty}\sqrt{8s_*}(\zeta_0\|\bbm{\theta_*}\|_1+\zeta_1).
  \end{split}
\end{equation}~

\noindent\textit{Step 2.} (Accuracy of robust covariance estimation)

\noindent Next, we establish the estimation accuracy $\zeta_0$ and $\zeta_1$. Denote $\bm x_i=\text{Vec}(\bm X_i)$. For each entry of $\bm{x}_i\bm{x}_i^\top$, it has a finite $(1+\lambda)$-th moment $M_{x,2+2\lambda,1}$. Similarly, the $j$-th entry in $Y_i\bm{x}_i$ is
\begin{equation}
  (E_i+\bm{x}_i^\top\bbm{\theta}_*)x_{ij} = E_ix_{ij} + x_{ij}\bm{x}_i^\top\bbm{\theta}_*,
\end{equation}
where $x_{ij}$ represents the $j$-th entry of $\bm x_i$.
For $\delta=\min(\lambda,\epsilon)$, the first term $E_ix_{ij}$ has a finite $(1+\delta)$-th moment $M_{e,1+\delta}M_{x,1+\delta}$ and the second term $x_{ij}\bm{x}_i^\top\bbm{\theta}_*$ has a finite $(1+\delta)$-th moment
\begin{equation}
  \mathbb{E}[|x_{ij}\bm{x}_i^\top\bbm{\theta}_*|^{1+\delta}] \leq \mathbb{E}[|\bm{x}_i^\top\bm{c}_j|^{2+2\delta}]^{1/2}\cdot\mathbb{E}[|\bm{x}_i^\top\bbm{\theta}_*|^{2+2\delta}]^{1/2} \leq M_{x,2+2\delta}\cdot\|\bbm{\theta}_*\|_2^{1+\delta}.
\end{equation}
Combining these two moment bounds, each entry of $Y_ix_i$ has a finite $(1+\delta)$-th moment $M_{yx,1+\delta}=2M_{\mathrm{eff},1+\delta,1}+2M_{x,2+2\delta}\|\bbm{\theta}_*\|_2^{1+\delta}$.

To analyze $\zeta_0$, we establish an upper bound for each entry of $\widehat{\bm{\Sigma}}_x(\tau_x)-\bm{\Sigma}_x$. The $(j,k)$-th entry can be upper bounded as
\begin{equation}\label{eq:zeta_0}
  \left|\left(\widehat{\bm{\Sigma}}_x(\tau_x)-\bm{\Sigma}_x\right)_{jk}\right|\leq \left|\left(\widehat{\bm{\Sigma}}_x(\tau_x)-\mathbb E[\widehat{\bm{\Sigma}}_x(\tau_x)]\right)_{jk}\right|+\left|\left(\mathbb E[\widehat{\bm{\Sigma}}_x(\tau_x)]-\bm{\Sigma}_x\right)_{jk}\right|.
\end{equation}
Similar to the proof of Theorem \ref{thm:matrix_trace_reg}, when $\tau_x\asymp ( n M_{x,2+2\lambda}/\log d)^{1/(1+\lambda)}$, the second term can be bounded as
\begin{equation}
\begin{aligned}
  \left|\left(\bb E[\widehat{\bm{\Sigma}}_x(\tau_x)]-\bm{\Sigma}_x\right)_{jk}\right| & =\left|\bb E\left[\text{T}\left(\bm c_j^\top \bm x_i\bm x_i^\top \bm c_k,\tau_x\right)-\bm c_j^\top \bm x_i\bm x_i^\top \bm c_k\right]\right|\\
  & \leq \bb E\left[|\bm c_j^\top \bm x_i\bm x_i^\top\bm c_k| \cdot 1\{|\bm c_j^\top \bm x_i\bm x_i^\top\bm c_k|\geq \tau_x\}\right]\\
  & \leq \bb E\left[|\bm c_j^\top \bm x_i\bm x_i^\top\bm c_k|^{1+\lambda}\right]^\frac{1}{1+\lambda}\cdot \bb P\left(|\bm c_j^\top \bm x_i\bm x_i^\top\bm c_k|\geq \tau_x\right)^\frac{\lambda}{1+\lambda}\\
  & \leq \bb E\left[|\bm c_j^\top \bm x_i\bm x_i^\top\bm c_k|^{1+\lambda}\right]\cdot \tau_x^{-\lambda}\\
  & \leq \bb E[|\bm c_j^\top \bm x_i|^{2+2\lambda}]^\frac{1}{2}\cdot \bb E[|\bm x_i^\top\bm c_k|^{2+2\lambda}]^\frac{1}{2}\cdot \tau_x^{-\lambda}\\
  & \leq M_{x,2+2\lambda,1}\cdot \tau_x^{-\lambda}\\
  & \asymp \left(\frac{M_{x,2+2\lambda,1}^{1/\lambda}\log d}{n}\right)^\frac{\lambda}{1+\lambda}.
\end{aligned}
\end{equation}
For the first term of \eqref{eq:zeta_0}, we have
\begin{equation}
\begin{aligned}
  \text{Var}\left(\text{T}\left(\bm c_j^\top \bm x_i\bm x_i^\top \bm c_k,\tau_x\right)\right) & \leq \bb E\left[\left(\text{T}\left(\bm c_j^\top \bm x_i\bm x_i^\top \bm c_k,\tau_x\right)\right)^2\right]\\
  & \leq \tau_x^{1-\lambda}\bb E[|\bm c_j^\top \bm x_i\bm x_i^\top \bm c_k|^{1+\lambda}]\\
  & \leq \tau_x^{1-\lambda} M_{x,2+2\lambda,1}.
\end{aligned}
\end{equation}
By Bernstein's inequality, for any $t\geq 0$,
\begin{equation}
  \bb P\left(\left|\frac{1}{n}\sum_{i=1}^n \text{T}\left(\bm c_j^\top \bm x_i\bm x_i^\top \bm c_k,\tau_x\right)-\bb E\left[\text{T}\left(\bm c_j^\top \bm x_i\bm x_i^\top \bm c_k,\tau_x\right)\right]\right|\geq t\right) \leq 2\exp \left(-\frac{nt^2/2}{\tau_x^{1-\lambda} M_{x,2+2\lambda,1}+\tau_x t}\right).
\end{equation}
Letting $t=CM_{x,2+2\lambda}\cdot\tau_x^{-\lambda}$, we have that
$$\begin{aligned}
  &\mathbb{P}\left(\norm{\widehat{\bm{\Sigma}}_x(\tau_x)-\bb E[\widehat{\bm{\Sigma}}_x(\tau_x)]}_\infty\gtrsim \left(\frac{M_{x,2+2\lambda,1}^{1/\lambda}\log d}{n}\right)^\frac{\lambda}{1+\lambda}\right)\\
  \leq& 2d_1^2d_2^2\exp(-C\log d)\leq \exp(-C\log d).
\end{aligned}
$$
Combining the two upper bounds, we have that with probability at least $1-\exp(-C\log d)$,
$$
\left\|\widehat{\bm{\Sigma}}_x(\tau_x)-\bm{\Sigma}_x\right\|_\infty \lesssim  \left(\frac{M_{x,2+2\lambda,1}^{1/\lambda}\log d}{n}\right)^\frac{\lambda}{1+\lambda}=\zeta_0.
$$

Next, we analyze the estimation accuracy of $\widehat{\bbm \sigma}_{yx}$, quantified by $\zeta_1$. Similarly, the $j$-th entry of $(\widehat{\bbm{\sigma}}_{yx}(\tau_{yx})-\bbm{\sigma}_{yx})$ can be upper bounded as
\begin{equation}\label{eq:zeta_1}
\begin{aligned}
  \left|(\widehat{\bbm{\sigma}}_{yx}(\tau_{yx})-\bbm{\sigma}_{yx})_{j}\right|\leq &\left|\frac{1}{n}\sum_{i=1}^n\text{T}(Y_i\bm x_{ij},\tau_{yx})-\bb E[\text{T}(Y_i\bm x_{ij},\tau_{yx})]\right|\\
  & +\left|\bb E[\text{T}(Y_i\bm x_{ij},\tau_{yx})]-\bb E[Y_i\bm x_{ij}]\right|.
\end{aligned}
\end{equation}
When $\tau_{yx}\asymp \left(n M_{yx,1+\delta}/\log d\right)^{1/(1+\delta)}$, the second term can be bounded as
\begin{equation}
\begin{aligned}
  \left|\bb E[\text{T}(Y_i\bm x_{ij},\tau_{yx})-Y_i\bm x_{ij}]\right| \leq & \bb E[|Y_i\bm x_{ij}|\cdot 1\{|Y_i\bm x_{ij}|\geq \tau_{yx}\}]\\
  \leq &\bb E[|Y_i\bm x_{ij}|^{1+\delta}]^\frac{1}{1+\delta}\cdot \bb P\left(|Y_i\bm x_{ij}|\geq \tau_{yx}\right)^\frac{\delta}{1+\delta}\\
  \leq & E[|Y_i\bm x_{ij}|^{1+\delta}]\cdot \tau_{yx}^{-\delta}\\
  \lesssim & \left(\frac{M_{yx,1+\delta}^{1/\delta}\log d}{n}\right)^{\frac{\delta}{1+\delta}}.
\end{aligned}
\end{equation}
Meanwhile, for the first term of \eqref{eq:zeta_1}, by Bernstein's inequality and the fact that
$$
  \text{Var}(\text{T}(Y_i\bm x_{ij},\tau_{yx}))\leq \bb E[(\text{T}(Y_i\bm x_{ij},\tau_{yx}))^2]\leq \tau_{yx}^{1-\delta}M_{yx,1+\delta},
$$
we have that for any $t\geq 0$,
$$
\bb P\left(\left|\frac{1}{n}\sum_{i=1}^n\text{T}(Y_i\bm x_{ij},\tau_{yx})-\bb E[\text{T}(Y_i\bm x_{ij},\tau_{yx})]\right|\geq t\right)\leq 2\exp\left(-\frac{nt^2/2}{\tau_{yx}^{1-\delta}M_{yx,1+\delta}+\tau_{yx}t}\right).
$$
Letting $t=CM_{yx,1+\delta}\cdot\tau_{yx}^{-\delta}$, we obtain that
$$
\begin{aligned}
  &\mathbb{P}\left(\left\|\widehat{\bbm{\sigma}}_{yx}(\tau_{yx})-\mathbb{E}[\wh{\bbm{\sigma}}_{yx}(\tau_{yx})]\right\|_\infty \gtrsim\left(\frac{M_{yx,1+\delta}^{1/\delta}\log d}{n}\right)^{\frac{\delta}{1+\delta}}\right)\\
\leq &2d_1d_2\exp (-C\log d)\leq \exp (-C\log d)
\end{aligned}
$$
Combining the two pieces, we have that with probability at least $1-\exp(-C\log d)$,
$$
\left\|\widehat{\bbm{\sigma}}_{yx}(\tau_{yx})-\bbm{\sigma}_{yx}\right\|_\infty \lesssim \left(\frac{M_{yx,1+\delta}^{1/\delta}\log d}{n}\right)^{\frac{\delta}{1+\delta}}=\zeta_1.
$$

Plugging $\zeta_0$ and $\zeta_1$ into \eqref{eq:init_bound}, we have
\begin{equation}
  \begin{split}
    & \|\widehat{\bbm{\theta}}(R,\tau_x,\tau_{yx})-\bbm{\theta}_*\|_2 \\
    \lesssim & \|\bm{\Sigma}_x^{-1}\|_{1,\infty}\sqrt{s_*} \left(\left[\frac{M_{x,2+2\lambda,1}^{1/\lambda}\log d}{n}\right]^{\frac{\lambda}{1+\lambda}}\|\bbm{\theta}_*\|_1 + \left[\frac{M_{yx,1+\delta}^{1/\delta}\log d}{n}\right]^{\frac{\delta}{1+\delta}}\right) \\
    \lesssim & \|\bm{\Sigma}_x^{-1}\|_{1,\infty}\sqrt{s_*}\left(\|\bbm{\theta}_*\|_1M_{x,2+2\lambda,1}^{1/(1+\lambda)}+M_{yx,1+\delta}^{1/(1+\delta)}\right)\left[\frac{\log d}{n}\right]^{\frac{\delta}{1+\delta}}.
  \end{split}
\end{equation}
Hence, when
\begin{equation}
  n \gtrsim \|\bm{\Sigma}_x^{-1}\|_{1,\infty}^{\frac{1+\delta}{\delta}}\sigma_K^{-\frac{1+\delta}{\delta}}\left(\beta_x/\alpha_x\right)^{\frac{1+\delta}{2\delta}}s_*^{\frac{1+\delta}{2\delta}}\left(\|\bbm{\theta}_*\|_1^{\frac{1+\delta}{\delta}}M_{x,2+2\lambda,1}^\frac{1+\delta}{\delta(1+\lambda)}+M_{yx,1+\delta}^\delta\right)\log d,
\end{equation}
we have $\|\widehat{\bbm{\theta}}(R,\tau_x,\tau_{yx})-\bbm{\theta}_*\|_2 \lesssim (\alpha_x/\beta_x)^{1/2}\sigma_K$. Then, by Lemma \ref{lemma:3} and the Young-Eckart-Mirsky Theorem,
$$
d(\bm F_0,\bm F_*)\leq 2\norm{\bbm\Theta_0-\bbm\Theta_*}_\mathrm{F}\lesssim(\alpha_x/\beta_x)^{1/2}\sigma_K.
$$
\end{proof}

\subsection{Stability Conditions}\label{append:B.2}

We first verify that the stability condition in Definition \ref{def:4} holds for the matrix trace regression and provide explicit forms for $\xi^2_{L,s_1}$, $\xi^2_{R,s_2}$, and $\phi$. 

\begin{proposition}[Stability of Robust De-scaled Gradient Estimators]\label{prop:stability_linear_reg}
  Under the conditions in Theorem \ref{thm:matrix_trace_reg}, with a probability at least $1-C\exp(-C\log d)$, the robust de-scaled gradient estimators satisfy
  \begin{equation}
    \begin{split}
      & \|\bm{G}_L(\bm{L},\bm{R};\tau) - \mathbb{E}[\nabla_{\bm{L}}f(\bm{L},\bm{R};z_i)](\bm{R}^\top\bm{R})^{-1/2}\|_\textup{F}^2 \lesssim \xi^2_L + \phi_{\lambda,\epsilon}\|\bm{L}\bm{R}^\top-\bm{L}_*\bm{R}_*^\top\|_\textup{F}^2,\\
      & \|\bm{G}_R(\bm{L},\bm{R};\tau) - \mathbb{E}[\nabla_{\bm{R}}f(\bm{L},\bm{R};z_i)](\bm{R}^\top\bm{R})^{-1/2}\|_\textup{F}^2 \lesssim \xi^2_L + \phi_{\lambda,\epsilon}\|\bm{L}\bm{R}^\top-\bm{L}_*\bm{R}_*^\top\|_\textup{F}^2,
    \end{split}
  \end{equation}
  where 
  $$
    \phi_{\lambda,\epsilon}=s_1K\bar{\sigma}_K^{2\lambda}\overline{M}_{x,\textup{eff}}^2\left(\frac{\log d}{n}\right)^\frac{2\min(\epsilon,\lambda)}{1+\epsilon},
  $$ 
  $$
    \overline{M}_{x,\mathrm{eff}}^2=M_{x,2+2\lambda}^2M_{\mathrm{eff},1+\epsilon,s}^\frac{-2\lambda}{1+\epsilon}+M_{x,2+2\lambda}^\frac{2}{1+\lambda}M_{\mathrm{eff},1+\epsilon,s}^\frac{2\epsilon(\lambda-1)}{(1+\epsilon)(1+\lambda)}+M_{x,2+2\lambda}^\frac{2}{1+\lambda},
  $$
  $$
    \xi^2_L=s_1K\left[\frac{M_{\textup{eff},1+\epsilon,s}^{1/\epsilon}\log d}{n}\right]^{\frac{2\epsilon}{1+\epsilon}},\quad\text{and}\quad
    \xi^2_R=s_2K\left[\frac{M_{\textup{eff},1+\epsilon,s}^{1/\epsilon}\log d}{n}\right]^{\frac{2\epsilon}{1+\epsilon}}.
  $$
\end{proposition}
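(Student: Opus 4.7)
The plan is to proceed by an entrywise analysis of the residual matrix
\begin{equation*}
    \bm{H}_L := \bm{G}_L(\bm{L},\bm{R};\tau) - \mathbb{E}[\nabla_{\bm{L}}f(\bm{L},\bm{R};z_i)](\bm{R}^\top\bm{R})^{-1/2},
\end{equation*}
followed by union bounds across rows and columns. Setting $\bm{v}_k := \bm{R}(\bm{R}^\top\bm{R})^{-1/2}\bm{e}_k$ (which has orthonormal columns supported on the row-support of $\bm{R}$, hence at most $s_2$-sparse), each entry of $\bm{H}_L$ will take the form
\begin{equation*}
    (\bm{H}_L)_{jk} = \frac{1}{n}\sum_{i=1}^n \text{T}(W_i^{(j,k)},\tau) - \mathbb{E}[W^{(j,k)}],
\end{equation*}
with $W_i^{(j,k)} = r_i \cdot \bm{c}_j^\top\mathcal{P}(\bm{X}_i)\bm{v}_k$ and $r_i = \langle\mathcal{P}(\bm{X}_i),\bm{\Delta}\rangle - E_i$, where $\bm{\Delta} := \bm{L}\bm{R}^\top - \bm{L}_*\bm{R}_*^\top$. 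I would then split $W_i = A_i + B_i$ into the pure-noise piece $A_i = -E_i\bm{c}_j^\top\mathcal{P}(\bm{X}_i)\bm{v}_k$ and the approximation-dependent piece $B_i = \langle\mathcal{P}(\bm{X}_i),\bm{\Delta}\rangle\bm{c}_j^\top\mathcal{P}(\bm{X}_i)\bm{v}_k$, which vanishes at the optimum.

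The bias and variance of truncation will be controlled via the standard inequalities $|\mathbb{E}[\text{T}(u,\tau)] - \mathbb{E}[u]| \leq \tau^{-\eta}\mathbb{E}[|u|^{1+\eta}]$ and $\text{Var}(\text{T}(u,\tau)) \leq \tau^{1-\eta}\mathbb{E}[|u|^{1+\eta}]$. Applied with $\eta=\epsilon$ to $A_i$, the definition of $M_{\mathrm{eff},1+\epsilon,s}$ supplies $\mathbb{E}[|A_i|^{1+\epsilon}] \leq M_{\mathrm{eff},1+\epsilon,s}$ directly. Applied with $\eta=\lambda$ to $B_i$, I would use the rank-$(\leq 2K)$ SVD $\bm{\Delta} = \sum_{\ell=1}^{2K}\sigma_\ell\bm{u}_\ell\bm{v}_\ell^\top$ together with the row-sparsity $|\text{supp}(\bm{u}_\ell)|\leq s_1$ (enforced by SHT). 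Expanding each $\bm{u}_\ell$ in the coordinate basis and using Minkowski converts the resulting moments into sums of quantities of the form $\mathbb{E}[|\bm{c}_j^\top\mathcal{P}(\bm{X}_i)\bm{v}|^{2+2\lambda}]$ with $\bm{v}$ sparse unit-norm, each bounded by $M_{x,2+2\lambda,s}$. Combining this with Cauchy--Schwarz against $|\bm{c}_j^\top\mathcal{P}(\bm{X}_i)\bm{v}_k|^{2+2\lambda}$ and $\|\bm{\Delta}\|_* \leq \sqrt{2K}\|\bm{\Delta}\|_{\mathrm{F}}$ will yield $\mathbb{E}[|B_i|^{1+\lambda}] \lesssim (s_1 K)^{(1+\lambda)/2} M_{x,2+2\lambda,s}^2 \|\bm{\Delta}\|_{\mathrm{F}}^{2+2\lambda}$.

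Plugging these moment bounds into Bernstein's inequality, choosing $\tau \asymp (nM_{\mathrm{eff},1+\epsilon,s}/\log d)^{1/(1+\epsilon)}$, and union-bounding over the $d_1 K$ candidate pairs $(j,k)$ (absorbed into the $\log d$ factor), each entry will satisfy with probability at least $1-\exp(-c\log d)$
\begin{equation*}
    |(\bm{H}_L)_{jk}|^2 \lesssim \left(\frac{M_{\mathrm{eff},1+\epsilon,s}^{1/\epsilon}\log d}{n}\right)^{\frac{2\epsilon}{1+\epsilon}} + c_{\lambda,\epsilon}\|\bm{\Delta}\|_{\mathrm{F}}^2,
\end{equation*}
where the first term is the Bernstein bound from the $A_i$ piece and $c_{\lambda,\epsilon}$ aggregates three contributions from $B_i$: its truncated variance, its truncation bias, and the cross/mixed regime. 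Invoking the induction hypothesis $\|\bm{\Delta}\|_{\mathrm{F}}^2 \lesssim \bar{\sigma}_K^2$ to absorb the sub-linear powers of $\|\bm{\Delta}\|_{\mathrm{F}}$ into a $\bar{\sigma}_K^{2\lambda}$ prefactor will then reproduce exactly the three summands in $\overline{M}_{x,\mathrm{eff}}^2$. Summing the squared entrywise bounds over any subset of size $|S_1|\cdot K \leq s_1 K$ produces the stated $\xi_L^2$ and $\phi_{\lambda,\epsilon} = s_1 K \cdot c_{\lambda,\epsilon}$; the argument for $\bm{H}_R$ is symmetric after swapping the roles of $\bm{L}$ and $\bm{R}$.

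The hard part will be the moment control of $B_i$: because $\langle\mathcal{P}(\bm{X}_i),\bm{\Delta}\rangle$ involves bilinear forms $\bm{u}_\ell^\top\mathcal{P}(\bm{X}_i)\bm{v}_\ell$ in which both vectors are sparse (rather than one being a coordinate vector as in the definition of $M_{x,2+2\lambda,s}$), converting them into tractable sparse moments requires the coordinate-basis expansion of $\bm{u}_\ell$, which incurs a $\sqrt{s_1}$ penalty. After squaring and combining with the rank-$2K$ nuclear-to-Frobenius inequality, this penalty produces the $s_1 K$ factor in $\phi_{\lambda,\epsilon}$, and carefully balancing it against the three truncation regimes for $B_i$ (bias-dominated, variance-dominated, and mixed) is precisely what yields the three-term structure of $\overline{M}_{x,\mathrm{eff}}^2$.
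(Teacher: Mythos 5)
Your overall architecture --- splitting each entry into a pure-noise piece and an approximation-dependent piece, controlling truncation bias via Markov and fluctuations via Bernstein, union-bounding over the $s_1K$ candidate entries, and regime-splitting to convert $\|\bm{\Delta}\|_{\mathrm{F}}^{1+\lambda}$ into $\|\bm{\Delta}\|_{\mathrm{F}}^{2}$ --- matches the paper's proof. However, there is a genuine gap in your moment control of $B_i$. The paper bounds $\mathbb{E}[|q_{ijk}|^{1+\lambda}]$ (your $\mathbb{E}[|B_i|^{1+\lambda}]$) by a single Cauchy--Schwarz step applied to the \emph{unit vector} $\mathrm{vec}(\bm{\Delta})/\|\bm{\Delta}\|_{\mathrm{F}}$, invoking the globally defined moment $M_{x,2+2\lambda}=\sup_{\|\bm{v}\|_2=1}\mathbb{E}[|\mathrm{vec}(\bm{X}_i)^\top\bm{v}|^{2+2\lambda}]$ from Assumption \ref{asmp:moment}; this yields $\mathbb{E}[|q_{ijk}|^{1+\lambda}]\leq M_{x,2+2\lambda}\|\bm{\Delta}\|_{\mathrm{F}}^{1+\lambda}$ with no dimension-dependent factor. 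Your route --- SVD of $\bm{\Delta}$, coordinate-basis expansion of each sparse singular vector, Minkowski, and the nuclear-to-Frobenius inequality --- is unnecessary (the global moment is assumed, so the sparsity of $\bm{\Delta}$ plays no role at this step) and, more importantly, introduces the spurious factor $(s_1K)^{(1+\lambda)/2}$ into the moment bound. Since the truncation bias of the $B$-piece is linear in this moment and is then squared and summed over up to $s_1K$ entries, your bound would carry a factor of order $(s_1K)^{2+\lambda}$ rather than the stated $s_1K$; your claim that the coordinate-expansion penalty ``produces the $s_1K$ factor in $\phi_{\lambda,\epsilon}$'' double-counts against the Frobenius summation over $S_1\times[K]$, which is the sole source of that factor in the paper. (Your displayed exponent $\|\bm{\Delta}\|_{\mathrm{F}}^{2+2\lambda}$ is also dimensionally wrong for a $(1+\lambda)$-th moment of a quantity that is linear in $\bm{\Delta}$.)

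A secondary issue: you apply the bias and variance inequalities to $A_i$ and $B_i$ separately, but the algorithm truncates $W_i=A_i+B_i$ jointly and $\mathrm{T}(A+B,\tau)\neq\mathrm{T}(A,\tau)+\mathrm{T}(B,\tau)$. The paper handles this with a four-term decomposition in which the $B$-contribution enters as $\mathbb{E}[q]+\mathbb{E}[\mathrm{T}(v-q,\tau)]-\mathbb{E}[\mathrm{T}(v,\tau)]$ and is controlled through the pointwise bound $|\mathrm{T}(a+b,\tau)-\mathrm{T}(b,\tau)|\leq|a|$ together with indicator decompositions over the events $\{|q|\geq\tau/2\}$ and $\{|v|\geq\tau/2\}$ (this is also where the mixed exponent $\tau^{-\lambda(1+\epsilon)/(1+\lambda)}$ and hence the second term of $\overline{M}_{x,\mathrm{eff}}^2$ arise). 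Your mention of a ``cross/mixed regime'' gestures at this, but the plan as written does not supply the argument that makes the separate treatment of $A_i$ and $B_i$ legitimate.
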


\begin{proof}[Proof of Proposition \ref{prop:stability_linear_reg}]

Denote $\bm{Z}_i=\mathcal{P}(\bm{X}_i)$ and $\widetilde{\bm{R}}=\bm{R}(\bm{R}^\top\bm{R})^{-1/2}$. Note that $\wt{\bm R}$ is a matrix with orthonormal columns. By definition,
\begin{equation}
  \begin{aligned}
    & \bm{G}_L(\bm{L},\bm{R};\tau)-\mathbb{E}[\nabla_{\bm{L}}f(\bm{L},\bm{R};z_i)](\bm{R}^\top\bm{R})^{-1/2}\\
    = & \frac{1}{n}\sum_{i=1}^n\text{T}\left(\left\{\langle\bm Z_i,\bm{L}\bm{R}^\top\rangle-Y_i\right\}\bm Z_i\wt{\bm{R}},\tau\right)-\bb{E}\left[\left\{\langle\bm Z_i,\bm{L}\bm{R}^\top\rangle-Y_i\right\}\bm Z_i\wt{\bm{R}}\right] \\
    = & \Bigg\{ \mathbb{E}\left[\text{T}\left(\left\{\inner{\bm Z_i}{\bm L_*\bm R_*^\top}-Y_i\right\}\bm{Z}_i\widetilde{\bm{R}},\tau\right)\right] - \mathbb{E}\left[\left\{\inner{\bm Z_i}{\bm L_*\bm R_*^\top}-Y_i\right\}\bm{Z}_i\widetilde{\bm{R}}\right]\Bigg\}\\
    + & \Bigg\{\frac{1}{n}\sum_{i=1}^n\text{T}\left(\left\{\inner{\bm Z_i}{\bm L_*\bm R_*^\top}-Y_i\right\}\bm{Z}_i\widetilde{\bm{R}},\tau\right) - \mathbb{E}\left[\text{T}\left(\left\{\inner{\bm Z_i}{\bm L_*\bm R_*^\top}-Y_i\right\}\bm{Z}_i\widetilde{\bm{R}},\tau\right)\right] \Bigg\}\\
    + & \Bigg\{ \mathbb{E}\left[\left\{\inner{\bm Z_i}{\bm L_*\bm R_*^\top}-Y_i\right\}\bm{Z}_i\widetilde{\bm{R}}\right] - \mathbb{E}\left[\left\{\inner{\bm Z_i}{\bm L\bm R^\top}-Y_i\right\}\bm{Z}_i\widetilde{\bm{R}}\right] \\
    & + \mathbb{E}\left[\text{T}\left(\left\{\inner{\bm Z_i}{\bm L\bm R^\top}-Y_i\right\}\bm{Z}_i\widetilde{\bm{R}};\tau\right)\right] - \mathbb{E}\left[\text{T}\left(\left\{\inner{\bm Z_i}{\bm L_*\bm R_*^\top}-Y_i\right\}\bm{Z}_i\widetilde{\bm{R}};\tau\right)\right]\Bigg\}\\
    + & \Bigg\{ \frac{1}{n}\sum_{i=1}^n\text{T}\left(\left\{\inner{\bm Z_i}{\bm L\bm R^\top}-Y_i\right\}\bm{Z}_i\widetilde{\bm{R}};\tau\right) - \frac{1}{n}\sum_{i=1}^n\text{T}\left(\left\{\inner{\bm Z_i}{\bm L_*\bm R_*^\top}-Y_i\right\}\bm{Z}_i\widetilde{\bm{R}};\tau\right)\\
    & - \mathbb{E}\left[\text{T}\left(\left\{\inner{\bm Z_i}{\bm L\bm R^\top}-Y_i\right\}\bm{Z}_i\widetilde{\bm{R}};\tau\right)\right] + \mathbb{E}\left[\text{T}\left(\left\{\inner{\bm Z_i}{\bm L_*\bm R_*^\top}-Y_i\right\}\bm{Z}_i\widetilde{\bm{R}};\tau\right)\right]\Bigg\}\\
    =: &~ \bm{M}_{L,1} + \bm{M}_{L,2} + \bm{M}_{L,3} + \bm{M}_{L,4}.
  \end{aligned}
\end{equation}~

\noindent\textit{Step 1.} (Moment bounds)

\noindent Denote $\widetilde{\bm{R}}=[\widetilde{\bm{r}}_1,\dots,\widetilde{\bm{r}}_K]$ where each $\widetilde{\bm{r}}_k$ is a $s_R$-sparse vector with unit Euclidean norm. Since $\inner{\bm Z_i}{\bm L_*\bm R_*^\top}-Y_i=-E_i$, denoting the $(j,k)$-th entry of $\left\{\inner{\bm Z_i}{\bm L_*\bm R_*^\top}-Y_i\right\}\bm{Z}_i\widetilde{\bm{R}}$ as $v_{ijk}=-E_i\bm{c}_j^\top\bm{Z}_i\widetilde{\bm{r}}_k$, its $(1+\epsilon)$-th moment is
\begin{equation}\label{eq:moment1}
  \begin{split}
    & \mathbb{E}\left[|v_{ijk}|^{1+\epsilon}\right] = \mathbb{E}\left[\mathbb{E}\left[|E_i\bm{c}_j^\top\bm{Z}_i\widetilde{\bm{r}}_k|^{1+\epsilon}|\bm{X}_i\right]\right]\\
    = & \mathbb{E}\left[\mathbb{E}\left[\left|E_i\right|^{1+\epsilon}|\bm{X}_i\right]\cdot\left|\bm{c}_j^\top\bm{Z}_i\widetilde{\bm{r}}_k\right|^{1+\epsilon}\right]\\
    \leq & M_{e,1+\epsilon}\cdot M_{x,1+\epsilon,s}=M_{\text{eff},1+\epsilon,s}.
  \end{split}
\end{equation}
In addition, the $(j,k)$-th entry of $\bm M_{L,3}$ is
\begin{equation}
  \begin{aligned}
    & \mathbb{E}[\left\{\inner{\bm Z_i}{\bm L_*\bm R_*^\top-\bm L\bm R^\top}\right\}\bm{c}_j^\top\bm{Z}_i\widetilde{\bm{r}}_k]\\
    & + \mathbb{E}[\text{T}\left(\left\{\inner{\bm Z_i}{\bm L\bm R^\top-\bm L_*\bm R_*^\top}-E_i\right\}\bm{c}_j^\top\bm{Z}_i\widetilde{\bm{r}}_k;\tau\right)] - \mathbb{E}[\text{T}(-E_i\bm{c}_j^\top\bm{Z}_i\widetilde{\bm{r}}_k;\tau)]\\
    =: &~ \mathbb{E}[q_{ijk}] + \mathbb{E}[\text{T}(-q_{ijk}+v_{ijk};\tau)] - \mathbb{E}[\text{T}(v_{ijk};\tau)].
  \end{aligned}
\end{equation}
Next, we develop an upper bound for the $(1+\lambda)$-th moment for $q_{ijk}$. By Holder inequality,
\begin{equation}\label{eq:moment2}
  \begin{split}
    & \mathbb{E}\left[|q_{ijk}|^{1+\lambda}\right] \\
    = & \mathbb{E}\left[\left|\text{vec}(\bm Z_i)^\top\text{vec}(\bm L_*\bm R_*^\top-\bm L\bm R^\top)\right|^{1+\lambda}\cdot\left|\text{vec}(\bm{Z}_i)^\top(\widetilde{\bm{r}}_k\otimes\bm{c}_j)\right|^{1+\lambda}\right]\\
    \leq & \mathbb{E}\left[\left|\text{vec}(\bm{Z}_i)^\top\frac{\text{vec}(\bm{L}\bm{R}^\top-\bm{L}_*\bm{R}_*^\top)}{\|\bm{L}\bm{R}^\top-\bm{L}_*\bm{R}_*^\top\|_\text{F}}\right|^{2+2\lambda}\right]^{1/2} \cdot \mathbb{E}\left[|\text{vec}(\bm{Z}_i)^\top(\widetilde{\bm{r}}_k\otimes\bm{c}_j)|^{2+2\lambda}\right]^{1/2}\\
    & \cdot\|\bm{L}\bm{R}^\top-\bm{L}_*\bm{R}_*^\top\|_\text{F}^{1+\lambda}\\
    \leq & M_{x,2+2\lambda}\cdot\|\bm{L}\bm{R}^\top-\bm{L}_*\bm{R}_*^\top\|_\text{F}^{1+\lambda}.
  \end{split}
\end{equation}~

\noindent\textit{Step 2.} (Bound $\|(\bm{M}_{L,1})_{S_1}\|_\text{F}^2$)

\noindent Next, we bound $M_{L,1}$. For any $S_1\subset \{1,2,\dots,d_1\}$, we have
\begin{equation}
  \begin{split} 
    & \|\bm M_{L,1}\|_\text{F}^2\\
    = & \sum_{j\in S_1}\sum_{k=1}^K \left|\mathbb{E}\left[\text{T}\left(\left\{\inner{\bm Z_i}{\bm L_*\bm R_*^\top}-Y_i\right\}\bm{c}_j^\top\bm{X}_i\widetilde{\bm{r}}_k;\tau\right)\right] - \mathbb{E}\left[\left\{\inner{\bm Z_i}{\bm L_*\bm R_*^\top}-Y_i\right\}\bm{c}_j^\top\bm{X}_i\widetilde{\bm{r}}_k\right]\right|^2\\
    = & \sum_{j\in S_1}\sum_{k=1}^K \Big|\mathbb{E}[\text{T}(v_{ijk};\tau)]-\mathbb{E}[v_{ijk}]\Big|^2.
  \end{split}
\end{equation}
For any $\ell\in S_1$ and $k\in[K]$, by the definition of the truncation operator, moment bound in \eqref{eq:moment1}, and Markov's inequality,
\begin{equation}\label{eq:matreg_M1}
  \begin{split}
    & \Big|\mathbb{E}[\text{T}(v_{ijk};\tau)]-\mathbb{E}[v_{ijk}]\Big| \leq \mathbb{E}\Big[|v_{ijk}|\cdot1\{|v_{ijk}|\geq\tau\}\Big]\\
    \leq & \mathbb{E}\Big[|v_{ijk}|^{1+\epsilon}\Big]^{1/(1+\epsilon)}\cdot\mathbb{P}\Big(|v_{ijk}|\geq\tau\Big)^{\epsilon/(1+\epsilon)}\\
    \leq & \mathbb{E}\Big[|v_{ijk}|^{1+\epsilon}\Big]^{1/(1+\epsilon)}\cdot\left(\frac{\mathbb{E}\Big[|v_{ijk}|^{1+\epsilon}\Big]^{1/(1+\epsilon)}}{\tau^{1+\epsilon}}\right)^{\epsilon/(1+\epsilon)} \\
    \leq & M_{\text{eff},1+\epsilon,s}\cdot\tau^{-\epsilon} \asymp \left[\frac{M_{\text{eff},1+\epsilon,s}^{1/\epsilon}\log d}{n}\right]^{\epsilon/(1+\epsilon)},
  \end{split}
\end{equation}
with truncation parameter $\tau\asymp(nM_{\text{eff},1+\epsilon,s}/\log d)^{1/(1+\epsilon)}$. Hence, for any $S_1$ with $|S_1|\leq s_1$,
\begin{equation}
  \|(\bm{M}_{L,1})_{S_1}\|_\text{F}^2 \lesssim s_1K \left[\frac{M_{\text{eff},1+\epsilon,s}^{1/\epsilon}\log d}{n}\right]^{2\epsilon/(1+\epsilon)}.
\end{equation}~

\noindent\textit{Step 3.} (Bound $\|(\bm{M}_{L,2})_{S_1}\|_\text{F}^2$)

\noindent By definition,
\begin{equation}
  \|(\bm{M}_{L,2})_{S_1}\|_\text{F}^2 = \sum_{j\in S_1}\sum_{k=1}^K \left|\frac{1}{n}\sum_{i=1}^n \text{T}(v_{ijk};\tau) - \mathbb{E}[\text{T}(v_{ijk};\tau)]\right|^2.
\end{equation}
For each $i=1,\dots,n$, by the nature of truncation and moment bound in \eqref{eq:moment1}, we have the upper bound for the variance
\begin{equation}
  \text{var}(\text{T}(v_{ijk};\tau)) \leq\mathbb{E}\left[\text{T}(v_{ijk};\tau)^2\right] \leq \tau^{1-\epsilon}\cdot\mathbb{E}\left[|v_{ijk}|^{1+\epsilon}\right]\leq \tau^{1-\epsilon} M_{\text{eff},1+\epsilon,s}.
\end{equation}
Also, for any $q=3,4,\dots$, the higher-order moments satisfy that
\begin{equation}
  \mathbb{E}\Big[\left|\text{T}(v_{ijk};\tau) - \mathbb{E}[\text{T}(v_{ijk};\tau)]\right|^q\Big] \leq (2\tau)^{q-2}\cdot\text{var}(\text{T}(v_{ijk};\tau)).
\end{equation}
By Bernstein's inequality, for any $j\in S_1$ and $k\in[K]$, and $0<t\lesssim \tau^{-\epsilon}M_{\text{eff},1+\epsilon,s}$,
\begin{equation}
  \mathbb{P}\left(\left|\frac{1}{n}\sum_{i=1}^n\text{T}(v_{ijk};\tau) - \mathbb{E}[\text{T}(v_{ijk};\tau)]\right|\geq t\right)\leq 2\exp\left(-\frac{nt^2}{4\tau^{1-\epsilon}M_{\text{eff}}}\right).
\end{equation}
Letting $t=CM_{\text{eff},1+\epsilon,s}^{1/(1+\epsilon)}\log(d)^{\epsilon/(1+\epsilon)}n^{-\epsilon/(1+\epsilon)}$, we have
\begin{equation}\label{eq:matreg_M2_bern}
  \begin{aligned}
      &\mathbb{P}\left(\max_{\substack{j\in[d_1]\\ k\in[K]}}\left|\frac{1}{n}\sum_{i=1}^n\text{T}(v_{ijk};\tau) - \mathbb{E}[\text{T}(v_{ijk};\tau)]\right|\gtrsim \left[\frac{M_{\text{eff},1+\epsilon,s}^{1/\epsilon}\log d}{n}\right]^{\frac{\epsilon}{1+\epsilon}} \right)\\
      \leq & \sum_{j\in S_1}\sum_{k=1}^{K}\mathbb{P}\left(\left|\frac{1}{n}\sum_{i=1}^n\text{T}(v_{ijk};\tau) - \mathbb{E}[\text{T}(v_{ijk};\tau)]\right|\gtrsim \left[\frac{M_{\text{eff},1+\epsilon,s}^{1/\epsilon}\log d}{n}\right]^{\frac{\epsilon}{1+\epsilon}} \right)\\
      \leq & 2s_1K\exp\left(-C\log d\right)\leq C\exp\left(-C\log d\right).
  \end{aligned}
\end{equation}
Hence, with probability at least $1-C\exp(-C\log d)$, for any $S_1$ with $|S_1|\leq s_1$,
\begin{equation}
  \|(\bm{M}_{L,2})_{S_1}\|_\text{F}^2 \lesssim s_1K\left[\frac{M_{\text{eff},1+\epsilon,s}\log d}{n}\right]^{2\epsilon/(1+\epsilon)}.
\end{equation}~

\noindent\textit{Step 4.} (Bound $\|(\bm{M}_{L,3})_{S_1}\|_\text{F}^2$)

\noindent By definition, for any $S_1$,
\begin{equation}
  \|(\bm{M}_{L,3})_{S_1}\|_\text{F}^2 = \sum_{j\in S_1}\sum_{k=1}^K\Big|\mathbb{E}[q_{ijk}] - \mathbb{E}\Big[\text{T}(-q_{ijk}+v_{ijk};\tau) - \text{T}(v_{ijk};\tau)\Big]\Big|^2.
\end{equation}
By the nature of truncation operator, moment conditions in \eqref{eq:moment1} and \eqref{eq:moment2}, and Markov's inequality, we have
\begin{equation}\label{eq:matreg_M3}
  \begin{split}
    & \Big|\mathbb{E}[q_{ijk}] - \mathbb{E}\Big[\text{T}(-q_{ijk}+v_{ijk};\tau) - \text{T}(v_{ijk};\tau)\Big]\Big| \\
    \leq & \Big|\mathbb{E}\Big[q_{ijk}\cdot1\{(|v_{ijk}|\geq\tau)\cup(|-q_{ijk}+v_{ijk}|\geq\tau)\}\Big]\Big|\\
    \leq & \Big|\mathbb{E}\Big[q_{ijk}\cdot1\{(|v_{ijk}|\geq\tau)\cup(|q_{ijk}|\geq\tau/2)\cup(|v_{ijk}|\geq\tau/2)\}\Big]\Big|\\
    \leq & \mathbb{E}\Big[|q_{ijk}|\cdot1\{|q_{ijk}\geq\tau/2|\}\Big] + \mathbb{E}\Big[|q_{ijk}|\cdot1\{|v_{ijk}\geq\tau/2|\}\Big]\\
    \leq &~ \mathbb{E}\left[|q_{ijk}|^{1+\lambda}\right]^{\frac{1}{1+\lambda}}\cdot\left(\frac{\mathbb{E}\Big[|q_{ijk}|^{1+\lambda}\Big]}{\tau^{1+\lambda}}\right)^{\frac{\lambda}{1+\lambda}} + \mathbb{E}\left[|q_{ijk}|^{1+\lambda}\right]^{\frac{1}{1+\lambda}}\cdot\left(\frac{\mathbb{E}\Big[|v_{ijk}|^{1+\epsilon}\Big]}{\tau^{1+\epsilon}}\right)^{\frac{\lambda}{1+\lambda}}\\
    = &~ \mathbb{E}\left[|q_{ijk}|^{1+\lambda}\right]\cdot\tau^{-\lambda}~+~\mathbb{E}\left[|q_{ijk}|^{1+\lambda}\right]^{\frac{1}{1+\lambda}}\cdot\mathbb{E}\left[|v_{ijk}|^{1+\epsilon}\right]^{\frac{\lambda}{1+\lambda}}\cdot\tau^{-\frac{\lambda(1+\epsilon)}{1+\lambda}}\\
    \leq &~ M_{x,2+2\lambda}\cdot\left(\frac{\log d}{nM_{\text{eff},1+\epsilon,s}}\right)^{\frac{\lambda}{1+\epsilon}}\cdot\|\bm{L}\bm{R}^\top-\bm{L}_*\bm{R}_*^\top\|_\text{F}^{1+\lambda}\\
    & + M_{x,2+2\lambda}^{1/(1+\lambda)}\cdot M_{\text{eff},1+\epsilon,s}^{\lambda/(1+\lambda)}\cdot\left(\frac{\log d}{nM_{\text{eff},1+\epsilon,s}}\right)^{\frac{\lambda}{1+\lambda}}\cdot\|\bm{L}\bm{R}^\top-\bm{L}_*\bm{R}_*^\top\|_\text{F}\\
    \lesssim & ~ \bar{\sigma}_K^{\lambda}\left(M_{x,2+2\lambda}M_{\text{eff},1+\epsilon,s}^{-\lambda/(1+\epsilon)}+M_{x,2+2\lambda}^{1/(1+\lambda)}\right)\left(\frac{\log d}{n}\right)^{\min\left(\frac{\lambda}{1+\lambda},\frac{\lambda}{1+\epsilon}\right)}\|\bm{L}\bm{R}^\top-\bm{L}_*\bm{R}_*^\top\|_\text{F},
  \end{split}
\end{equation}
where $\bar{\sigma}_K^{\lambda}:=\max (\sigma_K^\lambda(\bm L_*\bm R_*^\top),1)$.

Therefore, we have
\begin{equation}
  \begin{split}  
    \|(\bm{M}_{L,3})_{S_1}\|_\text{F}^2 & \lesssim s_1K\bar{\sigma}_K^{2\lambda}\left(M_{x,2+2\lambda}^2M_{\text{eff},1+\epsilon,s}^{-2\lambda/(1+\epsilon)}+M_{x,2+2\lambda}^{2/(1+\lambda)}\right)\left(\frac{\log d}{n}\right)^{\min\left(\frac{2\lambda}{1+\lambda},\frac{2\lambda}{1+\epsilon}\right)}\cdot\\
    & \|\bm{L}\bm{R}^\top-\bm{L}_*\bm{R}_*^\top\|_\text{F}^2.
  \end{split}
\end{equation}~

\noindent\textit{Step 5.} (Bound $\|(\bm{M}_{L,4})_{S_1}\|_\text{F}^2$)

\noindent By definition, for any $S_1$,
\begin{equation}
  \|(\bm{M}_{L,4})_{S_1}\|_\text{F}^2 = \sum_{j\in S_1}\sum_{k=1}^K\Bigg|\frac{1}{n}\sum_{i=1}^n\Big[\text{T}(-q_{ijk}+v_{ijk};\tau)-\text{T}(v_{ijk};\tau)\Big] -\mathbb{E}\Big[\text{T}(-q_{ijk}+v_{ijk};\tau)-\text{T}(v_{ijk};\tau)\Big] \Bigg|^2.
\end{equation}
For each $i=1,\dots,n$, we have $|\text{T}(-q_{ijk}+v_{ijk};\tau)-\text{T}(v_{ijk};\tau)|\leq 2\tau$. Meanwhile, by the fact that $|\text{T}(a+b,\tau)-\text{T}(b,\tau)|\leq |a|$ for any $a$ and $b$, we obtain
\begin{equation}
  \begin{split}
    \mathbb{E}\left[\Big(\text{T}(-q_{ijk}+v_{ijk};\tau)-\text{T}(v_{ijk};\tau)\Big)^2\right] & \leq (2\tau)^{1-\lambda}\cdot\mathbb{E}\left[|q_{ijk}|^{1+\lambda}\right]\\
    & \lesssim \tau^{1-\lambda}M_{x,2+2\lambda}\|\bm{L}\bm{R}^\top-\bm{L}_*\bm{R}_*^\top\|_\text{F}^{1+\lambda}=: V^2.
  \end{split}
\end{equation}
In addition, for any $q=3,4,\dots$,
\begin{equation}
  \mathbb{E}\left[\Big(\text{T}(q_{ijk}+v_{ijk};\tau)-\text{T}(v_{ijk};\tau)\Big)^q\right] \leq (2\tau)^{q-2}\mathbb{E}\left[\Big(\text{T}(q_{ijk}+v_{ijk};\tau)-\text{T}(v_{ijk};\tau)\Big)^2\right].
\end{equation}
By Bernstein's inequality, for any $j\in S_1$ and $k\in[K]$,
\begin{equation}\label{eq:M4_bern}
  \begin{split}
    & \mathbb{P}\left(\left|\frac{1}{n}\sum_{i=1}^n\Big[\text{T}(q_{ijk}+v_{ijk};\tau)-\text{T}(v_{ijk};\tau)\Big]-\mathbb{E}\Big[\text{T}(q_{ijk}+v_{ijk};\tau)-\text{T}(v_{ijk};\tau)\Big]\right|\geq t\right)\\
    & \leq 2\exp\left(-\frac{Cnt^2}{V^2+\tau t}\right).
  \end{split}
\end{equation}
Let $t=C_1V\sqrt{\log d/n}+C_2\tau\log d/n$, where $C_1$ and $C_2$ are two positive constants. If $V^2\lesssim \tau t$, the RHS of \eqref{eq:M4_bern} is upper bounded by
$$\begin{aligned}
  \text{RHS} &\leq 2\exp\left(-\frac{Cnt^2}{\tau t}\right)\\
  &=2\exp\left(-\frac{Cn(C_1V\sqrt{\log d/n}+C_2\tau\log d/n)}{\tau}\right)\\
  &\leq 2\exp\left(-\frac{CnC_2\tau\log d/n}{\tau}\right)
  =2\exp\left(-C\log d\right).
\end{aligned}
$$
Conversely, if $V^2\gtrsim \tau t$, the RHS of \eqref{eq:M4_bern} is upper bounded by
$$\begin{aligned}
  \text{RHS} &\leq 2\exp\left(-\frac{Cnt^2}{V^2}\right)\\
  &=2\exp\left(-\frac{Cn(C_1V\sqrt{\log d/n}+C_2\tau\log d/n)^2}{V^2}\right)\\
  &\leq 2\exp\left(-\frac{Cn(C_1V\sqrt{\log d/n})^2}{V^2}\right)=2\exp\left(-C\log d\right).
\end{aligned}
$$
Then, we have
\begin{equation}
  \begin{split} 
    \mathbb{P}\Bigg(&\max_{j\in S_1,k\in[K]}\left|\frac{1}{n}\sum_{i=1}^n\Big[\text{T}(q_{ijk}+v_{ijk};\tau)-\text{T}(v_{ijk};\tau)\Big]-\mathbb{E}\Big[\text{T}(q_{ijk}+v_{ijk};\tau)-\text{T}(v_{ijk};\tau)\Big]\right|\geq t\Bigg) \\ 
    & \lesssim d_1K\exp(-C\log d) \leq C\exp(-C\log d).
  \end{split}
\end{equation}
For the upper bound $t$, plugging in the values of $V^2$ and $\tau$, we have
$$
\begin{aligned}
  t&=C_1\sqrt{\frac{V^2\log d}{n}}+C_2\tau\frac{\log d}{n}\\
  &\asymp M_{x,2+2\lambda}^{1/2}M_{\text{eff},1+\epsilon,s}^{(1-\lambda)/(2+2\epsilon)}\left(\frac{\log d}{n}\right)^\frac{\epsilon+\lambda}{2+2\epsilon}\|\bm L\bm R^\top-\bm L_*\bm R_*^\top\|_\mathrm{F}^\frac{1+\lambda}{2}+M_{\text{eff},1+\epsilon,s}\left(\frac{\log d}{n}\right)^\frac{\epsilon}{1+\epsilon}.
\end{aligned}
$$
Therefore, $\|(\bm{M}_{L,4})_{S_1}\|_\text{F}^2$ is upper bounded as
\begin{equation}\label{eq:matreg_M4}
  \begin{aligned}
      \|(\bm{M}_{L,4})_{S_1}\|_\text{F}^2&\lesssim s_1KM_{x,2+2\lambda}M_{\text{eff},1+\epsilon,s}^{(1-\lambda)/(1+\epsilon)}\left(\frac{\log d}{n}\right)^\frac{\epsilon+\lambda}{1+\epsilon}\|\bm L\bm R^\top-\bm L_*\bm R_*^\top\|_\mathrm{F}^{1+\lambda}\\
      &+s_1KM_{\text{eff},1+\epsilon,s}^2\left(\frac{\log d}{n}\right)^\frac{2\epsilon}{1+\epsilon}.
  \end{aligned}
\end{equation}
Next, we derive the expression of $\phi_{\lambda,\epsilon}$ by partitioning the analysis into two regimes, where the estimation error $\|\bm L\bm R^\top-\bm L_*\bm R_*^\top\|_\mathrm{F}^{1+\lambda}$ is relatively large and small. For
\begin{equation}
  \psi \asymp M_{x,2+2\lambda}^{-1/(1+\lambda)}M_{\text{eff},1+\epsilon,s}^{(1+\lambda+2\epsilon)/[(1+\epsilon)(1+\lambda)]}\left(\frac{\log d}{n}\right)^\frac{\epsilon-\lambda}{(1+\epsilon)(1+\lambda)},
\end{equation}
when $\|\bm L\bm R^\top-\bm L_*\bm R_*^\top\|_\mathrm{F}\lesssim \psi$, i.e., the estimation error is small, \eqref{eq:matreg_M4} is dominated by the second term. By plugging in the expression of $\psi$, \eqref{eq:matreg_M4} gives
\begin{equation}
  \|(\bm{M}_{L,4})_{S_1}\|_\text{F}^2\lesssim s_1KM_{\text{eff},1+\epsilon,s}^2\left(\frac{\log d}{n}\right)^\frac{2\epsilon}{1+\epsilon}.
\end{equation}
Conversely, when $\|\bm L\bm R^\top-\bm L_*\bm R_*^\top\|_\mathrm{F}\gtrsim \psi$, i.e., the estimation error is large, we have that
\begin{equation}
  \|\bm L\bm R^\top-\bm L_*\bm R_*^\top\|_\mathrm{F}^{1+\lambda}=\frac{\|\bm L\bm R^\top-\bm L_*\bm R_*^\top\|_\mathrm{F}^2}{\|\bm L\bm R^\top-\bm L_*\bm R_*^\top\|_\mathrm{F}^{1-\lambda}}\lesssim \psi^{\lambda-1}\|\bm L\bm R^\top-\bm L_*\bm R_*^\top\|_\mathrm{F}^2.
\end{equation}
Then, plugging in the expression of $\psi$, the first term of \eqref{eq:matreg_M4} is upper bounded by
\begin{equation}
  s_1KM_{x,2+2\lambda}^\frac{2}{1+\lambda}M_{\text{eff},1+\epsilon,s}^\frac{2\epsilon(\lambda-1)}{(1+\epsilon)(1+\lambda)}\left(\frac{\log d}{n}\right)^\frac{2\lambda}{1+\lambda}\|\bm L\bm R^\top-\bm L_*\bm R_*^\top\|_\mathrm{F}^2.
\end{equation}
Conbining the two cases, we have the following upper bound for $\|(\bm{M}_{L,4})_{S_1}\|_\text{F}^2$:
\begin{equation}\label{eq:matreg_m4_final}
  \begin{split}
      \|(\bm{M}_{L,4})_{S_1}\|_\text{F}^2&\lesssim s_1KM_{x,2+2\lambda}^\frac{2}{1+\lambda}M_{\text{eff},1+\epsilon,s}^\frac{2\epsilon(\lambda-1)}{(1+\epsilon)(1+\lambda)}\left(\frac{\log d}{n}\right)^\frac{2\lambda}{1+\lambda}\|\bm L\bm R^\top-\bm L_*\bm R_*^\top\|_\mathrm{F}^2\\
      &+s_1KM_{\text{eff},1+\epsilon,s}^2\left(\frac{\log d}{n}\right)^\frac{2\epsilon}{1+\epsilon}.
  \end{split}
\end{equation}

Finally, combining the upper bound of this step and Step 4, and using the fact that 
$$\min\left(\frac{\epsilon}{1+\epsilon},\frac{\lambda}{1+\epsilon},\frac{\lambda}{1+\lambda}\right)=\frac{\min(\epsilon,\lambda)}{1+\epsilon},$$ 
we have
\begin{equation}
\begin{aligned}
    \|(\bm{M}_{L,3})_{S_1}\|_\text{F}^2+\|(\bm{M}_{L,4})_{S_1}\|_\text{F}^2
    \lesssim \phi_{\lambda,\epsilon}\|\bm{L}\bm{R}^\top-\bm{L}_*\bm{R}_*^\top\|_\text{F}^2+ s_1K\left[\frac{M_{\text{eff},1+\epsilon,s}^{1/\epsilon}\log d}{n}\right]^{\frac{2\epsilon}{1+\epsilon}},
\end{aligned}
\end{equation}
where 
$$
\phi_{\lambda,\epsilon}=s_1K\bar{\sigma}_K^{2\lambda}\overline{M}_{x,\text{eff}}^2\left(\frac{\log d}{n}\right)^\frac{2\min(\epsilon,\lambda)}{1+\epsilon},
$$
and 
$$
\overline{M}_{x,\text{eff}}^2=M_{x,2+2\lambda}^2M_{\text{eff},1+\epsilon,s}^\frac{-2\lambda}{1+\epsilon}+M_{x,2+2\lambda}^\frac{2}{1+\lambda}M_{\text{eff},1+\epsilon,s}^\frac{2\epsilon(\lambda-1)}{(1+\epsilon)(1+\lambda)}+M_{x,2+2\lambda}^\frac{2}{1+\lambda}.
$$

In summary, based on the upper bounds in steps 2 to 5, we have
\begin{equation}
  \sum_{j=1}^4\|(\bm{M}_{L,j})_{S_1}\|_\text{F}^2 \lesssim \xi^2_L + \phi_{\lambda,\epsilon}\|\bm{L}\bm{R}^\top-\bm{L}_*\bm{R}_*^\top\|_\text{F}^2,
\end{equation}
where 
$$
  \xi^2_L=s_1K\left[\frac{M_{\text{eff},1+\epsilon,s}^{1/\epsilon}\log d}{n}\right]^{\frac{2\epsilon}{1+\epsilon}}.
$$
\end{proof}

\subsection{Proof of Convergence for Trace Regression}

\begin{proof}[Proof of Theorem \ref{thm:matrix_trace_reg}]

  The main building blocks of the proof, initialization guarantees and robust gradient stability, have been developed in Appendices \ref{append:B.1} and \ref{append:B.2}. Specifically, Proposition \ref{prop:matreg_initial} states that if
  \begin{equation}
    n\gtrsim \mathfrak{C}_1 \cdot \left(s K \right)^{\frac{1+\min(\epsilon,\lambda)}{2\min(\epsilon,\lambda)}}\log d,
  \end{equation}
  where $\mathfrak{C}_1$ is defined in Proposition \ref{prop:matreg_initial}, then the Robust Dantzig Selector satisfies
  \begin{equation}
    \|\bm{\Theta}_{0} - \bm{\Theta}_\ast\|_{\textup{F}}^2 \lesssim \alpha_x\beta_x^{-1} \sigma^2_K,
  \end{equation}
  such that the initialization requirement in Theorem \ref{thm:2} is satisfied.

  In addition, Proposition \ref{prop:stability_linear_reg} states that the robust de-scaled gradient estimators satisfy the following stability conditions:
  \begin{equation}
    \begin{split}
      & \|\bm{G}_L(\bm{L},\bm{R};\tau) - \mathbb{E}[\nabla_{\bm{L}}f(\bm{L},\bm{R};z_i)](\bm{R}^\top\bm{R})^{-1/2}\|_\textup{F}^2 \lesssim \xi^2_L + \phi_{\lambda,\epsilon}\|\bm{L}\bm{R}^\top-\bm{L}_*\bm{R}_*^\top\|_\textup{F}^2,\\
      & \|\bm{G}_R(\bm{L},\bm{R};\tau) - \mathbb{E}[\nabla_{\bm{R}}f(\bm{L},\bm{R};z_i)](\bm{R}^\top\bm{R})^{-1/2}\|_\textup{F}^2 \lesssim \xi^2_L + \phi_{\lambda,\epsilon}\|\bm{L}\bm{R}^\top-\bm{L}_*\bm{R}_*^\top\|_\textup{F}^2,
    \end{split}
  \end{equation}
  where 
  $$
    \phi_{\lambda,\epsilon}=s_1K\bar{\sigma}_K^{2\lambda}\overline{M}_{x,\textup{eff}}^2\left(\frac{\log d}{n}\right)^\frac{2\min(\epsilon,\lambda)}{1+\epsilon},
  $$ 
  $$
    \overline{M}_{x,\text{eff}}^2=M_{x,2+2\lambda}^2M_{\text{eff},1+\epsilon,s}^\frac{-2\lambda}{1+\epsilon}+M_{x,2+2\lambda}^\frac{2}{1+\lambda}M_{\text{eff},1+\epsilon,s}^\frac{2\epsilon(\lambda-1)}{(1+\epsilon)(1+\lambda)}+M_{x,2+2\lambda}^\frac{2}{1+\lambda},
  $$
  $$
    \xi^2_L=s_1K\left[\frac{M_{\textup{eff},1+\epsilon,s}^{1/\epsilon}\log d}{n}\right]^{\frac{2\epsilon}{1+\epsilon}},\quad\text{and}\quad
    \xi^2_R=s_2K\left[\frac{M_{\textup{eff},1+\epsilon,s}^{1/\epsilon}\log d}{n}\right]^{\frac{2\epsilon}{1+\epsilon}}.
  $$

  In the last step, we apply the results above to Theorem \ref{thm:2}. First, we examine the conditions in Theorem \ref{thm:2} hold. Under Assumption \ref{asmp:moment}, by Lemma 3.11 in \citet{bubeck2015convex}, we can show that the RCG condition in Definition \ref{def:1} is directly implied by the restricted strong convexity and smoothness with $\alpha=\alpha_x$ and $\beta=\beta_x$.

  Next, we can show that the de-scaled robust gradient estimators are stable: i.e., with probability at least $1-C\exp(-\log d)$,
  \begin{equation}
    \|\bm{G}_L(\bm{L},\bm{R};\tau)_{S_1}\|_\text{F}^2\lesssim \xi^2_L + \phi_{\lambda,\epsilon}\|\bm{L}\bm{R}^\top-\bm{L}_*\bm{R}_*^\top\|_\text{F}^2.
  \end{equation}
  Similarly, the similar bound holds for $\bm{G}_R(\bm{L},\bm{R};\tau)_{S_2}$,
  \begin{equation}
    \|\bm{G}_R(\bm{L},\bm{R};\tau)_{S_2}\|_\text{F}^2\lesssim \xi^2_R + \phi_{\lambda,\epsilon}\|\bm{L}\bm{R}^\top-\bm{L}_*\bm{R}_*^\top\|_\text{F}^2,
  \end{equation}
  where 
  $$
    \xi^2_R=s_2K\left[\frac{M_{\text{eff},1+\epsilon,s}^{1/\epsilon}\log d}{n}\right]^{\frac{2\epsilon}{1+\epsilon}}.
  $$
  As the sample size satisfies that
  \begin{equation}\begin{aligned}
    &n\gtrsim \left(sK\alpha_x^{-2}\bar{\sigma}^{2\lambda}\overline{M}_{x,\text{eff}}^2\right)^\frac{1+\epsilon}{2\min(\epsilon,\lambda)}\log d,\\
    \text{and}~& n\gtrsim \left(sK\sigma_K^{-2}\alpha_x^{-3}\beta_x\right)^\frac{1+\epsilon}{2\epsilon}M_{\text{eff},1+\epsilon,s}^{1/\epsilon}\log d,
  \end{aligned}
  \end{equation}
  we have $\phi_{\lambda,\epsilon}\lesssim\alpha_x^2$ and $\xi_L^2+\xi_R^2\lesssim \alpha_x^3\beta_x^{-1}\sigma_K^2$, respectively. Hence, the conditions of Theorem \ref{thm:2} hold, which implies that
  \begin{equation}
    d(\bm{F}^{(j)},\bm{F}_*)^2 \leq (1-C\alpha_x\beta_x^{-1})^t d(\bm{F}^{(0)},\bm{F}_*)^2 + C\alpha_x^{-2}sK\left[\frac{M_{\text{eff},1+\epsilon,s}^{1/\epsilon}\log d}{n}\right]^{\frac{2\epsilon}{1+\epsilon}},
  \end{equation}
  and
  \begin{equation}
    \|\bm{\Theta}^{(j)}-\bm{\Theta}_*\|_\text{F}^2 \lesssim (1-C\alpha_x\beta_x^{-1})^t \|\bm{\Theta}^{(0)}-\bm{\Theta}_*\|_\text{F}^2 + C\alpha_x^{-2}sK\left[\frac{M_{\text{eff},1+\epsilon,s}^{1/\epsilon}\log d}{n}\right]^{\frac{2\epsilon}{1+\epsilon}}.
  \end{equation}
  After sufficient iterations, the computational error term is dominated by the statistical error term, implying that
  $$
    \|\wh{\bbm\Theta}-\bm{\Theta}_*\|_\text{F}^2 \lesssim\alpha_x^{-2}sK\left[\frac{M_{\text{eff},1+\epsilon,s}^{1/\epsilon}\log d}{n}\right]^{\frac{2\epsilon}{1+\epsilon}}.
  $$

\end{proof}

\section{Matrix GLMs}\label{append:C}

\subsection{Initialization Guarantees and Gradient Stability}

First, we present the stability of the robust gradient estimators for matrix GLMs. 

\begin{proposition}
  [Stability of Robust De-scaled Gradient Estimators]
  \label{prop:stability_GLM}
  Under the conditions in Theorem \ref{thm:matrix_trace_reg}, with a probability at least $1-C\exp(-C\log d)$, the robust de-scaled gradient estimators satisfy
  \begin{equation}
    \begin{split}
      & \|\bm{G}_L(\bm{L},\bm{R};\tau) - \mathbb{E}[\nabla_{\bm{L}}f(\bm{L},\bm{R};z_i)](\bm{R}^\top\bm{R})^{-1/2}\|_\textup{F}^2 \lesssim \xi^2_L + \phi_{\lambda,\epsilon}\|\bm{L}\bm{R}^\top-\bm{L}_*\bm{R}_*^\top\|_\textup{F}^2,\\
      & \|\bm{G}_R(\bm{L},\bm{R};\tau) - \mathbb{E}[\nabla_{\bm{R}}f(\bm{L},\bm{R};z_i)](\bm{R}^\top\bm{R})^{-1/2}\|_\textup{F}^2 \lesssim \xi^2_L + \phi_{\lambda,\epsilon}\|\bm{L}\bm{R}^\top-\bm{L}_*\bm{R}_*^\top\|_\textup{F}^2,
    \end{split}
  \end{equation}
  where 
  $$
    \phi_{\lambda,\epsilon} = sK\bar{\sigma}_K^{2\lambda}\overline{M}_{x,\textup{eff}}^2\left(\frac{\log d}{n}\right)^\frac{2\min(\epsilon,\lambda)}{1+\epsilon},
  $$ 
  $$
    \overline{M}_{x,\mathrm{eff}}^2=M_{x,2+2\lambda}^2M_{\mathrm{eff},1+\epsilon,s}^\frac{-2\lambda}{1+\epsilon}+M_{x,2+2\lambda}^\frac{2}{1+\lambda}M_{\mathrm{eff},1+\epsilon,s}^\frac{2\epsilon(\lambda-1)}{(1+\epsilon)(1+\lambda)}+M_{x,2+2\lambda}^\frac{2}{1+\lambda},
  $$
  $$
    \xi^2_L=s_1K\left[\frac{M_{\textup{eff},1+\epsilon,s}^{1/\epsilon}\log d}{n}\right]^{\frac{2\epsilon}{1+\epsilon}},\quad\text{and}\quad
    \xi^2_R=s_2K\left[\frac{M_{\textup{eff},1+\epsilon,s}^{1/\epsilon}\log d}{n}\right]^{\frac{2\epsilon}{1+\epsilon}}.
  $$
\end{proposition}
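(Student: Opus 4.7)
The plan is to mirror the four-term decomposition used in the proof of Proposition \ref{prop:stability_linear_reg} for trace regression, and replace the quadratic-loss arguments by Lipschitz-type arguments governed by Assumption \ref{asmp:Lipschitz}. Denote $\bm Z_i=\mathcal{P}(\bm X_i)$, $\widetilde{\bm R}=\bm R(\bm R^\top\bm R)^{-1/2}=[\widetilde{\bm r}_1,\dots,\widetilde{\bm r}_K]$, and write the $(j,k)$-entry of the gradient summand as $\{g'(\langle\bm Z_i,\bm L\bm R^\top\rangle)-Y_i\}\bm c_j^\top\bm Z_i\widetilde{\bm r}_k$. The same add/subtract trick as before gives the decomposition
\begin{equation}
\bm G_L(\bm L,\bm R;\tau)-\mathbb{E}[\nabla_{\bm L}f](\bm R^\top\bm R)^{-1/2}=\bm M_{L,1}+\bm M_{L,2}+\bm M_{L,3}+\bm M_{L,4},
\end{equation}
where $\bm M_{L,1}$ is the truncation bias at the truth, $\bm M_{L,2}$ is the stochastic fluctuation at the truth, $\bm M_{L,3}$ is the deterministic perturbation between $(\bm L,\bm R)$ and $(\bm L_*,\bm R_*)$ (including the induced shift in the truncation bias), and $\bm M_{L,4}$ is the empirical fluctuation of the perturbation.

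Next I would establish the two moment ingredients needed to drive these bounds through. At the truth, define $v_{ijk}=\{g'(\langle\bm Z_i,\bm L_*\bm R_*^\top\rangle)-Y_i\}\bm c_j^\top\bm Z_i\widetilde{\bm r}_k$. Since $\mathbb{E}[Y_i\mid\bm X_i]=g'(\langle\bm Z_i,\bm L_*\bm R_*^\top\rangle)$ in a GLM, the conditional residual moment assumption $M_{e,1+\epsilon}<\infty$ combined with the predictor moment on $\bm c_j^\top\bm Z_i\widetilde{\bm r}_k$ yields, by conditioning on $\bm X_i$ and using Cauchy-Schwarz with $\widetilde{\bm r}_k\in\mathbb{S}(d_2,s_R)$, the bound $\mathbb{E}[|v_{ijk}|^{1+\epsilon}]\lesssim M_{\mathrm{eff},1+\epsilon,s}$, identically to the trace regression case. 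For the perturbation term $q_{ijk}=\{g'(\langle\bm Z_i,\bm L\bm R^\top\rangle)-g'(\langle\bm Z_i,\bm L_*\bm R_*^\top\rangle)\}\bm c_j^\top\bm Z_i\widetilde{\bm r}_k$, Assumption \ref{asmp:Lipschitz} gives $|q_{ijk}|\leq L\cdot|\langle\bm Z_i,\bm L\bm R^\top-\bm L_*\bm R_*^\top\rangle|\cdot|\bm c_j^\top\bm Z_i\widetilde{\bm r}_k|$, so Cauchy-Schwarz together with $M_{x,2+2\lambda}<\infty$ yields $\mathbb{E}[|q_{ijk}|^{1+\lambda}]\lesssim L^{1+\lambda}M_{x,2+2\lambda}\|\bm L\bm R^\top-\bm L_*\bm R_*^\top\|_{\mathrm{F}}^{1+\lambda}$, which is the same form as \eqref{eq:moment2} up to the constant $L^{1+\lambda}$.

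With these two moment bounds in hand, I would re-run steps 2--5 of the proof of Proposition \ref{prop:stability_linear_reg} verbatim. For $\bm M_{L,1}$, Markov's inequality on $|v_{ijk}|^{1+\epsilon}$ combined with the choice $\tau\asymp(nM_{\mathrm{eff},1+\epsilon,s}/\log d)^{1/(1+\epsilon)}$ delivers the bound $\|(\bm M_{L,1})_{S_1}\|_{\mathrm{F}}^2\lesssim s_1K[M_{\mathrm{eff},1+\epsilon,s}^{1/\epsilon}\log d/n]^{2\epsilon/(1+\epsilon)}$. For $\bm M_{L,2}$, Bernstein's inequality applied to the truncated variables $\mathrm{T}(v_{ijk};\tau)$ (whose variance is $\lesssim\tau^{1-\epsilon}M_{\mathrm{eff},1+\epsilon,s}$ and whose bound is $\tau$), together with a union bound over $|S_1|\leq s_1$ and $k\in[K]$, gives the same order. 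For $\bm M_{L,3}$ and $\bm M_{L,4}$, the Lipschitz bound on $q_{ijk}$ plays exactly the role of the linear error in trace regression, so the decomposition into a small-error regime and a large-error regime (as in \eqref{eq:matreg_m4_final}) reproduces the sensitivity factor $\phi_{\lambda,\epsilon}$ multiplied by $\|\bm L\bm R^\top-\bm L_*\bm R_*^\top\|_{\mathrm{F}}^2$, up to absorbing a polynomial factor in $L$ into the constants. The analogous bound for $\bm G_R$ follows symmetrically by swapping $(\bm L,\bm R)$.

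I expect the main technical obstacle to be the clean handling of $\bm M_{L,3}$ and $\bm M_{L,4}$, where the truncation operator must be controlled on the difference of two (perturbed vs.\ unperturbed) summands. The identity $|\mathrm{T}(a+b,\tau)-\mathrm{T}(b,\tau)|\leq|a|$ used in the trace regression proof still applies, but now with $a=-q_{ijk}$ whose tail is governed by the Lipschitz constant $L$ and by the predictor moment $M_{x,2+2\lambda}$ rather than by the noise moment; consequently, the phase-splitting argument that produces $\phi_{\lambda,\epsilon}$ must be redone with the new constant. Once that bookkeeping is completed, the structure of the proof is identical to Proposition \ref{prop:stability_linear_reg}, and the asserted bounds for $\xi_L^2$, $\xi_R^2$, and $\phi_{\lambda,\epsilon}$ follow directly.
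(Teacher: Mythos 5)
Your proposal follows essentially the same route as the paper's own proof: the identical four-term decomposition $\bm M_{L,1}+\bm M_{L,2}+\bm M_{L,3}+\bm M_{L,4}$, the same two moment bounds (with the Lipschitz condition on $g'$ reducing the perturbation term $q_{ijk}$ to the trace-regression form up to a factor $L^{1+\lambda}$), and the same Markov/Bernstein/phase-splitting machinery for the four bounds. The paper does exactly this bookkeeping in Appendix C, so your plan is correct and complete in outline.
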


\begin{proof}[Proof of Proposition \ref{prop:stability_GLM}]

For matrix GLMs, we use similar techniques as \ref{append:B.2} to prove that the stability condition holds. Denoting $\bm{Z}_i=\mathcal{P}(\bm{X}_i)$ and $\widetilde{\bm{R}}=\bm{R}(\bm{R}^\top\bm{R})^{-1/2}$, by definition, we have
\begin{equation}
  \begin{split}
    & \bm{G}_L(\bm{L},\bm{R};\tau) -\mathbb{E}[\nabla_{\bm{L}}f(\bm{L},\bm{R};z_i)](\bm{R}^\top\bm{R})^{-1/2}\\
    = & \frac{1}{n}\sum_{i=1}^n\text{T}\left(\left\{g'(\langle\bm{Z}_i,\bm{L}\bm{R}^\top\rangle)-Y_i\right\}\bm{Z}_i\bm{R}(\bm{R}^\top\bm{R})^{-1/2};\tau\right) - \mathbb{E}\left[\{g'(\langle\bm{Z}_i,\bm{L}\bm{R}^\top\rangle)-Y_i\}\bm{Z}_i\widetilde{\bm{R}}\right]\\
    = & \Bigg\{ \mathbb{E}\left[\text{T}\left(\{g'(\langle\bm{Z}_i,\bm{L}_*\bm{R}_*^\top\rangle)-Y_i\}\bm{Z}_i\widetilde{\bm{R}},\tau\right)\right] - \mathbb{E}\left[\{g'(\langle\bm{Z}_i,\bm{L}_*\bm{R}_*^\top\rangle)-Y_i\}\bm{Z}_i\widetilde{\bm{R}}\right]\Bigg\}\\
    + & \Bigg\{\frac{1}{n}\sum_{i=1}^n\text{T}\left(\{g'(\langle\bm{Z}_i,\bm{L}_*\bm{R}_*^\top\rangle)-Y_i\}\bm{Z}_i\widetilde{\bm{R}},\tau\right) - \mathbb{E}\left[\text{T}\left(\{g'(\langle\bm{Z}_i,\bm{L}_*\bm{R}_*^\top\rangle)-Y_i\}\bm{Z}_i\widetilde{\bm{R}},\tau\right)\right] \Bigg\}\\
    + & \Bigg\{ \mathbb{E}\left[\{g'(\langle\bm{Z}_i,\bm{L}_*\bm{R}_*^\top\rangle)-Y_i\}\bm{Z}_i\widetilde{\bm{R}}\right] - \mathbb{E}\left[\{g'(\langle\bm{Z}_i,\bm{L}\bm{R}^\top\rangle)-Y_i\}\bm{Z}_i\widetilde{\bm{R}}\right] \\
    & + \mathbb{E}\left[\text{T}\left(\{g'(\langle\bm{Z}_i,\bm{L}\bm{R}^\top\rangle)-Y_i\}\bm{Z}_i\widetilde{\bm{R}};\tau\right)\right] - \mathbb{E}\left[\text{T}\left(\{g'(\langle\bm{Z}_i,\bm{L}_*\bm{R}_*^\top\rangle)-Y_i\}\bm{Z}_i\widetilde{\bm{R}};\tau\right)\right]\Bigg\}\\
    + & \Bigg\{ \frac{1}{n}\sum_{i=1}^n\text{T}\left(\{g'(\langle\bm{Z}_i,\bm{L}\bm{R}^\top\rangle)-Y_i\}\bm{Z}_i\widetilde{\bm{R}};\tau\right) - \frac{1}{n}\sum_{i=1}^n\text{T}\left(\{g'(\langle\bm{Z}_i,\bm{L}_*\bm{R}_*^\top\rangle)-Y_i\}\bm{Z}_i\widetilde{\bm{R}};\tau\right)\\
    & - \mathbb{E}\left[\text{T}\left(\{g'(\langle\bm{Z}_i,\bm{L}\bm{R}^\top\rangle)-Y_i\}\bm{Z}_i\widetilde{\bm{R}};\tau\right)\right] + \mathbb{E}\left[\text{T}\left(\{g'(\langle\bm{Z}_i,\bm{L}_*\bm{R}_*^\top\rangle)-Y_i\}\bm{Z}_i\widetilde{\bm{R}};\tau\right)\right]\Bigg\}\\
    =: &~ \bm{M}_{L,1} + \bm{M}_{L,2} + \bm{M}_{L,3} + \bm{M}_{L,4}.
  \end{split}
\end{equation}

\noindent\textit{Step 1.} (Moment bounds)

\noindent Denote $\widetilde{\bm{R}}=[\widetilde{\bm{r}}_1,\dots,\widetilde{\bm{r}}_K]$, where each $\widetilde{\bm{r}}_k$ is a $s_R$-sparse vector with unit Euclidean norm. Denoting the $(j,k)$-th entry of $\{g'(\langle\bm{Z}_i,\bm{L}_*\bm{R}_*^\top\rangle)-Y_i\}\bm{Z}_i\widetilde{\bm{R}}$ as $v_{ijk}=\{g'(\langle\bm{Z}_i,\bm{L}_*\bm{R}_*^\top\rangle)-Y_i\}\bm{c}_j^\top\bm{Z}_i\widetilde{\bm{r}}_k$, its $(1+\epsilon)$-th moment is
\begin{equation}\label{eq:GLM_moment1}
  \begin{split}
    & \mathbb{E}\left[|v_{ijk}|^{1+\epsilon}\right] = \mathbb{E}\left[\mathbb{E}\left[|\{g'(\langle\bm{Z}_i,\bm{L}_*\bm{R}_*^\top\rangle)-Y_i\}\bm{c}_j^\top\bm{Z}_i\widetilde{\bm{r}}_k|^{1+\epsilon}|\bm{X}_i\right]\right]\\
    = & \mathbb{E}\left[\mathbb{E}\left[\left|\{g'(\langle\bm{Z}_i,\bm{L}_*\bm{R}_*^\top\rangle)-Y_i\}\right|^{1+\epsilon}|\bm{X}_i\right]\cdot\left|\bm{c}_j^\top\bm{Z}_i\widetilde{\bm{r}}_j\right|^{1+\epsilon}\right]\\
    \leq & M_{e,1+\epsilon}\cdot M_{x,1+\epsilon,s}=M_{\text{eff},1+\epsilon,s}.
  \end{split}
\end{equation}
In addition, the $(j,k)$-th entry of $\bm M_{L,3}$ is
\begin{equation}
  \begin{split}
    & \mathbb{E}[\{g'(\langle\bm{Z}_i,\bm{L}_*\bm{R}_*^\top\rangle) - g'(\langle\bm{Z}_i,\bm{L}\bm{R}^\top\rangle)\}\bm{c}_j^\top\bm{Z}_i\widetilde{\bm{r}}_k]\\
    & + \mathbb{E}[\text{T}(\{g'(\langle\bm{Z}_i,\bm{L}\bm{R}^\top\rangle) - Y_i\rangle)\}\bm{c}_j^\top\bm{Z}_i\widetilde{\bm{r}}_k;\tau)] - \mathbb{E}[\text{T}(\{g'(\langle\bm{Z}_i,\bm{L}_*\bm{R}_*^\top\rangle) - Y_i\rangle)\}\bm{c}_j^\top\bm{Z}_i\widetilde{\bm{r}}_k;\tau)]\\
    =: &~ \mathbb{E}[q_{ijk}] + \mathbb{E}[\text{T}(-q_{ijk}+v_{ijk};\tau)] - \mathbb{E}[\text{T}(v_{ijk};\tau)].
  \end{split}
\end{equation}
Next, we develop an upper bound for $(1+\lambda)$-th moment for $q_{ijk}$. Note that
\begin{equation}\label{eq:GLM_moment2}
  \begin{split}
    & \mathbb{E}\left[|q_{ijk}|^{1+\lambda}\right] = \mathbb{E}\left[\left|g'(\langle\bm{Z}_i,\bm{L}\bm{R}^\top\rangle)-g'(\langle\bm{Z}_i,\bm{L}_*\bm{R}_*^\top\rangle)\right|^{1+\lambda}\cdot\left|\text{vec}(\bm{Z}_i)^\top(\widetilde{\bm{r}}_k\otimes\bm{c}_j)\right|^{1+\lambda}\right]\\
    \leq & \mathbb{E}\left[\left|L\langle\bm{Z}_i,\bm{L}\bm{R}^\top-\bm{L}_*\bm{R}_*^\top\rangle\right|^{1+\lambda}\cdot\left|\text{vec}(\bm{Z}_i)^\top(\widetilde{\bm{r}}_k\otimes\bm{c}_j)\right|^{1+\lambda}\right]\\
    \leq & L^{1+\lambda}\cdot\mathbb{E}\left[|\text{vec}(\bm{Z}_i)^\top(\widetilde{\bm{r}}_k\otimes\bm{c}_j)|^{2+2\lambda}\right]^{1/2}\cdot\mathbb{E}\left[\left|\text{vec}(\bm{Z}_i)^\top\frac{\text{vec}(\bm{L}\bm{R}^\top-\bm{L}_*\bm{R}_*^\top)}{\|\bm{L}\bm{R}^\top-\bm{L}_*\bm{R}_*^\top\|_\text{F}}\right|^{2+2\lambda}\right]^{1/2}\\
    & \cdot\|\bm{L}\bm{R}^\top-\bm{L}_*\bm{R}_*^\top\|_\text{F}^{1+\lambda}\\
    \lesssim & M_{x,2+2\lambda}\cdot\|\bm{L}\bm{R}^\top-\bm{L}_*\bm{R}_*^\top\|_\text{F}^{1+\lambda}.
  \end{split}
\end{equation}~

\noindent\textit{Step 2.} (Bound $\|(\bm{M}_{L,1})_{S_1}\|_\text{F}^2$)

\noindent For $\bm M_{L,1}$ and any $S_1$, we have
\begin{equation}
  \begin{split} 
    \|\bm M_{L,1}\|_\text{F}^2=  \sum_{j\in S_1}\sum_{k=1}^K \Big|\mathbb{E}[\text{T}(v_{ijk};\tau)]-\mathbb{E}[v_{ijk}]\Big|^2.
  \end{split}
\end{equation}
For any $\ell\in S_1$ and $k\in[K]$, by similar argument as \eqref{eq:matreg_M1} and the moment bound in \eqref{eq:GLM_moment1}, we have
\begin{equation}
  \begin{split}
    \Big|\mathbb{E}[\text{T}(v_{ijk};\tau)]-\mathbb{E}[v_{ijk}]\Big| \lesssim  M_{\text{eff},1+\epsilon,s}\cdot\tau^{-\epsilon} \asymp \left[\frac{M_{\text{eff},1+\epsilon,s}^{1/\epsilon}\log d}{n}\right]^{\epsilon/(1+\epsilon)},
  \end{split}
\end{equation}
with truncation parameter $\tau\asymp(nM_{\text{eff},1+\epsilon,s}/\log d)^{1/(1+\epsilon)}$. Hence, for any $S_1$ with $|S_1|\leq s_1$,
\begin{equation}
  \|(\bm{M}_{L,1})_{S_1}\|_\text{F}^2 \lesssim s_1K \left[\frac{M_{\text{eff},1+\epsilon,s}^{1/\epsilon}\log d}{n}\right]^{2\epsilon/(1+\epsilon)}.
\end{equation}~

\noindent\textit{Step 3.} (Bound $\|(\bm{M}_{L,2})_{S_1}\|_\text{F}^2$)

\noindent By definition,
\begin{equation}
  \|(\bm{M}_{L,2})_{S_1}\|_\text{F}^2 = \sum_{j\in S_1}\sum_{k=1}^K \left|\frac{1}{n}\sum_{i=1}^n \text{T}(v_{ijk};\tau) - \mathbb{E}[\text{T}(v_{ijk};\tau)]\right|^2.
\end{equation}
For each $i=1,\dots,n$, by the nature of truncation and moment bound in \eqref{eq:GLM_moment1}, we have the upper bound for the variance
\begin{equation}
  \text{var}(\text{T}(v_{ijk};\tau)) \leq \mathbb{E}\left[\text{T}(v_{ijk};\tau)^2\right] \leq \tau^{(1-\epsilon)}\cdot\mathbb{E}\left[|v_{ijk}|^{1+\epsilon}\right]\lesssim\tau^{1-\epsilon}M_{\text{eff},1+\epsilon,s}.
\end{equation}
By Bernstein's inequality, for any $j\in S_1$ and $k\in[K]$, and $0<t\lesssim \tau^{-\epsilon}M_{\text{eff},1+\epsilon,s}$,
\begin{equation}
  \mathbb{P}\left(\left|\frac{1}{n}\sum_{i=1}^n\text{T}(v_{ijk};\tau) - \mathbb{E}[\text{T}(v_{ijk};\tau)]\right|\geq t\right)\leq 2\exp\left(-\frac{nt^2}{4\tau^{1-\epsilon}M_{\text{eff}}}\right).
\end{equation}
Letting $t=CM_{\text{eff},1+\epsilon,s}^{1/(1+\epsilon)}\log(d)^{\epsilon/(1+\epsilon)}n^{-\epsilon/(1+\epsilon)}$, similar to \eqref{eq:matreg_M2_bern}, we have
\begin{equation}
  \mathbb{P}\left(\max_{\substack{j\in[d_1]\\ k\in[K]}}\left|\frac{1}{n}\sum_{i=1}^n\text{T}(v_{ijk};\tau) - \mathbb{E}[\text{T}(v_{ijk};\tau)]\right|\gtrsim \left[\frac{M_{\text{eff},1+\epsilon,s}^{1/\epsilon}\log d}{n}\right]^{\frac{\epsilon}{1+\epsilon}} \right)\leq C\exp\left(-C\log d\right).
\end{equation}
Hence, with probability at least $1-C\exp(-C\log d)$, for any $S_1$ with $|S_1|\leq s_1$,
\begin{equation}
  \|(\bm{M}_{L,2})_{S_1}\|_\text{F}^2 \lesssim s_1K\left[\frac{M_{\text{eff},1+\epsilon,s}^{1/\epsilon}\log d}{n}\right]^{2\epsilon/(1+\epsilon)}.
\end{equation}~

\noindent\textit{Step 4.} (Bound $\|(\bm{M}_{L,3})_{S_1}\|_\text{F}^2$)

\noindent By definition, for any $S_1$,
\begin{equation}
  \|(\bm{M}_{L,3})_{S_1}\|_\text{F}^2 = \sum_{j\in S_1}\sum_{k=1}^K\Big|\mathbb{E}[q_{ijk}] - \mathbb{E}\Big[\text{T}(q_{ijk}+v_{ijk};\tau) - \text{T}(v_{ijk};\tau)\Big]\Big|^2.
\end{equation}
Similar to \eqref{eq:matreg_M3}, by the nature of truncation operator, moment conditions in \eqref{eq:GLM_moment1} and \eqref{eq:GLM_moment2}, and Markov's inequality,  we have
\begin{equation}
  \begin{split}
    & \Big|\mathbb{E}[q_{ijk}] - \mathbb{E}\Big[\text{T}(q_{ijk}+v_{ijk};\tau) - \text{T}(v_{ijk};\tau)\Big]\Big| \\
    \leq &~ \mathbb{E}\left[|q_{ijk}|^{1+\lambda}\right]\cdot\tau^{-\lambda}~+~\mathbb{E}\left[|q_{ijk}|^{1+\lambda}\right]^{\frac{1}{1+\lambda}}\cdot\mathbb{E}\left[|v_{ijk}|^{1+\epsilon}\right]^{\frac{\lambda}{1+\lambda}}\cdot\tau^{-\frac{\lambda(1+\epsilon)}{1+\lambda}}\\
    \leq &~ M_{x,2+2\lambda}\cdot\left(\frac{\log d}{nM_{\text{eff},1+\epsilon,s}}\right)^{\frac{\lambda}{1+\epsilon}}\cdot\|\bm{L}\bm{R}^\top-\bm{L}_*\bm{R}_*^\top\|_\text{F}^{1+\lambda}\\
    & + M_{x,2+2\lambda}^{1/(1+\lambda)}\cdot M_{\text{eff},1+\epsilon,s}^{\lambda/(1+\lambda)}\cdot\left(\frac{\log d}{nM_{\text{eff},1+\epsilon,s}}\right)^{\frac{\lambda}{1+\lambda}}\cdot\|\bm{L}\bm{R}^\top-\bm{L}_*\bm{R}_*^\top\|_\text{F}\\
    \lesssim & ~ \bar{\sigma}_K^{\lambda}\left(M_{x,2+2\lambda}M_{\text{eff},1+\epsilon,s}^{-\lambda/(1+\epsilon)}+M_{x,2+2\lambda}^{1/(1+\lambda)}\right)\left(\frac{\log d}{n}\right)^{\min\left(\frac{\lambda}{1+\lambda},\frac{\lambda}{1+\epsilon}\right)}\|\bm{L}\bm{R}^\top-\bm{L}_*\bm{R}_*^\top\|_\text{F},
  \end{split}
\end{equation}
where $\bar{\sigma}_K=\max (\sigma_K(\bm L_*\bm R_*^\top),1)$.

Therefore, we have
\begin{equation}
  \begin{split}  
    \|(\bm{M}_{L,3})_{S_1}\|_\text{F}^2 & \lesssim s_1K\bar{\sigma}_K^{2\lambda}\left(M_{x,2+2\lambda}^2M_{\text{eff},1+\epsilon,s}^{-2\lambda/(1+\epsilon)}+M_{x,2+2\lambda}^{2/(1+\lambda)}\right)\left(\frac{\log d}{n}\right)^{\min\left(\frac{2\lambda}{1+\lambda},\frac{2\lambda}{1+\epsilon}\right)}\|\bm{L}\bm{R}^\top-\bm{L}_*\bm{R}_*^\top\|_\text{F}^2.
  \end{split}
\end{equation}~

\noindent\textit{Step 5.} (Bound $\|(\bm{M}_{L,4})_{S_1}\|_\text{F}^2$)

\noindent By definition, for any $S_1$,
\begin{equation}
  \|(\bm{M}_{L,4})_{S_1}\|_\text{F}^2 = \sum_{j\in S_1}\sum_{k=1}^K\Bigg|\frac{1}{n}\sum_{i=1}^n\Big[\text{T}(-q_{ijk}+v_{ijk};\tau)-\text{T}(v_{ijk};\tau)\Big] -\mathbb{E}\Big[\text{T}(-q_{ijk}+v_{ijk};\tau)-\text{T}(v_{ijk};\tau)\Big] \Bigg|^2.
\end{equation}
We derive the upper bound using the same techniques as in Step 5 of Appendix \ref{append:B.2}. For each $i=1,\dots,n$, we have $|\text{T}(-q_{ijk}+v_{ijk};\tau)-\text{T}(v_{ijk};\tau)|\leq 2\tau$. Meanwhile, for the variance, we obtain
\begin{equation}
  \begin{split}
    \mathbb{E}\left[\Big(\text{T}(-q_{ijk}+v_{ijk};\tau)-\text{T}(v_{ijk};\tau)\Big)^2\right] & \leq (2\tau)^{1-\lambda}\cdot\mathbb{E}\left[|q_{ijk}|^{1+\lambda}\right]\\
    & \lesssim \tau^{1-\lambda}M_{x,2+2\lambda}\|\bm{L}\bm{R}^\top-\bm{L}_*\bm{R}_*^\top\|_\text{F}^{1+\lambda}=: V^2.
  \end{split}
\end{equation}
By Bernstein's inequality, for any $j\in S_1$ and $k\in[K]$,
\begin{equation}\label{eq:GLM_M4_bern}
  \begin{split}
    & \mathbb{P}\left(\left|\frac{1}{n}\sum_{i=1}^n\Big[\text{T}(q_{ijk}+v_{ijk};\tau)-\text{T}(v_{ijk};\tau)\Big]-\mathbb{E}\Big[\text{T}(q_{ijk}+v_{ijk};\tau)-\text{T}(v_{ijk};\tau)\Big]\right|\geq t\right)\\
    & \leq 2\exp\left(-\frac{Cnt^2}{V^2+\tau t}\right).
  \end{split}
\end{equation}
Let $t=C_1V\sqrt{\log d/n}+C_2\tau\log d/n$, where $C_1$ and $C_2$ are two positive constants. If $V^2\lesssim \tau t$, the RHS of \eqref{eq:GLM_M4_bern} is upper bounded by
$$\begin{aligned}
  \text{RHS} &\leq 2\exp\left(-\frac{Cnt^2}{\tau t}\right)\\
  &=2\exp\left(-\frac{Cn(C_1V\sqrt{\log d/n}+C_2\tau\log d/n)}{\tau}\right) \leq 2\exp\left(-C\log d\right).
\end{aligned}
$$
Conversely, if $V^2\gtrsim \tau t$, the RHS of \eqref{eq:GLM_M4_bern} is upper bounded by
$$\begin{aligned}
  \text{RHS} &\leq 2\exp\left(-\frac{Cnt^2}{V^2}\right)\\
  &=2\exp\left(-\frac{Cn(C_1V\sqrt{\log d/n}+C_2\tau\log d/n)^2}{V^2}\right)=2\exp\left(-C\log d\right).
\end{aligned}
$$
Then, we have
\begin{equation}
  \begin{split} 
    \mathbb{P}\Bigg(&\max_{j\in S_1,k\in[K]}\left|\frac{1}{n}\sum_{i=1}^n\Big[\text{T}(q_{ijk}+v_{ijk};\tau)-\text{T}(v_{ijk};\tau)\Big]-\mathbb{E}\Big[\text{T}(q_{ijk}+v_{ijk};\tau)-\text{T}(v_{ijk};\tau)\Big]\right|\geq t\Bigg) \\ 
    & \lesssim d_1K\exp(-C\log d) \leq C\exp(-C\log d).
  \end{split}
\end{equation}
For the upper bound $t$, plugging in the values of $V^2$ and $\tau$, we have
\begin{equation}
  \begin{aligned}
    t&=C_1\sqrt{\frac{V^2\log d}{n}}+C_2\tau\frac{\log d}{n}\\
    & \lesssim M_{x,2+2\lambda}^{1/2}M_{\text{eff},1+\epsilon,s}^{(1-\lambda)/(2+2\epsilon)}\left(\frac{\log d}{n}\right)^\frac{\min(\epsilon,\lambda)}{1+\epsilon}\|\bm L\bm R^\top-\bm L_*\bm R_*^\top\|_\mathrm{F}^\frac{1+\lambda}{2}+M_{\text{eff},1+\epsilon,s}\left(\frac{\log d}{n}\right)^\frac{\epsilon}{1+\epsilon}.
  \end{aligned}
\end{equation}
By employing an argument analogous to the derivation of \eqref{eq:matreg_m4_final}, we have
\begin{equation}
  \begin{split}
      \|(\bm{M}_{L,4})_{S_1}\|_\text{F}^2&\lesssim s_1KM_{x,2+2\lambda}^\frac{2}{1+\lambda}M_{\text{eff},1+\epsilon,s}^\frac{2\epsilon(\lambda-1)}{(1+\epsilon)(1+\lambda)}\left(\frac{\log d}{n}\right)^\frac{2\lambda}{1+\lambda}\|\bm L\bm R^\top-\bm L_*\bm R_*^\top\|_\mathrm{F}^2\\
      &+s_1KM_{\text{eff},1+\epsilon,s}^2\left(\frac{\log d}{n}\right)^\frac{2\epsilon}{1+\epsilon}.
  \end{split}
\end{equation}

Finally, combining the upper bounds of this step and Steps 2, 3, and 4, we have
\begin{equation}
\begin{aligned}
    \sum_{j=1}^4\|(\bm{M}_{L,j})_{S_1}\|_\text{F}^2
    \lesssim \phi_{\lambda,\epsilon}\|\bm{L}\bm{R}^\top-\bm{L}_*\bm{R}_*^\top\|_\text{F}^2+ s_1K\left[\frac{M_{\text{eff},1+\epsilon,s}^{1/\epsilon}\log d}{n}\right]^{\frac{2\epsilon}{1+\epsilon}},
\end{aligned}
\end{equation}
where 
$$\begin{aligned}
\phi_{\lambda,\epsilon}&=s_1K\bar{\sigma}_K^{2\lambda}\overline{M}_{x,\text{eff}}^2\left(\frac{\log d}{n}\right)^\frac{2\min(\epsilon,\lambda)}{1+\epsilon},\\
\xi^2_L&=s_1K\left[\frac{M_{\text{eff},1+\epsilon,s}^{1/\epsilon}\log d}{n}\right]^{\frac{2\epsilon}{1+\epsilon}},
\end{aligned}
$$
and 
$$
  \overline{M}_{x,\mathrm{eff}}^2=M_{x,2+2\lambda}^2M_{\mathrm{eff},1+\epsilon,s}^\frac{-2\lambda}{1+\epsilon}+M_{x,2+2\lambda}^\frac{2}{1+\lambda}M_{\mathrm{eff},1+\epsilon,s}^\frac{2\epsilon(\lambda-1)}{(1+\epsilon)(1+\lambda)}+M_{x,2+2\lambda}^\frac{2}{1+\lambda}.
$$

Analogously, we can obtain the same result for $\bm M_{R,j}$ except that $\xi^2_R$ is defined with respect to $s_2$.

\end{proof}

Finally, applying the same techniques in Appendix \ref{append:B.1} and utilizing the results in Theorem 2 of \citet{zhu2021taming}, we have the following initialization guarantees. For brevity, the proof is omitted.

\begin{proposition}\label{prop:logistic_initial}
  For the robust initialization via robust lasso, if the tuning parameters are selected as 
  \begin{equation}
      \tau_x \asymp \left(\frac{nM_{x,2+2\lambda,1}}{\log d}\right)^{\frac{1}{1+\lambda}},\quad
      \text{and}\quad R\asymp \left(\frac{M_{x,2+2\lambda,1}^{1/\lambda}\log d}{n}\right)^{\frac{1}{2}},
  \end{equation}
  and the sample size satisfies 
  \begin{equation}
    \begin{split}
      n & \gtrsim \sigma_K^{-\frac{1+\lambda}{\lambda}}\left(\beta_x/\alpha_x\right)^{\frac{1+\lambda}{2\lambda}}(s_\ast K)^{\frac{1+\lambda}{2\lambda}}M_{x,2+2\lambda,1}^\frac{1}{\lambda}\log d =: \mathfrak{C}_2 \cdot \left(s K \right)^{\frac{1+\lambda}{2\lambda}}\log d,
    \end{split}
  \end{equation}
  where $\delta=\min(\lambda,\epsilon)$, and $M_{yx,1+\delta}=2M_{\mathrm{eff},1+\delta,1}+2M_{x,2+2\delta}\|\bbm{\theta}_\ast\|_2^{1+\delta}$. Then, with probability at least $1-C\exp(-C\log d)$, the initial estimator satisfies
  \begin{equation}
    \|\bm{\Theta}_{0} - \bm{\Theta}_\ast\|_{\textup{F}}^2 \lesssim \alpha_x\beta_x^{-1} \sigma^2_K.
  \end{equation}
\end{proposition}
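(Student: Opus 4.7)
The plan is to mirror the two-stage argument of Proposition \ref{prop:matreg_initial}, adapted for the GLM loss under truncated covariates. First, I would invoke the standard penalized M-estimation machinery: by the KKT conditions of the Lasso program in \eqref{eq:Lasso_init}, choosing $R$ to dominate $\|\nabla\mathcal{L}_n^{\text{trunc}}(\bbm{\theta}_*)\|_\infty$ ensures that the error vector $\bbm{\Delta}=\widehat{\bbm{\theta}}_{\text{LA}}-\bbm{\theta}_*$ lies in the cone $\{\bbm{\Delta}:\|\bbm{\Delta}_{S^c}\|_1\leq 3\|\bbm{\Delta}_S\|_1\}$, where $S$ is the support of $\bbm{\theta}_*$. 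Combined with a restricted strong convexity (RSC) condition on the truncated empirical loss, one obtains $\|\bbm{\Delta}\|_2\lesssim \sqrt{s_\ast K}\,R/\alpha_x$. After reshaping, SVD, and SHT, Lemma \ref{lemma:3} together with the Eckart--Young--Mirsky theorem converts this $\ell_2$ bound into $d(\bm F_0,\bm F_\ast)^2\lesssim\alpha_x\beta_x^{-1}\sigma_K^2$ provided the sample-size condition holds.

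For the score bound, I would write
\begin{equation}
\nabla\mathcal{L}_n^{\text{trunc}}(\bbm{\theta}_\ast)=\frac{1}{n}\sum_{i=1}^n\bigl[g'(\langle\widetilde{\bm x}_i,\bbm{\theta}_\ast\rangle)-Y_i\bigr]\widetilde{\bm x}_i,
\end{equation}
and decompose coordinate-wise into (i) the truncation bias $\mathbb{E}[g'(\langle\widetilde{\bm x}_i,\bbm\theta_\ast\rangle)\widetilde{\bm x}_{ij}-g'(\langle\bm x_i,\bbm\theta_\ast\rangle)\bm x_{ij}]$, controlled via the Lipschitz property (Assumption \ref{asmp:Lipschitz}) and the $(2+2\lambda)$-moment bound $M_{x,2+2\lambda,1}$ (as in Step 1 of Appendix \ref{append:B.1}), and (ii) the centered fluctuation, controlled by Bernstein's inequality on bounded summands with variance proxy $\tau_x^{1-\lambda}M_{x,2+2\lambda,1}$. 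The choice $\tau_x\asymp(nM_{x,2+2\lambda,1}/\log d)^{1/(1+\lambda)}$ balances these two and yields $\|\nabla\mathcal{L}_n^{\text{trunc}}(\bbm\theta_\ast)\|_\infty\lesssim (M_{x,2+2\lambda,1}^{1/\lambda}\log d/n)^{\lambda/(1+\lambda)}$ with probability at least $1-C\exp(-C\log d)$, which motivates the stated scaling of $R$. Note that, unlike trace regression, there is no separate $M_{yx,1+\delta}$ term because the response enters only through $Y_i\widetilde{\bm x}_{ij}$ with $\widetilde{\bm x}_{ij}$ bounded by $\tau_x$, so the moment index is driven entirely by the predictors—this is precisely why the sample-size rate depends only on $\lambda$.

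The main obstacle is establishing the RSC condition for the truncated GLM loss
\begin{equation}
\nabla^2\mathcal{L}_n^{\text{trunc}}(\bbm{\theta})=\frac{1}{n}\sum_{i=1}^n g''(\langle\widetilde{\bm x}_i,\bbm{\theta}\rangle)\widetilde{\bm x}_i\widetilde{\bm x}_i^\top
\end{equation}
over the local cone. Two issues must be reconciled: the truncation level grows with $n$, so $\widetilde{\bm x}_i$ is effectively unbounded on the events of interest, and $g''$ need not be bounded from below globally. I would handle these via a localized argument: restrict attention to a neighborhood of $\bbm\theta_\ast$ where $\langle\widetilde{\bm x}_i,\bbm\theta\rangle$ is close to $\langle\bm x_i,\bbm\theta_\ast\rangle$, apply a truncated-covariance concentration argument analogous to Step 2 of Appendix \ref{append:B.1} to show $\|\widehat{\bm\Sigma}_x(\tau_x)-\bm\Sigma_x\|_\infty$ is small, and then transfer to sparse restricted eigenvalues via the cone condition (incurring an $s_\ast\log d/n$ slack). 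This is exactly the regime covered by Theorem 2 of \citet{zhu2021taming}, which I would invoke directly; the sparsity-dominated sample-size condition $n\gtrsim (sK)^{(1+\lambda)/(2\lambda)}\log d$ in the statement is what guarantees both that the RSC slack is absorbed and that the resulting $\ell_2$ error $\sqrt{s_\ast K}\,R/\alpha_x$ is smaller than $(\alpha_x/\beta_x)^{1/2}\sigma_K$, closing the induction into Theorem \ref{thm:2}.
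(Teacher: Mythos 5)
Your proposal is correct and follows essentially the route the paper intends: the paper omits this proof entirely, stating only that it follows by the techniques of Appendix \ref{append:B.1} (truncation-bias plus Bernstein control of the score) combined with Theorem 2 of \citet{zhu2021taming} for the restricted strong convexity of the truncated GLM loss, which are precisely the two ingredients you assemble, together with the same Lemma \ref{lemma:3}/Eckart--Young conversion to $d(\bm F_0,\bm F_\ast)$. Two small remarks: your derived score bound $(M_{x,2+2\lambda,1}^{1/\lambda}\log d/n)^{\lambda/(1+\lambda)}$ does not literally match the stated $R\asymp(\cdot)^{1/2}$ unless $\lambda=1$, but it is the exponent that is actually consistent with the stated sample-size requirement, so your version appears to be the intended one; and your claim that the response moment plays no role because $\widetilde{\bm x}_{ij}$ is bounded by $\tau_x$ is only airtight for bounded responses (the logistic case of Corollary \ref{cor:logistic_rates}), since $Y_i\widetilde{\bm x}_{ij}$ remains heavy-tailed when $Y_i$ is untruncated and has only $(1+\epsilon)$ moments --- though the paper's own statement shares this feature.
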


\subsection{Statistical Rates of GLM and Logistic Regression}

\begin{proof}[Proof of Theorem \ref{thm:matrixGLM}]
  In the last step, we examine the conditions and apply Theorem \ref{thm:2}. Similarly, under Assumption \ref{asmp:moment}, by Lemma 3.11 in \citet{bubeck2015convex}, we can show that the RCG condition in Definition \ref{def:1} is implied with $\alpha=\alpha_x$ and $\beta=\beta_x$.

  The initialization guarantees are provided in Proposition \ref{prop:logistic_initial}. In addition, by Proposition \ref{prop:stability_GLM}, the de-scaled robust gradient estimators are stable, that is, with probability at least $1-C\exp(-\log d)$,
\begin{equation}
  \begin{aligned}
    \|\bm{G}_L(\bm{L},\bm{R};\tau)_{S_1}\|_\text{F}^2\lesssim s_1K\left[\frac{M_{\text{eff},1+\epsilon,s}^{1/\epsilon}\log d}{n}\right]^{\frac{2\epsilon}{1+\epsilon}} + \phi_{\lambda,\epsilon}\|\bm{L}\bm{R}^\top-\bm{L}_*\bm{R}_*^\top\|_\text{F}^2,\\
  \text{and} \quad\|\bm{G}_R(\bm{L},\bm{R};\tau)_{S_2}\|_\text{F}^2\lesssim s_2K\left[\frac{M_{\text{eff},1+\epsilon,s}^{1/\epsilon}\log d}{n}\right]^{\frac{2\epsilon}{1+\epsilon}} + \phi_{\lambda,\epsilon}\|\bm{L}\bm{R}^\top-\bm{L}_*\bm{R}_*^\top\|_\text{F}^2.
  \end{aligned}
\end{equation}

As the sample size satisfies that
\begin{equation}
  \begin{aligned}
    &n\gtrsim \left(sK\alpha_x^{-2}\bar{\sigma}^{2\lambda}\overline{M}_{x,\text{eff}}^2\right)^\frac{1+\epsilon}{2\min(\epsilon,\lambda)}\log d,\\
   \text{and}~& n\gtrsim \left(sK\sigma_K^{-2}\alpha_x^{-3}\beta_x\right)^\frac{1+\epsilon}{2\epsilon}M_{\text{eff},1+\epsilon,s}^{1/\epsilon}\log d,
  \end{aligned}
\end{equation}
we have $\phi_{\lambda,\epsilon}\lesssim\alpha_x^2$ and $\xi^2_L+\xi^2_R\leq\alpha_x^3\beta_x^{-1}\sigma_K^2$, respectively. Therefore, applying Theorem \ref{thm:2}, we have
\begin{equation}
  \|\bm{\Theta}^{(j)}-\bm{\Theta}_*\|_\text{F}^2\asymp d(\bm{F}^{(j)},\bm{F}_*)^2 \leq (1-C\alpha_x\beta_x^{-1})^t d(\bm{F}^{(0)},\bm{F}_*)^2 + C\alpha_x^{-2}sK\left[\frac{M_{\text{eff},1+\epsilon,s}^{1/\epsilon}\log d}{n}\right]^{\frac{2\epsilon}{1+\epsilon}}.
\end{equation}
After sufficient iterations, we finally obtain that
$$
  \|\bm{\Theta}^{(j)}-\bm{\Theta}_*\|_\text{F}^2 \lesssim  \alpha_x^{-2}sK\left[\frac{M_{\text{eff},1+\epsilon,s}^{1/\epsilon}\log d}{n}\right]^{\frac{2\epsilon}{1+\epsilon}}.
$$
\end{proof}

Next, we are ready to extend the proof to Corollary \ref{cor:logistic_rates}

\begin{proof}[Proof of Corollary \ref{cor:logistic_rates}]

  The proof of Corollary \ref{cor:logistic_rates} directly follows Theorem \ref{thm:matrixGLM} with $L=1/4$ in Assumption \ref{asmp:Lipschitz} and $\epsilon=1$ in Assumption \ref{asmp:moment}, as $Y_i$ is bounded in matrix logistic regression. Hence, the details are omitted.

\end{proof}

\section{Bilinear Model}\label{append:D}

\subsection{Initialization Guarantees}\label{append:D.1}

\begin{proposition}[Rates of Initialization]
  \label{prop:bilinear_init}
  For the robust Dantzig selector, if the tuning parameters are selected as 
  \begin{equation}
    \begin{split}
      \tau_x \asymp & \left(\frac{nM_{x,2+2\lambda,1}}{\log d}\right)^{\frac{1}{1+\lambda}}, \quad 
      \tau_{yx}\asymp\left(\frac{nM_{yx,1+\delta}}{\log d}\right)^{\frac{1}{1+\delta}},\\
      \text{and}\quad R\asymp & \|\bbm{\Theta}_\ast\|_1 \cdot \left(\frac{M_{x,2+2\lambda,1}^{1/\lambda}\log d}{n}\right)^{\frac{\lambda}{1+\lambda}} + \left(\frac{M_{yx,1+\delta}^{1/\delta}\log d}{n}\right)^{\frac{\delta}{1+\delta}},
    \end{split}
  \end{equation}
  and the sample size satisfies 
\begin{equation}
  n \gtrsim \|(\bm{\Sigma}_x^{-1})^\top\|_{1,\infty}^{\frac{1+\delta}{\delta}}\sigma_1^{-\frac{1+\delta}{2\delta}}\left(\beta_x/\alpha_x\right)^{\frac{1+\delta}{2\delta}}s_\ast^{\frac{1+\delta}{2\delta}}\left(M_{x,2+2\lambda,1}^{1/(1+\lambda)}\|\bbm{\Theta}_\ast\|_1+M_{yx,1+\delta}^{1/(1+\lambda)}\right)^{\frac{1+\delta}{\delta}}\log d,
\end{equation}
  where $\delta=\min(\lambda,\epsilon)$, and $M_{yx,1+\delta}=2M_{\mathrm{eff},1+\delta,1}+2M_{x,2+2\delta}\|\bbm{\Theta}_\ast\|_{2,\infty}^{1+\delta}$. Then, with probability at least $1-C\exp(-C\log d)$, the initial estimator satisfies
  \begin{equation}
    \|\bm{\Theta}_{0} - \bm{\Theta}_\ast\|_{\textup{F}}^2 \lesssim \alpha_x\beta_x^{-1} \sigma^2_K.
  \end{equation}
\end{proposition}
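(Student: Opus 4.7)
The plan is to mirror the two-step argument in Proposition \ref{prop:matreg_initial}, adapting it to the matrix-in/matrix-out geometry of the bilinear model and to the one-sided constraint $\|\widehat{\bm{\Sigma}}_{yx}(\tau_{yx})-\bm{\Theta}\widehat{\bm{\Sigma}}_x(\tau_x)\|_\infty\leq R$.

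First, I would show a deterministic feasibility-and-error bound: if $\|\widehat{\bm{\Sigma}}_x-\bm{\Sigma}_x\|_\infty\le\zeta_0$ and $\|\widehat{\bm{\Sigma}}_{yx}-\bm{\Sigma}_{yx}\|_\infty\le\zeta_1$ with $R\gtrsim \zeta_0\|\bm{\Theta}_\ast\|_1+\zeta_1$, then $\bm{\Theta}_\ast$ is feasible because
\[
\|\widehat{\bm{\Sigma}}_{yx}-\bm{\Theta}_\ast\widehat{\bm{\Sigma}}_x\|_\infty\le\|\widehat{\bm{\Sigma}}_{yx}-\bm{\Sigma}_{yx}\|_\infty+\|\bm{\Theta}_\ast\|_{1,\infty}\|\bm{\Sigma}_x-\widehat{\bm{\Sigma}}_x\|_\infty\le\zeta_1+\|\bm{\Theta}_\ast\|_1\zeta_0,
\]
using the entrywise bound $|(\bm{A}\bm{B})_{jk}|\le\|\bm{A}\|_{1,\infty}\|\bm{B}\|_\infty$. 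Writing $\widehat{\bm{\Theta}}-\bm{\Theta}_\ast=(\widehat{\bm{\Theta}}\widehat{\bm{\Sigma}}_x-\widehat{\bm{\Sigma}}_{yx}+\widehat{\bm{\Sigma}}_{yx}-\bm{\Sigma}_{yx}+\widehat{\bm{\Theta}}(\bm{\Sigma}_x-\widehat{\bm{\Sigma}}_x))\bm{\Sigma}_x^{-1}$ and applying the dual inequality $\|\bm{X}\bm{\Sigma}_x^{-1}\|_\infty\le\|(\bm{\Sigma}_x^{-1})^\top\|_{1,\infty}\|\bm{X}\|_\infty$, together with $\|\widehat{\bm{\Theta}}\|_{1,\infty}\le\|\widehat{\bm{\Theta}}\|_1\le\|\bm{\Theta}_\ast\|_1$, yields $\|\widehat{\bm{\Theta}}-\bm{\Theta}_\ast\|_\infty\lesssim R\|(\bm{\Sigma}_x^{-1})^\top\|_{1,\infty}$. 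The support decomposition of $\|\widehat{\bm{\Theta}}\|_1\le\|\bm{\Theta}_\ast\|_1$ (cf.\ \eqref{eq:theta1_decmopose}) gives $\|\widehat{\bm{\Theta}}-\bm{\Theta}_\ast\|_1\lesssim s_\ast R\|(\bm{\Sigma}_x^{-1})^\top\|_{1,\infty}$, and interpolation $\|\cdot\|_\textup{F}^2\le\|\cdot\|_1\|\cdot\|_\infty$ produces
\[
\|\widehat{\bm{\Theta}}-\bm{\Theta}_\ast\|_\textup{F}\lesssim\sqrt{s_\ast}\,R\,\|(\bm{\Sigma}_x^{-1})^\top\|_{1,\infty}.
\]

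Second, I would establish the concentration rates $\zeta_0,\zeta_1$. The bound $\zeta_0\asymp(M_{x,2+2\lambda,1}^{1/\lambda}\log d/n)^{\lambda/(1+\lambda)}$ is identical to Step 2 of Proposition \ref{prop:matreg_initial} (truncation bias via Markov plus Bernstein on the truncated summands). The new ingredient is $\widehat{\bm{\Sigma}}_{yx}$, whose $(j,k)$-entry $y_{ij}x_{ik}=((\bm{\Theta}_\ast\bm{x}_i)_j+e_{ij})x_{ik}$ decomposes into a signal-predictor interaction $(\bm{\Theta}_\ast)_{j,:}\bm{x}_i\cdot x_{ik}$ and a noise-predictor interaction $e_{ij}x_{ik}$. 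Cauchy--Schwarz gives the $(1+\delta)$-th moment of the first term as $\le M_{x,2+2\delta}\|\bm{\Theta}_\ast\|_{2,\infty}^{1+\delta}$, and the second is bounded by $M_{\textup{eff},1+\delta,1}$ under Assumption \ref{asmp:moment2}; summing produces the stated $M_{yx,1+\delta}$ and hence $\zeta_1\asymp(M_{yx,1+\delta}^{1/\delta}\log d/n)^{\delta/(1+\delta)}$ with $\delta=\min(\lambda,\epsilon)$.

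Plugging $\zeta_0,\zeta_1,R$ into the Frobenius bound gives
\[
\|\widehat{\bm{\Theta}}-\bm{\Theta}_\ast\|_\textup{F}\lesssim\|(\bm{\Sigma}_x^{-1})^\top\|_{1,\infty}\sqrt{s_\ast}\left(\|\bm{\Theta}_\ast\|_1 M_{x,2+2\lambda,1}^{1/(1+\lambda)}+M_{yx,1+\delta}^{1/(1+\delta)}\right)\left(\frac{\log d}{n}\right)^{\delta/(1+\delta)},
\]
and the stated sample-size requirement is precisely what is needed for the right-hand side to be $\lesssim(\alpha_x/\beta_x)^{1/2}\sigma_K$. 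A final application of the Eckart--Young theorem to project $\widehat{\bm{\Theta}}_\textup{DS}$ onto the rank-$1$ manifold, followed by SHT (Lemma \ref{lemma:3}) and the factorization-distance equivalence in Lemma \ref{lemma:2}, yields $\|\bm{\Theta}_0-\bm{\Theta}_\ast\|_\textup{F}^2\lesssim\alpha_x\beta_x^{-1}\sigma_K^2$ as required.

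The main obstacle I anticipate is the moment control for $\widehat{\bm{\Sigma}}_{yx}$: the entries involve the multiplicative interaction $\bm{x}_i\otimes\bm{e}_i$ already flagged in the remark following Theorem \ref{thm:bilinear_model}. Because independently heavy-tailed factors can compound to have much heavier tails, one must carefully combine Cauchy--Schwarz with the mixed moments $M_{x,2+2\delta}$ and $M_{\textup{eff},1+\delta,1}$, and verify that the resulting $(1+\delta)$-th moment is finite under Assumption \ref{asmp:moment2}; this is what dictates taking the minimum $\delta=\min(\lambda,\epsilon)$ and drives the $s^{(1+\delta)/(2\delta)}$ scaling in the sample complexity.
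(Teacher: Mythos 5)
Your proposal is correct and follows essentially the same route as the paper's proof: the same deterministic feasibility-and-dual-norm argument for the robust Dantzig selector (yielding the $\ell_\infty$, $\ell_1$, and interpolated Frobenius bounds with the $\|(\bm{\Sigma}_x^{-1})^\top\|_{1,\infty}$ factor), the same decomposition of $y_{ij}x_{ik}$ into noise--predictor and signal--predictor interactions to obtain $M_{yx,1+\delta}$ via Cauchy--Schwarz, the same truncation-bias-plus-Bernstein concentration for $\zeta_0,\zeta_1$, and the same final projection step via Eckart--Young, Lemma \ref{lemma:3}, and Lemma \ref{lemma:2}. No gaps.
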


\begin{proof}[Proof of Proposition \ref{prop:bilinear_init}]

\noindent\textit{Step 1.} (Error bounds of robust Dantzig Selector)

\noindent Similar to Appendix \ref{append:B.1}, we firstly show that when
\begin{equation}
  \|\widehat{\bm{\Sigma}}_x - \bm{\Sigma}_x\|_\infty\leq\zeta_0\quad\text{and}\quad \|\widehat{\bm{\Sigma}}_{yx} - \bm{\Sigma}_{yx}\|_\infty\leq\zeta_1,
\end{equation}
and $R\geq\|\bbm{\Theta}_*\|_1\zeta_0+\zeta_1$, then the error $\|\widehat{\bbm{\Theta}}-\bbm{\Theta}_*\|_\text{F}$ can be bounded in terms of multiple matrix norms.

Note that
\begin{equation}
  \begin{split}
    & \|\widehat{\bm \Sigma}_{yx}-\bbm\Theta_*\widehat{\bm\Sigma}_x\|_\infty\\
    = & \|\widehat{\bm \Sigma}_{yx} - \bm \Sigma_{yx} + \bbm\Theta_*\bm\Sigma_x -\bbm\Theta_*\widehat{\bm\Sigma}_x\|_\infty \\
    \leq & \|\widehat{\bm \Sigma}_{yx} - \bm \Sigma_{yx}\|_\infty + \|\bbm\Theta_*\|_{1,\infty}\|\bm\Sigma_x -\widehat{\bm\Sigma}_x\|_\infty \\ 
    \leq & \zeta_1 + \|\bbm{\Theta}_*\|_1\zeta_0 \leq R.
  \end{split}
\end{equation}
Therefore, $\bbm{\Theta}_*$ is feasible in the optimization constraint, and $\|\widehat{\bbm\Theta}\|_1\leq \|\bbm{\Theta}_*\|_1$. Then, by triangle inequality, we have
\begin{equation}
  \begin{split}
    & \|\widehat{\bbm{\Theta}} - \bbm\Theta_*\|_\infty \\
    = & \left\|\left(\widehat{\bbm\Theta}\bbm\Sigma_{x}-\widehat{\bbm\Theta}\widehat{\bbm\Sigma}_{x}+\widehat{\bbm\Theta}\widehat{\bbm\Sigma}_{x}-\widehat{\bbm\Sigma}_{yx}+\widehat{\bbm\Sigma}_{yx}-\bbm\Sigma_{yx}\right)\bbm\Sigma_{x}^{-1}\right\|_\infty\\
    \leq & \|\widehat{\bbm\Theta}\|_{1,\infty}\|\bbm\Sigma_{x}-\widehat{\bbm\Sigma}_{x}\|_\infty\|(\bbm\Sigma_{x}^{-1})^\top\|_{1,\infty}+\|\widehat{\bbm\Theta}\widehat{\bbm\Sigma}_{x}-\widehat{\bbm\Sigma}_{yx}\|_\infty\|(\bbm\Sigma_{x}^{-1})^\top\|_{1,\infty}\\
    &+\|\widehat{\bbm\Sigma}_{yx}-\bbm\Sigma_{yx}\|_\infty\|(\bbm\Sigma_{x}^{-1})^\top\|_{1,\infty}\\
    \leq &\left(\|\widehat{\bbm\Theta}\|_1\zeta_0+R+\zeta_1\right)\|(\bbm\Sigma_{x}^{-1})^\top\|_{1,\infty}\\
    \leq & \left(\|\bbm\Theta_*\|_1\zeta_0+R+\zeta_1\right)\|(\bbm\Sigma_{x}^{-1})^\top\|_{1,\infty}\leq 2R\|(\bbm\Sigma_{x}^{-1})^\top\|_{1,\infty}.
  \end{split}
\end{equation}

Denote the nonzero support of $\bbm\Theta_*$ to be
$$S=\{(j,k)\in\{1,\dots,p_1p_2\}\times \{1,\dots,q_1q_2\}: (\bbm\Theta_*)_{jk}\neq 0\}.$$
Then, $|S|=s_{L,*}\cdot s_{R,*}:=s_*$. By similar argument as \eqref{eq:theta1_decmopose}, we have
\begin{equation}
  \|(\widehat{\bbm{\Theta}} - \bbm{\Theta}_*)_{S^\perp}\|_1 \leq \|(\widehat{\bbm{\Theta}} - \bbm{\Theta}_*)_S\|_1.
\end{equation}

Hence, the upper bound in terms of $\|\cdot\|_{L_1}$ can be bounded as
\begin{equation}
  \begin{split}
    & \|\widehat{\bbm{\Theta}} - \bbm{\Theta}_*\|_1 = \|(\widehat{\bbm{\Theta}} - \bbm{\Theta}_*)_S\|_1 + \|(\widehat{\bbm{\Theta}} - \bbm{\Theta}_*)_{S^\perp}\|_1\\
    \leq & 2\|(\widehat{\bbm{\Theta}} - \bbm{\Theta}_*)_S\|_1 \leq 2s_*\|\widehat{\bbm{\Theta}} - \bbm{\Theta}_*\|_\infty.
  \end{split}
\end{equation}
Finally, by duality and $R\asymp \zeta_0\|\bbm\Theta_*\|_1+\zeta_1$, we have
\begin{equation}\label{eq:Bilinear_init_bound}
  \begin{split}
    \|\widehat{\bbm{\Theta}} - \bbm{\Theta}_*\|_\text{F} \leq & \|\widehat{\bbm{\Theta}} - \bbm{\Theta}_*\|_1^{1/2}\|\widehat{\bbm{\Theta}} - \bbm{\Theta}_*\|_\infty^{1/2}\\
    \leq& \sqrt{8s_*}R\|(\bm{\Sigma}_x^{-1})^\top\|_{1,\infty}\\
    \lesssim & \|(\bm{\Sigma}_x^{-1})^\top\|_{1,\infty}\sqrt{8s_*}(\zeta_0\|\bbm{\Theta_*}\|_1+\zeta_1).
  \end{split}
\end{equation}~

\noindent\textit{Step 2.} (Accuracy of robust covariance estimation)

\noindent In this step, we derive the error bounds $\zeta_0$ and $\zeta_1$. Denote $\bm x_i=\text{Vec}(\bm X_i^\top)$, $\bm y_i=\text{Vec}(\bm Y_i^\top)$, and $\bm e_i=\text{Vec}(\bm E_i^\top)$, respectively. For each entry of $\bm{x}_i\bm{x}_i^\top$, it has a finite $(1+\lambda)$-th moment $M_{x,2+2\lambda,1}$. For $\bm y_i\bm{x}_i^\top$, note that the $(j,k)$-th entry is 
\begin{equation}
  y_{ij}x_{ik}=\bm c_j^\top\bm e_i \bm x_i^\top \bm c_k+\bm c_j^\top\bbm\Theta_*\bm x_i\bm x_{i}^\top\bm c_k.
\end{equation}
Denoting $\delta=\min (\epsilon,\lambda)$, the first term has a finite $(1+\delta)$-th moment $M_{\text{eff},1+\delta,1}$. For the second term, we have
\begin{equation}
  \mathbb{E}[|\bm c_j^\top\bbm\Theta_*\bm x_i\bm x_{i}^\top\bm c_k|^{1+\delta}] \leq \mathbb{E}[|\bm c_j^\top\bbm\Theta_*\bm x_i|^{2+2\delta}]^{1/2}\cdot\mathbb{E}[|\bm x_{i}^\top\bm c_k|^{2+2\delta}]^{1/2} \leq M_{x,2+2\delta}\cdot\|\bbm{\Theta}_*\|_{2,\infty}^{1+\delta},
\end{equation}
indicating that it also has a finite $(1+\delta)$-th moment. Therefore, combining these two moment bounds, each entry of $\bm y_i\bm x_i^\top$ has a finite $(1+\delta)$-th moment $M_{yx,1+\delta}=2M_{\text{eff},1+\delta,1}+2M_{x,2+2\delta}\|\bbm{\Theta}_*\|_{2,\infty}^{1+\delta}$.

Next, we analyze $\zeta_0$, the estimation accuracy of $\widehat{\bbm\Sigma}_x$. By the same argument as in Step 2 of Appendix \ref{append:B.1}, we have that when $\tau_x\asymp ( n M_{x,2+2\lambda,1}/\log d)^{1/(1+\lambda)}$,
$$
\left\|\widehat{\bm{\Sigma}}_x(\tau_x)-\bm{\Sigma}_x\right\|_\infty \lesssim  \left(\frac{M_{x,2+2\lambda,1}^{1/\lambda}\log d}{n}\right)^\frac{\lambda}{1+\lambda}=\zeta_0
$$
with probability at least $1-C\exp(-C\log d)$.

Then, we derive $\zeta_1$. Similarly, the $(j,k)$-th entry of $(\widehat{\bbm{\Sigma}}_{yx}(\tau_{yx})-\bbm{\Sigma}_{yx})$ can be upper bounded as
\begin{equation}\label{eq:Bilinear_zeta_1}
\begin{aligned}
  \left|(\widehat{\bbm{\Sigma}}_{yx}(\tau_{yx})-\bbm{\Sigma}_{yx})_{jk}\right|\leq &\left|\frac{1}{n}\sum_{i=1}^n\text{T}(y_{ij}x_{ik},\tau_{yx})-\bb E[\text{T}(y_{ij}x_{ik},\tau_{yx})]\right|\\
  & +\left|\bb E[\text{T}(y_{ij}x_{ik},\tau_{yx})]-\bb E[y_{ij}x_{ik}]\right|.
\end{aligned}
\end{equation}

When $\tau_{yx}\asymp \left(n M_{yx,1+\delta}/\log d\right)^{1/(1+\delta)}$, the second term can be bounded as
\begin{equation}
\begin{aligned}
  \left|\bb E[\text{T}(y_{ij}x_{ik},\tau_{yx})-y_{ij}x_{ik}]\right| \leq & \bb E[|y_{ij}x_{ik}|\cdot 1\{|y_{ij}x_{ik}|\geq \tau_{yx}\}]\\
  \leq &\bb E[|y_{ij}x_{ik}|^{1+\delta}]^\frac{1}{1+\delta}\cdot \bb P\left(|y_{ij}x_{ik}|\geq \tau_{yx}\right)^\frac{\delta}{1+\delta}\\
  \leq & E[|y_{ij}x_{ik}|^{1+\delta}]\cdot \tau_{yx}^{-\delta}\\
  \lesssim & \left(\frac{M_{yx,1+\delta}^{1/\delta}\log d}{n}\right)^{\frac{\delta}{1+\delta}}.
\end{aligned}
\end{equation}
For the first term of \eqref{eq:Bilinear_zeta_1}, by Bernstein's inequality and the upper bound of the variance that
$$
  \text{Var}(\text{T}(y_{ij}x_{ik},\tau_{yx}))\leq \bb E[(\text{T}(y_{ij}x_{ik},\tau_{yx}))^2]\leq \tau_{yx}^{1-\delta}M_{yx,1+\delta},
$$
we have that for any $t\geq 0$,
$$
\bb P\left(\left|\frac{1}{n}\sum_{i=1}^n\text{T}(y_{ij}x_{ik},\tau_{yx})-\bb E[\text{T}(y_{ij}x_{ik},\tau_{yx})]\right|\geq t\right)\leq 2\exp\left(-\frac{nt^2/2}{\tau_{yx}^{1-\delta}M_{yx,1+\delta}+\tau_{yx}t}\right).
$$
Letting $t=CM_{yx,1+\delta}\cdot\tau_{yx}^{-\delta}$, we obtain that with probability at least $1-C\exp(-C\log d)$,
$$
\left\|\frac{1}{n}\sum_{i=1}^n\text{T}(y_{ij}x_{ik},\tau_{yx})-\bb E[\text{T}(y_{ij}x_{ik},\tau_{yx})]\right\|_\infty\lesssim\left(\frac{M_{yx,1+\delta}^{1/\delta}\log d}{n}\right)^{\frac{\delta}{1+\delta}}.
$$
Combining the two pieces, we have that with probability at least $1-C\exp(-C\log d)$,
$$
\left\|\widehat{\bbm{\Sigma}}_{yx}(\tau_{yx})-\bbm{\Sigma}_{yx}\right\|_\infty \lesssim \left(\frac{M_{yx,1+\delta}^{1/\delta}\log d}{n}\right)^{\frac{\delta}{1+\delta}}=\zeta_1.
$$

Plugging $\zeta_0$ and $\zeta_1$ into \eqref{eq:Bilinear_init_bound}, we have
\begin{equation}
  \begin{split}
    & \|\widehat{\bbm{\Theta}}(R,\tau_x,\tau_{yx})-\bbm{\Theta}_*\|_\text{F} \\
    \lesssim & \|(\bm{\Sigma}_x^{-1})^\top\|_{1,\infty}\sqrt{s_*} \left(\left[\frac{M_{x,2+2\lambda,1}^{1/\lambda}\log d}{n}\right]^{\frac{\lambda}{1+\lambda}}\|\bbm{\Theta}_*\|_1 + \left[\frac{M_{yx,1+\delta}^{1/\delta}\log d}{n}\right]^{\frac{\delta}{1+\delta}}\right) \\
    \leq & \|(\bm{\Sigma}_x^{-1})^\top\|_{1,\infty}\sqrt{s_*}\left(M_{x,2+2\lambda,1}^{1/(1+\lambda)}\|\bbm{\Theta}_*\|_1+M_{yx,1+\delta}^{1/(1+\delta)}\right)\left(\frac{\log d}{n}\right)^{\frac{\delta}{1+\delta}}.
  \end{split}
\end{equation}

Hence, when
\begin{equation}
  n \gtrsim \|(\bm{\Sigma}_x^{-1})^\top\|_{1,\infty}^{\frac{1+\delta}{\delta}}\sigma_1^{-\frac{1+\delta}{2\delta}}\left(\beta_x/\alpha_x\right)^{\frac{1+\delta}{2\delta}}s_*^{\frac{1+\delta}{2\delta}}\left(M_{x,2+2\lambda,1}^{1/(1+\lambda)}\|\bbm{\Theta}_*\|_1+M_{yx,1+\delta}^{1/(1+\lambda)}\right)^{\frac{1+\delta}{\delta}}\log d,
\end{equation}
we have $\|\widehat{\bbm{\Theta}}(R,\tau_x,\tau_{yx})-\bbm{\Theta}_*\|_\mathrm{F} \lesssim (\alpha_x/\beta_x)^{1/2}\sigma_1$. Similarly, by Lemma \ref{lemma:3} and the Young-Eckart-Mirsky Theorem which implies the initial error bound holds.

\end{proof}

\subsection{Stability of Robust De-scaled Gradients}\label{append:D.2}

\begin{proposition}
  [Stability of Robust De-scaled Gradient Estimators]\label{prop:stability_bilinear}
  Under the conditions in Theorem \ref{thm:matrix_trace_reg}, with a probability at least $1-C\exp(-C\log d)$, the robust de-scaled gradient estimators satisfy
  \begin{equation}
    \begin{split}
      & \|\bm{G}_L(\bm{L},\bm{R};\tau) - \mathbb{E}[\nabla_{\bm{L}}f(\bm{L},\bm{R};z_i)](\bm{R}^\top\bm{R})^{-1/2}\|_\textup{F}^2 \lesssim \xi^2_L + \phi_{\lambda,\epsilon}\|\bm{L}\bm{R}^\top-\bm{L}_*\bm{R}_*^\top\|_\textup{F}^2,\\
      & \|\bm{G}_R(\bm{L},\bm{R};\tau) - \mathbb{E}[\nabla_{\bm{R}}f(\bm{L},\bm{R};z_i)](\bm{R}^\top\bm{R})^{-1/2}\|_\textup{F}^2 \lesssim \xi^2_L + \phi_{\lambda,\epsilon}\|\bm{L}\bm{R}^\top-\bm{L}_*\bm{R}_*^\top\|_\textup{F}^2,
    \end{split}
  \end{equation}
  where 
  $$
    \phi_{\lambda,\epsilon} = s\bar{\sigma}_1^{2\lambda}\overline{M}_{x,\textup{eff}}^2\left(\frac{\log d}{n}\right)^\frac{2\min(\epsilon,\lambda)}{1+\epsilon},
  $$ 
  $$
    \overline{M}_{x,\mathrm{eff}}^2=M_{x,2+2\lambda}^2M_{\mathrm{eff},1+\epsilon,s}^\frac{-2\lambda}{1+\epsilon}+M_{x,2+2\lambda}^\frac{2}{1+\lambda}M_{\mathrm{eff},1+\epsilon,s}^\frac{2\epsilon(\lambda-1)}{(1+\epsilon)(1+\lambda)}+M_{x,2+2\lambda}^\frac{2}{1+\lambda},
  $$
  $$
    \xi^2_L=s_1\left[\frac{M_{\textup{eff},1+\epsilon,s}^{1/\epsilon}\log d}{n}\right]^{\frac{2\epsilon}{1+\epsilon}},\quad\text{and}\quad
    \xi^2_R=s_2\left[\frac{M_{\textup{eff},1+\epsilon,s}^{1/\epsilon}\log d}{n}\right]^{\frac{2\epsilon}{1+\epsilon}}.
  $$
\end{proposition}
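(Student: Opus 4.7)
My plan is to mirror the decomposition-and-concentration strategy used in the proofs of Propositions \ref{prop:stability_linear_reg} and \ref{prop:stability_GLM}, adapting the moment bookkeeping to the multiplicative noise structure $\bm{J}_i=\bm X_i\otimes\bm E_i$ that is intrinsic to the bilinear model. Writing $\widetilde{\bm R}=\bm R(\bm R^\top\bm R)^{-1/2}$ and letting $v_{ijk}=\bm c_j^\top\bm{J}_i\widetilde{\bm r}_k$ be the $(j,k)$-entry of the raw gradient evaluated at the truth, I would decompose
\begin{equation}
\bm G_L(\bm L,\bm R;\tau)-\mathbb{E}[\nabla_{\bm L}f(\bm L,\bm R;z_i)](\bm R^\top\bm R)^{-1/2}=\bm M_{L,1}+\bm M_{L,2}+\bm M_{L,3}+\bm M_{L,4},
\end{equation}
where $\bm M_{L,1}$ and $\bm M_{L,2}$ capture the truncation bias and stochastic fluctuations at the ground truth, while $\bm M_{L,3}$ and $\bm M_{L,4}$ encode the deterministic and stochastic sensitivity to deviations $\bbm\Theta-\bbm\Theta_*$. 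The Kronecker rank $K=1$ simplifies the column index, so the Frobenius error restricted to a row subset $S_1$ with $|S_1|\le s_1$ is a sum over at most $s_1$ entries rather than $s_1K$.

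The first substantive step would be to derive two moment estimates. Using the definition of $M_{\text{eff},1+\epsilon,s}$ in \eqref{eq:bilinear_gradient}, I obtain $\mathbb{E}[|v_{ijk}|^{1+\epsilon}]\le M_{\text{eff},1+\epsilon,s}$ directly, since $v_{ijk}$ is exactly the sparse bilinear form appearing in that supremum. For the perturbation term $q_{ijk}$ that arises in $\bm M_{L,3}$ and $\bm M_{L,4}$, I would write it as $\bm c_j^\top\bm X_i\otimes((\bm L_*\bm R_*^\top-\bm L\bm R^\top)\bm x_i)^\top\widetilde{\bm r}_k$ up to reshaping, factor the inner product via Cauchy--Schwarz, and bound it by $M_{x,2+2\lambda}\|\bm L\bm R^\top-\bm L_*\bm R_*^\top\|_\text{F}^{1+\lambda}$ via the sparse predictor moment $M_{x,2+2\lambda,s}$ from \eqref{eq:P_X_moment}; this replicates \eqref{eq:moment2}.

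With these moment ingredients in hand, the four terms are bounded exactly as in Appendix \ref{append:B.2}. $\|(\bm M_{L,1})_{S_1}\|_\text{F}^2$ is controlled by Markov's inequality with threshold $\tau\asymp(nM_{\text{eff},1+\epsilon,s}/\log d)^{1/(1+\epsilon)}$, yielding $s_1[M_{\text{eff},1+\epsilon,s}^{1/\epsilon}\log d/n]^{2\epsilon/(1+\epsilon)}$. $\|(\bm M_{L,2})_{S_1}\|_\text{F}^2$ is bounded via Bernstein, using $\mathrm{Var}(\text{T}(v_{ijk};\tau))\le\tau^{1-\epsilon}M_{\text{eff},1+\epsilon,s}$ and a union bound over $s_1$ entries, producing the same statistical floor with probability $1-C\exp(-C\log d)$. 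The deterministic bias $\bm M_{L,3}$ and stochastic fluctuation $\bm M_{L,4}$ are handled by splitting along $\{|v_{ijk}|\ge\tau\}$ versus $\{|q_{ijk}|\ge\tau/2\}$, and then invoking the same phase-transition argument (small-error vs.\ large-error in $\|\bm L\bm R^\top-\bm L_*\bm R_*^\top\|_\text{F}$) that produces the cross term $\phi_{\lambda,\epsilon}\|\bm L\bm R^\top-\bm L_*\bm R_*^\top\|_\text{F}^2$ with the stated $\overline{M}_{x,\text{eff}}^2$ form. The bound for $\bm G_R$ is symmetric, using $|S_2|\le s_2$ and the column-slice part of \eqref{eq:bilinear_gradient}.

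The main obstacle, which is different from the trace-regression and GLM cases, is justifying that the moment bounds apply uniformly over sparse test vectors when the noise is the interaction $\bm X_i\otimes\bm E_i$ rather than $E_i\bm X_i$. Here the definition \eqref{eq:bilinear_gradient} is tailored precisely so that $v_{ijk}$ lies in the supremum set, but one must verify that the sparsity index of $\widetilde{\bm r}_k$ does not exceed $s_R$ throughout the iterates of Algorithm \ref{alg:2}; Theorem \ref{thm:2} guarantees this by the hard-thresholding step. Once this sparsity-preservation is invoked, the rest of the argument is a direct transcription of the trace-regression proof, with $sK$ replaced by $s$ because $K=1$, and with $M_{\text{eff},1+\epsilon,s}$ now referring to the interaction moment rather than the noise-times-predictor moment.
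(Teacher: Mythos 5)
Your overall architecture is the same as the paper's: the four-term decomposition $\bm M_{L,1}+\bm M_{L,2}+\bm M_{L,3}+\bm M_{L,4}$, the Markov/Bernstein treatment of $\bm M_{L,1}$ and $\bm M_{L,2}$ using $\mathbb{E}[|v_{ij}|^{1+\epsilon}]\leq M_{\textup{eff},1+\epsilon,s}$ (which does follow immediately from the definition in \eqref{eq:bilinear_gradient}), and the small-error/large-error split that produces $\phi_{\lambda,\epsilon}$. Steps 2 through 5 of your plan are a faithful transcription of Appendix \ref{append:B.2} and would go through.

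The genuine gap is in your Step 1 moment bound for the perturbation term $q_{ij}$. You assert that $\mathbb{E}[|q_{ij}|^{1+\lambda}]\lesssim M_{x,2+2\lambda}\|\bm L\bm R^\top-\bm L_*\bm R_*^\top\|_\textup{F}^{1+\lambda}$ follows by Cauchy--Schwarz, ``replicating \eqref{eq:moment2}.'' It does not. In trace regression, $q_{ijk}$ is a product of two \emph{linear} forms in $\mathrm{vec}(\bm X_i)$, one of which is directly $\langle\mathrm{vec}(\bm X_i),\mathrm{vec}(\bm\Theta-\bm\Theta_*)\rangle$, so H\"older immediately yields the Frobenius error times a unit-direction moment. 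In the bilinear model, $q_{ij}=\bm c_j^\top\left(\bm X_i\otimes(\bm A_*\bm X_i\bm B_*^\top-\bm A\bm X_i\bm B^\top)\right)\widetilde{\bm R}$ is a genuine quadratic form in $\bm X_i$ in which the perturbation appears as an operator sandwiching $\bm X_i$ between factor matrices, additionally coupled to the current iterate through $\widetilde{\bm R}=\bm R/\|\bm R\|_2$. To reduce this to $M_{x,2+2\lambda}\|\bm\Theta-\bm\Theta_*\|_\textup{F}^{1+\lambda}$ the paper must (i) split via Minkowski into the $\bbm\Delta_{\bm A}$ and $\bbm\Delta_{\bm B}$ contributions, (ii) rewrite each as a trace using the Kronecker identity, (iii) expand $\bm B_*\bm B^\top$ in its SVD and sum over singular values, (iv) bound $\|\bm B_*\bm B^\top\|_{\textup{nuc}}\lesssim\sqrt{\sigma_1}\|\bm B\|_\textup{F}$ using the induction hypothesis that the iterate stays in the local ball ($\|\bbm\Delta_{\bm B}\|_\textup{F}\lesssim\sqrt{\sigma_1}$, $\|\bm B\|_\textup{F}\lesssim\sqrt{\sigma_1}$), and (v) convert $\sqrt{\sigma_1}\|\bbm\Delta_{\bm A}\|_{\textup{op}}$ into $\|\bm\Theta-\bm\Theta_*\|_\textup{F}$ via Lemma \ref{lemma:2}. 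None of these steps is supplied by a bare Cauchy--Schwarz, and without the locality of the iterates the claimed bound is simply false (the nuclear-norm factor can blow up). Relatedly, the obstacle you flag as ``main'' --- uniformity of the interaction moment over sparse test vectors --- is actually dispatched by the definition of $M_{\textup{eff},1+\epsilon,s}$; the real new difficulty in the bilinear case is precisely the perturbation-moment computation you skipped.
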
~

\begin{proof}[Proof of Proposition \ref{prop:stability_bilinear}]

By definition,
\begin{equation}
  \begin{aligned}
    & \bm{G}_{L}(\bm{L},\bm{R};\tau) -\bb{E}\left[\nabla_{\bm L} f(\bm L,\bm R; z_i)\right](\bm R^\top\bm R)^{-1/2}\\ 
    = & \frac{1}{n}\sum_{i=1}^n\text{T}\Big(\left(\bm X_i\otimes (\bm A\bm X_i\bm B^\top-\bm Y_i)\right)\wt{\bm{R}};\tau\Big)-\mathbb{E}\left[\left(\bm X_i\otimes (\bm A\bm X_i\bm B^\top-\bm Y_i)\right)\wt{\bm{R}}\right]\\
    = & \Bigg\{\mathbb{E}\left[\text{T}\Big(\left(\bm X_i\otimes (\bm A_*\bm X_i\bm B_*^\top-\bm Y_i)\right)\wt{\bm{R}};\tau\Big)\right] - \mathbb{E}\left[\left(\bm X_i\otimes (\bm A_*\bm X_i\bm B_*^\top-\bm Y_i)\right)\wt{\bm{R}}\right] \Bigg\} \\
    + & \Bigg\{ \frac{1}{n}\sum_{i=1}^n\text{T}\Big(\left(\bm X_i\otimes (\bm A_*\bm X_i\bm B_*^\top-\bm Y_i)\right)\wt{\bm{R}};\tau\Big) - \mathbb{E}\left[\text{T}\Big(\left(\bm X_i\otimes (\bm A_*\bm X_i\bm B_*^\top-\bm Y_i)\right)\wt{\bm{R}};\tau\Big)\right] \Bigg\} \\
    + & \Bigg\{ \mathbb{E}\left[\left(\bm X_i\otimes (\bm A_*\bm X_i\bm B_*^\top-\bm Y_i)\right)\wt{\bm{R}}\right] - \mathbb{E}\left[\left(\bm X_i\otimes (\bm A\bm X_i\bm B^\top-\bm Y_i)\right)\wt{\bm{R}}\right] \\
    & + \mathbb{E}\left[\text{T}\Big(\left(\bm X_i\otimes (\bm A\bm X_i\bm B^\top-\bm Y_i)\right)\wt{\bm{R}};\tau\Big)\right] - \mathbb{E}\left[\text{T}\Big(\left(\bm X_i\otimes (\bm A_*\bm X_i\bm B_*^\top-\bm Y_i)\right)\wt{\bm{R}};\tau\Big)\right]\Bigg\} \\
    + & \Bigg\{\frac{1}{n}\sum_{i=1}^n\text{T}\Big(\left(\bm X_i\otimes (\bm A\bm X_i\bm B^\top-\bm Y_i)\right)\wt{\bm{R}};\tau\Big) - \frac{1}{n}\sum_{i=1}^n\text{T}\Big(\left(\bm X_i\otimes (\bm A_*\bm X_i\bm B_*^\top-\bm Y_i)\right)\wt{\bm{R}};\tau\Big) \\
    & - \mathbb{E}\left[\text{T}\Big(\left(\bm X_i\otimes (\bm A\bm X_i\bm B^\top-\bm Y_i)\right)\wt{\bm{R}};\tau\Big)\right] + \mathbb{E}\left[\text{T}\Big(\left(\bm X_i\otimes (\bm A_*\bm X_i\bm B_*^\top-\bm Y_i)\right)\wt{\bm{R}};\tau\Big)\right]\Bigg\} \\
    =: & \bm{M}_{L,1} + \bm{M}_{L,2} + \bm{M}_{L,3} + \bm{M}_{L,4},
  \end{aligned}
\end{equation}
where $\wt{\bm R}=\bm R(\bm R^\top\bm R)^{-1/2}=\bm R/\norm{\bm R}_2$ is a $s_R$ sparse vector with unit Euclidean norm. 

\noindent\textit{Step 1.} (Moment bounds)

\noindent Denote the $j$-th entry of $\left(\bm X_i\otimes (\bm A_*\bm X_i\bm B_*^\top-\bm Y_i)\right)\wt{\bm{R}}$ as
\begin{equation}\label{eq:moment3}
  v_{ij} = \bm{c}_j^\top\left(\bm X_i\otimes (\bm A_*\bm X_i\bm B_*^\top-\bm Y_i)\right)\wt{\bm{R}}=-\bm{c}_j^\top\bm J_i\widetilde{\bm{R}},
\end{equation}
satisfying $\mathbb{E}[|v_{ij}|^{1+\epsilon}]\leq M_{\text{eff},1+\epsilon,s}$, where $s=\max \left(\min \left(s_1, d_1\right), \min \left(s_2, d_2\right)\right)$.

In addition, the $j$-th entry of $\bm{M}_{L,3}$ is
\begin{equation}
  \begin{split}
    & \mathbb{E}\left[\bm c_j^\top\left(\bm X_i\otimes (\bm A_*\bm X_i\bm B_*^\top-\bm A\bm X_i\bm B^\top)\right)\wt{\bm{R}}\right]\\
    & + \mathbb{E}\left[\text{T}\left(\bm c_j^\top\left(\bm X_i\otimes (\bm A\bm X_i\bm B^\top-\bm Y_i)\right)\wt{\bm{R}}\right)\right] - \mathbb{E}\left[\text{T}\left(\bm c_j^\top\left(\bm X_i\otimes (\bm A_*\bm X_i\bm B_*^\top-\bm Y_i)\right)\wt{\bm{R}}\right)\right]\\
    =: & \mathbb{E}[q_{ij}] + \mathbb{E}[\text{T}(v_{ij}-q_{ij};\tau)] - \mathbb{E}[\text{T}(v_{ij};\tau)].
  \end{split}
\end{equation}
Let $\bbm\Delta_{\bm A}=\bm A_*-\bm A$ and $\bbm\Delta_{\bm B}=\bm B_*-\bm B$. Therefore, for $\bb E[|q_{ij}|^{1+\lambda}]$, we have
\begin{equation}\label{eq:bilinear_q1}
  \begin{aligned}
    \bb{E}[|q_{ij}|^{1+\lambda}]^{\frac{1}{1+\lambda}} & = \mathbb{E}\left[\left|\bm c_j^\top\left(\bm X_i\otimes (\bm A_*\bm X_i\bm B_*^\top-\bm A\bm X_i\bm B^\top)\right)\wt{\bm{R}}\right|^{1+\lambda}\right]^\frac{1}{1+\lambda}\\
    & \leq \mathbb{E}\left[\left|\bm c_j^\top\left(\bm X_i\otimes (\bbm\Delta_{\bm A}\bm X_i\bm B_*^\top)\right)\wt{\bm{R}}\right|^{1+\lambda}\right]^\frac{1}{1+\lambda}\\
    &\quad +\mathbb{E}\left[\left|\bm c_j^\top\left(\bm X_i\otimes (\bm A_*\bm X_i\bbm\Delta_{\bm B}^\top)\right)\wt{\bm{R}}\right|^{1+\lambda}\right]^\frac{1}{1+\lambda}.
  \end{aligned}
\end{equation}
For the first term of \eqref{eq:bilinear_q1}, by Theorem 16.2.2 of \cite{harvillematrix}, we have
\begin{equation}
  \begin{aligned}
      \bm c_j^\top\left(\bm X_i\otimes (\bbm\Delta_{\bm A}\bm X_i\bm B_*^\top)\right)\wt{\bm{R}}=&\operatorname{tr}\left(\text{mat}(\bm c_j)^\top\bbm\Delta_{\bm A}\bm X_i\bm B_*^\top\text{mat}(\wt{\bm R})\bm X^\top\right)\\
      =&\operatorname{tr}\left(\bm C_j^\top\bbm\Delta_{\bm A}\bm X_i\bm B_*^\top\bm B\bm X^\top\right)/\|\bm B\|_\text{F},
  \end{aligned}
\end{equation}
where $\text{mat}(\cdot)$ denotes the inverse of the $\text{vec}(\cdot)$ operator with appropriate dimensions. The term $\bm C_j = \text{mat}(\bm c_j)$ represents a sparse indicator matrix, where the entry corresponding to the index $j$ is 1 and all other entries are 0. Since $\bm C_j$ is rank-1, we decmopose it as $\bm C_j = \bm u_j \bm v_j^\top$, where each of $\bm u_j$ and $\bm v_j$ contains a single 1 and 0s elsewhere. In addition, denote the SVD of $\bm B_*\bm B^\top$ as $\bm B_*\bm B^\top=\sum_{k=1}^{r_B}\gamma_k \bm w_k\bm z_k^\top$, where $r_B$ is its rank, $\{\gamma_k\}$ are singular values and $\{\bm w_k\}$ and $\{\bm z_k\}$ are corresponding singular vectors. Then, for the first term of \eqref{eq:bilinear_q1}, we have
\begin{equation}
  \begin{aligned}
    &\mathbb{E}\left[\left|\bm c_j^\top\left(\bm X_i\otimes (\bbm\Delta_{\bm A}\bm X_i\bm B_*^\top)\right)\wt{\bm{R}}\right|^{1+\lambda}\right]^\frac{1}{1+\lambda}\\
    =& \|\bm B\|_\text{F}^{-1}\mathbb{E}\left[\left|\operatorname{tr}\left(\bm v_j\bm u_j^\top\Delta_{\bm A}\bm X_i\sum_{k=1}^{r_B}\gamma_k \bm w_k\bm z_k^\top\bm X^\top\right)\right|^{1+\lambda}\right]^\frac{1}{1+\lambda}\\
    =& \|\bm B\|_\text{F}^{-1}\mathbb{E}\left[\left|\sum_{k=1}^{r_B}\gamma_k\left(\bm u_j^\top\Delta_{\bm A}\bm X_i \bm w_k\right)\left(\bm z_k^\top\bm X^\top\bm v_j\right)\right|^{1+\lambda}\right]^\frac{1}{1+\lambda}\\
    \leq &\|\bm B\|_\text{F}^{-1}\sum_{k=1}^{r_B}\gamma_k\mathbb{E}\left[\left|\left(\bm u_j^\top\Delta_{\bm A}\bm X_i \bm w_k\right)\left(\bm z_k^\top\bm X^\top\bm v_j\right)\right|^{1+\lambda}\right]^\frac{1}{1+\lambda}\\
    \leq &  \|\bm B\|_\text{F}^{-1}\|\bm u_j^\top\bbm\Delta_{\bm A}\|_2 M_{x,2+2\lambda}^{1/(1+\lambda)}\sum_{k=1}^{r_B}\gamma_k\\
    \leq & \|\bm B\|_\text{F}^{-1}\|\bbm\Delta_{\bm A}\|_\text{op}\|\bm B_*\bm B^\top\|_\text{nuc}M_{x,2+2\lambda}^{1/(1+\lambda)},
  \end{aligned}
\end{equation}
where the first inequality is given by the Minkowski's inequality, the second is given by an analogous argument to \eqref{eq:moment2}, and $\|\cdot\|_\text{nuc}$ denotes the nuclear norm.

Next, we derive a upper bound for $\|\bm B_*\bm B^\top\|_\text{nuc}$. We have
\begin{equation}
  \begin{aligned}
    \|\bm B_*\bm B^\top\|_\text{nuc}\leq& \|\bm B\bm B^\top\|_\text{nuc} + \|\bbm\Delta_{\bm B}\bm B\|_\text{nuc}\\
    =&\operatorname{tr}\left(\bm B\bm B^\top\right)+\|\bbm\Delta_{\bm B}\bm B\|_\text{nuc}\\
    \leq & \|\bm B\|_\text{F}^2 + \|\bbm\Delta_{\bm B}\|_\text{F}\|\bm B\|_\text{F},
  \end{aligned}
\end{equation}
where the second inequality follows from the fact that $\|\bm M_1\bm M_2\|_\text{nuc} \leq \|\bm M_1\|_\text{F} \|\bm M_2\|_\text{F}$ for any conformable matrices $\bm M_1$ and $\bm M_2$. Since in the bilinear model, $\bbm\Sigma_*$ reduces to the scalar $\sigma_1=\|\bm A_*\otimes \bm B_*\|_\text{F}=\|\bm B_*\|_\text{F}^2=\|\bm A_*\|_\text{F}^2$, by the initialization condition \eqref{eq:condition1}, we obtain that $\|\bbm\Delta_{\bm B}\|_\text{F}\leq C\sqrt{\sigma_1}$. Moreover, $\|\bm B\|_\text{F}\leq \|\bbm\Delta_{\bm B}\|_\text{F}+\|\bm B_*\|_\text{F}\leq C\sqrt{\sigma_1}$. Therefore, we have that 
\begin{equation}
  \|\bm B_*\bm B^\top\|_\text{nuc}\leq C\sqrt{\sigma_1}\|\bm B\|_\text{F}.
\end{equation}
Finally, by Lemma \ref{lemma:2}, the first term of \eqref{eq:bilinear_q1} is upper bounded by
\begin{equation}
  \begin{aligned}
      \mathbb{E}\left[\left|\bm c_j^\top\left(\bm X_i\otimes (\bbm\Delta_{\bm A}\bm X_i\bm B_*^\top)\right)\wt{\bm{R}}\right|^{1+\lambda}\right]^\frac{1}{1+\lambda}&\leq C\sqrt{\sigma_1}\|\bbm\Delta_{\bm A}\|_\text{op}M_{x,2+2\lambda}^{1/(1+\lambda)}\\
      &\leq Cd(\bm F,\bm F_*)M_{x,2+2\lambda}^{1/(1+\lambda)}\\
      &\leq C\|\bbm\Theta-\bbm\Theta_*\|_\text{F}M_{x,2+2\lambda}^{1/(1+\lambda)}.
  \end{aligned}
\end{equation}
The upper bound of the second term of \eqref{eq:bilinear_q1} can be devived analogously. Therefore, we have 
\begin{equation}
  \bb{E}[|q_{ij}|^{1+\lambda}]\lesssim \|\bbm\Theta-\bbm\Theta_*\|_\text{F}^{1+\lambda}M_{x,2+2\lambda}.
\end{equation}

\noindent\textit{Step 2.} (Bound $\|(\bm{M}_{L,1})_{S_1}\|_\text{F}^2$)

\noindent For $\bm{M}_{L,1}$ and any $S_1$, we have
\begin{equation}
  \|(\bm{M}_{L,1})_{S_1}\|_\text{F}^2 = \sum_{j\in S_1}\Big|\mathbb{E}[\text{T}(v_{ij};\tau)] - \mathbb{E}[v_{ij}]\Big|^2.
\end{equation}
For any $j\in S_1$, similar to \eqref{eq:matreg_M1}, we have
\begin{equation}
  \begin{split}
    \Big|\mathbb{E}[\text{T}(v_{ij};\tau)] - \mathbb{E}[v_{ij}]\Big| \leq \mathbb{E}\Big[|v_{ij}|1\{|v_{ij}|\geq\tau\}\Big] 
    \leq M_{\text{eff},1+\epsilon,s}\cdot\tau^{-\epsilon} \asymp \left[\frac{M_{\text{eff},1+\epsilon,s}^{1/\epsilon}\log d}{n}\right]^\frac{\epsilon}{1+\epsilon},
  \end{split}
\end{equation}
with the truncation parameter $\tau\asymp(nM_{\text{eff},1+\epsilon,s}/\log d)^{1/(1+\epsilon)}$. Hence, for any $|S_1|\leq s_1$,
\begin{equation}
  \|(\bm{M}_{L,1})_{S_1}\|_\text{F}^2 \lesssim s_1\left[\frac{M_{\text{eff},1+\epsilon,s}^{1/\epsilon}\log d}{n}\right]^\frac{\epsilon}{1+\epsilon}.
\end{equation}~

\noindent\textit{Step 3.} (Bound $\|(\bm{M}_{L,2})_{S_1}\|_\text{F}^2$)

\noindent By definition, for any $S_1$,
\begin{equation}
  \|(\bm{M}_{L,2})_{S_1}\|_\text{F}^2 = \sum_{j\in S_1}\left|\frac{1}{n}\sum_{i=1}^n\text{T}(v_{ij};\tau) - \mathbb{E}[\text{T}(v_{ij};\tau)]\right|^2.
\end{equation}
For each $i=1,\dots,n$, $|\text{T}(v_{ij},\tau)|\leq \tau$, and its variance is upper bounded by
\begin{equation}
  \text{var}(\text{T}(v_{ij};\tau)) \leq \mathbb{E}\left[\text{T}(v_{ij};\tau)^2\right]\tau^{(1-\epsilon)}\cdot\mathbb{E}\left[|v_{ij}|^{1+\epsilon}\right] \leq \tau^{1-\epsilon}M_{\text{eff},1+\epsilon,s}.
\end{equation}
By Bernstein's inequality, for any $j\in S_1$, and $0<t\lesssim \tau^{-\epsilon}M_{\text{eff},1+\epsilon,s}$, 
\begin{equation}
  \mathbb{P}\left(\left|\frac{1}{n}\sum_{i=1}^n\text{T}(v_{ij};\tau) - \mathbb{E}[\text{T}(v_{ij};\tau)]\right|\geq t\right)\leq 2\exp\left(-\frac{nt^2}{4\tau^{1-\epsilon}M_{\text{eff}}}\right).
\end{equation}
Letting $t=CM_{\text{eff},1+\epsilon,s}^{1/(1+\epsilon)}\log(d)^{\epsilon/(1+\epsilon)}n^{-\epsilon/(1+\epsilon)}$, similar to \eqref{eq:matreg_M2_bern}, we have for $|S_1|\leq s_1$,
\begin{equation}
  \mathbb{P}\left(\max_{\substack{j\in S_1}}\left|\frac{1}{n}\sum_{i=1}^n\text{T}(v_{ij};\tau) - \mathbb{E}[\text{T}(v_{ij};\tau)]\right|\gtrsim \left[\frac{M_{\text{eff},1+\epsilon,s}^{1/\epsilon}\log d}{n}\right]^{\frac{\epsilon}{1+\epsilon}} \right)\leq C\exp\left(-C\log d\right).
\end{equation}
Hence, with probability at least $1-C\exp(-C\log d)$,
\begin{equation}
  \|(\bm{M}_{L,2})_{S_1}\|_\text{F}^2 \lesssim s_1\left[\frac{M_{\text{eff},1+\epsilon,s}^{1/\epsilon}\log d}{n}\right]^\frac{2\epsilon}{1+\epsilon}.
\end{equation}~

\noindent\textit{Step 4.} (Bound $\|(\bm{M}_{L,3})_{S_1}\|_\text{F}^2$)

\noindent By definition, for any $S_1$,
\begin{equation}
  \|(\bm{M}_{L,3})_{S_1}\|_\text{F}^2 = \sum_{j\in S_1}\Big|\mathbb{E}[q_{ij}] - \mathbb{E}\Big[\text{T}(q_{ij}+v_{ij};\tau) - \text{T}(v_{ij};\tau)\Big]\Big|^2.
\end{equation}
Similar to \eqref{eq:matreg_M3}, we have
\begin{equation}
  \begin{aligned}
    & \Big|\mathbb{E}[q_{ij}] - \mathbb{E}\Big[\text{T}(v_{ij}-q_{ij};\tau) - \text{T}(v_{ij};\tau)\Big]\Big| \\
    \leq & \Big|\mathbb{E}\Big[q_{ij}\cdot1\{|q_{ij}\geq\tau/2|\}\Big]\Big| + \Big|\mathbb{E}\Big[q_{ij}\cdot1\{|v_{ij}\geq\tau/2|\}\Big]\Big|\\
    \leq &~ \mathbb{E}\left[|q_{ij}|^{1+\lambda}\right]^{\frac{1}{1+\lambda}}\cdot\left(\frac{\mathbb{E}\Big[|q_{ij}|^{1+\lambda}\Big]}{\tau^{1+\lambda}}\right)^{\frac{\lambda}{1+\lambda}} + \mathbb{E}\left[|q_{ij}|^{1+\lambda}\right]^{\frac{1}{1+\lambda}}\cdot\left(\frac{\mathbb{E}\Big[|v_{ij}|^{1+\epsilon}\Big]}{\tau^{1+\epsilon}}\right)^{\frac{\lambda}{1+\lambda}}\\
    = &~ \mathbb{E}\left[|q_{ij}|^{1+\lambda}\right]\cdot\tau^{-\lambda}~+~\mathbb{E}\left[|q_{ij}|^{1+\lambda}\right]^{\frac{1}{1+\lambda}}\cdot\mathbb{E}\left[|v_{ij}|^{1+\epsilon}\right]^{\frac{\lambda}{1+\lambda}}\cdot\tau^{-\frac{\lambda(1+\epsilon)}{1+\lambda}}\\
    \leq &~ M_{x,2+2\lambda}\cdot\left(\frac{\log d}{nM_{\text{eff},1+\epsilon,s}}\right)^{\frac{\lambda}{1+\epsilon}}\cdot\|\bbm\Theta-\bbm\Theta_*\|_\mathrm{F}^{1+\lambda}\\
    & + M_{x,2+2\lambda}^{1/(1+\lambda)}\cdot M_{\text{eff},1+\epsilon,s}^{\lambda/(1+\lambda)}\cdot\left(\frac{\log d}{nM_{\text{eff},1+\epsilon,s}}\right)^{\frac{\lambda}{1+\lambda}}\cdot\|\bbm\Theta-\bbm\Theta_*\|_\mathrm{F}\\
    \leq & ~ \bar{\sigma}_1^{\lambda}\left(M_{x,2+2\lambda}M_{\text{eff},1+\epsilon,s}^{-\lambda/(1+\epsilon)}+M_{x,2+2\lambda}^{1/(1+\lambda)}\right)\left(\frac{\log d}{n}\right)^{\min\left(\frac{\lambda}{1+\lambda},\frac{\lambda}{1+\epsilon}\right)}\|\bbm\Theta-\bbm\Theta_*\|_\mathrm{F},
  \end{aligned}
\end{equation}
where $\bar{\sigma}_1=\max (\sigma_1(\bm L_*\bm R_*^\top),1)$. Therefore, we have
\begin{equation}
  \begin{split}
    & \|(\bm{M}_{L,3})_{S_1}\|_\text{F}^2 \\
    & \lesssim \bar{\sigma}_1^{2\lambda}\left(M_{x,2+2\lambda}^2M_{\text{eff},1+\epsilon,s}^{-2\lambda/(1+\epsilon)}+M_{x,2+2\lambda}^{2/(1+\lambda)}\right)\left(\frac{\log d}{n}\right)^{\min\left(\frac{2\lambda}{1+\lambda},\frac{2\lambda}{1+\epsilon}\right)}\|\bbm\Theta-\bbm\Theta_*\|_\mathrm{F}^2.
  \end{split}
\end{equation}~

\noindent\textit{Step 5.} (Bound $\|(\bm{M}_{L,4})_{S_1}\|_\text{F}^2$)

\noindent By definition, for any $S_1$,
\begin{equation}
  \|(\bm{M}_{L,4})_{S_1}\|_\text{F}^2 = \sum_{j\in S_1}\Bigg|\frac{1}{n}\sum_{i=1}^n\Big[\text{T}(q_{ij}+v_{ij};\tau)-\text{T}(v_{ij};\tau)\Big] -\mathbb{E}\Big[\text{T}(q_{ij}+v_{ij};\tau)-\text{T}(v_{ij};\tau)\Big] \Bigg|^2.
\end{equation}
For each $i=1,\dots,n$, we have $|\text{T}(q_{ij}+v_{ij};\tau)-\text{T}(v_{ij};\tau)|\leq 2\tau$, and hence,
\begin{equation}
\begin{aligned}
  \mathbb{E}\left[\Big(\text{T}(q_{ij}+v_{ij};\tau)-\text{T}(v_{ij};\tau)\Big)^2\right] &\leq (2\tau)^{1-\lambda}\cdot\mathbb{E}\left[|q_{ij}|^{1+\lambda}\right]\\
  &\lesssim \tau^{1-\lambda}M_{x,2+2\lambda}\norm{\bbm\Theta-\bbm\Theta_*}_\mathrm{F}^{1+\lambda}\\
  &=:V^2.
\end{aligned}
\end{equation}
By Bernstein's inequality, for any $j\in S_1$,
\begin{equation}\label{eq:bilinear_M4_bern}
  \begin{split}
    & \mathbb{P}\left(\left|\frac{1}{n}\sum_{i=1}^n\Big[\text{T}(q_{ij}+v_{ij};\tau)-\text{T}(v_{ij};\tau)\Big]-\mathbb{E}\Big[\text{T}(q_{ij}+v_{ij};\tau)-\text{T}(v_{ij};\tau)\Big]\right|\geq t\right)\\
    & \leq 2\exp\left(-\frac{Cnt^2}{V^2+\tau t}\right).
  \end{split}
\end{equation}
Let $t=C_1V\sqrt{\log d/n}+C_2\tau\log d/n$, where $C_1$ and $C_2$ are two positive constants. By a similar argument to that in Step 5 of Appendix \ref{append:B.2}, if $V^2\lesssim \tau t$, the RHS of \eqref{eq:bilinear_M4_bern} is upper bounded by
$$\begin{aligned}
  \text{RHS} &\leq 2\exp\left(-\frac{Cnt^2}{\tau t}\right)\\
  &=2\exp\left(-\frac{Cn(C_1V\sqrt{\log d/n}+C_2\tau\log d/n)}{\tau}\right)\\
  &\leq 2\exp\left(-\frac{CnC_2\tau\log d/n}{\tau}\right)\\
  &=2\exp\left(-C\log d\right).
\end{aligned}
$$
Conversely, if $V^2\gtrsim \tau t$, the RHS of \eqref{eq:bilinear_M4_bern} is upper bounded by
$$\begin{aligned}
  \text{RHS} &\leq 2\exp\left(-\frac{Cnt^2}{V^2}\right)\\
  &=2\exp\left(-\frac{Cn(C_1V\sqrt{\log d/n}+C_2\tau\log d/n)^2}{V^2}\right)\\
  &\leq 2\exp\left(-\frac{2CnC_1(V\sqrt{\log d/n})^2}{V^2}\right)\\
  &=2\exp\left(-C\log d\right).
\end{aligned}
$$
Then, with $K=1$, we have
\begin{equation}
  \begin{split} 
    \mathbb{P}\Bigg(&\max_{j\in S_1,k\in[K]}\left|\frac{1}{n}\sum_{i=1}^n\Big[\text{T}(q_{ijk}+v_{ijk};\tau)-\text{T}(v_{ijk};\tau)\Big]-\mathbb{E}\Big[\text{T}(q_{ijk}+v_{ijk};\tau)-\text{T}(v_{ijk};\tau)\Big]\right|\geq t\Bigg) \\ 
    & \lesssim d_1\exp(-C\log d) \leq C\exp(-C\log d).
  \end{split}
\end{equation}
For the upper bound $t$, plugging in the values of $V^2$ and $\tau$, we have
$$
\begin{aligned}
  t&=C_1\sqrt{\frac{V^2\log d}{n}}+C_2\tau\frac{\log d}{n}\\
  & \lesssim M_{x,2+2\lambda}^{1/2}M_{\text{eff},1+\epsilon,s}^{(1-\lambda)/(2+2\epsilon)}\left(\frac{\log d}{n}\right)^\frac{\min(\epsilon,\lambda)}{1+\epsilon}\|\bm L\bm R^\top-\bm L_*\bm R_*^\top\|_\mathrm{F}^\frac{1+\lambda}{2}+M_{\text{eff},1+\epsilon,s}\left(\frac{\log d}{n}\right)^\frac{\epsilon}{1+\epsilon}.
\end{aligned}
$$
Analogous to the derivation of \eqref{eq:matreg_m4_final}, it follows that
\begin{equation}
  \begin{split}
      \|(\bm{M}_{L,4})_{S_1}\|_\text{F}^2&\lesssim s_1M_{x,2+2\lambda}^\frac{2}{1+\lambda}M_{\text{eff},1+\epsilon,s}^\frac{2\epsilon(\lambda-1)}{(1+\epsilon)(1+\lambda)}\left(\frac{\log d}{n}\right)^\frac{2\lambda}{1+\lambda}\|\bm L\bm R^\top-\bm L_*\bm R_*^\top\|_\mathrm{F}^2\\
      &+s_1M_{\text{eff},1+\epsilon,s}^2\left(\frac{\log d}{n}\right)^\frac{2\epsilon}{1+\epsilon}.
  \end{split}
\end{equation}

Combining the upper bounds of this step and Step 4, we have
\begin{equation}
\begin{aligned}
    \|(\bm{M}_{L,3})_{S_1}\|_\text{F}^2+\|(\bm{M}_{L,4})_{S_1}\|_\text{F}^2
    \lesssim \phi_{\lambda,\epsilon}\|\bm{L}\bm{R}^\top-\bm{L}_*\bm{R}_*^\top\|_\text{F}^2+ s_1\left[\frac{M_{\text{eff},1+\epsilon,s}^{1/\epsilon}\log d}{n}\right]^{\frac{2\epsilon}{1+\epsilon}},
\end{aligned}
\end{equation}
where 
$$
\phi_{\lambda,\epsilon}=s\bar{\sigma}_1^{2\lambda}\overline{M}_{x,\text{eff}}^2\left(\frac{\log d}{n}\right)^\frac{2\min(\epsilon,\lambda)}{1+\epsilon},
$$
and 
$$
  \overline{M}_{x,\mathrm{eff}}^2=M_{x,2+2\lambda}^2M_{\mathrm{eff},1+\epsilon,s}^\frac{-2\lambda}{1+\epsilon}+M_{x,2+2\lambda}^\frac{2}{1+\lambda}M_{\mathrm{eff},1+\epsilon,s}^\frac{2\epsilon(\lambda-1)}{(1+\epsilon)(1+\lambda)}+M_{x,2+2\lambda}^\frac{2}{1+\lambda}.
$$

In summary, based on the upper bounds in steps 2 to 5, we have
\begin{equation}
  \sum_{j=1}^4\|(\bm{M}_{L,j})_{S_1}\|_\text{F}^2 \lesssim s_1\left[\frac{M_{\text{eff},1+\epsilon,s}^{1/\epsilon}\log d}{n}\right]^{\frac{2\epsilon}{1+\epsilon}} + \phi_{\lambda,\epsilon}\|\bm{L}\bm{R}^\top-\bm{L}_*\bm{R}_*^\top\|_\text{F}^2.
\end{equation}

\end{proof}

\subsection{Convergence Rates of Bilinear Model}

\begin{proof}[Proof of Theorem \ref{thm:bilinear_model}]

Similarly to the proof of matrix trace regression and GLMs, the initialization guarantees and robust gradient stabilities are verified in Appendices \ref{append:D.1} and \ref{append:D.2}.

Finally, we examine the conditions and apply Theorem \ref{thm:2}. Under Assumption \ref{asmp:moment2}, by Lemma 3.11 in \citet{bubeck2015convex}, we can show that the RCG condition holds with $\alpha=\alpha_x$ and $\beta=\beta_x$.

The de-scaled robust gradient estimators are stable: i.e., with probability at least $1-C\exp(-\log d)$,
\begin{equation}
  \|\bm{G}_L(\bm{L},\bm{R};\tau)_{S_1}\|_\text{F}^2\lesssim s_1\left[\frac{M_{\text{eff},1+\epsilon,s}^{1/\epsilon}\log d}{n}\right]^{\frac{2\epsilon}{1+\epsilon}} + \phi_{\lambda,\epsilon}\|\bm{L}\bm{R}^\top-\bm{L}_*\bm{R}_*^\top\|_\text{F}^2.
\end{equation}
Similarly, the similar bound holds for $\bm{G}_R(\bm{L},\bm{R};\tau)_{S_2}$,
\begin{equation}
  \|\bm{G}_R(\bm{L},\bm{R};\tau)_{S_2}\|_\text{F}^2\lesssim s_2\left[\frac{M_{\text{eff},1+\epsilon,s}^{1/\epsilon}\log d}{n}\right]^{\frac{2\epsilon}{1+\epsilon}} + \phi_{\lambda,\epsilon}\|\bm{L}\bm{R}^\top-\bm{L}_*\bm{R}_*^\top\|_\text{F}^2.
\end{equation}
As the sample size satisfies that
\begin{equation}
  \begin{aligned}
    &n\gtrsim \left(s\alpha_x^{-2}\bar{\sigma}_1^{2\lambda}\overline{M}_{x,\text{eff}}^2\right)^\frac{1+\epsilon}{2\min(\epsilon,\lambda)}\log d,\\
   \text{and}\quad & n\gtrsim \left(s\sigma_1^{-2}\alpha_x^{-3}\beta_x\right)^\frac{1+\epsilon}{2\epsilon}M_{\text{eff},1+\epsilon,s}^{1/\epsilon}\log d,
  \end{aligned}
\end{equation}
we have $\phi_{\lambda,\epsilon}\lesssim\alpha_x^2$ and $\xi_L^2+\xi_R^2\lesssim \alpha^3_x\beta^{-1}_x\sigma_1^2$. By Theorem \ref{thm:2},
\begin{equation}
  \|\bm{\Theta}^{(j)}-\bm{\Theta}_*\|_\text{F}^2 \lesssim (1-C\alpha_x\beta_x^{-1})^t \|\bm{\Theta}^{(0)}-\bm{\Theta}_*\|_\text{F}^2 + C\alpha_x^{-2}s\left[\frac{M_{\text{eff},1+\epsilon,s}^{1/\epsilon}\log d}{n}\right]^{\frac{2\epsilon}{1+\epsilon}}.
\end{equation}
After sufficient iterations, we have
\begin{equation}
  \|\wh{\bm{\Theta}}-\bm{\Theta}_*\|_\text{F}^2 \lesssim C\alpha_x^{-2}s\left[\frac{M_{\text{eff},1+\epsilon,s}^{1/\epsilon}\log d}{n}\right]^{\frac{2\epsilon}{1+\epsilon}}.
\end{equation}

\end{proof}

\section{Additional Simulation Results}\label{append:E}

To complement the findings in Section \ref{sec:5.3}, we report the performance of the proposed two-stage estimator for matrix logistic regression and bilinear models in the presence of heavy-tailed distributions.

\subsection{Matrix Logistic Regression}

We compare our method with unrobustified ScGD-SHT to highlight the benefit of truncation. The simulation setup is configured as follows: $p_1=q_1=p_2=q_2=6$, $K=1$, $s_{L,*}=s_{R,*}=5$, $s_L=s_R=7$, $n=0.5p_1q_1p_2q_2$, and $\sigma_1=2$. For the true values, we first generate $\bm L_1\in\mathbb{R}^{36}$ and $\bm R_1\in\mathbb{R}^{36}$, where the first 5 entries are drawn i.i.d. from $N(0,1)$ and the remaining entries are set to zero. Then, we let $\bm L_*=\sqrt{\sigma_1}\cdot \bm L_1/\|\bm L_1\|_2$ and $\bm R_*=\sqrt{\sigma_1}\cdot \bm R_1/\|\bm R_1\|_2$. For $\bm X_i$, we introduce a heterogeneous distribution. The entries of $\bm X_i$ corresponding to the nonzero entries of $\bbm\Theta_*$, which effectively contribute to the response $\bm Y_i$, are sampled from $N(0, 1)$. In contrast, entries in the inactive positions are drawn from a truncated $t_{1.1}$ distribution, denoted by $\bar{t}_{1.1}$ and $\bar{t}_{1.6}$, where values are clipped to $[-1000,1000]$ to ensure the existence of the second moment. This configuration mimics a scenario where a clean signal is embedded in heavy-tailed ambient noise.

To ensure a rigorous comparison reflecting the practical utility of each estimator, we impose a fixed computational budget across all methods. Specifically, in each replication, after the robust LASSO initialization, each algorithm is allowed to search for its optimal configuration within a grid of step sizes $\eta\in\{0.1,0.01,0.001\}$ and corresponding maximum iterations $J\in \{500,1000,2000\}$. The truncation parameter $\tau$ is selected by 5-fold cross-validation. We report the minimum estimation error achieved by each method over the three configurations.

\begin{figure}[htbp]
  \centering
  \includegraphics[width=0.6\textwidth]{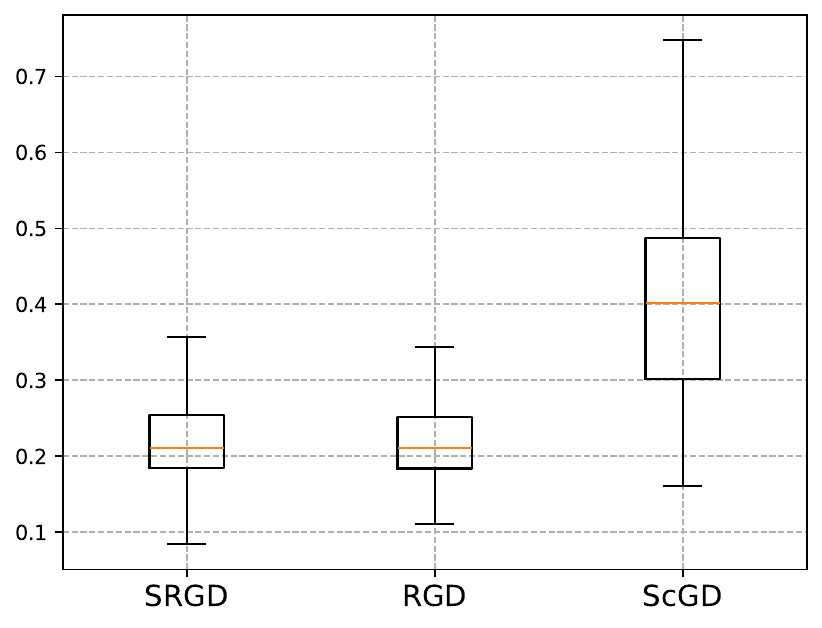}
  \caption{Boxplots of relative estimation errors on matrix logistic regression (200 repetitions).}
  \label{fig:exp3_logistic}
\end{figure}

Figure \ref{fig:exp3_logistic} illustrates the estimation errors of the three methods. The non-robust ScGD-SHT exhibits significantly higher estimation errors compared to the truncated estimators. In contrast, both SRGD--SHT and RGD-SHT maintain low estimation errors. This performance gap suggests that non-robust gradient updates are sensitive to the high-magnitude predictors arising from heavy-tailed distributions. The superior performance of SRGD--SHT and RGD-SHT demonstrates that element-wise truncation is critical for mitigating the influence of extreme leverage points.

\subsection{Bilinear Models}

For the bilinear models, the parameters are set as: $p_1=p_2=5$, $q_1=q_2=10$, $n=100$, $s_{L,*}=s_{R,*}=15$, $s_L=s_R=20$ and $\sigma_1=5$. For $\bm A_*$ and $\bm B_*$, their first 3 columns are drawn independently from $N(0,1)$, subsequently normalized to have unit Euclidean norm, and scaled by a factor of $\sqrt{\sigma_1}$ to control the signal strength. The remaining columns are zero-padded. For $\bm X_i$, analogous to the matrix logistic regression setup, we use the heterogeneous distribution. The entries of $\bm X_i$ at the intersection of the nonzero columns of $\bm A_*$ and $\bm B_*$, which effectively contribute to the response $\bm Y_i$, are sampled from $N(0, 1)$. In contrast, entries in the inactive positions are drawn from a truncated $t_{1.1}$ distribution, denoted by $\bar{t}_{1.1}$, where values are clipped to $[-100,100]$. For the random errors, we test random matrices with independent entries drawn from (a) $N(0,1)$, (b) $0.2t_{2.1}$, (c) $0.2t_{1.5}$, and (d) $0.2t_{1.1}$ distributions. The tuning parameters $\tau$ for all methods are selected via 5-fold cross-validation.

\begin{figure}[htbp]
  \centering
  \includegraphics[width=0.8\textwidth]{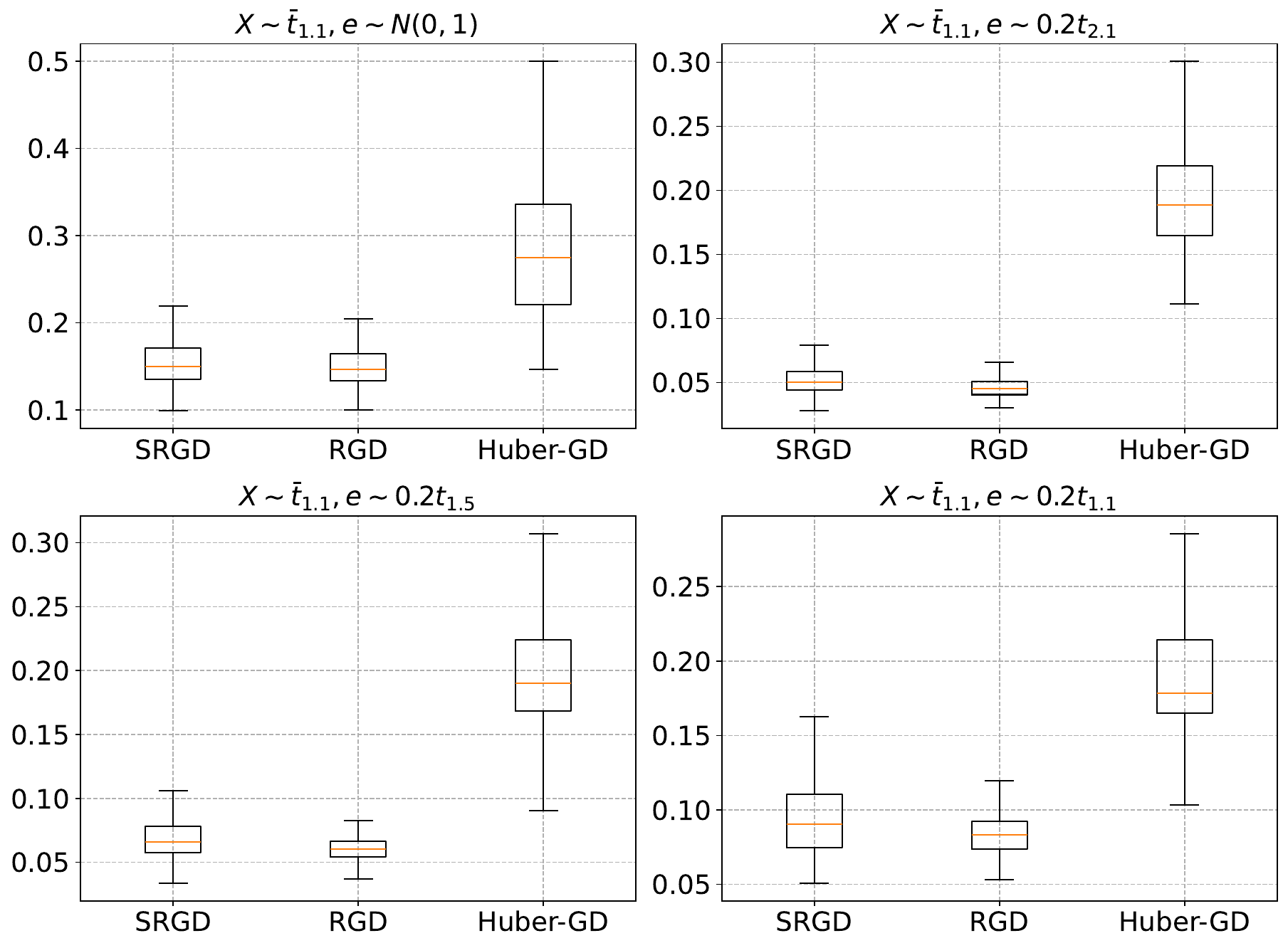}
  \caption{Boxplots of relative estimation errors on bilinear models (200 repetitions).}
  \label{fig:exp3_bilinear}
\end{figure}

Figure \ref{fig:exp3_bilinear} displays the estimation errors of the three robust methods. Huber-GD consistently exhibits significantly higher estimation errors compared to the truncation-based methods across all noise settings. It empirically confirms that the Huber loss is sensitive to high-leverage points induced by heavy-tailed predictors. Conversely, both SRGD--SHT and RGD--SHT achieve lower estimation errors, indicating that element-wise gradient truncation effectively reduces the influence of heavy-tailed predictors and heavy-tailed random errors. In addition, SRGD--SHT and RGD--SHT perform comparably in this experiment. This behavior is expected since the bilinear models are well-conditioned with $\kappa=1$.

\section{Additional Real Data Analysis}\label{append:F}
\setcounter{table}{0} 
\renewcommand{\thetable}{F.\arabic{table}}

This appendix provides details of the data set and additional results of the real data analysis.

\subsection{The EEG data}
To effectively capture the spatial correlation structure of the EEG signals via the Kronecker product structures, we reordered the 64 electrode channels based on their physical scalp topology. We first organized the electrodes into eight functional clusters, following a general anterior-to-posterior progression (from prefrontal to occipital). Within each functional cluster, the electrodes are arranged from the left hemisphere to the right hemisphere. The detailed ordering used in our analysis is presented in Table \ref{tab:eeg_ordering}. Row indices of $\bm{X_i}$ correspond to the electrodes listed in the table, following standard reading order (top to bottom, left to right).

\begin{table}[h]
\centering
\caption{Ordering of the 64 EEG electrodes in the EEG data analysis.}
\label{tab:eeg_ordering}
\begin{tabular}{l|l|l}
\hline
\textbf{Group} & \textbf{Region} & \textbf{Electrodes (Left $\rightarrow$ Right)} \\ \hline
1 & Pre-Frontal & FP1, AF7, AF1, FPZ, FP2, AF8, AF2, AFZ \\ \hline
2 & Frontal & F7, F5, F3, F1, FZ, F2, F4, F6 \\ \hline
3 & Fronto-Central & FT7, FC5, FC3, FC1, FCZ, FC2, FC4, FC6 \\ \hline
4 & Central & T7, C5, C3, C1, CZ, C2, C4, C6 \\ \hline
5 & Centro-Parietal & TP7, CP5, CP3, CP1, CPZ, CP2, CP4, CP6 \\ \hline
6 & Parietal & P7, P5, P3, P1, PZ, P2, P4, P6 \\ \hline
7 & Occipital & PO7, PO1, O1, OZ, POZ, PO2, PO8, O2 \\ \hline
8 & Lateral \& Misc & F8, FT8, T8, TP8, P8, X, Y, nd \\ \hline
\end{tabular}
\end{table}

Using the sparse Kronecker product structured robust estimates provided by the SRGD-SHT algorithm, we identify functional regions and time spans that are critical for classifying alcoholism. Figure \ref{fig:EEG_A1A2} displays the estimated $\bm A_1$ and $\bm A_2$, where the x-axis represents temporal segments of 62.5 ms each, and the y-axis corresponds to the anatomical subdivisions listed in order in Table \ref{tab:eeg_ordering}. Visual inspection reveals that the discriminative signals are not randomly distributed but are highly localized within specific spatio-temporal blocks. Specifically, non-zero coefficients are concentrated in two distinct spatial clusters: the anterior regions (indices 0--3), corresponding to the Pre-Frontal and Frontal cortices, and the posterior regions (indices 11--15), covering the Parietal, Occipital, and Lateral areas. Temporally, these regions are primarily activated during the interval of indices 3 to 7, translating to a latency window of 187.5 ms to 500 ms post-stimulus. In addition, a secondary activation is observed in the late latency window (812.5 ms to 1000 ms). These results empirically validate that the proposed method effectively filters out irrelevant noise and captures biologically meaningful patterns.

\begin{figure}[htbp]
  \centering
  \includegraphics[width=\textwidth]{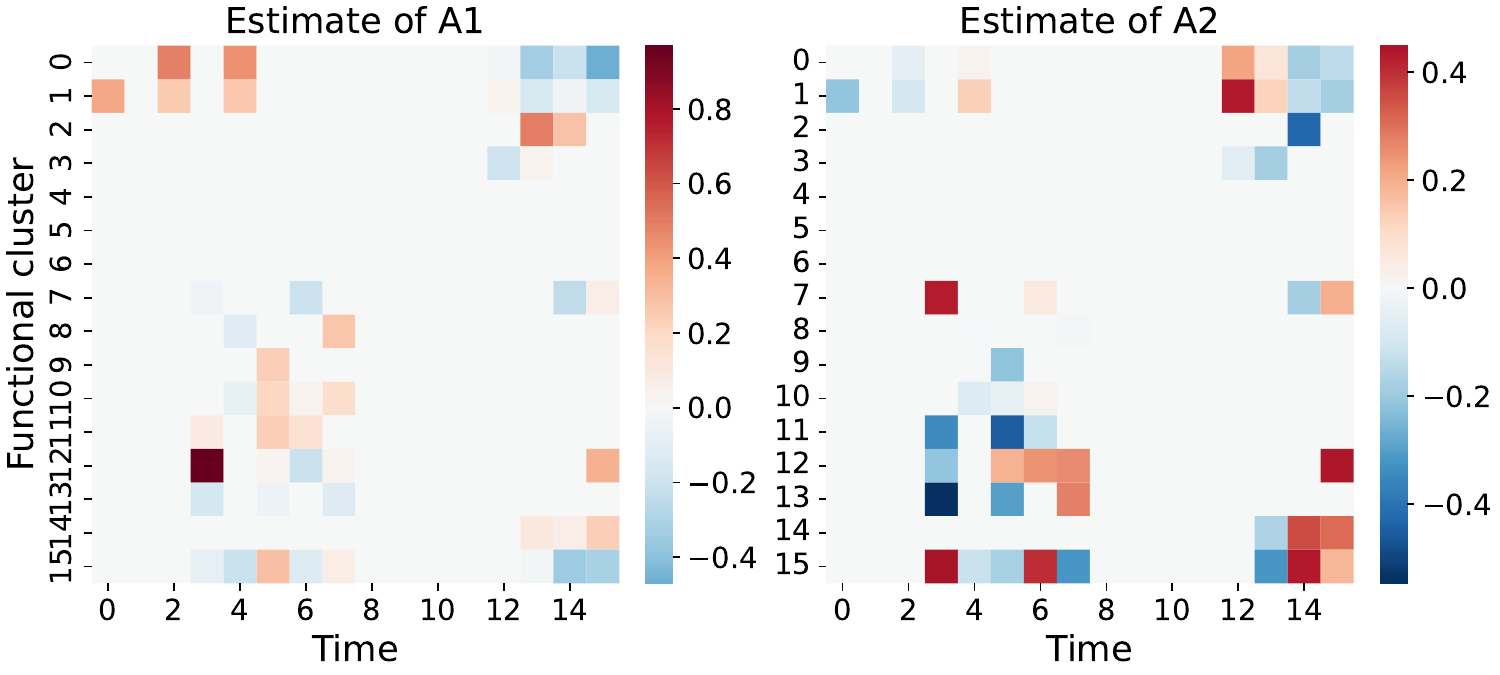}
  \caption{Estimates of $\bm A_1$ and $\bm A_2$.}
  \label{fig:EEG_A1A2}
\end{figure}

\subsection{The Macroeconomic Data}

We provide datails of the macroeconomic data set analyzed in Section \ref{sec:6.2}. In the data set, the 14 countries are the United States (USA), Canada (CAN), New Zealand (NZL), Australia (AUS), Norway (NOR), Ireland (IRL), Denmark (DNK), the United Kingdom (GBR), Finland (FIN), Sweden (SWE), France (FRA), the Netherlands (NLD), Austria (AUT), and Germany (DEU). The 10 economic indicators include the Consumer Price Index (CPI) for food (CPIF), energy (CPIE), and total items (CPIT); long-term interest rates (government bond yields, IRLT) and 3-month interbank rates (IR3); total industrial production excluding construction (PTEC) and total manufacturing production (PTM); Gross Domestic Product (GDP); and the total value of exports (ITEX) and imports (ITEM). The plot of sample kurtosis is provided in Figure \ref{fig:TS_kurt}, showing significant deviation from normality.

\begin{figure}[htbp]
  \centering
  \includegraphics[width=0.6\textwidth]{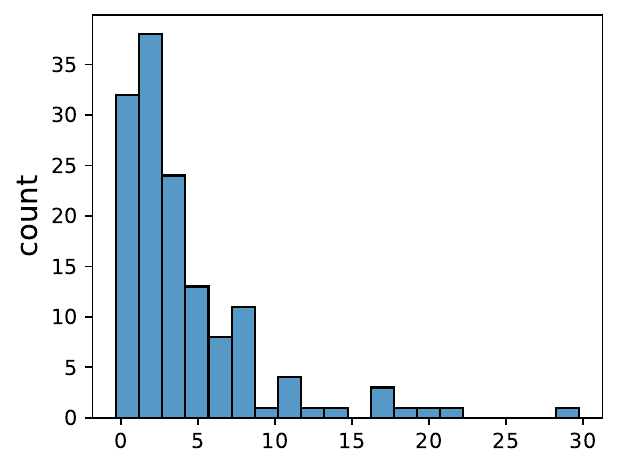}
  \caption{Sample excess kurtosis of the macroeconomic data.}
  \label{fig:TS_kurt}
\end{figure}

\end{document}